\pdfoutput=1
\documentclass[12pt,reqno]{article}
\usepackage[markup=underlined]{changes}
\RequirePackage{etex}

\usepackage{mathpazo}

\usepackage{etex}
\usepackage{graphicx}
\usepackage{setspace}
\usepackage{amsmath}
\usepackage{amsfonts}
\usepackage{titlesec}
\usepackage{pictex}
\usepackage{comment}
\usepackage{amsthm}
\usepackage{graphics,epsfig,verbatim,bm,latexsym,url,amsbsy}
\usepackage{rotating}
\usepackage[authoryear,round]{natbib}
\usepackage{float}
\usepackage[position=bottom]{subfig}
\usepackage{mathrsfs}
\usepackage{multirow}
\usepackage{array}
\usepackage{bigints}
\usepackage{bbm}
\usepackage{geometry}
\usepackage{bbm}
\usepackage[ruled,vlined]{algorithm2e}
\usepackage{soul}
\usepackage{mathtools}
\usepackage{amssymb}
\usepackage{enumitem}
\usepackage{booktabs}
\usepackage{tikz}
\usetikzlibrary{arrows.meta,decorations.pathreplacing}

\DeclarePairedDelimiterX{\inp}[2]{\langle}{\rangle}{#1, #2}

\usepackage{hyperref}
\hypersetup{
	colorlinks=true,
	linkcolor=red!60!black,
	citecolor=blue!60!black,
	filecolor=magenta,      
	urlcolor=blue!60!black,
}
\usepackage[nameinlink,noabbrev,sort,capitalise]{cleveref}
\crefname{onlineappendix}{Online Appendix}{Online Appendices}

\newcommand{\de}{\mathop{}\!\mathrm{d}}
\newcommand{\e}{\varepsilon}
\newcommand{\Ex}{\mathbb{E}}

\newcommand{\R}{\mathbb{R}}

\renewcommand{\Pr}{\mathbb{P}}

\newcommand{\supp}{\mathrm{supp}}

\newcommand{\SPE}{\ensuremath{\mathsf{SPE}}}
\newcommand{\PBE}{\ensuremath{\mathsf{PBE}}}

\crefname{thm}{Theorem}{Theorems}

\theoremstyle{definition} \newtheorem{example}{Example}
\theoremstyle{definition} 
\theoremstyle{definition} 

\theoremstyle{definition} 
\theoremstyle{definition} \newtheorem{definition}{Definition}
\theoremstyle{definition} 
\theoremstyle{definition} 
\theoremstyle{definition} \newtheorem{lemma}{Lemma}
\theoremstyle{plain} 
\theoremstyle{definition} \newtheorem{assumption}{Assumption}
\theoremstyle{definition} 
\theoremstyle{definition} 
\theoremstyle{definition}
\theoremstyle{definition}\newtheorem{proposition}{Proposition}
\theoremstyle{definition} 
\theoremstyle{definition} 
\theoremstyle{definition} 
\theoremstyle{definition} 
\theoremstyle{definition} 
\theoremstyle{definition}

\DeclareMathOperator*{\argmax}{argmax}

\theoremstyle{definition} 
\usepackage{mathtools}
\titleformat{\section}
		{\bfseries\center \MakeUppercase }
         {\thesection}
        {0.5em}
        {}
        []

\titleformat{\subsection}[runin]
        {\normalfont\bfseries}
        {\thesubsection}
        {0.5em}
        {\addperiod}
        []
\newcommand{\addperiod}[1]{#1.}
\allowdisplaybreaks

\begin{document}

\title{\textsc{Racing to Ruin}\thanks{Fudenberg: MIT Department of Economics \url{drew.fudenberg@gmail.com}; Koh: Columbia University Department of Economics \url{andrew.koh@columbia.edu}. We thank [acknowledgments to be added] and National Science Foundation  grant SES-2417162 for financial support.}}

\author{\makebox[.25\linewidth]{Drew Fudenberg}\\ \normalsize{MIT} \\  
\and \makebox[.25\linewidth]{Andrew Koh}\\ \normalsize{Columbia}}

\date{\today}
\maketitle
\begin{abstract}
We study R\&D competition in the shadow of disaster: advancing the technology frontier raises the risk of permanently ending all firms’ payoffs. Under perfect monitoring and common knowledge of rationality, the equilibrium frontier is bounded below by the optimal stopping time of a monopolist, and above by that of a representative firm that persistently but mistakenly believes its rival is about to stop. We then analyze how the frontier is shaped by transparency (speed of monitoring) and trust (belief in the rationality of rival firms). 

\end{abstract}

\thispagestyle{empty} \vspace{-1em}

\newpage \setcounter{page}{1}

\section{Introduction}

In January 2026 at the World Economic Forum, Demis Hassabis, chief executive of Google DeepMind, was asked whether he would advocate a pause if he knew that every other competitor would pause too: ``I think so.'' Dario Amodei, chief executive of Anthropic, concurred: ``I would prefer that. I think that would be better for the world.'' \citep{hashim2026davos}. This suggests that each firm would prefer to slow down, but only if its rivals do the same. But Amodei went on to add: ``it's very hard to have an enforceable agreement where they slow down and we slow down.'' Why exactly is coordination hard? And what would it take? 

To answer these questions, we develop a simple model of R\&D competition between duopolists in the shadow of disaster:\footnote{The model may also apply to players as nation states e.g., US-China competition.} as frontier firms scale the technology, they raise the hazard of an event that permanently drives all firms' flow payoffs to zero. The hazard is a known function of the firms' technology levels, and it comes from developing the technology, not from using it. Stopping therefore eliminates future disaster risk, even as firms continue to enjoy the flow payoffs from the technology they have already built; it is as if disaster strikes exactly when the technology crosses some unknown threshold. Only the leader's state governs whether that threshold is crossed; whether their rival has technology that is nearly as advanced is irrelevant.

Each firm's flow profit rises in its own technology and falls in that of its rival.  When risk is low relative to the profits from scaling, a firm prefers to keep developing regardless of what its rival does: the competitive motive dominates. But when risk is high relative to the marginal profit from scaling, a firm prefers to stop if and only if its rival stops as well, so the problem is coordination, not competition.

 We show that even under perfect observability  competition still carries the technology frontier to a well-defined threshold, and that degrading the timeliness of information pushes the frontier further out. We also show that there is a tight connection between the equilibrium technology frontier and the technology chosen by a single representative agent that when both firms are active  evaluates the prospect of scaling further as if the technology of their  competitor is frozen at $t$. Thus they are persistently overoptimistic about the profits from scaling, and stop only at the time $\tau^R$ when even this overoptimism about profits does not outweigh the disaster risk. 

When firms observe each other's behavior perfectly, $\tau^R$ is an upper bound on the frontier firm's technology. The logic is that observability facilitates cooperation: at $\tau^R$, a firm that is contemplating stopping knows that if it 
 stops, their rival will see the stop. We then develop a matching lower-bound on the technology frontier across all equilibria, and show that the two bounds are close in several parametric models of competition. 

The argument above relied on both \emph{transparency}---firms' stopping decisions are perfectly and immediately observed---and \emph{trust}: it is common knowledge that firms are rational. We relax each in turn. We first suppose that stopping is observed with delay: a firm learns of its rival's stop after a random detection lag distributed exponentially with rate $\omega$ indexing the transparency/opacity of the environment. When monitoring is very noisy, news travels too slowly for a unilateral stop to be reciprocated in time, so racing forever becomes self-enforcing: there are then equilibria in which neither firm ever stops and disaster arrives with probability one, even though no such equilibrium survives under perfect transparency. When monitoring is sufficiently precise, every equilibrium stops in finite time, but a new temptation appears: each firm would like to stop second, and exits only upon confirmation that the rival has stopped. We show that under a regularity condition on the tail payoffs, the frontier overshoots $\tau^R$ by at most $O(1/\omega)$ in expectation. 

We next suppose that, in addition to imperfect transparency, firms are also unsure about the rationality of their rival---believing it to be a ``crazy type'' that never stops in the spirit of \citet{kreps1982reputation}. To understand the role of trust, we specialize to a stationary environment with $\tau^R = 0$, so that at each technology state, firms would prefer to stop if their rival also does so. 
Our main result partitions the trust (belief in their rival's rationality) into three zones. With low trust, every equilibrium races to ruin: the disaster arrives with probability one. With intermediate trust,  immediate stopping and racing to ruin are both equilibria. With high trust, in every equilibrium, the probability that two rational firms race forever vanishes quadratically in the prior odds ratio of rationality. Transparency is double-edged: faster detection makes it cheaper to wait for confirmation that a rival has stopped before stopping oneself instead of stopping unconditionally, so at intermediate trust, increasing transparency can first destroy the early-stopping  equilibrium (by making this free-riding deviation attractive) before restoring it as detection becomes fast enough to make stopping self-enforcing.

\section{Related Literature}

Our paper draws on several literatures: dynamic R\&D and patent races, wars of
attrition and exit games, contests, and the recent
economics of AI and existential risk. 

\paragraph{R\&D races and preemption.}
A large literature studies dynamic competition to develop a new technology. The early models of \citet{loury1979market}, \citet{lee1980market}, and \citet{dasgupta1980uncertainty} relate market structure to the intensity and speed of R\&D through a single investment choice;   \citet{reinganum1982dynamic} introduces commitment to  time-varying open-loop  strategies. \citet{fudenberg1983preemption}, \citet{harris1985perfect} and \citet{ harris1987racing} study preemption, leapfrogging, and the discouragement of laggards in models where firms observe and respond to their rivals' progress.\footnote{\citet{fudenberg1983preemption} introduced multistage patent races, and
\citet{aghion2001competition} embeds step-by-step innovation in a growth
setting. } Preemption
also drives the timing games of \citet{fudenberg1985preemption}. \citet{weeds2002strategic} and \citet{hopenhayn2011preemption} study
strategic delay and preemption when investment is a real option or progress is
privately observed. Our model shares the competitive force of these races---flow
profits rise in own technology and fall in rivals'---but because the frontier
generates a common disaster hazard, the object of interest is how far the frontier is carried before firms stop.

\paragraph{Wars of attrition and exit.}
When disaster risk is high relative to profits, our firms would prefer to stop
if and only if their rivals stop, and under imperfect monitoring the equilibrium
becomes a war of attrition. The continuous-time war of attrition originates with
\citet{maynardsmith1974theory} and is developed by \citet{hendricks1988war},
\citet{bulow1999generalized}, and \citet{krishna1997analysis}, which relate it
to the all-pay auction; the exit dynamics of declining industries in
\citet{fudenberg1986theory} are a close antecedent to our stopping game.
Our extension to imperfect monitoring of stopping time  is related to other work in which
information about others' behavior arrives with delay:
\citet{fudenberg2014delayed} studies repeated games with observation lags, and
\citet{daley2012waiting} analyzes dynamics driven by the stochastic arrival of
news.  And our extension to  ``crazy'' types builds on the predation model of \citet{kreps1982reputation}.

\paragraph{Contests.}
Our leading example casts flow payoffs as shares of a prize determined by a
logit contest success function, following \citet{tullock1980efficient} and the
axiomatic treatment of \citet{skaperdas1996contest}. We also draw on the broader
theory of contests and all-pay auctions, e.g.  \citet{baye1996allpay},
\citet{moldovanu2001optimal}, \citet{siegel2009allpay}, and 
\citet{konrad2009strategy}. 




\paragraph{AI races, existential risk, and regulation.}
Our motivation comes from the economics of transformative and potentially
dangerous technologies. \citet{jones2024ai} and \citet{trammell2024existential}
study the optimal growth--risk tradeoff when scaling raises existential risk,
and \citet{ord2020precipice} argues that a wiser society would proceed more
cautiously; our results identify a strategic source of excessive
risk-taking that arises even with risk-neutral, patient agents.
\citet{armstrong2016racing} studies a static race between AI developers who choose how much safety to sacrifice for speed, so the issues of observability and dynamic coordination do not arise;  \citet{buenodemesquita2026agi} enrich this static game with market structure. \citet{koh2024robust} analyzes robust regulation of multiple AI firms with differing risk aversion but does not model competition.

\section{Baseline Model}
\paragraph{Time and Technology} There are two firms $\mathcal N={1,2}$. Each firm advances its technology at unit speed until it stops, and stopping is irreversible. In the baseline model, firms perfectly and immediately observe their rival’s past stopping behavior, and each has access to private randomization. A strategy specifies a randomized stopping rule after every history; a strategy profile and the private randomization devices induce $\overline{\mathbb R}_+$-valued stopping times $\bm\tau=(\tau_1,\tau_2)$ adapted to the firms’ information. A firm’s time-$t$ stopping decision cannot depend on its rival’s simultaneous time-$t$ decision. Since stopping is irreversible, firm $i$’s technology at calendar date $s$ is $s\wedge\tau_i$, and its eventual technology is $\tau_i$.\footnote{As shown in \cref{app:extform}, this stopping-time formulation  is without loss, because it generates exactly the outcome distributions generated by randomized stopping times in the sense of \citet{touzi2002dynkin}: adapted, right-continuous, nondecreasing $[0,1]$-valued processes recording cumulative stopping probability. Because payoffs are continuous in the stopping dates, the payoffs from stopping just before or just after one’s rival converge to the simultaneous-stopping payoff as the gap between the stopping dates vanishes. Thus, resolving an atom at a common stopping date does not require the extended strategy space of \citet{fudenberg1985preemption}.}

\paragraph{Flow Payoffs} For a technology profile $\bm t=(t_1,t_2)$, firm $i$ receives flow payoff
$
\pi(t_i,t_j),
$
where $t_i$ is firm $i$'s own technology state ($=$ time spent in the race) and $\bm{t}_{j}$ is its rival's technology state. We impose symmetry across firms: every firm has the same payoff function $\pi$, and $\pi$ treats rivals anonymously. Thus relabelling the firms relabels their payoffs but does not change any payoff level. We write $D_i \pi(t_i,t_j)$ for the derivative with respect to firm $i$'s own technology state and $D_j\pi(t_i,t_j)$ for the derivative with respect to firm $j$'s technology state.

\paragraph{Disaster risk} The \emph{frontier technology scale} of a technology profile $\bm t$ is $T=\max_i t_i$. There is a risk of a market collapse that arrives at a rate $\lambda: \R_+ \to \R_+$ where $\lambda(x)$ is the hazard rate when the frontier technology scale is $x$. Flow payoffs are zero after the collapse. Our leading interpretation of the collapse is the extinction of the industry: after it occurs, no firm is around to enjoy flow payoffs.\footnote{The starkest examples are `loss of control' scenarios in which beyond an (unknown) technology threshold, AI might be able to rapidly improve itself leading to extinction\citep{bostrom2014superintelligence,bengio2024managing}.}  This follows the analyses of analyze growth under existential risk in \citet{jones2024ai} and \citet{trammell2024existential}, and the value-of-life literature \citep{rosen1988value,murphy2006value,hall2007value}, in which the level of flow utility relative to death pins down how much mortality risk an agent will accept in exchange for higher consumption.

Firms discount the future at a common discount rate of $r > 0$. For a realized stopping profile $\bm\tau\in\overline\R_+^2$, define the instantaneous disaster risk at time $s$ by
\[
\lambda_s(\bm\tau):=\lambda(s)\cdot \mathbf{1}\left\{s\le \max_{j\in\mathcal N}\tau_j\right\}.
\]
Let $\Lambda_s(\bm\tau):=\int^s_0 \lambda_u(\bm\tau)\de u$ be cumulative risk up to $s$. Hence, payoffs can be written as
\[
U_i(\bm\tau)
:=
\int^\infty_0 e^{-rs - \Lambda_s(\bm\tau)}
\pi\Big(s\wedge\tau_i,\ s\wedge\tau_j\Big)\de s.
\]
To ensure that payoffs are finite and stopping behavior is well-behaved, we will maintain the following assumption on payoffs and risk. 
\begin{assumption}\label{ass:payoffs}
    $\pi$ and $\lambda$ satisfy:
    \begin{itemize}
      \item[(i)] \textbf{Increasing hazard}: $\lambda$ is continuously differentiable and nondecreasing.
        \item[(ii)] \textbf{Regularity}: $\pi$ is twice continuously differentiable and strictly positive, strictly increasing in own technology and strictly decreasing in each rival's technology: $D_i \pi>0$ and, for every $j\neq i$, $D_j\pi<0$.
        \item[(iii)] \textbf{Log-supermodularity}: for all firms $i\neq j$, $D_{ij}\log\pi(t_i,t_j)\geq0$.
        \item[(iv)] \textbf{Single-crossing}: for every firm $i$, the map 
$
\left(t_i,t_j\right)\mapsto D_i\log\pi(t_i,t_j)-\lambda(t_i)
$
is continuous and, for every rival technology $\bm{t}_{-i}$, strictly decreasing in $t_i$ over $[t_j,+\infty)$.

      \item[(v)] \textbf{Integrability:}
\[
\int_0^\infty e^{-rs}\,\sup_{t\le s}\Big\{e^{-\int_0^t\lambda(u)\de u}\,\pi(t,0)\Big\}\,\de s\ <\ \infty .
\]
    \end{itemize}
\end{assumption}

\cref{ass:payoffs} (i)--(v) are fairly mild.   The increasing hazard assumption is consistent with $\lambda$ constant, which fits the model where the unknown disaster threshold is exponentially distributed. 
Regularity (ii) asks only that flow payoffs be smooth, strictly positive, increasing in own technology and decreasing in each rival's. Log-supermodularity (iii) says that when a rival's technology is higher, the marginal payoff from increasing one's own technology is weakly higher; this holds whenever payoffs are additively or multiplicatively separable. Single-crossing (iv) holds if log profits are concave in own technology while the hazard is nondecreasing, and at least one of these conditions is strict. Because $\lambda$ is nondecreasing. Integrability (v) guarantees that every strategy profile generates a finite expected payoff, uniformly (\cref{lem:cont}).

\paragraph{A representative firm with misspecified beliefs} We begin by constructing a single agent with incorrect beliefs about the other firm's strategy. Consider a history at which no disaster has arrived and both firms have continued until $t$. Fix a generic firm $i\in\mathcal N$. The representative firm compares stopping immediately with continuing alone to some $\tau\geq t$, taking the other firm's technology as frozen at $t$. The continuation payoff from choosing $\tau$ is thus

\[
R_t(\tau)
:=
\underbrace{\int^\tau_t e^{-\int^s_t (r+\lambda(z))\de z}\pi(s,t)\de s}_{\substack{\text{Incremental payoff from scaling} \\ \text{own tech from $t$ to $\tau$}}}
+
\underbrace{e^{-\int^\tau_t (r+\lambda(z))\de z}\frac{\pi(\tau,t)}{r}}_{\text{Payoff from $\tau$ on}}
\]
where we use the convention that 
$R_t(+\infty)
:=
\lim_{\tau\to+\infty}R_t(\tau)$  for the payoff from never stopping. We define 
\[
\tau^R
:=
\inf\left\{
t\geq 0:
R_t(t)
=
\sup_{\tau\geq t} R_t(\tau)
\right\}.
\]
as the solution to the misspecified firm's problem. The following lemma establishes that there exists a unique solution to the representative agent's stopping problem. 

\begin{lemma}\label{lem:R-single-peaked}
For each $t\geq 0$, $R_t$ has a unique maximizer $\chi(t)$ on $[t,+\infty) \cup +\infty$ and $\chi$ is continuous in $t$. Moreover,
$
R_t(t)=\sup_{\tau\geq t}R_t(\tau) \text{ iff }
D_i\log\pi(t,t)\leq\lambda(t).$ 
Thus, if $\tau^R<+\infty$, then
$
R_{\tau^R}(\tau^R)=\sup_{\tau\geq \tau^R}R_{\tau^R}(\tau)$  and if $t<\tau^R$, then $D_i\log\pi(t,t)>\lambda(t)$.
\end{lemma}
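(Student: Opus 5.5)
The plan is to differentiate $R_t(\tau)$ in $\tau$ and show that its sign is governed by a single-crossing function. Write $E_t(\tau):=e^{-\int_t^\tau(r+\lambda(z))\de z}$, so that $\partial_\tau E_t(\tau)=-(r+\lambda(\tau))E_t(\tau)$. By the fundamental theorem of calculus and the product rule, for $\tau\ge t$,
\[
\frac{\partial}{\partial\tau}R_t(\tau)
=E_t(\tau)\Big[\pi(\tau,t)-\frac{(r+\lambda(\tau))\pi(\tau,t)}{r}+\frac{D_i\pi(\tau,t)}{r}\Big]
=E_t(\tau)\frac{\pi(\tau,t)}{r}\,g_t(\tau),
\]
where $g_t(\tau):=D_i\log\pi(\tau,t)-\lambda(\tau)$.

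Since $\pi>0$ by \cref{ass:payoffs}(ii) and $E_t>0$, the sign of $\partial_\tau R_t$ equals the sign of $g_t$. By \cref{ass:payoffs}(iv), $g_t$ is continuous and strictly decreasing on $[t,\infty)$. Hence $R_t$ is strictly quasi-concave on $[t,\infty)$: it is strictly increasing while $g_t>0$ and strictly decreasing once $g_t<0$. This gives three cases.
\begin{itemize}
\item[(a)] If $g_t(t)\le 0$, then $R_t$ is strictly decreasing on $[t,\infty)$, so $\chi(t)=t$.
\item[(b)] If $g_t(t)>0$ and $g_t$ has a (necessarily unique) root $\tau^*>t$, then $\chi(t)=\tau^*$.
\item[(c)] If $g_t>0$ on all of $[t,\infty)$, then $R_t$ is strictly increasing, so the supremum is the limit $R_t(+\infty)$ and $\chi(t)=+\infty$.
\end{itemize}
In case (c) I need this limit to be finite. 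I would obtain this from the integrability condition \cref{ass:payoffs}(v), using $\pi(s,t)\le\pi(s,0)$ (because $D_j\pi<0$) to dominate the integrand, or simply cite \cref{lem:cont}. Uniqueness follows from strict monotonicity in each case. The characterization $R_t(t)=\sup_{\tau\ge t}R_t(\tau)\iff g_t(t)\le 0$ is exactly the dichotomy between case (a) and cases (b)--(c), since in (b)--(c) $R_t$ strictly increases just to the right of $t$. This is the claimed condition $D_i\log\pi(t,t)\le\lambda(t)$.

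For continuity of $\chi$, viewed as a map into $[0,+\infty]$ with the extended topology, I would use the joint continuity of $(\tau,t)\mapsto g_t(\tau)$ from (iv) together with strict monotonicity in $\tau$, treating the three cases separately.
\begin{itemize}
\item In case (b), pick small $\varepsilon>0$. Then $g_t(\chi(t)-\varepsilon)>0>g_t(\chi(t)+\varepsilon)$, and both strict inequalities persist for $t'$ near $t$. Hence $\chi(t')\in(\chi(t)-\varepsilon,\chi(t)+\varepsilon)$.
\item In case (a), we have $g_t(t+\varepsilon)<0$. This persists nearby, so $t'\le\chi(t')<t+\varepsilon$ for $t'$ close to $t$.
\item In case (c), for any $M$ we have $g_t(M)>0$. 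This persists nearby, so $\chi(t')>M$.
\end{itemize}
The main obstacle I expect is the bookkeeping at the boundary. The domain $[t,\infty)$ moves with $t$, and single-crossing is only assumed on $\tau\ge t$, so in case (a) I must make sure that $t'$ slightly below $t$ still has its root within $\varepsilon$ of $t$, not merely below it. This uses $g_{t'}(t+\varepsilon)<0$ combined with monotonicity on $[t',\infty)$.

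Finally, for $\tau^R$, the set $S:=\{t\ge0: g_t(t)\le 0\}$ is closed, because $t\mapsto g_t(t)=D_i\log\pi(t,t)-\lambda(t)$ is continuous. By the characterization above, $\tau^R=\inf S$. So if $\tau^R<\infty$, then $\tau^R\in S$, which gives $R_{\tau^R}(\tau^R)=\sup_{\tau\ge\tau^R}R_{\tau^R}(\tau)$. For $t<\tau^R$ we have $t\notin S$, i.e.\ $D_i\log\pi(t,t)>\lambda(t)$. Log-supermodularity (iii) is not needed for this lemma.
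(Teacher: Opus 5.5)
Your proof is correct. Its derivative computation and its use of single-crossing coincide with the paper's, but you obtain existence and continuity of $\chi$ by a different route.

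The paper first writes $R_t(\tau)$ as a positive affine transformation of $U_i(\tau,t)$, so that \cref{lem:cont} makes $(t,\tau)\mapsto R_t(\tau)$ jointly continuous on the compactified domain. It then invokes \cref{lem:optimal-plan} (Berge's maximum theorem) to get a maximizer at every $t$. The derivative formula you derive gives single-peakedness and hence uniqueness, and continuity of $\chi$ follows from the unique-maximizer clause of Berge. The characterization via the right derivative at $t$ and the closedness argument for $\tau^R$ are the same as yours.

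You instead identify the maximizer explicitly through the three-way case split on the sign of $g_t$. You then prove continuity by bracketing the crossing point using joint continuity and strict monotonicity of $g$. This is the same $\varepsilon$-argument the paper itself uses for the crossing date in \cref{lem:ex-window}(i). In case (b), take $\varepsilon<(\chi(t)-t)/2$, so that $t'<\chi(t)-\varepsilon$ for nearby $t'$ and single-crossing on $[t',\infty)$ covers both bracketing points.

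The paper's route handles the never-stopping case $\chi(t)=+\infty$ automatically through compactification. It also recycles a general lemma that is reused for the priority equilibrium. Your route is more elementary and needs payoff continuity only to ensure that $R_t(+\infty)$ is finite in case (c). Citing \cref{lem:cont} there is cleaner than the direct bound, which would also require showing that the terminal term stays bounded. Your route also exhibits $\chi(t)$ as the first crossing of $D_i\log\pi(\cdot,t)$ with $\lambda$, which is the form in which $\chi$ is actually used later, e.g.\ in the proof of \cref{prop:frontier}.
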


\cref{lem:R-single-peaked} reduces the representative firm's problem to a race between two rates: continuing for an instant scales all future flow profits up at the proportional rate $D_i\log\pi(t,\ldots,t)$, while exposing the  continuation value to catastrophe at rate $\lambda(t)$. Single-crossing (\cref{ass:payoffs}(iv)) ensures that, holding the rival's frozen, the former crosses the latter only once from above so $\tau^R$ is well defined. The threshold $\tau^R$ is the first date at which immediate stopping becomes optimal under the optimistic benchmark in which the rival's technology remains frozen. It will be our benchmark for how far competition scales the technology frontier.

\section{Perfect Transparency}
We start by analyzing the case of perfect transparency in which firms' actions (stopping decisions) are perfectly and immediately observed. We study subgame perfect equilibria (henceforth just equilibria), and  denote equilibrium outcomes with $\SPE(\pi,\lambda)$.

\paragraph{Upper bounds on the frontier}
We first bound the frontier from above, for every equilibrium, by the representative threshold $\tau^R$. 

\begin{proposition}[Upper bound]\label{prop:bounds}
For every $\bm\tau\in\SPE(\pi,\lambda)$, $\max_i\tau_i\le\tau^R$ almost surely. 
\end{proposition}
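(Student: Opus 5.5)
The plan is a backward-induction argument in two layers. First we pin down what a firm does once it is alone. Then we show that no firm is willing to continue at a history where both are active past $\tau^R$. Throughout we use \cref{lem:R-single-peaked} together with the observation that $R_t$ is in fact strictly decreasing on $[t,+\infty)$ whenever $t>\tau^R$. To see this, differentiate:
\[
\partial_\tau R_t(\tau)=e^{-\int_t^\tau(r+\lambda)}\frac{\pi(\tau,t)}{r}\bigl(D_i\log\pi(\tau,t)-\lambda(\tau)\bigr).
\]
By log-supermodularity (\cref{ass:payoffs}(iii)), $D_i\log\pi(\tau,t)\le D_i\log\pi(\tau,\tau)$ for $\tau\ge t$. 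Single-crossing (\cref{ass:payoffs}(iv)) together with \cref{lem:R-single-peaked} gives $D_i\log\pi(\tau,\tau)<\lambda(\tau)$ for $\tau>\tau^R$. Hence the bracket is negative for all $\tau\ge t>\tau^R$.

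\textbf{Step 1 (lone firm).} Consider a history where the rival has stopped at some $s$ and firm $i$ is still active at $t\ge s$. Firm $i$ then faces a deterministic single-agent problem with rival technology frozen at $s$. Its marginal return to continuing at $u\ge t$ is $D_i\log\pi(u,s)-\lambda(u)$. By log-supermodularity this is at most $D_i\log\pi(u,u)-\lambda(u)$, which is strictly negative for $u>\tau^R$. Subgame perfection therefore forces firm $i$ to stop no later than $\max\{t,\tau^R\}$. In particular, a lone firm never pushes the frontier strictly beyond $\tau^R$ unless it was already there, and a firm that is active at some $t>\tau^R$ after its rival's stop stops immediately.

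\textbf{Step 2 (both active).} Consider a history at date $t>\tau^R$ with both firms active, and compare firm $i$'s options.
\begin{itemize}
\item If $i$ stops now, Step 1 implies that $j$ stops immediately too. Firm $i$ then obtains exactly $R_t(t)=\pi(t,t)/r$ in continuation terms.
\item Under any continuation with realized $\tau_i,\tau_j\ge t$, two facts bound $i$'s payoff. The rival's technology is at least $t$ on $[t,\infty)$, so $\pi(s\wedge\tau_i,s\wedge\tau_j)\le\pi(s\wedge\tau_i,t)$. Disaster risk runs at least until $\tau_i$. Hence $i$'s realized continuation payoff is at most $R_t(\tau_i)$.
\end{itemize}
By strict monotonicity of $R_t$, this bound is at most $R_t(t)$. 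The inequality is strict on the event $\tau_i>t$, and also on the event where $j$ continues past $t$, because $D_j\pi<0$. Hence any equilibrium continuation in which either firm moves beyond $t$ with positive probability gives that firm a payoff strictly below its stopping payoff. This contradicts optimality. So in any SPE, at every on-path history, both firms stop at once once the date exceeds $\tau^R$. The frontier thus cannot exceed $\tau^R$ except on a null set.

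\textbf{Main obstacle.} The delicate part is making Step 2 rigorous in continuous time with randomized stopping. A deviation to ``stop now'' must be compared against a continuation that may stop at $t+$. In addition, at $\tau^R$ itself we only have weak inequalities. I would handle this by supposing that $\Pr(\max_i\tau_i>\tau^R+\varepsilon)>0$ for some $\varepsilon>0$. I would then pick a date $t\in(\tau^R,\tau^R+\varepsilon)$ at which, with positive probability, both firms are active and someone continues to at least $\tau^R+\varepsilon$. On that event the bound above is strict by a margin bounded away from zero, using continuity of $\pi$, $\lambda$ and the uniform integrability from \cref{lem:cont}. That yields a profitable deviation, namely ``stop at $t$'' on this event. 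Finally, letting $\varepsilon\downarrow0$ gives $\max_i\tau_i\le\tau^R$ almost surely.
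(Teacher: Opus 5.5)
\textbf{The gap: your key inequality needs \cref{ass:dsc}, which is not in force here.}

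Your whole argument rests on $D_i\log\pi(\tau,\tau)<\lambda(\tau)$ for every $\tau>\tau^R$. This does not follow from \cref{ass:payoffs} and \cref{lem:R-single-peaked}:
\begin{itemize}
\item Part (iv) of \cref{ass:payoffs} is single-crossing in own technology with the rival held fixed, so it says nothing about how the diagonal index moves.
\item \cref{lem:R-single-peaked} only gives information at dates $t\le\tau^R$.
\item Log-supermodularity gives $D_i\log\pi(\tau,\tau)\ge D_i\log\pi(\tau,\tau^R)$, which is the wrong direction for an upper bound.
\end{itemize}
What you are implicitly using is the diagonal single-crossing condition \cref{ass:dsc}. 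The paper imposes it only from \cref{sec:imperfect} onward, and the proposition is stated without it.

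The condition genuinely fails under \cref{ass:payoffs}. Take the log-separable example with $A=1$, $B=C=2$ and constant $\lambda\equiv1.05$. The diagonal index is $3e^{-t}-2e^{-2t}$. It equals $1<\lambda$ at $t=0$, so $\tau^R=0$. Yet it equals $9/8>\lambda$ at $t=\log(4/3)$. At that date three things break:
\begin{itemize}
\item A lone firm whose rival has just stopped strictly prefers to continue, so Step 1 is false.
\item $R_t$ is increasing near $t$, so the strict monotonicity you use in Step 2 fails.
\item Stopping no longer secures $\pi(t,t)/r$, because the rival does not stop immediately.
\end{itemize}

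Your $\varepsilon$-localization does not repair this. Suppose the gap $D_i\log\pi(t,t)-\lambda(t)$ only touches zero at $\tau^R$ and then turns positive again. For instance, keep the same $\pi$ and choose a nondecreasing $\lambda$ tangent from below to the increasing diagonal index at some $t_0\in(0,\log(4/3))$. Then both of your steps fail at every $t$ in a right-neighborhood of $\tau^R$. Under \cref{ass:dsc} your route would go through, but that is an extra hypothesis.

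\textbf{The fix: run the backward induction at the history at $\tau^R$ itself.}

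This is what the paper does, and your worry about weak inequalities there is unfounded. Take a rival frozen at any $c\le\tau^R$ and any $x>\tau^R$. Log-supermodularity and (iv), applied with the rival fixed at $\tau^R$, give
\[
D_i\log\pi(x,c)-\lambda(x)\ \le\ D_i\log\pi(x,\tau^R)-\lambda(x)\ <\ 0 .
\]
The strict inequality holds because the middle expression is strictly decreasing in $x$ on $[\tau^R,\infty)$ and is nonpositive at $x=\tau^R$. The only weak inequality is at the single point $x=\tau^R$, so stopping at $\tau^R$ still uniquely maximizes the frozen-rival payoff.

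The argument then closes in three steps:
\begin{itemize}
\item A sole survivor at $\tau^R$ stops at once. So a firm that stops at $\tau^R$ while its rival is active secures $\pi(\tau^R,\tau^R)/r$.
\item With both firms active at $\tau^R$, a firm that continues past $\tau^R$ earns pathwise at most $R_{\tau^R}(\tau_i)$. This holds because its rival sits weakly above $\tau^R$ and the hazard stays alive at least until $\tau_i$. On $\{\tau_i>\tau^R\}$ this is strictly below $R_{\tau^R}(\tau^R)=\pi(\tau^R,\tau^R)/r$, so stopping at $\tau^R$ is a profitable deviation.
\item Every history reached at $\tau^R$ has rival technologies weakly below $\tau^R$, so the first two steps cover every case.
\end{itemize}
This never requires analyzing dates after $\tau^R$ and needs no limit in $\varepsilon$.
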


A sole survivor at $\tau^R$ prefers to stop even if its rival's technology is frozen. Hence, if one of the two active firms stops at $\tau^R$, the remaining firm also stops then; anticipating this, no firm continues past the threshold. More broadly, in the game after crossing $\tau^R$ each player prefers to stop as long as the other player does so quickly enough. But because actions are perfectly observed, the logic of backward induction ensures that coordination to stop by $\tau^R$ succeeds.\footnote{The argument underpinning \cref{prop:bounds} is quite general and can be extended to $n \geq 2$ firms: each firm who stops at $\tau^R$ knows that its remaining $n - 1$ rivals would observe it, and each of these rivals know that if they stop, then its remaining $n-2$ rivals would observe their stop, and so on until a single firm remains and prefers to stop. See also \cite{gale1995dynamic} who studies a dynamic coordination game with irreversible investment.} The proof is simple and instructive:

\begin{proof}[Proof of \cref{prop:bounds}]
If $\tau^R=+\infty$, the claim is immediate. Suppose $\tau^R<+\infty$. First note that, for every firm $i$, every rival technology level$x_j\leq\tau^R$ for all $j\neq i$, and every $x_i\geq\tau^R$,
$
D_i\log\pi(x_i,x_j)-\lambda(x_i)
\leq
D_i\log\pi(x_i,\tau^R)-\lambda(x_i) \leq 0,$ 
where the first inequality uses log-supermodularity and the second uses \cref{lem:R-single-peaked} at $\tau^R$ together with \cref{ass:payoffs}(iv), fixing rivals at $\tau^R$. The last inequality is strict when $x_i>\tau^R$. Hence, after any history at time $\tau^R$ in which the rival's technology is weakly below $\tau^R$, a firm's frozen-rival payoff is uniquely maximized by stopping immediately.

We claim that, after any history at time $\tau^R$, no $\SPE$ continuation carries the frontier above $\tau^R$ with positive probability. Consider first a history at $\tau^R$ at which only one firm is active, its rival having stopped at some $y\le\tau^R$. The active firm's continuation payoff from stopping at $\sigma\ge\tau^R$ is its frozen-rival payoff, which by the preceding paragraph is uniquely maximized by stopping immediately; sequential rationality therefore forces it to stop at $\tau^R$. Consider next a history at $\tau^R$ with both firms active, and suppose the continuation carries the frontier above $\tau^R$ with positive probability; then some firm $i$ remains active past $\tau^R$ with positive probability. Let firm $i$ deviate by stopping at $\tau^R$. After the deviation its rival is a sole survivor whose (frozen) opponent sits at $\tau^R$; by the one-active-firm step, the rival stops at $\tau^R$ immediately. The deviation therefore gives firm $i$  the payoff of stopping at $\tau^R$ with its rival frozen at $\tau^R$. On the equilibrium path, firm $i$'s payoff is no larger than the payoff from continuing while its rival is frozen at its time-$\tau^R$ technology: the rival's actual technology is weakly higher, and frontier risk lasts weakly longer. By the preceding paragraph, that frozen-rival payoff is strictly below the payoff from stopping at $\tau^R$, contradicting sequential rationality. This proves the claim.


If an equilibrium had $\max_{i\in\mathcal N}\tau_i>\tau^R$ on a positive-probability event, then conditional on survival to time $\tau^R$ the public history at $\tau^R$ would have at least one active firm and any stopped firm at technology weakly below $\tau^R$. The claim rules out any continuation from that history with frontier above $\tau^R$, a contradiction.
\end{proof}


\paragraph{Lower bounds on the frontier} Note that the equilibrium is not necessarily unique and the equilibrium set need not form a lattice. We will develop simple lower bounds on the frontier technology across all equilibria.

Define 
\[
\underline \tau := \inf \Big\{ t: \underbrace{D_i\log\pi(t,0)}_{\text{Origin index}} \leq \lambda(t) \Big\}
\]
where the origin index is the marginal payoff to a firm from pushing the technology at $t$ when its competitor's technology are at $0$ i.e., it is a monopolist. $\underline \tau$ is the time at which a monopolist finds it optimal to stop.

\begin{proposition}[Lower bound]\label{prop:frontier}
For every $\bm\tau\in\SPE(\pi,\lambda)$, $\max_i \tau_i \ge\underline\tau$ almost surely. 
\end{proposition}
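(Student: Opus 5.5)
The argument runs by contradiction, in the mirror image of the proof of \cref{prop:bounds}. Suppose that in some equilibrium $\max_i\tau_i<\underline\tau$ on an event of positive probability. Since the frontier is $\max_i\tau_i$, on that event both firms stop strictly before $\underline\tau$. I will locate a history at which the firm that stops last (or a firm stopping at a common date) has a profitable deviation: continue alone up to $\underline\tau$.

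\textbf{Step 1 (monopoly incentive below $\underline\tau$).} By definition of $\underline\tau$, $D_i\log\pi(t,0)>\lambda(t)$ for all $t<\underline\tau$. By log-supermodularity (\cref{ass:payoffs}(iii)), for any rival level $x_j\ge 0$ we have $D_i\log\pi(t,x_j)\ge D_i\log\pi(t,0)>\lambda(t)$. Hence a firm whose rival has stopped at $x_j\le t<\underline\tau$ strictly gains from extending its own technology.

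To see this, compare stopping at $t$ with continuing to $\tau'\in(t,\underline\tau]$ while the rival is frozen at $x_j$. The difference in continuation payoffs is
\[
\int_t^{\tau'} e^{-\int_t^s(r+\lambda)}\Big[\pi(s,x_j)-\tfrac{\pi(t,x_j)}{r}(r+\lambda(s))\Big]\de s+\Big(e^{-\int_t^{\tau'}(r+\lambda)}\tfrac{\pi(\tau',x_j)}{r}-\ldots\Big).
\]
It is cleaner to work with the map $\tau\mapsto e^{-\int_t^\tau(r+\lambda)}\pi(\tau,x_j)/r+\int_t^\tau(\cdots)$, whose derivative equals $e^{-\int_t^\tau(r+\lambda)}\pi(\tau,x_j)\big[D_i\log\pi(\tau,x_j)-\lambda(\tau)\big]/r$. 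This is strictly positive on $[t,\underline\tau)$, so stopping later strictly dominates. (This is the same computation that underlies \cref{lem:R-single-peaked}.) Consequently, in any SPE, at every history where one firm is the sole survivor at some $t<\underline\tau$, sequential rationality forces that firm to continue at least until $\underline\tau$.

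\textbf{Step 2 (the last stopper).} On the bad event, consider the realized stopping times. If $\tau_i<\tau_j<\underline\tau$, then at the history just after $\tau_i$, firm $j$ is a sole survivor at a date below $\underline\tau$. Step 1 says it must continue to at least $\underline\tau$, contradicting $\tau_j<\underline\tau$; such histories must therefore have probability zero. The remaining case is simultaneous stopping, $\tau_1=\tau_2=t<\underline\tau$, with positive probability. Here I would use the deviation ``do not stop at $t$; continue and stop at $\underline\tau$''. Firm $i$'s time-$t$ decision cannot depend on $j$'s simultaneous decision, so after the deviation firm $j$ has stopped at $t$ and firm $i$ is alone with the rival frozen at $t$. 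By the derivative computation, firm $i$'s payoff rises strictly relative to stopping at $t$. One must also handle the possibility that firm $j$ was mixing: given $i$'s deviation, $j$ might continue, but then the history is again one where both are active below $\underline\tau$, and we can iterate.

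\textbf{Main obstacle.} The delicate part is the atom/mixing case: making rigorous that a positive-probability simultaneous stop at $t<\underline\tau$ admits a profitable deviation, and more generally that the event $\{\max_i\tau_i<\underline\tau\}$ cannot be generated by a continuum of vanishing probabilities. My approach is to condition on the history at $t$ and take firm $i$ to be a firm that, with positive conditional probability, stops at $t$. The deviation ``stop at $\underline\tau$ instead, unless the rival is still active'' is then compared state by state. If the rival stops at or before $t$, the gain is strict by Step 1. If the rival continues, the deviation instead mimics the rival's continuation, and firm $i$ is weakly better off: it has higher own technology and the same frontier up to the rival's stopping time, and thereafter it is a sole survivor below $\underline\tau$ who gains by continuing. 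The continuity of payoffs (\cref{lem:cont}) lets me pass from atoms to the general adapted case. Together these show the deviation is strictly profitable whenever $\max_i\tau_i<\underline\tau$ with positive probability, which gives the statement.
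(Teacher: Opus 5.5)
Your reduction is the same as the paper's. A sole survivor at $t<\underline\tau$ must continue to $\chi(t)\ge\underline\tau$, which is your Step~1. Hence $\{\max_i\tau_i<\underline\tau\}$ can only arise from a simultaneous stop below $\underline\tau$. To make ``take a firm that stops at $t$ with positive probability'' rigorous, add one observation: the candidate dates $X_1,X_2$ drawn at the empty record are independent. So $\Pr(X_1=X_2<\underline\tau)>0$ forces a common atom $t<\underline\tau$ with $\Pr(X_1=t)\Pr(X_2=t)>0$.

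\textbf{The gap is in how you kill that atom.} Your deviation is global: skip $t$, then either stop at $\underline\tau$ or shadow the rival. You claim it is weakly better state by state, but this fails on the event $\{X_2>t\}$, which can be far more likely than the tie itself. On that event the original outcome is $(t,\chi(t))$: firm $1$ stops first, and the rival, having seen the stop, quits at $\chi(t)$. Under your deviation the rival sees no stop at $t$ and follows its own candidate $X_2$, which may be arbitrarily late or $+\infty$. The rival's stopping date depends on your stop, so two of your claims fail:
\begin{itemize}
\item ``same frontier up to the rival's stopping time'' does not hold, because the frontier and the hazard can be pushed well past $\chi(t)$;
\item ``thereafter a sole survivor below $\underline\tau$'' does not hold either.
\end{itemize}
The variant ``stop at $\underline\tau$'' instead turns $(t,\chi(t))$ into $(\underline\tau,\chi(\underline\tau))$. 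That need not be better, since the first stopper's payoff $s\mapsto U_1(s,\chi(s))$ can have an interior maximum, as in the spread equilibria. The ``iterate'' remark does not supply a well-founded induction to rescue this.

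\textbf{The missing idea is to make the deviation local.} On $\{X_1=t\}$, replace the candidate by $t+\epsilon$, and respond at $\chi(\cdot)$ to any earlier rival stop. Then:
\begin{itemize}
\item on the tie $\{X_2=t\}$, the outcome $(t,t)$ becomes $(\chi(t),t)$, a strict gain that does not depend on $\epsilon$;
\item the event $\{t<X_2\le t+\epsilon\}$ has vanishing probability, and payoffs are uniformly bounded;
\item on $\{X_2>t+\epsilon\}$, the outcome moves from $(t,\chi(t))$ to $(t+\epsilon,\chi(t+\epsilon))$, and its payoff effect vanishes by continuity of $U_1$ (the payoff-continuity lemma) and continuity of $\chi$.
\end{itemize}
The expected gain therefore converges to $\Pr(X_2=t)\,\bigl[U_1(\chi(t),t)-U_1(t,t)\bigr]>0$, which contradicts subgame perfection for small $\epsilon$. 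Note that continuity of $\chi$, which comes from the unique-maximizer structure of the sole survivor's problem, is essential here, not just continuity of payoffs.
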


\cref{prop:frontier} states that across all equilbria the technology will advance to at least $\underline \tau$. This is because at the moment the last firm contemplates stopping, its rival is weakly behind. Log-supermodularity (\cref{ass:payoffs}(iii)) says that rival technology raises the marginal return to the firm's own scaling. Hence the last active firm faces at least as strong an incentive to push the technology as a monopolist would. To see why, let $\chi(t)$ denote the optimal stopping time of the remaining firm once the other firm stops at $t$.  \cref{ass:payoffs} (iii)) implies that $\chi(t)\ge\underline\tau$, so $\max_{i}\tau_{i}< \underline\tau$ can occur with positive probability only if there is positive probability that firms stop at some common date $t < \underline \tau$. But this cannot be an equilibrium: at a history where both firms have strictly positive probability of stopping, either would do  do strictly better by stopping later.

Propositions~\ref{prop:bounds} and~\ref{prop:frontier} show that the equilibrium frontier lies in the interval $[\underline{\tau},\tau^R]$. The next lemma gives a simple characterization of when $\underline{\tau}=\tau^R$ so that the interval collapses to  a point.  
\begin{lemma}[Degenerate-bracket test]\label{lem:bracket-test}
Suppose $0 < \tau^R < +\infty$. Then
\[
D_i\log\pi(\tau^R,0) \;\le\; \lambda(\tau^R) \;=\; D_i\log\pi(\tau^R,\tau^R),
\]
with equality on the left if and only if $\underline\tau = \tau^R$; if the inequality is strict, then $\underline\tau < \tau^R$.
\end{lemma}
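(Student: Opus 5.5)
The plan has three steps: identify the equality on the right, prove the inequality on the left, and then locate $\underline\tau$ relative to $\tau^R$.

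\emph{The equality $\lambda(\tau^R)=D_i\log\pi(\tau^R,\tau^R)$.} Define $g(t):=D_i\log\pi(t,t)-\lambda(t)$, which is continuous by \cref{ass:payoffs}(ii) and (i). By \cref{lem:R-single-peaked}, $g(t)>0$ for every $t<\tau^R$. Since $\tau^R<+\infty$, the same lemma gives $R_{\tau^R}(\tau^R)=\sup_{\tau\ge\tau^R}R_{\tau^R}(\tau)$, and the lemma's characterization then yields $g(\tau^R)\le 0$. Because $\tau^R>0$, we can let $t\uparrow\tau^R$; continuity gives $g(\tau^R)\ge 0$. Together these force $g(\tau^R)=0$. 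The hypothesis $\tau^R>0$ is used exactly here, since it guarantees a left neighbourhood of $\tau^R$.

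\emph{The inequality $D_i\log\pi(\tau^R,0)\le D_i\log\pi(\tau^R,\tau^R)$.} By log-supermodularity (\cref{ass:payoffs}(iii)), $t_j\mapsto D_i\log\pi(\tau^R,t_j)$ is nondecreasing. Comparing $t_j=0$ with $t_j=\tau^R$ gives the inequality.

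\emph{Locating $\underline\tau$.} Define $h(t):=D_i\log\pi(t,0)-\lambda(t)$. By single-crossing (\cref{ass:payoffs}(iv)) with rival technology $0$, $h$ is strictly decreasing on $[0,\infty)$, and it is continuous.
\begin{itemize}
\item If the left inequality is strict, then $h(\tau^R)<0$. By continuity $h<0$ on a neighbourhood of $\tau^R$, so $\underline\tau<\tau^R$.
\item If equality holds on the left, then $h(\tau^R)=0$. Strict monotonicity gives $h(t)>0$ for $t<\tau^R$ and $h(\tau^R)\le0$, so $\underline\tau=\tau^R$.
\item Conversely, suppose $\underline\tau=\tau^R$. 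Since the infimum is attained by continuity, $h(\tau^R)\le0$, and $h>0$ on $[0,\tau^R)$. Letting $t\uparrow\tau^R$ (again using $\tau^R>0$) gives $h(\tau^R)\ge0$, hence $h(\tau^R)=0$. This is equality on the left, because the right-hand equality has already been established.
\end{itemize}

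\emph{Main obstacle.} The only delicate step is obtaining $g(\tau^R)=0$ rather than just $g(\tau^R)\le 0$. This needs both the boundary behaviour supplied by \cref{lem:R-single-peaked} and continuity from the left. The rest is monotonicity bookkeeping.
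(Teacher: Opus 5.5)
Your proof is correct and follows the paper's argument: continuity of the diagonal index at its first crossing gives $\lambda(\tau^R)=D_i\log\pi(\tau^R,\tau^R)$, log-supermodularity gives the left inequality, and strict monotonicity plus continuity of $t\mapsto D_i\log\pi(t,0)-\lambda(t)$ locates $\underline\tau$. You are slightly more careful than the paper: you derive $g(\tau^R)=0$ explicitly from both halves of \cref{lem:R-single-peaked} together with the left limit (which is where $\tau^R>0$ is used), and you spell out the converse direction of the equality case, which the paper leaves implicit.
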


\begin{proof}
Continuity of both indices and of $\lambda$ gives $D_i\log\pi(\tau^R,\tau^R)=\lambda(\tau^R)$ at the first crossing, and log-supermodularity then gives $D_i\log\pi(\tau^R,0)\le\lambda(\tau^R)$, whence $\underline\tau\le\tau^R$. Since $t\mapsto D_i\log\pi(t,0)-\lambda(t)$ is continuous and strictly decreasing (\cref{ass:payoffs}(iv), the rival's technology fixed at zero), its first crossing $\underline\tau$ falls strictly below $\tau^R$ when its value at $\tau^R$ is strictly negative. The final claim combines $T\le\tau^R$ (\cref{prop:bounds}) with $T\ge\underline\tau$ (\cref{prop:frontier}).
\end{proof}

\begin{proposition}[Pure priority]\label{prop:pure-priority}
    There exists a pure-strategy subgame perfect equilibrium.
\end{proposition}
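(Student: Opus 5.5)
We construct a pure-strategy SPE by backward induction, assigning a fixed \emph{priority}: firm $1$ is the designated first stopper and firm $2$ the designated follower. The construction has three steps.

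\textbf{Step 1: one-firm subgames.} After any history in which firm $j$ has stopped at $y$ and firm $i$ is still active at $t\ge y$, firm $i$ faces a single-agent problem with rival frozen at $y$. Its continuation payoff from stopping at $\sigma\ge t$ has the same form as $R_t$, with $\pi(s,y)$ in place of $\pi(s,t)$. By the argument of \cref{lem:R-single-peaked} (single-crossing, \cref{ass:payoffs}(iv), with rival fixed at $y$), this is maximized by stopping at the first $s\ge t$ with $D_i\log\pi(s,y)\le\lambda(s)$, or never if no such $s$ exists. Call this date $\chi_y(t)$; it is deterministic, so the follower's continuation is pure and sequentially rational.

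\textbf{Step 2: two-firm phase.} For each $t$, define the leader's value $L(t)$ as firm $1$'s payoff from stopping at $t$ while firm $2$ then stops at $\chi_t(t)$. Define the follower's value $F(t)$ as firm $2$'s payoff from continuing to $\chi_t(t)$ after firm $1$ stops at $t$. Both are continuous in $t$, by \cref{lem:cont} and continuity of $\chi$. Since $\tau^R<\infty$ in the nontrivial case, \cref{prop:bounds} together with $L$ being continuous on $[0,\tau^R]$ yields a maximizer $t^*\in\argmax_{t\in[0,\tau^R]}L(t)$. We take the largest such maximizer. When $\tau^R=+\infty$, we instead take $t^*=+\infty$ if $\sup_t L(t)$ is approached only as $t\to\infty$, using integrability (v).

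Strategies are as follows. After any two-active history at date $t$, firm $1$ stops at the first date $s\ge t$ maximizing $L$ over $[t,\tau^R]$, and firm $2$ never stops while firm $1$ is active. Both then follow Step 1 after a stop. This is a pure, Markov-in-$t$ profile.

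\textbf{Step 3: verification.} Firm $1$'s only deviations are to stop at a different date $s$ (continuation value $L(s)$) or to be preempted. Its choice is optimal among its own stopping dates by construction, because firm $2$ never moves first.

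For firm $2$, a deviation means stopping at some $s$ before firm $1$'s planned date $t'$. This gives firm $2$ the leader value $L(s)$, with firm $1$ now following $\chi_s(s)$. Waiting instead gives $F(t')$. We must show $F(t')\ge L(s)$ for all $s\le t'$.

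This inequality is the main obstacle. Our approach is to compare payoffs pathwise. On $[s,t']$ both firms are active under waiting, whereas under deviation firm $2$ is frozen at $s$. Under waiting, firm $2$ ends weakly more advanced than firm $1$, since it stops at $\chi_{t'}(t')\ge t'$. Optimality of $t'$ for firm $1$ gives $L(t')\ge L(s)$. We then need $F(t')\ge L(t')$, i.e. that following is better than leading at the same date. This holds because the follower has the option to stop immediately at $t'$, which yields exactly the symmetric payoff of joint stopping at $t'$, and that payoff weakly exceeds $L(t')$. The reason is that the leader receives $\pi(t',\cdot)$ against a rival who then advances, which is worse because $D_j\pi<0$, while disaster risk is at least as high.

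The remaining gap is that the deviation compares $L(s)$ computed as firm $2$'s payoff when firm $1$ follows. By symmetry of $\pi$ this equals firm $1$'s $L(s)\le L(t')\le F(t')$. If this chain fails in some region, the fallback is to reassign priority dynamically, letting whichever firm gains more from leading lead. One-shot-deviation arguments, justified by continuity (\cref{lem:cont}), then complete the proof that the profile is an SPE.
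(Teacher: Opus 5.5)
Your construction is the paper's. Firm $1$ is the designated first stopper and picks its favorite point on firm $2$'s reaction curve $\chi$, firm $2$ waits, and firm $2$'s incentive comes from the chain $L(s)\le L(t')\le F(t')$. Your proof of $F(t')\ge L(t')$ differs slightly from the paper's. You go through the joint-stop payoff, $F(t')=U(\chi(t'),t')\ge U(t',t')\ge U(t',\chi(t'))=L(t')$, using optimality of $\chi(t')$ and the fact that a rival's later stop hurts a stopped firm. The paper instead compares the follower's and leader's paths directly: the follower has weakly higher own technology, faces weakly lower rival technology, and shares the survival kernel. Both are valid, and yours also handles the deviation in which firm $2$ stops simultaneously at $t'$. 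The chain closes for every $s\in[t,t']$, so drop the hedge about reassigning priority dynamically; nothing fails.

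There is one step you must repair. You restrict firm $1$'s choice to $[t,\tau^R]$ by citing \cref{prop:bounds}. That proposition is a property of equilibria, so using it while constructing one is circular. Worse, the rule ``maximize $L$ over $[t,\tau^R]$'' is undefined at two-active histories with $t>\tau^R$, and subgame perfection requires you to specify and verify play there. For $t\le\tau^R$ the restriction is harmless, but you must show it directly. Split at $\tau^R$ and use $\chi(\tau^R)=\tau^R$: for $s>\tau^R$, leading at $s$ is worth at most the frozen-rival payoff $R_{\tau^R}(s)$, because the rival sits weakly above $\tau^R$ and the hazard lasts weakly longer. That payoff is strictly below $R_{\tau^R}(\tau^R)=\pi(\tau^R,\tau^R)/r$, which is the value of leading at $\tau^R$.

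The cleaner fix, which is what the paper does, is to drop $\tau^R$ from the construction. The map $(t,s)\mapsto L_t(s)$ is jointly continuous on $\{t\le s\}$ with $s$ in the compactified $[t,+\infty]$, by \cref{lem:cont} and continuity of $\chi$. So the earliest maximizer $a(t)$ over $[t,+\infty]$ exists at every $t$, is lower semicontinuous and hence Borel, and is time-consistent by the splitting identity $L_t(s)=c_{t,t'}+e^{-\int_t^{t'}(r+\lambda)}L_{t'}(s)$. This also covers $\tau^R=+\infty$ without your separate limiting argument. Finally, commit to a single selection rather than switching between the largest and the first maximizer.
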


The idea behind \cref{prop:pure-priority} is that we can fix one firm (say, $1$) as the designated first stopper. Along any history at which both firms are active, let firm $1$ pick its favorite stopping date, knowing that firm $2$ will respond optimally. This is just one among many equilibria; the following proposition develops a simple classification of all of them.

\begin{proposition}[Spreading and bunching]\label{prop:taxonomy}
For any $\bm\tau\in\SPE(\pi,\lambda)$, almost surely the realized profile is one of:
\begin{itemize}
  \item[(i)] \textbf{Spread:} $\min_i\tau_i<\max_i\tau_i<\tau^R$; or 
  \item[(ii)] \textbf{Bunched:} $\min_i\tau_i = \max_i\tau_i = \tau^R$.
\end{itemize}
\end{proposition}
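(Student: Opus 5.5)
The plan is to classify realized profiles according to whether the two stopping dates coincide. By \cref{prop:bounds} we already have $\max_i\tau_i\le\tau^R$ almost surely. Three kinds of profile remain to be ruled out, each on a null set: \emph{(a)} a common stop $\tau_1=\tau_2=t<\tau^R$; \emph{(b)} a spread profile whose later stop is exactly at $\tau^R$; \emph{(c)} a spread profile in which the second firm does not strictly outlast the first for the right reason. The tool for all three is the sole-survivor problem. After firm $j$ stops at $y$, the remaining firm faces exactly the frozen-rival problem $R_y(\cdot)$, with the rival fixed at $y$. Sequential rationality together with \cref{lem:R-single-peaked} therefore forces it to stop at the unique maximizer $\chi(y)$, which is continuous in $y$.

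\emph{Step 1 (spread profiles).} Suppose the first stop is at $y<\tau^R$. By \cref{lem:R-single-peaked}, $D_i\log\pi(y,y)>\lambda(y)$. Hence $R_y$ is strictly increasing just to the right of $y$, and $\chi(y)>y$, so the survivor strictly outlasts its rival. Next I need $\chi(y)<\tau^R$. Single-crossing in own technology (\cref{ass:payoffs}(iv)) with the rival fixed at $y$ makes $\chi(y)$ the first $x$ with $D_i\log\pi(x,y)\le\lambda(x)$. Log-supermodularity gives $D_i\log\pi(\tau^R,y)\le D_i\log\pi(\tau^R,\tau^R)=\lambda(\tau^R)$, so $\chi(y)\le\tau^R$. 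The delicate point is to exclude $\chi(y)=\tau^R$ for $y<\tau^R$. I would first try to extract strictness from the paper's assumptions, via the strict decrease in (iv) combined with (iii). Failing that, I would show that $\chi(y)=\tau^R$ can happen only on a set of $y$ on which the survivor's stopping date is constant at the level $\tau^R$. Then either this set carries no equilibrium mass, or it must be absorbed into a bunching-type argument. This is the step I expect to be the main obstacle, because log-supermodularity as stated is only weak.

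\emph{Step 2 (no common atom below $\tau^R$).} Suppose that with positive probability both firms stop at the same $t<\tau^R$. Since randomization is private and independent conditional on the public history at $t$, each firm stops at $t$ with positive conditional probability. Consider the deviation for firm $i$ that continues past $t$. If the rival is seen to have stopped at $t$, firm $i$ continues to $\chi(t)$. Otherwise it stops at $t+\e$, after which the rival best-responds with $\chi(t+\e)$.

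On the event that the rival stops at $t$, the gain is $R_t(\chi(t))-R_t(t)>0$. This gain is fixed and does not vanish with $\e$, because $\chi(t)>t$ when $t<\tau^R$. On the complementary event, the payoff change goes to zero as $\e\to0$. This uses continuity of payoffs in stopping dates (\cref{lem:cont}), continuity of $\chi$, and right-continuity of the rival's cumulative stopping probability, so that the probability of the rival stopping in $(t,t+\e]$ vanishes. For small $\e$ the deviation is therefore strictly profitable, a contradiction.

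\emph{Step 3 (assembly).} Combining the pieces: any profile with $\min_i\tau_i<\max_i\tau_i$ has its first stop below $\tau^R$, so Step 1 puts it in the spread case (i). Any profile with $\min_i\tau_i=\max_i\tau_i$ cannot lie below $\tau^R$ by Step 2, nor above $\tau^R$ by \cref{prop:bounds}. It therefore equals $\tau^R$, which is the bunched case (ii).
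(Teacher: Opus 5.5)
\textbf{Verdict: there is a genuine gap, in Step~1, at exactly the point you flagged.} Your Step~2 is essentially the paper's argument against a common atom below $\tau^R$, and it is fine.

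\emph{Why Step~1 fails.} The inequality $\chi(y)<\tau^R$ for $y<\tau^R$ does not follow from the assumptions. Strict decrease in (iv) together with only weak log-supermodularity gives $D_i\log\pi(\tau^R,y)\le\lambda(\tau^R)$, possibly with equality. If equality holds, then $\chi(y)=\tau^R$. This happens whenever $D_{ij}\log\pi\equiv0$. Two of the paper's own examples are of this kind: the exponential race $\pi(t_i,t_j)=e^{\alpha t_i-\beta t_j}$ with $\lambda(t)=\lambda_0e^{\kappa t}$, and the log-separable example with $B=0$. There $D_i\log\pi(x,y)$ does not depend on $y$, so $\chi(y)=\tau^R$ for every $y\in[0,\tau^R]$.

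\emph{Why the fallback does not repair it.} The set $\{y<\tau^R:\chi(y)=\tau^R\}$ can be all of $[0,\tau^R)$. Nothing in your proposal shows that it carries no equilibrium mass. The Step~2 deviation exploits a tie, so it says nothing about spread profiles $(y,\tau^R)$. Consequently Step~3's claim that every profile with $\min_i\tau_i<\max_i\tau_i$ lies in case~(i) is unproved.

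\emph{The missing idea.} What is needed is an incentive argument for the first stopper, not a further property of $\chi$: a firm whose rival would respond at $\tau^R$ anyway should itself wait until $\tau^R$.
\begin{itemize}
\item[] \textbf{Setup.} Take the device event $\{X_1<\tau^R,\ \chi(X_1)=\tau^R\}$. It is measurable by continuity of $\chi$, and it contains $\{\tau_1<\tau^R=\tau_2\}$ up to null sets. On it, replace firm~$1$'s candidate $x=X_1$ by $\tau^R$, and respond with $\chi(y)$ if firm~$2$ stops first at $y$.
\item[] \textbf{If $X_2<x$.} Nothing changes.
\item[] \textbf{If $X_2=x$.} The outcome moves from $(x,x)$ to $(\chi(x),x)$. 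This is a strict gain, because $\chi(x)$ is the unique optimum against a rival frozen at $x$.
\item[] \textbf{If $x<X_2=y<\tau^R$.} The outcome moves from $(x,\tau^R)$ to $(\chi(y),y)$, with $x<y<\chi(y)\le\tau^R$. Firm~$1$'s own technology is strictly higher, its rival's is lower, and the frontier (hence the accumulated hazard) is weakly lower.
\item[] \textbf{If $X_2\ge\tau^R$.} The outcome moves to $(\tau^R,\tau^R)$, because $\chi(\tau^R)=\tau^R$. Firm~$1$'s technology rises while the rival's technology and the frontier are unchanged.
\end{itemize}
The deviation is therefore weakly profitable realization by realization and strictly profitable on $\{\tau_1<\tau^R=\tau_2\}$. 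Subgame perfection then forces $\Pr(\tau_1<\tau^R=\tau_2)=0$. This argument uses only facts about $\chi$ you already have: $\chi(y)\le\tau^R$ for $y\le\tau^R$, $\chi(\tau^R)=\tau^R$, and uniqueness of the frozen-rival optimum.
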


\cref{prop:bounds,prop:frontier} already rule out the frontier exceeding
$\tau^R$ and falling short of $\underline\tau$. As in the proof of \cref{prop:frontier}, there cannot be positive probability that both firms plan
to stop simultaneously at some $t<\tau^R$.  A
sole survivor at $t$ optimally continues well past $t$, all the way to $\chi(t)>t$, which is strictly more profitable than
stopping at $t$. The prospect of this gain makes the simultaneous stop
self-defeating. And if a  firm
planned to stop first at some $t<\tau^R$, with the other  continuing to $\tau^R$, the early-stopper would do strictly
better by waiting until $\tau^R$ itself: This induces
its rival to either stop before $\tau^R$, gaining higher own technology, a
weaker rival, and less accumulated hazard, or stops simultaneously with
its rival at the ceiling, which is also an improvement. 
Together these two exclusions and the boubds from
\cref{prop:bounds,prop:frontier}, leave exactly the taxonomy: either
firms stop at different times with the frontier strictly inside the
bracket, or they stop together at the ceiling.

Finally, we develop some simple comparative statics for how the ceiling $\tau^R$ moves with the primitives. Write $\tau^R(\pi,\lambda)$ for the representative threshold associated with primitives $(\pi,\lambda)$. The key observation is an asymmetry in how $\pi$ and $\lambda$ enter the representative firm's stopping rule (\cref{lem:R-single-peaked}): the payoff function matters only through the \emph{diagonal index} $t \mapsto D_i\log\pi(t,t)$, whereas the hazard enters only as the threshold this index must fall below.

\begin{proposition}[Comparative statics]\label{prop:cs-tauR}
For $(\pi,\lambda)$ and $(\tilde\pi,\tilde\lambda)$ that satisfy\cref{ass:payoffs}: 
\begin{itemize}
    \item[(i)] If $D_i\log\tilde\pi(t,t)\ \ge\ D_i\log\pi(t,t)$ for every $t\ge0$, then $\tau^R(\tilde\pi,\lambda)\ge\tau^R(\pi,\lambda)$. \\ 
     If $D_i \log \tilde \pi(t,0) \geq D_i \log  \pi(t,0)$ for every $t \geq 0$ then $\underline \tau (\tilde \pi, \lambda) \geq \underline \tau (\pi,\lambda)$. 
    \item[(ii)] If $\tilde\lambda(t)\ge\lambda(t)$ for every $t\ge0$, then $\tau^R(\pi,\tilde\lambda)\le\tau^R(\pi,\lambda)$ and $\underline \tau(\pi, \tilde \lambda) \leq \underline \tau(\pi,\lambda)$.
\end{itemize}
\end{proposition}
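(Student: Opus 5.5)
The plan is to reduce every claim to a comparison of first-crossing times of a continuous, strictly decreasing gap function. By \cref{lem:R-single-peaked}, immediate stopping is optimal for the representative firm at $t$ iff $D_i\log\pi(t,t)\le\lambda(t)$. It also shows that $D_i\log\pi(t,t)>\lambda(t)$ for all $t<\tau^R$. Hence
\[
\tau^R(\pi,\lambda)=\inf\{t\ge 0: g(t)\le 0\},\qquad g(t):=D_i\log\pi(t,t)-\lambda(t),
\]
with $\inf\emptyset=+\infty$. Likewise, by definition,
\[
\underline\tau(\pi,\lambda)=\inf\{t: h(t)\le 0\},\qquad h(t):=D_i\log\pi(t,0)-\lambda(t).
\]
Both $g$ and $h$ are continuous by \cref{ass:payoffs}(ii),(i). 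The argument never needs the full single-crossing of $g$ along the diagonal (which is not directly implied by (iv), since (iv) moves only own technology). All it uses is the first-crossing representation, and that representation is monotone in the gap function.

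\textbf{Core monotonicity.} If $\tilde g\ge g$ pointwise, then $\{t:\tilde g(t)\le 0\}\subseteq\{t:g(t)\le 0\}$, so the infimum over the smaller set is weakly larger: $\tilde\tau\ge\tau$. This covers the case where either set is empty, since then the threshold is $+\infty$.

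\textbf{Applying it.} For (i), first part, take $\tilde g(t)=D_i\log\tilde\pi(t,t)-\lambda(t)\ge g(t)$; the core monotonicity gives $\tau^R(\tilde\pi,\lambda)\ge\tau^R(\pi,\lambda)$. The second part of (i) is the same argument with $h$ and the origin index $D_i\log\pi(t,0)$. For (ii), $\tilde\lambda\ge\lambda$ makes both $\tilde g\le g$ and $\tilde h\le h$ pointwise. Reversing the roles in the core monotonicity gives $\tau^R(\pi,\tilde\lambda)\le\tau^R(\pi,\lambda)$ and $\underline\tau(\pi,\tilde\lambda)\le\underline\tau(\pi,\lambda)$.

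\textbf{Main obstacle.} The only delicate step is confirming that $\tau^R$, defined through the sup of $R_t$, equals the first zero-crossing of $g$ for the \emph{tilde} primitives as well. This is why the hypothesis that both primitive pairs satisfy \cref{ass:payoffs} is needed: it lets \cref{lem:R-single-peaked} be applied to each pair separately. Given that lemma's iff characterization, the set $\{t: R_t(t)=\sup_{\tau\ge t}R_t(\tau)\}$ is exactly $\{t: g(t)\le 0\}$, so the identification is immediate and no further analysis of $R_t$ is required.
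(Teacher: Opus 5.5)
Your proposal is correct and matches the paper's proof essentially step for step. Both arguments apply \cref{lem:R-single-peaked} to each primitive pair to write $\tau^R$ and $\underline\tau$ as first-crossing times $\inf\{t\ge0:\text{index}(t)\le\lambda(t)\}$, then obtain all four claims from one set inclusion: a pointwise larger gap function shrinks the crossing set and so weakly raises its infimum (with the empty set giving $+\infty$).
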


\cref{prop:cs-tauR} shows how the frontier's ceiling and floor respond to the primitives. The ceiling $\tau^R$ is set by the diagonal index $D_i\log\pi(t,t)$, the neck-and-neck racing incentive: it strengthens with the sensitivity of payoffs to relative position (the contest weight $\gamma$, the complementarity $B$), pushing the ceiling out. The floor $\underline\tau$ is set by the origin index $D_i\log\pi(t,0)$, the marginal value of scaling for a firm whose rival is stuck at the origin, so it moves with the technology's standalone profitability rather than with rivalry. Raising the hazard ($\lambda_0$, $\kappa$) pulls both bounds in.

\begin{proof}[Proof of \cref{prop:cs-tauR}]
By \cref{lem:R-single-peaked} and the definition of $\underline\tau$, both thresholds are first-crossing times of an index with the hazard:
$
\tau^R(\pi,\lambda)=\inf\{t\ge0:\partial_i\log\pi(t,t)\le\lambda(t)\}; 
$
and 
$
\underline\tau(\pi,\lambda)=\inf\{t\ge0:\partial_i\log\pi(t,0)\le\lambda(t)\}.
$
All four claims follow from the same set inclusion. If an index is replaced by a pointwise weakly larger one, any $t$ in the new crossing set belongs a fortiori to the old one, so the crossing set shrinks and its infimum weakly rises; this gives both statements in (i), with the diagonal index for $\tau^R$ and the origin index for $\underline\tau$. If instead the hazard is replaced by a pointwise weakly larger one, any $t$ in the old crossing set belongs to the new one, so the crossing set grows and its infimum weakly falls; this gives both statements in (ii).
\end{proof}

\subsection{Illustrative Examples}
We now illustrate how the upper- and lower-bounds behave across a number of parametric examples. These examples specialize to an exponential hazard rate, $\lambda(t)=\lambda_{0}e^{\kappa t}$.

\begin{example}[Score contests]
Firm $i$ earns a share $s_i(\bm x)=e^{\gamma x_i}/(e^{\gamma x_1}+e^{\gamma x_2})$, with $\gamma\ge0$, of an own-technology prize $A_0e^{\beta x_i}$, so its flow payoff is
$
\pi(x_i,x_j)=A_0e^{\beta x_i}\,s_i(\bm x).
$
At a symmetric profile, the marginal log-payoff from scaling is
$
D_i\log\pi(t,t)=\beta+\frac{\gamma}{2},
$
while with the rival at the origin it is
$
D_i\log\pi(t,0)=\beta+\frac{\gamma}{e^{\gamma t}+1}.
$
Suppose first that $\kappa>0$. Then
\[
\tau^R=\left[\frac1\kappa\log\frac{\beta+\gamma/2}{\lambda_0}\right]^+,
\qquad
\underline\tau
=
\inf\left\{t\geq 0:
\beta+\frac{\gamma}{e^{\gamma t}+1}
\leq
\lambda_0e^{\kappa t}
\right\}.
\]
If $\lambda_0\geq\beta+\gamma/2$, both thresholds equal zero. If $\lambda_0<\beta+\gamma/2$, the ceiling is interior; the two thresholds coincide when $\gamma=0$, so that the prize is split evenly regardless of play, and satisfy $\underline\tau<\tau^R$ when $\gamma>0$. If instead $\kappa=0$, the diagonal index is constant: the ceiling is zero when $\beta+\gamma/2\leq\lambda_0$ and infinite otherwise, and the $1/\kappa$ formula does not apply.

\cref{fig:score-twofirm} plots both thresholds as the curvature of the prize and the level of the hazard vary, together with the pure priority equilibrium at selected parameter values: it bunches at the ceiling when the contest is shallow and spreads once the contest is steep. \cref{fig:br} shows why: 
the left panel varies the share sensitivity $\gamma$. The bracket $[\underline\tau,\tau^R]$ collapses at $\gamma=0$, where the prize is split evenly, and widens with $\gamma$: the ceiling rises with rivalry, while the floor first rises (encouragement) and then falls (discouragement: a laggard's share gain vanishes when the contest is steep). The right panel varies the hazard level $\lambda_0$. Both thresholds fall and reach zero together at $\lambda_0=\beta+\gamma/2$. Dots mark the pure priority equilibrium at selected parameter values: it bunches at the ceiling for moderate curvature, and the two stops separate strictly inside the bracket once the contest is steep ($\gamma\gtrsim1.8$).

\end{example}

\begin{figure}[H]\centering
\includegraphics[width=\textwidth]{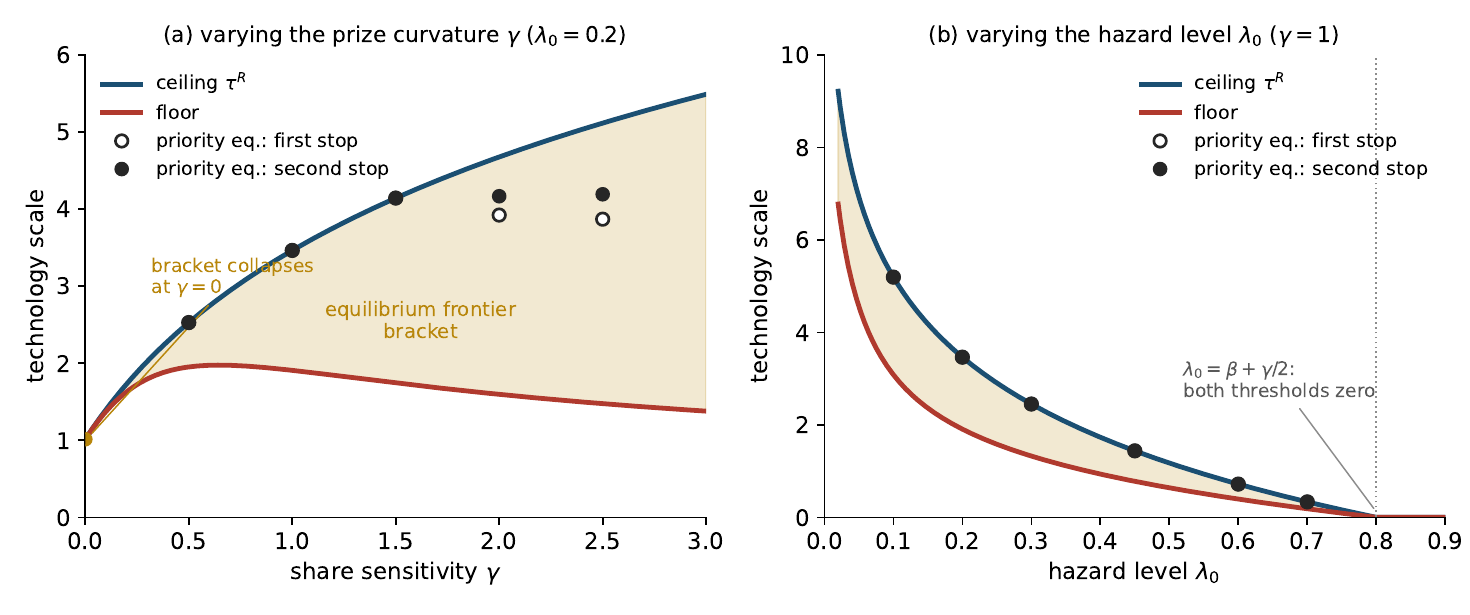}
\caption{Score-contest thresholds; parameters: $\beta=0.3$, $\kappa=0.4$, $r=0.1$; the left panel fixes $\lambda_0=0.2$, the right panel fixes $\gamma=1$}
\label{fig:score-twofirm}
\end{figure}

\cref{fig:br} illustrates both bunching and spreading regimes for the score contest. It shows the follower's reaction curve (red) and the first stopper's isopayoff curve in the plane of stopping dates. The first stopper's payoff rises toward the lower right: delaying its own stop raises its flow profit, while the follower's delay erodes it and prolongs the hazard. The priority equilibrium maximizes the first stopper's payoff along the reaction curve. At $\gamma=2$ (right panel) the optimum is an interior tangency: the isopayoff curve through the equilibrium touches the reaction curve at the dot and lies below it elsewhere. At $\gamma=1$ (left panel) the payoff rises along the entire reaction curve, so the optimum is the corner $(\tau^R,\tau^R)$ where the reaction curve meets the $45^\circ$ line, and the firms bunch.

\begin{figure}[t]\centering
\includegraphics[width=.95\textwidth]{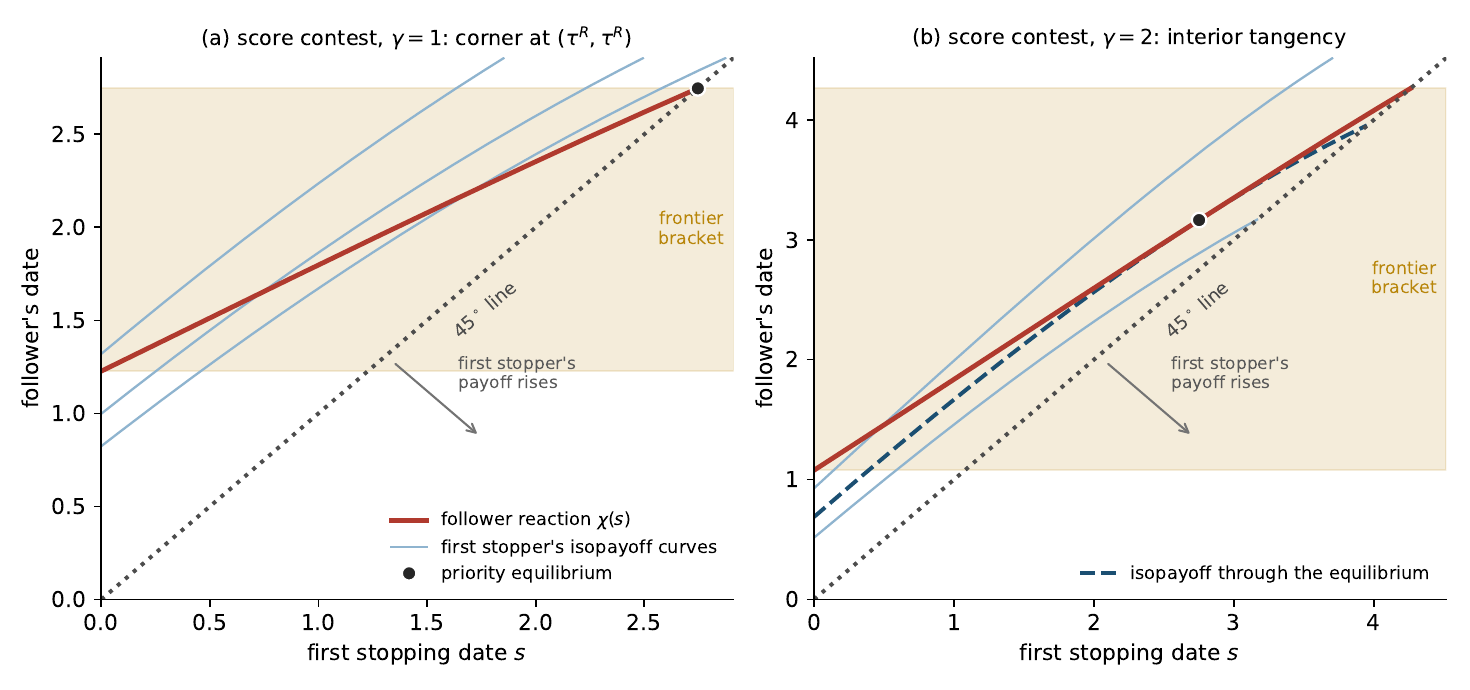}
\caption{Follower's reaction curve; parameters: $\beta=0.1$, $\lambda_0=0.2$, $\kappa=0.4$, $r=0.1$}
\label{fig:br}
\end{figure}

\begin{example}[Cournot competition] Let inverse demand be $P=a-b(q_1+q_2)$, and let capability lower marginal cost, $c(x)=c_0e^{-\delta x}$. Assume $a>2c_0$; since $c(x)\in(0,c_0]$, every technology profile then has $M_i=a-2c(x_i)+c(x_j)\geq a-2c_0>0$, so the static Cournot quantities are interior and the log-profit index below is defined globally. In the interior Cournot equilibrium,
\[
\pi_i(\bm x)=\frac{M_i^2}{9b},
\qquad
M_i:=a-2c(x_i)+c(x_j),
\]
so
\[
D_i\log\pi_i(\bm x)
=
\frac{4\delta c(x_i)}{a-2c(x_i)+c(x_j)}.
\]
The ceiling and the floor are therefore
\[
\tau^R
=
\inf\left\{t\geq0:
\frac{4\delta c_0e^{-\delta t}}{a-c_0e^{-\delta t}}
\leq
\lambda_0e^{\kappa t}
\right\},
\qquad
\underline\tau
=
\inf\left\{t\geq0:
\frac{4\delta c_0e^{-\delta t}}{a+c_0-2c_0e^{-\delta t}}
\leq
\lambda_0e^{\kappa t}
\right\}.
\]
For $c_0>0$, the two thresholds coincide when $4\delta c_0\leq\lambda_0(a-c_0)$, in which case both equal zero; when $4\delta c_0>\lambda_0(a-c_0)$, both crossings are positive and $\underline\tau<\tau^R$. When $c_0=0$, the first-crossing definitions give $\underline\tau=\tau^R=0$, with the logarithmic formulas interpreted by continuity.
\cref{fig:cournot-twofirm} illustrates these numerically. The left panel varies the hazard growth $\kappa$: at $\kappa=0$, $\underline\tau=0.511$ and $\tau^R=0.588$. As $\kappa$ rises, both crossings move earlier and the bracket tightens. A larger market size $a$ also moves both bounds earlier: with high baseline demand, firms earn substantial margins, so the proportional gain from further cost reduction is smaller. The right panel varies the market size $a$; both thresholds reach zero together at $a=c_0(4\delta+\lambda_0)/\lambda_0$. Dots mark the pure priority equilibrium at selected parameter values: at this calibration it bunches at the ceiling, so the two dots coincide on $\tau^R$. 

\begin{figure}[H]\centering
\includegraphics[width=\textwidth]{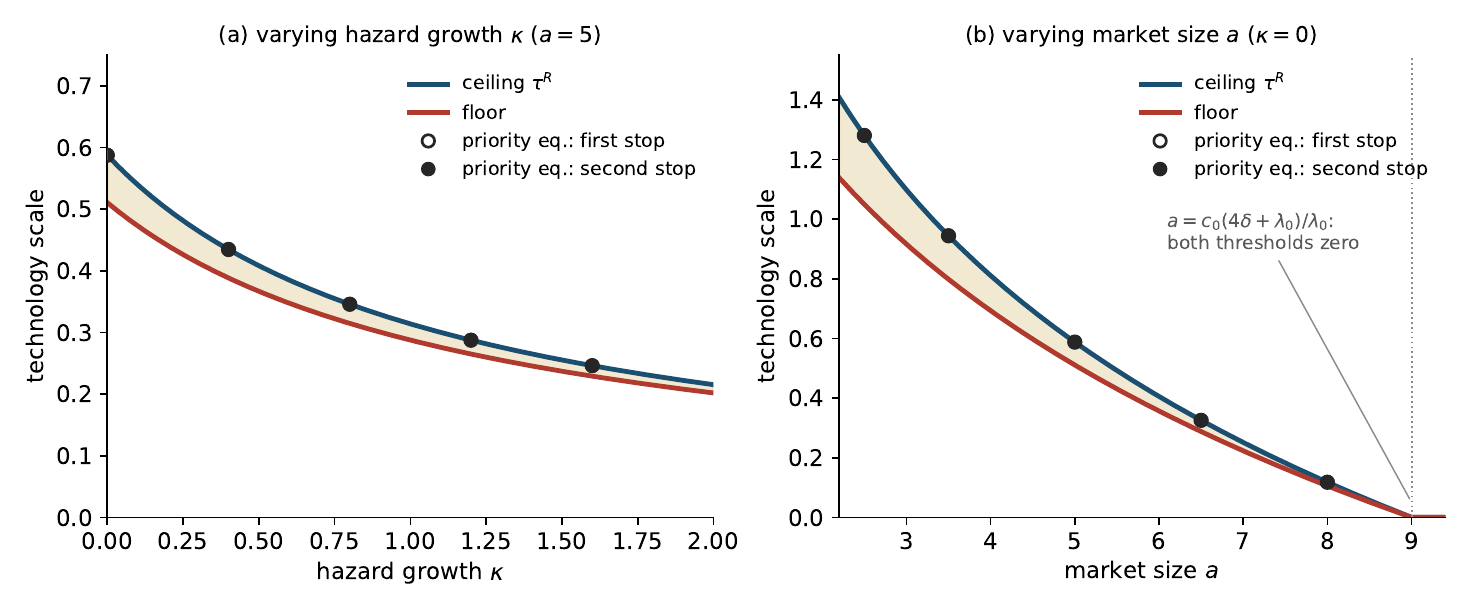}
\caption{Cournot bracket; parameters: $c_0=1$, $\delta=1$, $\lambda_0=0.5$; the left panel fixes $a=5$, the right panel fixes $\kappa=0$}
\label{fig:cournot-twofirm}
\end{figure}
\end{example}

\begin{example}[Log-separable payoffs]
Finally, suppose log-flow payoffs are additively separable with one interaction term. Let $\psi(t):=1-e^{-t}$ and take
\[
\log\pi(t_i,t_j) = A\psi(t_i) - C\psi(t_j) + B\psi(t_i)\psi(t_j),
\]
with $A>0$, $C\geq B\geq 0$, and $C>0$. The parameter $B$ controls the complementarity between own and rival technology; $C$ governs how much the rival's progress erodes a firm's profits. The origin index and its diagonal counterpart are
$
D_i\log\pi(t,0) = A\psi'(t),$ and $
D_i\log\pi(t,t) = \psi'(t)\bigl(A + B\psi(t)\bigr),$ 
with gap $B\psi'(t)\psi(t)$, so the two indices coincide when $B=0$.  The hazard-free parts of \cref{ass:payoffs} are immediate:
$D_i\log\pi=\psi'(t_i)\bigl[A+B\psi(t_j)\bigr]>0$;
$D_j\log\pi=\psi'(t_j)\bigl[B\,\psi(t_i)-C\bigr]<0$, since $\psi<1$ and $C\ge B$;
$D_{ij}\log\pi=B\,\psi'(t_i)\,\psi'(t_j)\ge0$; and since 
$\pi(t,t)$ is bounded above the integrability condition  (v) holds.
The slope of the follower's response to a first stop is governed by $D_{ij}\log\pi$, so under strong complementarity a marginal delay of the first stop drags the follower's stop outward by a comparable amount, which can pull the first stopper's optimum strictly below the ceiling.

\begin{figure}[htbp]\centering
\includegraphics[width=\textwidth]{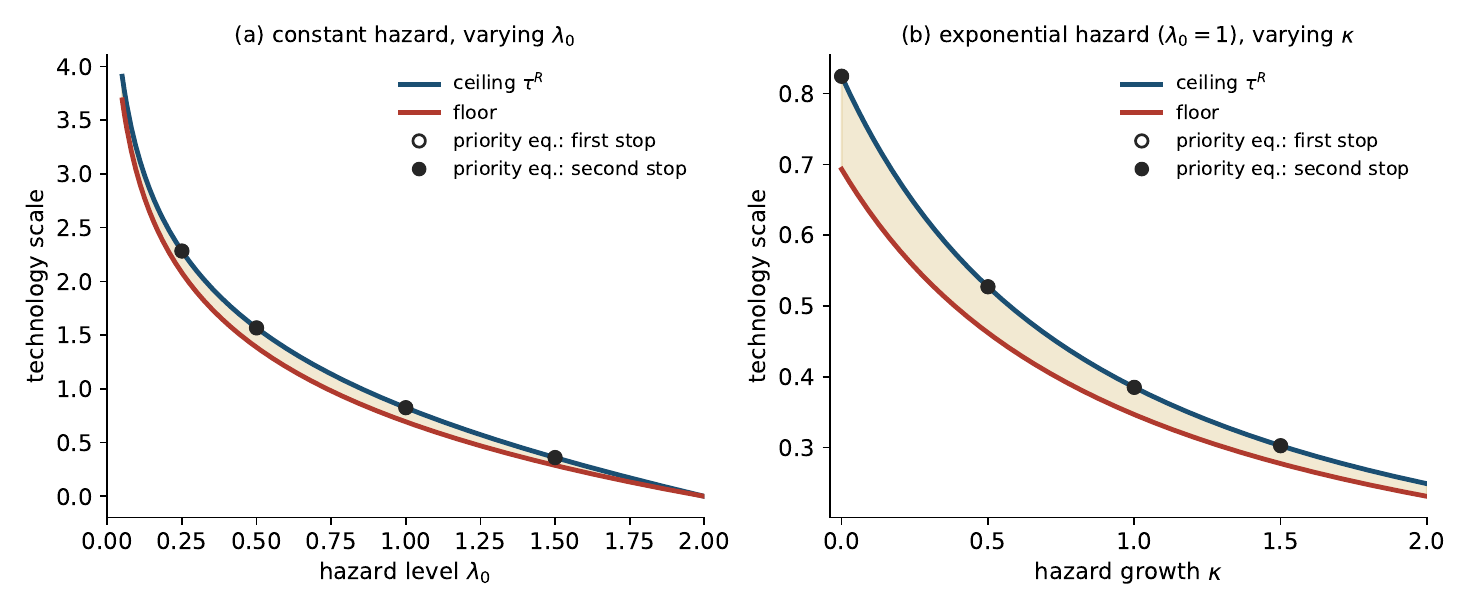}
\caption{Separable bracket; parameters: $A=2$, $B=\tfrac12$, $r=0.1$, $C=1$}
\label{fig:separable}
\end{figure}
\end{example}

\cref{prop:bounds,prop:frontier} bracket the equilibrium frontier $T:=\max_i\tau_i$ within $[\underline\tau,\tau^R]$ in every equilibrium, so competition carries the frontier all the way to the representative threshold precisely when the bracket is degenerate: $\underline\tau=\tau^R$. By \cref{lem:bracket-test}, $\underline\tau<\tau^R$  when the origin index falls strictly below the hazard at $\tau^R$; here the origin index differs from the diagonal index by the gap, so for $0<\tau^R<+\infty$,
\[
\underline\tau<\tau^R
\ \Longleftrightarrow\
B\,\psi'(\tau^R)\,\psi(\tau^R)>0
\ \Longleftrightarrow\
B>0 ,
\]
using $\psi'>0$ and $\psi(\tau^R)>0$: the bracket is degenerate iff $B=0$, and the magnitude of the complementarity is irrelevant. With an exponential hazard both thresholds are explicit. We have
$\partial_i\log\pi(t,t)=e^{-t}\big[A+B(1-e^{-t})\big]$ and
$\partial_i\log\pi(t,0)=Ae^{-t}$,
so for $\lambda_0\in(0,A)$ the floor solves $Ae^{-t}=\lambda_0e^{\kappa t}$, giving $\underline\tau=\log(A/\lambda_0)/(1+\kappa)$, while the ceiling solves $e^{-t}[A+B(1-e^{-t})]=\lambda_0e^{\kappa t}$; for a constant hazard ($\kappa=0$) the latter is a quadratic in $e^{-t}$, and for $\kappa>0$ it is computed numerically. \cref{fig:separable} plots both thresholds. The left panel keep sthe hazard constant and varies the risk level $\lambda_0$. Both thresholds fall as the technology becomes more dangerous, with $\underline\tau=\log(A/\lambda_0)$, and the bracket narrows from width $\log(1+B/A)$ as $\lambda_0\downarrow0$ to zero as $\lambda_0\uparrow A$. The right panel considers exponential hazard with $\lambda_0=1$, varying the growth rate $\kappa$; both thresholds fall as risk accelerates. Dots mark the pure priority equilibrium at selected parameter values.

\section{Imperfect Transparency}\label{sec:imperfect}
We now relax perfect observability. If firm $i$ stops at $\tau_i$, its rival receives a conclusive signal after an independent delay $E_i
\sim \mathrm{Exp}(\omega)$. Thus $Y_{t}^i$ jumps from zero to one at $\tau_i +E_i$.  A signal reveals that the firm has stopped but not when; silence is therefore ambiguous. The news rate $\omega$ measures transparency, with mean detection lag $1/\omega$; the perfectly observed benchmark is the limit $\omega\to\infty$.

Firm $i$'s filtration is, as before, generated by its private randomization device $\bm{Z}_i^*$\footnote{This consists of two iid uniform random variables. The first is drawn at time $0$, and the second is drawn upon \emph{detection} of the rival's stop.} and the signals it receives:
$
\mathcal F^i_t=\sigma\Big(\bm{Z}_i^*,\ (Y^j_s)_{s\le t,\,j\neq i}\Big).$ As before, a strategy specifies, at each of the firm's histories, a distribution over future stopping dates, and an outcome is a profile $\bm\tau$ in which each $\tau_i$ is $\mathcal{F}^i$-adapted. The formal construction (\cref{app:noisy}) mirrors that of the perfectly observed game with detections in place of stops: a strategy is a family of \emph{plans} indexed by the firm's \emph{signal record}---which rivals have been detected, and when---each plan a measurable map from the device to a stopping date. With two firms, a strategy therefore reduces to a pair: a \emph{concession plan}, the (possibly mixed) date at which the firm stops on its no-signal history, and a \emph{response plan}, the date at which it stops following a detection.

We study perfect Bayesian equilibria and write $\mathrm{PBE}(\omega)$ for the 
set of equilibrium outcomes. Beliefs are largely pinned down by Bayes' rule: 
on no-signal histories, the posterior over whether and when the rival has stopped 
is determined by the rival's strategy and the absence of a signal; on detection 
histories, the signal is conclusive, so firm $i$ assigns probability one to firm 
$j$ having stopped in the past. The only residual freedom is the posterior over 
the exact stopping date of a rival detected off path.  Before $\tau^R$, these beliefs may affect the optimal response because they determine the rival's frozen technology, but after it \cref{ass:dsc} implies that immediate stopping is optimal for every feasible stopping-date belief.

We henceforth maintain the following condition that guarantees that after $\tau^R$, a firm that discovers that its rival has already stopped also prefers to stop.

\begin{assumption}[Diagonal single-crossing]\label{ass:dsc}
The map $t\mapsto D_i\log\pi(t,t)-\lambda(t)$ crosses zero at most once, from above; equivalently, if $D_i\log\pi(t,t)\le\lambda(t)$ then $D_i\log\pi(s,s)\le\lambda(s)$ for every $s\ge t$.
\end{assumption}

\begin{proposition}[Existence under imperfect transparency]
\label{prop:existence}
Suppose that $\tau^R < +\infty$. Then for every news rate $\omega>0$, $\PBE(\omega)\neq\emptyset$.
\end{proposition}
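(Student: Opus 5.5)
The proof will construct an explicit equilibrium rather than appeal to a general fixed-point theorem. The candidate is a \emph{pure concession-plan} profile: on its no-signal history each firm $i$ stops at a deterministic date $c_i$, and after detecting its rival's stop at $t$ it uses a response plan that is a best reply given its posterior over the rival's stopping date. I would try the symmetric profile $c_1=c_2=c$, with the response after detection at date $s$ equal to ``stop immediately if $s\ge\tau^R$, otherwise continue to the frozen-rival optimum $\chi(\cdot)$ evaluated at the believed rival stop date.'' Since concession plans are deterministic, Bayes' rule pins down all on-path beliefs. After an on-path detection at $s$, the rival is known to have stopped at $c$. Off-path detections before $c$ are assigned the belief that the rival stopped at the detection date. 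By \cref{ass:dsc} and the argument in the proof of \cref{prop:bounds}, once $s\ge\tau^R$ the frozen-rival payoff is uniquely maximized by stopping immediately whatever the rival's stopping date, so response plans after $\tau^R$ do not depend on beliefs.

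The first step is to reduce the problem to the no-signal concession date. Fix firm $j$'s strategy. Firm $i$'s payoff from stopping at $c'$ on its no-signal path, given that it responds optimally after detection, is a function $V_i(c')$. I will show that $V_i$ is continuous in $c'$, using \cref{lem:cont} (uniform integrability, \cref{ass:payoffs}(v)) and the exponential detection lag, and that it is enough to compare deterministic dates $c'\in[0,\infty]$. The key object is then $c\mapsto \arg\max_{c'} V(c';c)$, where $V(c';c)$ is the payoff from conceding at $c'$ when the rival concedes at $c$ and each firm reacts to detection with the rule above.

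The second step handles the case $\tau^R=0$. I would first try $c=\tau^R=0$, that is, immediate stopping. If firm $j$ stops at $0$, firm $i$ learns this after an $\mathrm{Exp}(\omega)$ lag $E_j$, and every date is past $\tau^R$. By \cref{lem:R-single-peaked} at $t=0$, together with log-supermodularity and single crossing, continuing is strictly unprofitable against a rival frozen at $0$, so stopping at $0$ is a best reply and the profile is an equilibrium.

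The main obstacle is $\tau^R>0$. There, stopping at $\tau^R$ is not necessarily a best reply to a rival who stops at $\tau^R$, because each firm is tempted to wait for the rival's signal and stop second. I would therefore look for a symmetric mixed concession plan on $[\tau^R,\infty]$, a war of attrition, and obtain existence through a fixed-point argument. First I would discretize the concession dates to a grid on $[\tau^R,M]\cup\{\infty\}$. Then I would apply Glicksberg's (or Nash's) theorem to the symmetric two-player game whose payoffs are $V(c';c)$ extended by mixing; these payoffs are continuous by the first step. Finally I would pass to the limit as the grid is refined and $M\to\infty$. Concession dates below $\tau^R$ can be excluded without changing best replies, because before $\tau^R$ the index $D_i\log\pi(t,t)>\lambda(t)$ makes continuing while the rival continues strictly profitable. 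The delicate part is the limit: the equilibrium strategies must converge weakly and the payoffs must converge, so that best responses are preserved. Continuity of payoffs in the stopping dates, which the footnote on atoms uses for the perfect-transparency game, gives upper hemicontinuity of the best-response correspondence and closes the argument.
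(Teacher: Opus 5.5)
\textbf{There is a genuine gap.} Your treatment of $\tau^R=0$ is fine. For $\tau^R>0$, however, the argument breaks at the step that restricts concessions to $[\tau^R,\infty]$.

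The inequality $D_i\log\pi(t,t)>\lambda(t)$ for $t<\tau^R$ compares continuing with stopping against a rival \emph{frozen} at $t$. It says nothing about stopping while the rival is still active. That value is driven by the rival's strategic response: a rival that detects a stop at $a<\tau^R$ stops no later than $\chi(a)\le\tau^R$, because no later response is optimal under any admissible belief (\cref{lem:ex-window}(iii)--(iv)).

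The paper's own perfect-transparency analysis shows this cannot be ignored. In the spread regime of \cref{prop:taxonomy} (e.g.\ the score contest with $\gamma=2$ in \cref{fig:br}) there is $a^\ast<\tau^R$ with $U(a^\ast,\chi(a^\ast))>U(\tau^R,\tau^R)$.
\begin{itemize}
\item In any profile whose concessions lie in $[\tau^R,\infty]$, each firm earns at most $U(m,m)$ with $m=\min_i\tau_i\ge\tau^R$. Since $U(m,m)$ is decreasing in $m$ there (joint delay past $\tau^R$ only burns value, as in the proof of \cref{prop:cap}), each firm earns at most $U(\tau^R,\tau^R)$.
\item Stopping at $a^\ast$ instead yields at least $\Ex[U(a^\ast,\chi(a^\ast+E))]$ by \cref{lem:ex-down}, with $E\sim\mathrm{Exp}(\omega)$ the detection lag. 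This tends to $U(a^\ast,\chi(a^\ast))$ as $\omega\to\infty$.
\end{itemize}
So for large $\omega$ the limit of your grid equilibria is not an equilibrium of the actual game, whatever off-path beliefs you assign.

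Your diagnosis of the obstacle also points the wrong way. With common knowledge of rationality and a rival who stops at $\tau^R$ for sure, there is nothing to verify, and your own $\tau^R=0$ argument shows that stopping at $\tau^R$ is then a best reply. The bunched profile fails, when it does, only through deviations to stop \emph{earlier}. The wait-and-verify temptation needs crazy types, as in \cref{sec:trust}.

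\textbf{What dropping the restriction requires.} Dropping the restriction brings back exactly what it was hiding, and this is where the paper's proof does its work.
\begin{itemize}
\item Under a mixed concession plan $G$, a detection at $a<\tau^R$ leaves a non-degenerate posterior $\nu^G_a$ over the rival's stopping date. The optimal response is therefore not ``$\chi$ at the believed date.'' It maximizes $w(\cdot\,;a,\nu^G_a)$ over the window $[a,\chi(a)]$, may be multi-valued, and depends on $G$ itself.
\item The response rule must therefore be a second component of the fixed point, chosen measurably in $a$. The paper does this with a reduced game in pairs $(G,K)$, where $K$ is a kernel over the window, a closed-graph argument via the optimality gap, and Kakutani--Fan--Glicksberg (\cref{lem:ex-fp}).
\item Continuity of the reduced payoffs at ties below $\tau^R$ comes from the exponential lag. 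It does not come from the perfect-transparency footnote, where reduced payoffs jump at such ties.
\end{itemize}

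Even a fixed point of the reduced game is not yet a perfect Bayesian equilibrium. You still need:
\begin{itemize}
\item responses and beliefs at detection dates that are impossible under $G$ (the paper assigns $\delta_a$ and $\chi(a)$, and \cref{lem:ex-swap} uses \cref{lem:ex-down} to show this makes no concession deviation more attractive);
\item optimality at later no-signal histories (\cref{lem:ex-onestep});
\item behavior past the support of $G$ or after own deviations (\cref{lem:ex-memoryless}).
\end{itemize}
Your proposal ends at the normal-form limit and addresses none of these.
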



Let $\alpha(s):=e^{-rs-\int_0^s\lambda(u)\de u}$ be the discount--survival kernel, so $\alpha'(s)/\alpha(s)=-(r+\lambda(s))$ and $\alpha$ is continuously differentiable, strictly positive, strictly decreasing, with $\alpha(s)\downarrow0$. 

\begin{definition}[Race and stop values]\label{def:VRVS} Fix a symmetric no-news history at which both firms are active at technology $t$. The \emph{race value} is the payoff from both racing forever,
\[
V_R(t):=\frac1{\alpha(t)}\int_t^\infty \alpha(s)\,\pi(s,s)\,\de s
=\int_t^\infty e^{-\int_t^s(r+\lambda(u))\de u}\,\pi(s,s)\,\de s.
\]
The \emph{stop value} $V_S(t,\omega)$ is the payoff from stopping at $t$ while the rival continues until it detects the stop and then chooses its optimal stopping date against a rival frozen at $t.$  With detection date $a:=t+\widetilde D$ and the rival's frozen-rival optimum $\sigma^\star(a):=\inf\{u\ge a:D_i\log\pi(u,t)\le\lambda(u)\}$,
\[
V_S(t,\omega):=\Ex_{\widetilde D}\!\left[\int_t^{\sigma^\star(a)}\frac{\alpha(s)}{\alpha(t)}\,\pi(t,s)\,\de s+\frac{\alpha(\sigma^\star(a))}{\alpha(t)}\,\frac{\pi(t,\sigma^\star(a))}{r}\right].
\]
Write $\Delta(t,\omega):=V_S(t,\omega)-V_R(t)$ for the difference between these values.
\end{definition}

The race and stop values capture stopping incentives. Unlike the case with perfect observability, a firm that contemplates stopping knows that it will only be observed at a mean lag $1/\omega$. In the interim, its technology is frozen while its rival keeps scaling and catastrophic risks continue to accrue.

\begin{figure}[H]\centering
\begin{tikzpicture}[x=1.05cm,y=0.5cm]
\fill[gray!10]  (0,0)   rectangle (3,9);
\fill[red!10] (3,0)   rectangle (6.5,9);
\fill[gray!5]  (6.5,0) rectangle (10.4,9);
\draw[-{Stealth[length=2mm]}] (0,0) -- (10.8,0) node[below=2pt] {\footnotesize time $s$};
\draw[-{Stealth[length=2mm]}] (0,0) -- (0,9.6) node[right=2pt] {\footnotesize technology};
\draw[densely dotted, gray!60!black] (6.5,6.5) -- (9.3,9.3);
\node[align=center, gray!60!black] at (7.7,9.0) {\scriptsize counterfactual:\\[-2pt]\scriptsize both race forever ($V_R$)};
\draw[very thick] (0,0) -- (6.5,6.5) -- (10.4,6.5);
\draw[very thick] (3,3) -- (10.4,3);
\filldraw (3,3) circle (2.2pt) node[above left=1pt] {\scriptsize $i$ stops at $t$};
\filldraw (6.5,6.5) circle (2.2pt);
\node[anchor=west] at (6.9,5.6) {\scriptsize $j$ detects, stops at $\sigma^\star(a)=a$};
\draw[thin,-{Stealth[length=1.6mm]}] (7.05,5.9) -- (6.62,6.35);
\node[rotate=26] at (5.1,5.85) {\scriptsize firm $j$ (still racing)};
\node at (8.6,3.5) {\scriptsize firm $i$ (frozen at $t$)};
\draw[line width=2.6pt, red!70!black] (0,-2.6) -- (6.5,-2.6);
\draw[densely dotted, line width=1pt, red!70!black] (6.5,-2.6) -- (10.4,-2.6);
\node[below, red!70!black] at (3.25,-2.7) {\scriptsize hazard $\lambda$ on: frontier still moving};
\node[below, red!70!black] at (8.45,-2.7) {\scriptsize hazard off};
\draw (3,0.14) -- (3,-0.14) node[below] {\scriptsize $t$};
\draw (6.5,0.14) -- (6.5,-0.14) node[below] {\scriptsize $a=t+X$};
\draw[decorate,decoration={brace,mirror,amplitude=4pt}] (3,-1.1) -- (6.5,-1.1) node[midway,below=4pt] {\scriptsize detection lag $X\sim\mathrm{Exp}(\omega)$};
\node[align=center] at (1.5,7.6) {\scriptsize both race:\\[-2pt]\scriptsize flow $\pi(s,s)$};
\node[align=center] at (4.75,1.5) {\scriptsize $i$ exposed:\\[-2pt]\scriptsize flow $\pi(t,s)$ falls};
\node[align=center] at (8.45,1.5) {\scriptsize safe:\\[-2pt]\scriptsize value $\pi(t,a)/r$};
\end{tikzpicture}
\caption{Stopping first under delayed detection.}
\label{fig:VSanatomy}
\end{figure}
For a stop at $t\ge\tau^R$, \cref{ass:dsc} implies that the rival stops immediately upon detection ($\sigma^\star(a)=a$); \cref{fig:VSanatomy} illustrates this post-ceiling case: firm $i$ stops at $t$ and its technology freezes; firm $j$, seeing no signal, races on. During the detection lag (middle band) firm $i$ is doubly exposed: its flow $\pi(t,s)$ is eroded by the rival's still-rising technology, and the disaster hazard remains alive because the frontier is still moving. Upon detection at $a=t+X$ the rival stops, the hazard dies, and firm $i$ collects the safe perpetuity $\pi(t,a)/r$. The race value $V_R(t)$ is the counterfactual in which both firms ride the diagonal forever (dotted extension). Transparency shortens the middle band in expectation: as $\omega\to\infty$ it collapses and $V_S(t,\omega)\to\pi(t,t)/r$ (\cref{lem:mono}).

\begin{lemma}[Single-agent optimum]\label{lem:R}
Fix a rival frozen at $c$ and a detection date $a\ge c$. The value of stopping at $\sigma\ge a$,
\[
\int_a^\sigma\frac{\alpha(s)}{\alpha(a)}\pi(s,c)\,\de s+\frac{\alpha(\sigma)}{\alpha(a)}\frac{\pi(\sigma,c)}{r},
\]
has $\sigma$-derivative $\frac{\alpha(\sigma)}{\alpha(a)}\,\frac{\pi(\sigma,c)}{r}\,\big[D_i\log\pi(\sigma,c)-\lambda(\sigma)\big]$; by single-crossing (\cref{ass:payoffs}(iv)) on $[c,\infty)$ it is single-peaked, with unique maximizer the first crossing $\inf\{\sigma\ge a:D_i\log\pi(\sigma,c)\le\lambda(\sigma)\}$.
\end{lemma}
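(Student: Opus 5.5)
The plan is a direct computation of the $\sigma$-derivative, followed by a sign argument based on single-crossing. Write $G(\sigma)$ for the displayed value of stopping at $\sigma\ge a$, and $h(\sigma):=D_i\log\pi(\sigma,c)-\lambda(\sigma)$.

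\textbf{Step 1: the derivative.} Since $\pi$ is $C^2$ and $\alpha$ is $C^1$, $G$ is continuously differentiable on $[a,\infty)$. Differentiate, using $\alpha'(\sigma)=-(r+\lambda(\sigma))\alpha(\sigma)$:
\[
G'(\sigma)=\frac{\alpha(\sigma)}{\alpha(a)}\Big[\pi(\sigma,c)-\frac{(r+\lambda(\sigma))\pi(\sigma,c)}{r}+\frac{D_i\pi(\sigma,c)}{r}\Big].
\]
Factoring out $\pi(\sigma,c)/r>0$, the bracket becomes $r-(r+\lambda(\sigma))+D_i\log\pi(\sigma,c)=h(\sigma)$. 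This gives the claimed formula. The prefactor $\frac{\alpha(\sigma)}{\alpha(a)}\frac{\pi(\sigma,c)}{r}$ is strictly positive by \cref{ass:payoffs}(ii) and positivity of $\alpha$, so $\operatorname{sign} G'(\sigma)=\operatorname{sign} h(\sigma)$.

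\textbf{Step 2: single-peakedness.} Since $a\ge c$, the interval $[a,\infty)$ lies in $[c,\infty)$. By \cref{ass:payoffs}(iv), applied with the rival's technology fixed at $c$, $h$ is continuous and strictly decreasing on $[a,\infty)$. Let $\sigma^\star:=\inf\{\sigma\ge a: h(\sigma)\le 0\}$.
\begin{itemize}
\item If $\sigma^\star<\infty$, continuity gives $h(\sigma^\star)\le 0$, with equality when $\sigma^\star>a$. Strict monotonicity then gives $h>0$ on $[a,\sigma^\star)$ and $h<0$ on $(\sigma^\star,\infty)$. Hence $G$ is strictly increasing on $[a,\sigma^\star]$ and strictly decreasing on $[\sigma^\star,\infty)$, so $\sigma^\star$ is the unique maximizer. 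This includes the corner case $\sigma^\star=a$.
\item If $\sigma^\star=+\infty$, then $h>0$ throughout and $G$ is strictly increasing. The maximizer is then $+\infty$, interpreted, as for $R_t$, via $G(+\infty):=\lim_{\sigma\to\infty}G(\sigma)$. Every finite $\sigma$ is strictly worse than this limit.
\end{itemize}

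\textbf{Main obstacle.} The only delicate point is the $\sigma^\star=+\infty$ case: one must check that the limit exists and is finite, so that ``never stopping'' is a well-defined maximizer. Existence is immediate because $G$ is monotone. For finiteness, I would bound the integrand using $\pi(s,c)\le\pi(s,0)$ (by $D_j\pi<0$) and bound the terminal term by $\alpha(\sigma)\pi(\sigma,0)/r$. The integrability condition \cref{ass:payoffs}(v), equivalently the uniform payoff bound of \cref{lem:cont}, then controls both terms after rescaling by $1/\alpha(a)$. With this in place, the first-crossing characterization of the unique maximizer follows.
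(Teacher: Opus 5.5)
Your proposal is correct and follows the paper's proof. Both differentiate using $\alpha'=-(r+\lambda)\alpha$, factor out the positive prefactor so that the sign of the derivative is that of $D_i\log\pi(\sigma,c)-\lambda(\sigma)$, and invoke Assumption~(iv) on $[a,\infty)\subseteq[c,\infty)$ to get single-peakedness. Your explicit treatment of the no-crossing case ($\sigma^\star=+\infty$, with finiteness of the limit from the integrability condition (v)) is slightly more careful than the paper's one-line conclusion.
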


\begin{proof}
Differentiate: the flow term contributes $\frac{\alpha(\sigma)}{\alpha(a)}\pi(\sigma,c)$ and the terminal term $\frac{\alpha(\sigma)}{\alpha(a)}[-(r+\lambda(\sigma))\frac{\pi(\sigma,c)}r+\frac{D_i \pi(\sigma,c)}r]$; collecting gives the derivative. Its sign is that of $D_i\log\pi(\sigma,c)-\lambda(\sigma)$, strictly decreasing on $[c,\infty)$ by (iv), so the value is single-peaked. The argument lives entirely on $\{x_1\ge x_2=c\}$, the native domain of (iii).
\end{proof}

\begin{lemma}[Closed form past the ceiling]\label{lem:VSclosed}
For $t\ge\tau^R$ the rival stops immediately on detection ($\sigma^\star(a)=a$), so
\[
V_S(t,\omega)=\int_0^\infty\!\omega e^{-\omega x}\left[\int_t^{t+x}\frac{\alpha(s)}{\alpha(t)}\pi(t,s)\,\de s+\frac{\alpha(t+x)}{\alpha(t)}\frac{\pi(t,t+x)}{r}\right]\de x .
\]
\end{lemma}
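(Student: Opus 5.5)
The proof has two steps: first show $\sigma^\star(a)=a$ for every detection date $a\ge t$, then substitute this into the definition of $V_S(t,\omega)$ and write the expectation over the exponential lag as an integral.

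\emph{Step 1: the rival stops immediately on detection.} By \cref{def:VRVS}, $\sigma^\star(a)=\inf\{u\ge a:D_i\log\pi(u,t)\le\lambda(u)\}$, where the detecting firm's own technology is $u\ge a\ge t$ and its rival is frozen at $c=t$. So it suffices to show that $D_i\log\pi(a,t)\le\lambda(a)$, since then the infimum is attained at $u=a$. The argument has three links.
\begin{itemize}
\item Since $t\ge\tau^R$ and $\tau^R<\infty$, \cref{lem:R-single-peaked} gives $D_i\log\pi(\tau^R,\tau^R)\le\lambda(\tau^R)$.
\item \cref{ass:dsc} propagates this along the diagonal: $D_i\log\pi(t,t)\le\lambda(t)$ for every $t\ge\tau^R$.
\item Fix the rival at $c=t$. \cref{ass:payoffs}(iv) says $u\mapsto D_i\log\pi(u,t)-\lambda(u)$ is strictly decreasing on $[t,\infty)$. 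Its value at $u=t$ is $\le 0$, so it is $\le 0$ (indeed $<0$ for $u>t$) at every $u\ge a\ge t$.
\end{itemize}
In particular, $D_i\log\pi(a,t)\le\lambda(a)$, hence $\sigma^\star(a)=a$. Equivalently, by \cref{lem:R}, the value of continuing to $\sigma\ge a$ has nonpositive derivative throughout, so stopping at $a$ is the unique optimum.

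This is also where the remark in the text about off-path beliefs is settled. Only the rival's frozen level $c\le a$ matters, and log-supermodularity gives $D_i\log\pi(u,c)\le D_i\log\pi(u,u)$. Combined with \cref{ass:dsc} for $u\ge\tau^R$, this yields $D_i\log\pi(u,c)\le\lambda(u)$. So immediate stopping is optimal for any belief about the frozen date, and the formula does not depend on that belief. Here I would double-check that (iii) is applied on the correct domain, $u\ge c$.

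\emph{Step 2: substitute and rewrite the expectation.} Plug $\sigma^\star(a)=a=t+\widetilde D$ into \cref{def:VRVS}. The expectation over $\widetilde D\sim\mathrm{Exp}(\omega)$ becomes the integral against the density $\omega e^{-\omega x}$ over $x\in[0,\infty)$, with $a=t+x$, which gives the stated formula. Finiteness of the integrand follows from $\pi(t,s)\le\pi(t,t)$ for $s\ge t$ (because $D_j\pi<0$) and $\alpha(s)\le\alpha(t)$. The bracket is therefore bounded by $\pi(t,t)\big(\int_t^\infty\alpha/\alpha(t)+1/r\big)<\infty$, so the expectation is well defined.

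The only substantive step is Step 1, namely chaining \cref{ass:dsc} on the diagonal with single-crossing off the diagonal; the rest is substitution.
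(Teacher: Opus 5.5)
Your proof is correct, but Step 1 uses a different chain of inequalities from the paper's. The paper argues at the detection date. Since $a=t+x\ge\tau^R$, log-supermodularity (raising the rival from $t$ to $a$) and \cref{ass:dsc} at $a$ give $D_i\log\pi(a,t)\le D_i\log\pi(a,a)\le\lambda(a)$, and \cref{lem:R} then yields $\sigma^\star(a)=a$. You instead apply \cref{ass:dsc} at the stop date $t$, and then move along the row where the rival is fixed at $t$ using single-crossing, \cref{ass:payoffs}(iv). Your route never invokes log-supermodularity and gives strict inequality for $u>t$ at no extra cost.

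The cost is generality. Your chain needs the rival's frozen level itself to lie past $\tau^R$. The paper's chain gives $D_i\log\pi(a,c)\le\lambda(a)$ for every $c\le a$ once $a\ge\tau^R$, including $c<\tau^R$. The paper reuses exactly this in \cref{lem:mono} and \cref{prop:cap}, to make immediate stopping optimal whatever the belief about when an undetected rival stopped. Your side remark on off-path beliefs is precisely the paper's argument. The domain worry you flag there is unnecessary, since \cref{ass:payoffs}(iii) is imposed on the whole domain. Your finiteness check in Step 2, which the paper omits, is fine.
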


\begin{proof}
At detection $a=t+x\ge\tau^R$, log-supermodularity ($t\le a$) and \cref{ass:dsc} give $D_i\log\pi(a,t)\le D_i\log\pi(a,a)\le\lambda(a)$, so $\sigma^\star(a)=a$ by \cref{lem:R}. 
\end{proof}

\begin{lemma}[Monotonicity and endpoints]\label{lem:mono}
For every $t\ge0$:
\begin{itemize}
\item[(a)] $\omega\mapsto V_S(t,\omega)$ is nondecreasing, and strictly increasing when $\tau^R<\infty$;
\item[(b)] $V_S(t,0):=\lim_{\omega\downarrow0}V_S(t,\omega)=\int_t^\infty\frac{\alpha(s)}{\alpha(t)}\pi(t,s)\,\de s<V_R(t)$;
\item[(c)] for $t\ge\tau^R$, $\lim_{\omega\to\infty}V_S(t,\omega)=\pi(t,t)/r\ge V_R(t)$ with strict inequality for $t>\tau^R$.
\end{itemize}
\end{lemma}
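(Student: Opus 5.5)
The plan is to write $V_S(t,\omega)$ as an expectation over the detection lag of a single deterministic function of that lag, and to use three facts about it: it is monotone in the lag, it has known limits at lag $0$ and lag $\infty$, and it can be compared pointwise with $V_R$. Fix $t$. Let $c^\star:=\inf\{u\ge t:D_i\log\pi(u,t)\le\lambda(u)\}$. By \cref{ass:payoffs}(iv), the map $u\mapsto D_i\log\pi(u,t)-\lambda(u)$ is strictly decreasing on $[t,\infty)$. Hence the first crossing after any $a\ge t$ is $\sigma^\star(a)=\max\{a,c^\star\}$. So
\[
V_S(t,\omega)=\int_0^\infty \omega e^{-\omega x}\,g(x)\,\de x,\qquad g(x):=h\bigl(\max\{t+x,c^\star\}\bigr),
\]
where
\[
h(u):=\int_t^u\frac{\alpha(s)}{\alpha(t)}\pi(t,s)\,\de s+\frac{\alpha(u)}{\alpha(t)}\frac{\pi(t,u)}{r}.
\]
Differentiating $h$ exactly as in \cref{lem:R}, but now in the rival's argument, gives
\[
h'(u)=\frac{\alpha(u)}{\alpha(t)\,r}\bigl[D_j\pi(t,u)-\lambda(u)\pi(t,u)\bigr]<0,
\]
because $D_j\pi<0$ and $\lambda\ge0$. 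Therefore $g$ is constant on $[0,c^\star-t]$ and strictly decreasing afterwards. In words, a later detection always hurts the firm that stopped.

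For (a), note that $\mathrm{Exp}(\omega)$ is first-order stochastically decreasing in $\omega$. Since $g$ is nonincreasing, $\omega\mapsto\Ex[g(X)]$ is nondecreasing. For strictness, first show that $c^\star<\infty$ when $\tau^R<\infty$. For $u\ge\max\{t,\tau^R\}$, log-supermodularity gives $D_i\log\pi(u,t)\le D_i\log\pi(u,u)$, and \cref{ass:dsc} gives $D_i\log\pi(u,u)\le\lambda(u)$. So $c^\star\le\max\{t,\tau^R\}$. Then $g$ is strictly decreasing on a half-line. Because the survival functions of distinct exponential laws differ strictly at every $x>0$, the integration-by-parts identity $\Ex[g(X)]=g(0)+\int_0^\infty e^{-\omega x}g'(x)\,\de x$ is strictly increasing in $\omega$.

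For (b), let $\omega\downarrow0$, so that the lag diverges in distribution. Integrability (\cref{ass:payoffs}(v)) gives a uniform integrable bound, so monotone or dominated convergence applies. It remains to check that the terminal term $\alpha(u)\pi(t,u)/r\to0$. This holds because $\alpha\downarrow0$ and $\pi(t,u)\le\pi(t,t)$. The limit is therefore $\int_t^\infty\frac{\alpha(s)}{\alpha(t)}\pi(t,s)\,\de s$, and this holds whether $c^\star$ is finite or not. The strict inequality against $V_R(t)$ is pointwise: for $s>t$ we have $\pi(t,s)<\pi(s,s)$, since $D_i\pi>0$.

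For (c), take $t\ge\tau^R$. Then $c^\star=t$ by \cref{lem:VSclosed}, and $g(x)=h(t+x)$ is continuous and bounded by $g(0)=\pi(t,t)/r$. As $\omega\to\infty$ the lag converges to $0$, so bounded convergence gives $V_S\to\pi(t,t)/r$. For the comparison, use $D_j\pi<0$ to get $\pi(s,s)\le\pi(s,t)$ for $s\ge t$, strictly for $s>t$. Hence
\[
V_R(t)\le\int_t^\infty\frac{\alpha(s)}{\alpha(t)}\pi(s,t)\,\de s=R_t(+\infty)\le\sup_{\tau\ge t}R_t(\tau)=R_t(t)=\frac{\pi(t,t)}r .
\]
Here the terminal term of $R_t$ vanishes by (v). The middle equality $\sup_{\tau\ge t}R_t(\tau)=R_t(t)$ uses \cref{lem:R-single-peaked}, since \cref{ass:dsc} gives $D_i\log\pi(t,t)\le\lambda(t)$ for $t\ge\tau^R$. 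The first inequality is strict whenever $\pi(s,t)>\pi(s,s)$ on a set of $s$ of positive measure, which certainly holds. This yields the weak inequality, and strictness for $t>\tau^R$.

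The main obstacle I expect is the limit interchanges: justifying convergence in (b) and (c) from (v), and the vanishing of the terminal term. The monotonicity itself reduces cleanly to the sign of $h'$.
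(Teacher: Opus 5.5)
Your proposal is correct and follows essentially the same route as the paper. Both arguments write $V_S(t,\omega)$ as the expectation of a detection-date payoff that is constant up to the first crossing and strictly decreasing afterwards (you use the same derivative in the rival's argument), order the exponential laws in $\omega$, and obtain the endpoints by bounded convergence together with the comparison $\pi(s,s)<\pi(s,t)$ against the frozen-rival value. The differences are cosmetic: the paper orders the laws by the coupling $E_\omega=Z/\omega$ where you integrate by parts, and in (c) it integrates the inequality $-\frac{\de}{\de s}[\alpha(s)\pi(s,t)/r]\ge\alpha(s)\pi(s,t)$ directly where you invoke the single-peakedness of $R_t$ from \cref{lem:R-single-peaked}.
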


\paragraph{Cooperation threshold.} $\Delta(t,\omega)$ captures the value  of stopping now versus racing forever, conditional on no news at $t$.  Define
\[
\bar\omega:=\sup\Big\{\omega\ge0:\ \Delta(t,\omega)\le0\ \text{for all }t\ge0\Big\}.
\]
Because $\Delta(t,\omega)$ is nondecreasing in $\omega$ for every $t$ (\cref{lem:mono}(a)), the set inside the supremum is downward closed. Hence $\omega<\bar\omega$ implies $\Delta(t,\omega)\le0$ for every $t$, while every $\omega>\bar\omega$ admits some date $t$ with $\Delta(t,\omega)>0$. If $\tau^R<\infty$, then \cref{lem:mono}(c), applied at any $t>\tau^R$, implies $\bar\omega<\infty$.

Transparency level $\bar\omega$ is the boundary below which no date offers a profitable unilateral stop against a rival that races until detection. Hence, both firms prefer to race on; this is the sense in which observation lags hinder coordination. The following lemma gives an expression for the payoff from continuing until $\sigma \geq t$ then stopping.

\begin{lemma}[Recursive identity]\label{lem:rec}
Against a rival who continues until news arrives, the payoff to ``continue to $\sigma\ge t$, then stop'' is
\begin{align*}
    W_t(\sigma)&=\int_t^\sigma\frac{\alpha(s)}{\alpha(t)}\pi(s,s)\,\de s+\frac{\alpha(\sigma)}{\alpha(t)}V_S(\sigma,\omega) \\
&=V_R(t)+\frac{\alpha(\sigma)}{\alpha(t)}\,\Delta(\sigma,\omega)
\end{align*}
and $W_t(\infty)=V_R(t)$.
\end{lemma}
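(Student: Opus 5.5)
The plan is to compute $W_t(\sigma)$ directly from the outcome that the described strategy pair induces, and then rewrite it using the definitions of $V_R$ and $\Delta$. Fix a no-news history at $t$ with both firms active. The deviating firm follows ``continue to $\sigma$, then stop''; its rival continues until news arrives. On $[t,\sigma)$ neither firm has stopped, so no signal can arrive. Both technologies move along the diagonal and the frontier hazard is active. Conditional on survival to $t$, the discounted survival weight at $s$ is $\alpha(s)/\alpha(t)=e^{-\int_t^s(r+\lambda(u))\de u}$, which gives the flow term $\int_t^\sigma\frac{\alpha(s)}{\alpha(t)}\pi(s,s)\,\de s$.

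At $\sigma$ the firm stops. The continuation from $\sigma$ onward is exactly the situation in \cref{def:VRVS}. The rival races until it detects the stop after an $\mathrm{Exp}(\omega)$ lag, independent of everything else. It then picks its frozen-rival optimum $\sigma^\star(a)$, which is the sequentially rational response by \cref{lem:R}. So the discounted value at $\sigma$ is $V_S(\sigma,\omega)$. Multiplying by the survival-discount factor $\alpha(\sigma)/\alpha(t)$ from $t$ to $\sigma$ and adding the flow term gives the first line. The only thing to check is the change of normalization: $V_S(\sigma,\omega)$ is stated in units of date $\sigma$ (kernel $\alpha(s)/\alpha(\sigma)$), and $\frac{\alpha(\sigma)}{\alpha(t)}\cdot\frac{\alpha(s)}{\alpha(\sigma)}=\frac{\alpha(s)}{\alpha(t)}$.

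For the second line, split $V_R(t)$ at $\sigma$:
\[
V_R(t)=\int_t^\sigma\frac{\alpha(s)}{\alpha(t)}\pi(s,s)\,\de s+\frac{\alpha(\sigma)}{\alpha(t)}V_R(\sigma).
\]
This holds directly from the definition $V_R(t)=\alpha(t)^{-1}\int_t^\infty\alpha\pi(s,s)\,\de s$, because the integral is finite. Finiteness follows from integrability (\cref{ass:payoffs}(v)) via \cref{lem:cont}, since $\alpha(s)\pi(s,s)\le\alpha(s)\pi(s,0)$ by $D_j\pi<0$. Subtracting this identity from the first line leaves $\frac{\alpha(\sigma)}{\alpha(t)}[V_S(\sigma,\omega)-V_R(\sigma)]=\frac{\alpha(\sigma)}{\alpha(t)}\Delta(\sigma,\omega)$.

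For $\sigma=\infty$, the firm never stops, so the rival never receives news and both race forever; the payoff is $V_R(t)$ by definition. This is consistent with the formula as a limit, since $\alpha(\sigma)\to0$ while $\Delta(\sigma,\omega)$ stays bounded. To see the bound, note $V_S(\sigma,\omega)\le\pi(\sigma,\sigma)/r$ up to the hazard, and the needed uniform control again comes from \cref{lem:cont}.

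The main obstacle is justifying the identification of the post-$\sigma$ continuation with $V_S(\sigma,\omega)$. This requires that the rival's response after detection is indeed $\sigma^\star(a)$ against a rival frozen at $\sigma$. That holds because a detection signal is conclusive and, on the path of this strategy, the stop date is $\sigma$. The exponential lag is memoryless and independent, so conditioning on no news before $\sigma$ does not distort the lag distribution after $\sigma$. A secondary point is that stopping at $t$ with $\sigma=t$ is covered, and gives $W_t(t)=V_S(t,\omega)$.
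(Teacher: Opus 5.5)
Your derivation of the two displayed lines is the paper's: read the first line off the path (diagonal flow on $[t,\sigma)$, then $V_S(\sigma,\omega)$ by definition), and subtract the splitting identity for $V_R(t)$ at $\sigma$. The one step that fails is the limit remark for $W_t(\infty)=V_R(t)$. Read literally as the payoff of never stopping, $W_t(\infty)=V_R(t)$ is immediate, as you say. The paper's proof, however, establishes the continuity statement $\lim_{\sigma\to\infty}W_t(\sigma)=V_R(t)$, i.e.\ that the tail term $\frac{\alpha(\sigma)}{\alpha(t)}V_S(\sigma,\omega)$ vanishes, and your justification of that limit is false. $\Delta(\sigma,\omega)$ is measured per unit of survival to $\sigma$ and is generally unbounded. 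In the stationary exponential environment, $\Delta(\sigma,\omega)=\pi(\sigma,\sigma)\,[V_S(\omega)-V_R]$ with $\pi(\sigma,\sigma)=e^{(\alpha-\beta)\sigma}$, which diverges when $\alpha>\beta$ and $\omega\neq\bar\omega$. Nor does \cref{lem:cont} supply the ``uniform control'' you invoke: its bound on $U_i^\sigma$ carries the factor $e^{r\sigma+\int_0^\sigma\lambda}$, and $\pi(\sigma,\sigma)/r$ can itself diverge.

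What is needed is $\alpha(\sigma)V_S(\sigma,\omega)\to0$. The companion fact $\alpha(\sigma)V_R(\sigma)=\int_\sigma^\infty\alpha(s)\pi(s,s)\,\de s\to0$ is just the tail of a convergent integral. The paper gets the needed limit from two ingredients: $V_S(\sigma,\omega)\le\pi(\sigma,\sigma)/r$ for $\sigma\ge\tau^R$, and the growth bound $\frac{\de}{\de\sigma}\log\pi(\sigma,\sigma)\le\lambda(\sigma)$, which follows from \cref{ass:dsc} and $D_j\pi<0$. Together these give $\alpha(\sigma)\pi(\sigma,\sigma)\le\pi(\tau^R,\tau^R)e^{-\int_0^{\tau^R}\lambda(u)\de u}e^{-r\sigma}$. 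You can instead bound the unnormalized quantity directly, without \cref{ass:dsc} or $\tau^R<\infty$. Since $\pi(\sigma,s)\le\pi(s,0)$ for $s\ge\sigma$,
\[
\alpha(\sigma)V_S(\sigma,\omega)\le\int_\sigma^\infty\alpha(s)\pi(s,0)\,\de s+\frac1r\sup_{v\ge\sigma}\alpha(v)\pi(v,0)\longrightarrow0
\]
by \cref{ass:payoffs}(v) and \eqref{eq:ex-envelope}. Equivalently, use continuity of the date-$0$ payoff $U_i$ on the compactified space (\cref{lem:cont}) together with dominated convergence over the detection lag. With this replacement your proof is complete.
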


\begin{proof}
On $[t,\sigma)$ both race (flow $\pi(s,s)$); after $\sigma$ the continuation is $V_S(\sigma,\omega)$ by definition. Subtract the splitting identity $V_R(t)=\int_t^\sigma\frac{\alpha(s)}{\alpha(t)}\pi(s,s)\de s+\frac{\alpha(\sigma)}{\alpha(t)}V_R(\sigma)$. As $\sigma\to\infty$, $\int_t^\sigma\frac{\alpha(s)}{\alpha(t)}\pi(s,s)\de s\to V_R(t)$.
It remains to show the tail term vanishes: $\frac{\alpha(\sigma)}{\alpha(t)}V_S(\sigma,\omega)\to0$.
By Lemma~\ref{lem:mono}, $0\le V_S(\sigma,\omega)\le \pi(\sigma,\sigma)/r$ for
$\sigma\ge\tau^R$, so it suffices that $\alpha(\sigma)\pi(\sigma,\sigma)\to 0$. For $\sigma\ge\tau^R$,
\[
  \frac{d}{d\sigma}\log\pi(\sigma,\sigma)
    =D_i\log\pi(\sigma,\sigma)+D_j\log\pi(\sigma,\sigma)
    \le\lambda(\sigma),
\]
using $D_i\log\pi(\sigma,\sigma)\le\lambda(\sigma)$ (Assumption~\ref{ass:dsc})
and $D_j\log\pi<0$. Integrating from $\tau^R$ gives
$
  \pi(\sigma,\sigma)\le\pi(\tau^R,\tau^R)\,
    \exp\!\Big(\textstyle\int_{\tau^R}^{\sigma}\lambda(u)\,du\Big);$
since $\alpha(\sigma)=e^{-r\sigma-\int_0^\sigma\lambda(u)\,du}$,
\[
  \alpha(\sigma)\pi(\sigma,\sigma)
    \le \pi(\tau^R,\tau^R)\,e^{-\int_0^{\tau^R}\lambda(u)\,du}\,e^{-r\sigma}\to0 .
\] Hence $W_t(\infty)=V_R(t)$. 
\end{proof}

\begin{proposition}[Catastrophe under noisy monitoring]\label{prop:never-stop}
Suppose $\omega<\bar\omega$. Then there exists $\bm{\tau} \in \PBE(\omega)$ such that $\tau_1=\tau_2=+\infty$ almost surely. 
\end{proposition}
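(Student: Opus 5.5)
The plan is to construct the equilibrium explicitly. Each firm's concession plan is never to stop on its no-signal history, and its response plan, after detecting the rival's stop at (unknown) date $c$, is to stop at the frozen-rival optimum $\sigma^\star$ of \cref{lem:R}, computed from a conveniently chosen off-path belief about $c$. Since neither firm ever stops on path, no signal ever arrives, so $\tau_1=\tau_2=+\infty$ almost surely. Beliefs on no-signal histories are pinned down by Bayes' rule: the rival has not stopped. Detection histories are off path, so we may assign any feasible belief about the rival's stopping date. For concreteness we take the point belief that the rival stopped at the detection date itself, or at any date consistent with the signal.

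The first step is sequential rationality at detection histories. Given the assigned belief, the firm faces a single-agent problem against a frozen rival, and \cref{lem:R} gives an optimal stopping date. The response plan is defined to be exactly that optimum, so these histories are handled by construction. A measurable selection is unproblematic because the optimum is the first crossing, which is an explicit hitting time.

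The second step, which is the main one, is optimality of the concession plan on no-signal histories. Fix a no-news history at date $t$ at which both firms are active. Against a rival who races until it detects a stop and then best-responds, any deviation to a possibly randomized stopping date $\sigma\ge t$ yields a mixture of the values $W_t(\sigma)$. There is one subtlety. On path, the rival's post-detection response is computed from its actual knowledge, which is the frozen level $\sigma$ together with the detection date. This is exactly the object in \cref{def:VRVS}: the rival responds with $\sigma^\star(a)$ against a rival frozen at $t=\sigma$.

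Here I need to check that the rival's belief after a deviation matches the true stopping date. The detection is off path, so the rival's belief is a free choice, and we must choose it so that the rival stops at $\sigma^\star(a)$ computed with the true $\sigma$. The cleanest fix is to define the response plan as ``best respond as if the rival stopped at $c$'', where $c$ is the deviation date. The rival cannot observe $c$, however. I would therefore instead argue that the deviator's payoff is at most $V_S(\sigma,\omega)$ whatever response the rival uses. The reason is that $V_S$ was defined with the rival playing its frozen-rival optimum, and any other response only lengthens or shortens the rival's scaling in a way the deviator weakly dislikes. More precisely, the deviator's payoff is decreasing in the rival's technology and in the duration of hazard exposure, so the relevant comparison needs care. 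If this monotonicity fails, I fall back to choosing beliefs so that the rival's response coincides with $\sigma^\star(a)$ computed at the true $\sigma$, which is legitimate whenever the deviation is deterministic given the history.

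With that in hand, \cref{lem:rec} gives
\[
W_t(\sigma)=V_R(t)+\frac{\alpha(\sigma)}{\alpha(t)}\Delta(\sigma,\omega)\le V_R(t)=W_t(\infty),
\]
because $\omega<\bar\omega$ implies $\Delta(\sigma,\omega)\le0$ for every $\sigma$, by the downward-closedness argument following the definition of $\bar\omega$. Averaging over randomized deviations preserves the inequality. Hence never stopping is a best response at every no-signal history. Combining the two steps with the Bayes-consistent beliefs gives a PBE whose outcome is $\tau_1=\tau_2=+\infty$. I expect the main obstacle to be the off-path belief and response specification in the second step, namely ensuring that the deviator's continuation value is exactly, or at most, $V_S(\sigma,\omega)$.
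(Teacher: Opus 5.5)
Your overall architecture matches the paper's: never stop on no-signal histories, best-respond after detection under an off-path point belief, then use \cref{lem:rec} together with $\Delta(\sigma,\omega)\le 0$. The step you flag as the main obstacle, however, is not closed, and neither of the two resolutions you offer works.

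First, the claim that the deviator gets at most $V_S(\sigma,\omega)$ \emph{whatever} response the rival uses is false. By \cref{lem:ex-down}, a stopped firm's payoff $U(x,y)$ is strictly decreasing in its rival's stopping date $y\ge x$. Any rival response earlier than the frozen-rival optimum $\sigma^\star(a)$, computed at the true $\sigma$, therefore gives the deviator strictly more than $V_S(\sigma,\omega)$. Nothing in your argument controls that gain against the slack $-\Delta(\sigma,\omega)$, which can be arbitrarily small as $\omega\uparrow\bar\omega$. You note the monotonicity itself, but monotonicity alone does not deliver the bound; you also need to know in which direction the belief shifts the response.

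Second, your fallback of choosing beliefs so that the response equals $\sigma^\star(a)$ at the true $\sigma$ is not admissible. The rival's belief at detection date $a$ is part of the assessment and may depend only on its own history, that is, on $a$. Yet any deviation date $\sigma<a$, deterministic or randomized, can produce a detection at $a$, so a single belief must work against all of them. For the same reason, your parenthetical ``or at any date consistent with the signal'' is too permissive. A point belief at some $c<\sigma$ makes the rival respond earlier, which is exactly the harmful direction.

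The missing step is the comparison that makes the specific belief $\delta_a$ work uniformly over all deviations. Every feasible deviation date satisfies $\sigma\le a$, so log-supermodularity gives $D_i\log\pi(u,a)\ge D_i\log\pi(u,\sigma)$. Hence the prescribed response $\inf\{u\ge a: D_i\log\pi(u,a)\le\lambda(u)\}$ is weakly later than $\inf\{u\ge a: D_i\log\pi(u,\sigma)\le\lambda(u)\}=\sigma^\star(a)$. Applying \cref{lem:ex-down} pathwise, for each detection lag, bounds the deviator's continuation value by $V_S(\sigma,\omega)$. The deviation payoff is then at most $W_t(\sigma)=V_R(t)+\frac{\alpha(\sigma)}{\alpha(t)}\Delta(\sigma,\omega)\le V_R(t)$, an upper bound rather than the equality you wrote.

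This is precisely the paper's argument. With that step inserted, and with the belief fixed at $\delta_a$ rather than left free, the rest of your outline goes through.
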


\cref{prop:never-stop} states that imperfect observation can generate equilibria in which both firms scale forever even when no such equilibrium exists under perfect transparency.\footnote{Note, however, that this is a statement about the ``worst'' equilibrium; others in which both firms stop in finite time still survive.}

For $\omega>\bar\omega$, some date offers a strictly profitable unilateral stop against a rival that races until detection. Even so, each firm would rather that their rival be the one to first stop, so the game after $\tau^R$ resembles a war of attrition. It remains to quantify how much equilibrium delay pushes the frontier far beyond $\tau^R$. The force that limits this delay is simple. Once both firms are near $\tau^R$, the private gain from being the frontier firm is small when detection is fast, while the joint loss from letting both firms keep racing is first order.

To measure this joint loss, fix a no-signal history at which both firms have continued until $t$, and ask what payoff a firm would get if the two firms could commit to a common cap $\sigma\ge t$: both race until $\sigma$ and then both stop. This benchmark is
\[
g_t(\sigma)
:=
\int_t^\sigma\frac{\alpha(s)}{\alpha(t)}\pi(s,s)\,\de s
+
\frac{\alpha(\sigma)}{\alpha(t)}\frac{\pi(\sigma,\sigma)}{r}
\tag{JOINT} \label{eq:joint_stop}
\]
$g_t(\sigma)$ is the value of a coordinated stop at $\sigma$ when both firms are active at time $t$.  Differentiating gives
\[
-g_t'(\sigma)
=
\frac{\alpha(\sigma)}{\alpha(t)}
\frac{\pi(\sigma,\sigma)}{r}
\Big[
\lambda(\sigma)
-D_i\log\pi(\sigma,\sigma)
-D_j\log\pi(\sigma,\sigma)
\Big].
\]
At $\sigma=\tau^R$, the private stopping condition gives
$\lambda(\tau^R)=D_i\log\pi(\tau^R,\tau^R)$ whenever $0 < \tau^R < +\infty$, so the term in the bracket above reduces to
$-D_j\log\pi(\tau^R,\tau^R)>0$.\footnote{When $\tau^R = 0$, the condition gives $\lambda(0)\ge D_i\log\pi(0,0)$ so the term in the bracket is at least $-D_j\log\pi(\tau^R,\tau^R)>0$ may strictly prefer to stop immediately and coordinated delay once again burns payoff.}  The representative firm is locally indifferent at $\tau^R$, but  when both firms move together, each firm imposes the rival externality on the other. Hence coordinated delay past $\tau^R$ burns payoff at a rate bounded away from zero.

Fast detection makes the prize from holding out small. If one firm concedes, the other learns this after an exponential delay with mean $1/\omega$, so the winner's extra frontier advantage is only $O(1/\omega)$; equivalently, a firm can secure $\pi(t,t)/r-O(1/\omega)$ by stopping at once and it cannot be optimal to hold out for an $O(1/\omega)$ prize while the common frontier loses value at a first-order rate. We will show that this implies that  the frontier can overshoot $\tau^R$ by at most order $1/\omega$. Before we show this, ruling out an infinite frontier requires the profitable stop to survive into the tail. This is guaranteed by the following condition on primitives.

\begin{definition}[Tail condition]
\label{cond:uniform-tail}
Suppose $\tau^R<\infty$, and define 
\[
\kappa_0
:=
\inf_{t\geq\tau^R}\big[-D_j\log\pi(t,t)\big],
\qquad
\Lambda_\infty
:=
\sup_{t\geq\tau^R}\ \sup_{u\geq t}
\left\{\lambda(u)+\left|D_j\log\pi(t,u)\right|\right\}.
\]

If $\kappa_0>0$ and
$\Lambda_\infty<\infty$ we say that the \emph{tail condition} is fulfilled.
\end{definition}
The lower bound $\kappa_0>0$ says that the competitive loss from a racing rival
does not disappear in the tail. The upper bound $\Lambda_\infty<\infty$ limits both
the hazard and the erosion suffered while a stopper waits to be detected. Both conditions are satisfied by the exponential example below. 
Under the tail condition, define
\[
\omega^{\mathrm{cap}}
:=
\frac{e\Lambda_\infty(r+\Lambda_\infty)}{\kappa_0} \quad \text{noting that} \quad \omega^{\mathrm{cap}} \geq \bar \omega.
\]

\begin{proposition}[Fast monitoring uniformly caps the frontier]
\label{prop:cap}
Suppose the tail condition holds. If
$\omega>\omega^{\mathrm{cap}}$, then
\[
\sup_{\bm\tau\in\PBE(\omega)}
\Ex\Big[(\max_i\tau_i-\tau^R)^+\Big]
\leq
\frac1\omega
+
\frac{e\Lambda_\infty}
{\kappa_0\big(\omega-\omega^{\mathrm{cap}}\big)}.
\]
In particular, every equilibrium stops in finite time almost surely, and $(\max_i \tau_i - \tau^R)^+ \to 0$ in probability as $\omega \to \infty$.  
\end{proposition}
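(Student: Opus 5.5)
The plan is to convert the bound into a statement about the hazard-type rate at which a no-signal history past $\tau^R$ must terminate, and then integrate that rate. Fix $\bm\tau\in\PBE(\omega)$ and, for $t\ge\tau^R$, consider no-signal histories at which both firms are still active at $t$. First I dispose of the other histories. If a firm has detected its rival's stop at some date $a\ge\tau^R$, then by log-supermodularity and \cref{ass:dsc} (as in \cref{lem:VSclosed}) it stops at once. So once some firm stops at a date $\ge\tau^R$, the frontier grows only during the detection lag, which adds at most an $\mathrm{Exp}(\omega)$ overshoot. This is the source of the $1/\omega$ term. It therefore suffices to bound $\Ex[(\min_i\tau_i-\tau^R)^+]$, which I write as $\int_{\tau^R}^\infty \Pr(\text{both active at }t)\,\de t$, by $e\Lambda_\infty/(\kappa_0(\omega-\omega^{\mathrm{cap}}))$. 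I will do this by showing that the conditional probability that the first stop comes within the next unit of time is bounded below, uniformly in $t$.

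The key step is a comparison, at a no-signal history at time $t\ge\tau^R$ where firm $i$ is active, between stopping immediately and continuing.

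\emph{Stopping now.} By \cref{lem:VSclosed}, stopping now yields at least $\pi(t,t)/r-O(\Lambda_\infty/\omega)$, whatever the rival's plan. During the detection lag $X$, the flow $\pi(t,s)$ falls at log-rate at most $|D_j\log\pi(t,s)|$ and the hazard is at most $\lambda(s)$. Both are bounded by $\Lambda_\infty$, so a Gronwall estimate gives a loss of order $\Lambda_\infty\Ex[X]\,\pi(t,t)/r$ relative to the safe value. A rival that stops earlier only helps. Making this precise, together with the constant $e$ coming from $\sup_x x e^{-x}$-type bounds on $e^{\Lambda_\infty X}$ moments, produces the factor $e\Lambda_\infty(r+\Lambda_\infty)/\omega$.

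\emph{Continuing.} Continuing yields at most what firm $i$ would get if it won the attrition. That is, the rival races for a further time $\theta$ and then concedes, after which $i$ stops upon detection. This is bounded by $g_t(t+\theta)$ plus a winner's bonus of order $\Lambda_\infty/\omega$. Since $-g_t'(\sigma)\ge (\alpha(\sigma)/\alpha(t))(\pi(\sigma,\sigma)/r)\,\kappa_0$ past $\tau^R$ (by \cref{ass:dsc} and the definition of $\kappa_0$), delay destroys joint value at rate at least $\kappa_0$ times the current value.

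Equilibrium requires that $i$'s continuation payoff be at least its stopping payoff. Combining the two estimates gives an inequality of the form
\[
\kappa_0\,\Ex\big[\min(\theta,1)\big]\lesssim \frac{e\Lambda_\infty(r+\Lambda_\infty)}{\omega}+(\text{prob.\ of a stop soon})\cdot\text{bounded gain},
\]
and this forces the conditional stopping intensity to exceed roughly $\kappa_0(\omega-\omega^{\mathrm{cap}})/(e\Lambda_\infty)$. Integrating the resulting exponential tail of $\Pr(\text{both active at }t)$ gives the stated bound. The statement that every equilibrium stops in finite time almost surely follows because the expectation is finite. Convergence in probability of $(\max_i\tau_i-\tau^R)^+$ to $0$ follows from Markov's inequality, since the bound is $O(1/\omega)$.

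I expect the main obstacle to be the continuation bound. There I must control $i$'s payoff from continuing against an arbitrary, possibly mixed rival concession plan, with $i$'s beliefs about whether the rival has already (undetected) stopped. I would try first to handle this by splitting the rival's behaviour into two events: the rival stops within the next $1/\omega$-scale window, or it does not. On the second event the joint-loss rate $\kappa_0$ applies for the whole window. On the first event the gain is at most the $O(\Lambda_\infty/\omega)$ winner's bonus. Finally, I would run a renewal argument over successive windows to obtain the geometric decay.
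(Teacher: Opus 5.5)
Your skeleton is the paper's: a stopping bound $V_S(t,\omega)\ge P(t)(1-\Lambda_\infty/\omega)$ with $P(t):=\pi(t,t)/r$, joint value destroyed through $g_t$, an equilibrium comparison at no-signal histories past $\tau^R$, geometric decay over blocks, and an extra $\mathrm{Exp}(\omega)$ lag for the follower. But the step you flag as the main obstacle is where the proof lives, and your proposed resolution is wrong. There is no ``winner's bonus.'' Let $m:=\min_i\tau_i\ge t\ge\tau^R$ be the first stop date. Then \emph{both} firms' continuation payoffs from $m$ on are at most $P(m)$:
\begin{itemize}
\item the stopper's, because any further delay by the rival hurts it (\cref{lem:ex-down});
\item the non-stopper's, because against a rival frozen at $m\ge\tau^R$, log-supermodularity and \cref{ass:dsc} give $D_i\log\pi(s,m)\le D_i\log\pi(s,s)\le\lambda(s)$ for all $s\ge m$. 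Its frozen-rival payoff is therefore nonincreasing in its own stopping date, so racing through the detection lag can only push it below $P(m)$.
\end{itemize}
Hence, on the event that the rival is still active, equilibrium continuation is at most $\Ex[g_t(\min_i\tau_i)\mid\min_i\tau_i\ge t]$, with no additive term. This bound covers firm $i$'s own future concessions as well as the rival's. On the event that the rival has already stopped undetected, stopping now is optimal for firm $i$ by the same inequality, so that event drops out of the comparison. If you put a positive bonus of order $\Lambda_\infty/\omega$ on the continuation side, you still get an $O(1/\omega)$ bound. However, the threshold then lies strictly above $\omega^{\mathrm{cap}}$ and the constant is larger, so you do not prove the statement as written, including finiteness for every $\omega>\omega^{\mathrm{cap}}$.

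Your constant bookkeeping also does not close.
\begin{itemize}
\item The stop estimate carries no $e$ and no factor $r+\Lambda_\infty$. It is just $\psi_t(x)\le P(t)\Lambda_\infty x$, which uses the bound on the \emph{sum} $\lambda+|D_j\log\pi|$ rather than on each term separately.
\item Both factors come from the joint-loss side with window $h=1/(r+\Lambda_\infty)$. Writing $\phi(s):=\lambda(s)-D_i\log\pi(s,s)-D_j\log\pi(s,s)\in[\kappa_0,\Lambda_\infty]$ for $s\ge\tau^R$, you get $-g_t'(t+x)=P(t)\,\phi(t+x)\,e^{-rx-\int_t^{t+x}\phi}\ge P(t)\kappa_0/e$ on $[0,h]$. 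The reason is that the current joint value itself decays at rate up to $r+\Lambda_\infty$.
\item This yields $\Ex[(\min_i\tau_i-t)^+\wedge h\mid\min_i\tau_i\ge t]\le e\Lambda_\infty/(\kappa_0\omega)$. Hence $\Pr(\min_i\tau_i\ge t+h\mid\min_i\tau_i\ge t)\le\omega^{\mathrm{cap}}/\omega$, and summing the geometric series over blocks of length $h$ gives exactly $e\Lambda_\infty/(\kappa_0(\omega-\omega^{\mathrm{cap}}))$.
\end{itemize}
Your window of length $1$ (the $\min(\theta,1)$) fails on both counts. The guaranteed loss rate there is only $\kappa_0e^{-(r+\Lambda_\infty)}$, not $\kappa_0$. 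Even taking your displayed inequality at face value, the block sum gives $\omega^{\mathrm{cap}}/(\omega-\omega^{\mathrm{cap}})$, which is off by the factor $r+\Lambda_\infty$. A window of order $1/\omega$ gives a worse constant still.

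Finally, the ``conditional stopping intensity'' step cannot be derived as stated, because equilibrium concession distributions need not have densities. That $1/\mu$ with $\mu=\kappa_0(\omega-\omega^{\mathrm{cap}})/(e\Lambda_\infty)$ reproduces the target is true by construction. What actually does the work is the block-survival inequality above.
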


\begin{proof}[Proof of \cref{prop:cap}]
Fix an equilibrium and define 
\begin{align*}
P(t):=\frac{\pi(t,t)}r \quad \text{and} \quad 
\phi(t):=\lambda(t)-D_i\log\pi(t,t)-D_j\log\pi(t,t).
\end{align*}
Here $P(t)$ is each firm's payoff if both firms stop at $t$, while $\phi(t)$
is the net instantaneous cost of delaying both firms: the hazard rate less growth rate of flow profits evaluated when firms are neck-and-neck. 

We first obtain a uniform bound on delay over a short interval and then
iterate it. By \cref{ass:dsc} and the tail condition,  $
\kappa_0\leq\phi(t)\leq\Lambda_\infty
\qquad(t\geq\tau^R),$ where the lower bound uses
$\lambda-D_i\log\pi\geq0$ and
$-D_j\log\pi\geq\kappa_0$ and the upper bound uses
$D_i\log\pi>0$ and
$\lambda+|D_j\log\pi|\leq\Lambda_\infty$. 
For $t\geq\tau^R$, define the loss from a detection lag of duration $x$ by
\[
\psi_t(x)
:=
P(t)
-
\left[
\int_t^{t+x}\frac{\alpha(s)}{\alpha(t)}\pi(t,s)\,\de s
+
\frac{\alpha(t+x)}{\alpha(t)}\frac{\pi(t,t+x)}r
\right].
\]
By \cref{lem:VSclosed},
$
P(t)-V_S(t,\omega)
=
\int_0^\infty\omega e^{-\omega x}\psi_t(x)\,\de x.$ Moreover, $\psi_t(0)=0$ and
\[
0\leq\psi_t'(x)
=
\frac{\alpha(t+x)}{\alpha(t)}
\frac{\pi(t,t+x)}r
\big[\lambda(t+x)-D_j\log\pi(t,t+x)\big]
\leq P(t)\Lambda_\infty.
\]
Hence $\psi_t(x)\leq P(t)\Lambda_\infty x$, and therefore
\[
V_S(t,\omega)
\geq
P(t)\left(1-\frac{\Lambda_\infty}{\omega}\right).
\tag{STOP}
\]

 Using the coordinated-stop value $g_t$ from \eqref{eq:joint_stop}, for
$0\leq x\leq1/(r+\Lambda_\infty)$ we have 
\begin{align*}
-\frac{g_t'(t+x)}{P(t)}
&=
e^{-rx-\int_t^{t+x}\phi(v)\de v}\phi(t+x) \geq
\kappa_0 e^{-(r+\Lambda_\infty)x}
\geq 
\frac{\kappa_0}{e}.
\end{align*}
It follows that, for every $s\geq t\geq\tau^R$,
\[
g_t(s)
\leq
P(t)\left[
1-\frac{\kappa_0}{e}\big((s-t)\wedge h\big)
\right].
\tag{DELAY}
\]
Taking $s\to\infty$ in \textup{(DELAY)} and using
$g_t(\infty)=V_R(t)$ gives
\[
V_R(t)
\leq
P(t)\left(1-\frac{\kappa_0}{e(r+\Lambda_\infty)}\right).
\]
Together with \textup{(STOP)}, this implies
$V_S(t,\omega)>V_R(t)$ for every $t\geq\tau^R$ whenever
$\omega>\omega^{\mathrm{cap}}$. Hence
$\bar\omega\leq\omega^{\mathrm{cap}}$.

We next use equilibrium optimality. Fix $t\geq\tau^R$ with
$\Pr(\min_i\tau_i\geq t)>0$. At an active no-news history, a firm may be uncertain
whether its rival is still active or has stopped without being detected. Under the latter event, stopping immediately is weakly optimal by
\cref{ass:dsc}, log-supermodularity, and \cref{lem:R}. This thus cancels in ths comparison between equilibrium continuation and the deviation to stop immediately. On the former event, stopping immediately yields at
least $V_S(t,\omega)$. This is because the  definition of $V_S$ assumes that the rival continues until it detects the stop which is the worst case scenario---the rival might stopper sooner. 

We next bound equilibrium continuation conditional on the rival still being
active. Both firms earn the diagonal flow until the first stop. At that date,
the continuation payoff of either firm is at most
$P(\min_i\tau_i)$.\footnote{For the leader that stopped, her best case is for the follower to also stop which delivers $P(\min_i \tau_i)$; for the follower who knew that the leader has stopped at $\min \tau_i$, her best response is to stop immediately which also yields $P(\min_i \tau_i)$.} 
where we recall that $g_t(X)$ defined above is the `coordinated stop value' from coordinating a stop at $x$. Thus equilibrium continuation is at most
\[
\Ex\!\left[
g_t(\min_i\tau_i)
\ \middle|\
\min_i\tau_i\geq t
\right].
\]
Combining this bound with 
\textup{(STOP)} and
\textup{(DELAY)} we have: 
\begin{align*}
P(t)\left(1-\frac{\Lambda_\infty}{\omega}\right)
&\leq V_S(t,\omega)\\
&\leq
\Ex\!\left[
g_t(\min_i\tau_i)
\ \middle|\
\min_i\tau_i\geq t
\right]\\
&\leq
P(t)\left\{
1-\frac{\kappa_0}{e}
\Ex\left[
\big((\min_i\tau_i-t)^+\big)
\wedge\frac1{r+\Lambda_\infty}
\ \middle|\
\min_i\tau_i\geq t
\right]
\right\}.
\end{align*}
Canceling $P(t)$ and rearranging yields
\[
\Ex\left[
\big((\min_i\tau_i-t)^+\big)
\wedge\frac1{r+\Lambda_\infty}
\ \middle|\
\min_i\tau_i\geq t
\right]
\leq
\frac{e\Lambda_\infty}{\kappa_0\omega} = \frac{\omega^{\mathrm{cap}}}
{\omega(r+\Lambda_\infty)}.
\tag{BLOCK}
\]
where the equality is just from the definition of $\omega^{\mathrm{cap}}$. Because the random variable in \textup{(BLOCK)} equals
$1/(r+\Lambda_\infty)$ on
$\{\min_i\tau_i\geq t+1/(r+\Lambda_\infty)\}$,
\[
\Pr\left(
\min_i\tau_i\geq t+\frac1{r+\Lambda_\infty}
\ \middle|\
\min_i\tau_i\geq t
\right)
\leq \frac{\omega^{\mathrm{cap}}}{\omega}<1.
\]
Applying this inequality successively at
$t=\tau^R,\tau^R+1/(r+\Lambda_\infty),\ldots,
\tau^R+(k-1)/(r+\Lambda_\infty)$ yields
\[
\Pr\left(
\min_i\tau_i\geq\tau^R+\frac{k}{r+\Lambda_\infty}
\ \middle|\
\min_i\tau_i\geq\tau^R
\right)
\leq
\left(\frac{\omega^{\mathrm{cap}}}{\omega}\right)^k.
\]
Applying \textup{(BLOCK)} at
$t=\tau^R+k/(r+\Lambda_\infty)$ and using the preceding bound gives
\[
\Ex\left[
\left(
\min_i\tau_i-\tau^R-\frac{k}{r+\Lambda_\infty}
\right)^+
\wedge\frac1{r+\Lambda_\infty}
\right]
\leq
\frac1{r+\Lambda_\infty}
\left(\frac{\omega^{\mathrm{cap}}}{\omega}\right)^{k+1}.
\]
Observe that we have 
\[
\big(\min_i\tau_i-\tau^R\big)^+
=
\sum_{k=0}^\infty
\left[
\left(
\min_i\tau_i-\tau^R-\frac{k}{r+\Lambda_\infty}
\right)^+
\wedge\frac1{r+\Lambda_\infty}
\right].
\]
This partitions the positive overshoot into consecutive intervals
of length $1/(r+\Lambda_\infty)$. If the overshoot lies between
$k/(r+\Lambda_\infty)$ and $(k+1)/(r+\Lambda_\infty)$, the first $k$
terms each equal $1/(r+\Lambda_\infty)$, the next term equals the remaining
overshoot, and every later term is zero.
Monotone convergence thuen gives
\[
\Ex\Big[\big(\min_i\tau_i-\tau^R\big)^+\Big]
\leq
\frac{\omega^{\mathrm{cap}}/\omega}
{(r+\Lambda_\infty)
\big(1-\omega^{\mathrm{cap}}/\omega\big)}
=
\frac{e\Lambda_\infty}
{\kappa_0\big(\omega-\omega^{\mathrm{cap}}\big)}.
\]
Finally, after the first stop at $\min_i\tau_i$, the other firm detects it
after $\widetilde D\sim\mathrm{Exp}(\omega)$. By \cref{ass:dsc},
log-supermodularity, and \cref{lem:R}, it stops no later than
$\max\{\tau^R,\min_i\tau_i+\widetilde D\}$. Thus, pathwise,
\[
\big(\max_i\tau_i-\tau^R\big)^+
\leq
\big(\min_i\tau_i-\tau^R\big)^+
+\widetilde D.
\]
Taking expectations and using the preceding bound proves the result. The bound
is finite and uniform across equilibria. Its right-hand side converges to zero
as $\omega\to\infty$, which gives the uniform $L^1$ convergence and convergence in
probability follows from Markov's inequality.
\end{proof}

\cref{prop:cap} shows how cooperation partially succeeds when news arrives sufficiently quickly ($\omega > \omega^{\mathrm{cap}}$). If so, then for the equilibrium in which the technology is pushed past $\tau^R$,  outlasting one's rival is worth only the extra flow earned during the rival's detection lag (of order $1/\omega$) while joint delay burns value at a rate bounded away from zero. Since it cannot be rational to hold out for a long time to gain a vanishing prize, the expected delay until concession is  order $1/\omega$, so the frontier overshoots $\tau^R$ by $O(1/\omega)$ in expectation where this bound is uniform across all equilibria.\footnote{The proof gives a geometric block-tail bound and an $O(1/\omega)$ bound on the expected frontier overshoot which does not require equilibrium stopping distributions to have densities. In the smooth symmetric equilibrium considered below, the short delay is implemented by a concession hazard proportional to $\omega$.} As $\omega\to\infty$, the overshoot vanishes in probability so the perfect-observability ceiling is recovered in the limit.  

\cref{prop:cap} is stated for the monitoring regime $\omega > \omega^{\mathrm{cap}} \geq \bar \omega$ in order to discipline stopping incentives in the tail. However, for stationary environments such as those analyzed in the exponential below as well as in \cref{sec:trust}, the threshold is tight: when $\omega < \bar \omega$ there exists an equilbirium in which firms race forever; when $\omega > \bar \omega$ in every equlibria $\max_i \bm{\tau} < +\infty$ almost surely; we show this in \cref{app:further}.

\begin{example}[Exponential race with imperfect transparency]\label{ex:exponential}
Let flow payoffs be exponential,
$
\pi(t_i,t_j)=e^{\alpha t_i-\beta t_j},\alpha,\beta>0$. 
Both indices of the previous section are then constant and equal: $D_i\log\pi(t,t)=D_i\log\pi(t,0)=\alpha$ and $D_{ij}\log\pi=0$, so with an increasing hazard $\lambda(t)=\lambda_0e^{\kappa t}$ ($\kappa>0$, $\lambda_0<\alpha$) the bracket is degenerate:
$
\underline\tau=\tau^R=\frac1\kappa\log\frac{\alpha}{\lambda_0}
$, 
so under perfect transparency, firms stop at $\tau^R$ in every equilibrium. This makes the exponential race a natural setting to isolate the effect of imperfect transparency: any departure from $\tau^R$ is attributable to monitoring frictions alone. We take the hazard constant at a level exceeding the index: $\lambda(t)\equiv\lambda>\alpha$, with $r>\alpha-\beta$, so that $\tau^R=0$.\footnote{This is the boundary case of \cref{ass:payoffs}(iv): the index $\alpha$ never crosses the hazard from above because it starts below it, and \cref{ass:dsc} holds trivially. Interpret $(\pi,\lambda)$ as the continuation environment once the race has reached the threshold; the next section adopts the same specification.} The environment is then \emph{stationary}: at a symmetric no-signal history at technology $t$, every continuation payoff scales with the current flow $\pi(t,t)=e^{(\alpha-\beta)t}$, so values per unit of current flow are date-independent, and $\Delta(t,\omega)$ has the same sign at every date. Direct computation gives the normalized  values
\[
V_R=\frac{1}{r+\lambda-\alpha+\beta},
\qquad
V_S(0)=\frac{1}{r+\lambda+\beta},
\qquad
V_S(\omega)=V_S(0)+\frac{\omega\big[1/r-V_S(0)\big]}{\omega+r+\lambda+\beta}\,:
\]
racing forever discounts at $r+\lambda$ while the diagonal flow grows at $\alpha-\beta$; stopping first freezes one's own flow, which then decays at the erosion rate $\beta$ under a still-live hazard until the detection date, whose expected discount factor $\omega/(\omega+r+\lambda+\beta)$ weights the upgrade to the joint-stop perpetuity $1/r$. Setting $V_S(\omega)=V_R$ and solving, $
\bar\omega=\frac{r\alpha}{\lambda-\alpha+\beta}\,.$
Every parameter moves $\bar\omega$ in the intuitive direction. The transparency needed for cooperation rises with the scaling gain $\alpha$, which pays only while racing, and with impatience $r$, which discounts the reciprocation upgrade. It falls with the hazard $\lambda$ and the erosion $\beta$: both depress the race value faster than the stop value (the race lasts forever; the exposure window ends at detection), so more dangerous technologies and stronger competitive erosion make cooperation easier to sustain.

By stationarity, $\Delta(\cdot,\omega)$ has the same sign at every
date. Hence, when $\omega>\bar\omega$ so that $\Delta>0$ everywhere,
a unilateral stop is profitable at every date and
\cref{prop:exact-threshold} in \cref{app:further} shows that every
equilibrium stops in finite time. \cref{fig:exp-example} displays the  threshold at which $V_S$ crosses $V_R$, and the vanishing prize. Panel (a) illustrates how the stop value $V_S(\omega)$ rises from $V_S(0)=1/(r+\lambda+\beta)$ toward $1/r$ as news speeds up, crossing the race value $V_R=1/(r+\lambda-\alpha+\beta)$ at $\bar\omega=r\alpha/(\lambda-\alpha+\beta)=1/15$: below the threshold racing forever is self-enforcing (\cref{prop:never-stop}), and above it all equilibria are finite, conceding quickly in expectation when transparency is high (\cref{prop:cap}).

\end{example}

\begin{figure}[H]\centering
\begin{minipage}[t]{0.52\textwidth}\centering
\begin{tikzpicture}[x=36cm,y=2.1cm]
\draw[-{Stealth[length=2mm]}] (0,0) -- (0.218,0) node[below left=1pt] {\scriptsize $\omega$};
\draw[-{Stealth[length=2mm]}] (0,0) -- (0,1.78);
\foreach \o/\lab in {0.05/{0.05},0.1/{0.1},0.15/{0.15},0.2/{0.2}}{\draw (\o,0.015) -- (\o,-0.015) node[below] {\tiny \lab};}
\foreach \v/\lab in {0.5/{1.0},1.0/{1.5},1.5/{2.0}}{\draw (0.002,\v) -- (-0.002,\v) node[left] {\tiny \lab};}
\draw[thick, red!70!black] (0,0.6765) -- (0.21,0.6765);
\node[left=1pt, red!70!black] at (0,0.6765) {\tiny $V_R$};
\node[left=1pt] at (0,0.2407) {\tiny $V_S(0)$};
\draw[very thick, blue!60!black] plot[domain=0:0.21,samples=80] (\x,{0.7407+9.2593*\x/(\x+1.35)-0.5});
\node[above left=0pt, blue!60!black] at (0.212,1.44) {\tiny $V_S(\omega)$};
\node[above, blue!60!black] at (0.155,1.5) {\tiny $V_S(\omega)\uparrow 1/r=10$};
\draw[densely dashed] (0.0667,0) -- (0.0667,0.6765);
\filldraw (0.0667,0.6765) circle (1.6pt);
\node[below] at (0.0667,-0.06) {\tiny $\bar\omega$};
\node[align=center] at (0.031,1.3) {\tiny $\Delta<0$: racing to ruin\\[-3pt]\tiny self-enforcing};
\node[align=center] at (0.15,0.38) {\tiny $\Delta>0$: profitable stop exists};
\end{tikzpicture}\\[2pt]
{\scriptsize (a) the cooperation threshold $\bar\omega$}
\end{minipage}\hfill
\begin{minipage}[t]{0.44\textwidth}\centering
\begin{tikzpicture}[x=2.9cm,y=1.85cm]
\draw[-{Stealth[length=2mm]}] (-0.06,-1.32) -- (2.18,-1.32) node[below left=1pt] {\scriptsize $\omega$};
\draw[-{Stealth[length=2mm]}] (-0.06,-1.32) -- (-0.06,0.62);
\foreach \o/\lab in {0/{1},1/{10},2/{100}}{\draw (\o,-1.305) -- (\o,-1.335) node[below] {\tiny \lab};}
\foreach \v/\lab in {0/{1},-1/{0.1}}{\draw (-0.045,\v) -- (-0.075,\v) node[left] {\tiny \lab};}
\draw[very thick, blue!60!black] plot[smooth] coordinates {(0,0.341)(0.301,0.257)(0.699,0.032)(1,-0.201)(1.301,-0.465)(1.699,-0.840)(2,-1.133)};
\node[above right=0pt, blue!60!black] at (0.05,0.36) {\tiny $V_W(\omega)-V_S(\omega)$};
\draw[densely dashed, gray!60!black] (0.55,0.325) -- (2.08,-1.205);
\node[rotate=-28, above=1pt, gray!60!black] at (1.5,-0.55) {\tiny slope $-1$: $\frac{\alpha+\beta}{r\omega}$};
\end{tikzpicture}\\[2pt]
{\scriptsize (b) the prize of outlasting (log--log)}
\end{minipage}
\caption{Exponential under imperfect monitoring ($r=0.1$, $\lambda=1$, $\alpha=0.5$, $\beta=0.25$, values per unit of current flow)}
\label{fig:exp-example}
\end{figure}

To see why concession speeds up as monitoring becomes more precise, note  that the  (normalized\footnote{We normalize this by the flow profits $\pi(t,t)$.}) value upon detection that one's rival has stopped instead of first is: 
$
V_W(\omega) - V_S(\omega) \simeq \cfrac{\alpha + \beta}{r\omega}$,
where $V_W(\omega)$ is the expected payoff from racing against an opponent who has stopped and stopping only upon detection. This is illustrated by panel (b). By contrast, joint delay burns value at the (normalized) constant rate $(\lambda-\alpha+\beta)/r$ per unit of time. Hence,  equilibrium indifference equires concession at a hazard of order $\omega$, leaving expected delay of order
$1/\omega$.

\section{Trust} \label{sec:trust}
We are interested in firms' ability to coordinate when they fear their opponent might irrationally never stop. Continue with the stationary exponential environment of Example~\ref{ex:exponential}, imposing in addition $0<\beta<\alpha$ and $\omega\ge\beta$. Thus
$
\pi(t_i,t_j)=e^{\alpha t_i-\beta t_j},
\lambda>\alpha,$ and $
r>\alpha-\beta,$ 
and the representative threshold is $\tau^R=0$. We now suppose that each firm is independently either a \emph{rational} type with these payoffs or a \emph{crazy} type that never stops. Each firm is rational with probability $p$, and we write $\ell:=p/(1-p)$ for the prior odds of rationality.

As in Example~\ref{ex:exponential}, we normalize continuation payoffs by current flow and write
\[
V_R=\frac{1}{r+\lambda-\alpha+\beta},\text{  }
\text{  }
V_S(0)=\frac{1}{r+\lambda+\beta} \text{  and }
V_S(\omega)=
V_S(0)+
\frac{\omega\bigl[1/r-V_S(0)\bigr]}
{\omega+r+\lambda+\beta}.
\]
Recall that $V_S(\omega)>V_R$ if and only if
$
\omega>\bar\omega
:=
\frac{r\alpha}{\lambda-\alpha+\beta}.$

We additionally define the \emph{verification value}
\[
V_W(\omega):=\frac{r+\omega}{r\,(\omega+r+\lambda-\alpha)}
\]
as the normalized payoff from racing beside a rival that has already stopped at the same technology, and exiting upon detection; this is the value of waiting for confirmation that a firm's rival is not crazy. Note $V_R<V_W(\omega)<1/r$: verifying beats racing forever but falls short of joining, by the \emph{lag cost} $1/r-V_W(\omega)$ .

\paragraph{Two kinds of coordination.} We now identify two distinct coordination modes and the conditions under which each is incentive-compatible. For an agent to stop first i.e., without knowing if their rival has stopped, she gambles on both their rival's type and on news arriving quickly: if their rival is rational, it stops upon receiving the news of their stop, and never stops otherwise:
\[
\frac{\ell}{1+\ell}\,V_S(\omega)+\frac{1}{1+\ell}\,V_S(0)\ \ge\ V_R
\quad\Longleftrightarrow\quad
\ell\ \ge\ \bar\ell:= \begin{cases}
    \cfrac{V_R-V_S(0)}{V_S(\omega)-V_R} \quad &\text{if $\omega > \bar \omega$} \\ 
    +\infty &\text{otherwise}
\end{cases}
\]

Alternatively, rational firms might stop at a common date. In this case, we have a new deviation that was not present when firms were commonly known to be rational---firms are tempted to \emph{wait and verify}: hold out for news that their rival has, in fact, stopped. If the rival is rational, this delivers $V_W(\omega)$ instead of $1/r$. If the rival is crazy, then the deviator obtains $V_R$ (rival never stops) instead of $V_S(0)$. Hence, stopping at a known date is optimal if and only if
\[
\ell\,\big[1/r-V_W(\omega)\big]\ \ge\ V_R-V_S(0)
\quad\Longleftrightarrow\quad
\ell\ \ge\ \underline\ell:=\frac{V_R-V_S(0)}{1/r-V_W(\omega)}.
\]
 We assume that $\underline\ell<\bar\ell$ which implies that simultaneous coordination is less demanding than sequential coordination.\footnote{Each of $V_R$, $V_S(0)$, $V_S(\omega)$, $V_W(\omega)$, and $1/r$ is the payoff of a fixed course of play and does not depend on equilibrium except for the fact that a sole survivor who learns their rival has stopped also stops immediately, which is implied by dominance ($\alpha<\lambda$). If $\underline \ell > \bar \ell$,  the middle zone disappears. Never stopping remains equilibrium for $\ell \leq \bar \ell$ and is ruled out for $\ell > \bar \ell$ when stopping first is strictly profitable.} Because$\omega \geq \beta$, waiting indefinitely for verification that their rival has stopped is the binding deviation. The two thresholds have the same  numerator $V_R - V_S(0)$, the cost of stopping 
unreciprocated, and differ in the denominator: the \emph{prize of 
reciprocation} $V_S(\omega)-V_R$ for sequential stopping and the \emph{lag 
cost} $1/r - V_W(\omega)$ for simultaneous stopping.\cref{fig:trust-values}(a) 
shows $V_S(\omega)$ and $V_W(\omega)$ rising toward $1/r$: the prize of 
reciprocation grows while the lag cost shrinks, so faster news makes both 
stopping first and waiting to verify more attractive. Panel (b) plots the 
trust-weighted values of the four candidate strategies at $\omega=0.5$. The lighter 
curves repeat the exercise at $\omega'=0.8$. Racing and stopping together are 
unaffected by $\omega$; only stopping first and waiting to verify shift. Faster 
news pulls $\bar\ell$ down and pushes $\underline\ell$ up, squeezing the medium 
zone until it closes at $\omega^{\dagger}$.

\begin{figure}[H]\centering
\begin{minipage}[t]{0.32\textwidth}\centering
\resizebox{\textwidth}{!}{%
\begin{tikzpicture}[x=1.15cm,y=0.5cm]
\draw[-{Stealth[length=2mm]}] (0,0) -- (3.3,0) node[below left=1pt] {\scriptsize $\omega$};
\draw[-{Stealth[length=2mm]}] (0,0) -- (0,11.4);
\foreach \o in {1,2}{\draw (\o,0.2) -- (\o,-0.2) node[below] {\tiny \o};}
\draw[densely dashed, gray!60!black] (0,10) -- (3.15,10);
\node[left] at (0,10) {\tiny $1/r$};
\draw[densely dashed, red!70!black] (0,1.1765) -- (3.15,1.1765);
\node[right, red!70!black] at (3.15,1.1765) {\tiny $V_R$};
\node[left] at (0,0.7407) {\tiny $V_S(0)$};
\draw[very thick, blue!60!black] plot[domain=0:3.1,samples=80] (\x,{0.7407+9.2593*\x/(\x+1.35)});
\draw[very thick] plot[domain=0:3.1,samples=80] (\x,{(0.1+\x)/(0.1*(\x+0.6))});
\node[right, blue!60!black] at (3.12,7.15) {\tiny $V_S(\omega)$};
\node[right] at (3.12,8.7) {\tiny $V_W(\omega)$};
\draw[{Stealth[length=1.6mm]}-{Stealth[length=1.6mm]}, gray!60!black] (2.6,1.1765) -- (2.6,6.835);
\node[left, gray!60!black] at (2.55,4.0) {\tiny prize};
\draw[{Stealth[length=1.6mm]}-{Stealth[length=1.6mm]}, gray!60!black] (1.0,6.875) -- (1.0,10);
\node[left, gray!60!black] at (0.95,8.5) {\tiny lag cost};
\end{tikzpicture}}\\[2pt]
{\scriptsize (a) values vs.\ transparency}
\end{minipage}\hfill
\begin{minipage}[t]{0.66\textwidth}\centering
\resizebox{\textwidth}{!}{%
\begin{tikzpicture}[x=13cm,y=1.2cm]
\draw[-{Stealth[length=2mm]}] (0,0.5) -- (0.53,0.5) node[below left=1pt] {\scriptsize $\ell$};
\draw[-{Stealth[length=2mm]}] (0,0.5) -- (0,4.45);
\node[left] at (0,0.7407) {\tiny $V_S(0)$};
\draw[thick, red!70!black] (0,1.1765) -- (0.5,1.1765);
\node[right, red!70!black] at (0.505,1.1765) {\tiny race ($V_R$)};
\draw[very thick, densely dashed] plot[domain=0:0.5,samples=60] (\x,{(10*\x+0.7407)/(1+\x)});
\node[right] at (0.505,3.827) {\tiny stop together};
\draw[very thick, black!35] plot[domain=0:0.5,samples=60] (\x,{(6.4286*\x+1.1765)/(1+\x)});
\draw[very thick, blue!35] plot[domain=0:0.5,samples=60] (\x,{(4.1860*\x+0.7407)/(1+\x)});
\draw[very thick] plot[domain=0:0.5,samples=60] (\x,{(5.4545*\x+1.1765)/(1+\x)});
\draw[very thick, blue!60!black] plot[domain=0:0.5,samples=60] (\x,{(3.2432*\x+0.7407)/(1+\x)});
\draw[-{Stealth[length=1.6mm]}] (0.42,2.452) -- (0.42,2.71);
\node[right] at (0.425,2.56) {\tiny $\omega\uparrow$};
\draw[-{Stealth[length=1.6mm]}, blue!60!black] (0.46,1.54) -- (0.46,1.81);
\node[right, blue!60!black] at (0.465,1.67) {\tiny $\omega\uparrow$};
\node[right] at (0.505,2.603) {\tiny wait \& verify};
\node[right, blue!60!black] at (0.505,1.575) {\tiny stop first};
\filldraw (0.0959,1.551) circle (1.4pt);
\draw[densely dashed, gray!60!black] (0.0959,1.551) -- (0.0959,0.5);
\node[below] at (0.0959,0.47) {\tiny $\underline\ell$};
\filldraw[blue!60!black] (0.2109,1.1765) circle (1.4pt);
\draw[densely dashed, gray!60!black] (0.2109,1.1765) -- (0.2109,0.5);
\node[below, blue!60!black] at (0.2109,0.47) {\tiny $\bar\ell$};
\filldraw[black!45] (0.1220,1.7475) circle (1.4pt);
\draw[densely dashed, black!30] (0.1220,1.7475) -- (0.1220,0.5);
\node[below, black!45] at (0.1220,0.47) {\tiny $\underline\ell'$};
\filldraw[blue!35] (0.1448,1.1765) circle (1.4pt);
\draw[densely dashed, blue!25] (0.1448,1.1765) -- (0.1448,0.5);
\node[below, blue!45] at (0.1448,0.47) {\tiny $\bar\ell'$};
\draw[-{Stealth[length=1.5mm]}] (0.0989,0.66) -- (0.119,0.66);
\draw[-{Stealth[length=1.5mm]}, blue!60!black] (0.2079,0.66) -- (0.1478,0.66);
\end{tikzpicture}}\\[2pt]
{\scriptsize (b) values vs.\ trust}
\end{minipage}
\caption{Transparency and trust; stationary environment: $(r,\lambda,\alpha,\beta)=(0.1,1,0.5,0.25)$).}
\label{fig:trust-values}
\end{figure}

Transparency thus moves the two thresholds in opposite directions. Faster monitoring speeds reciprocation, so $\bar\ell$ falls---from $+\infty$ at $\bar\omega$ (no trust substitutes for a minimum of monitoring) toward the strictly positive limit $\frac{V_R-V_S(0)}{1/r-V_R}$---while it also cheapens verification, inviting free-riding on the signal, so $\underline\ell$ rises without bound.  This partitions the trust line into three zones (\cref{fig:LMH}): \emph{low} ($\ell<\underline\ell$), \emph{medium} ($\underline\ell\le\ell<\bar\ell$), and \emph{high} ($\ell\ge\bar\ell$).

\begin{figure}[H]
\centering
\includegraphics[width=0.75\textwidth]{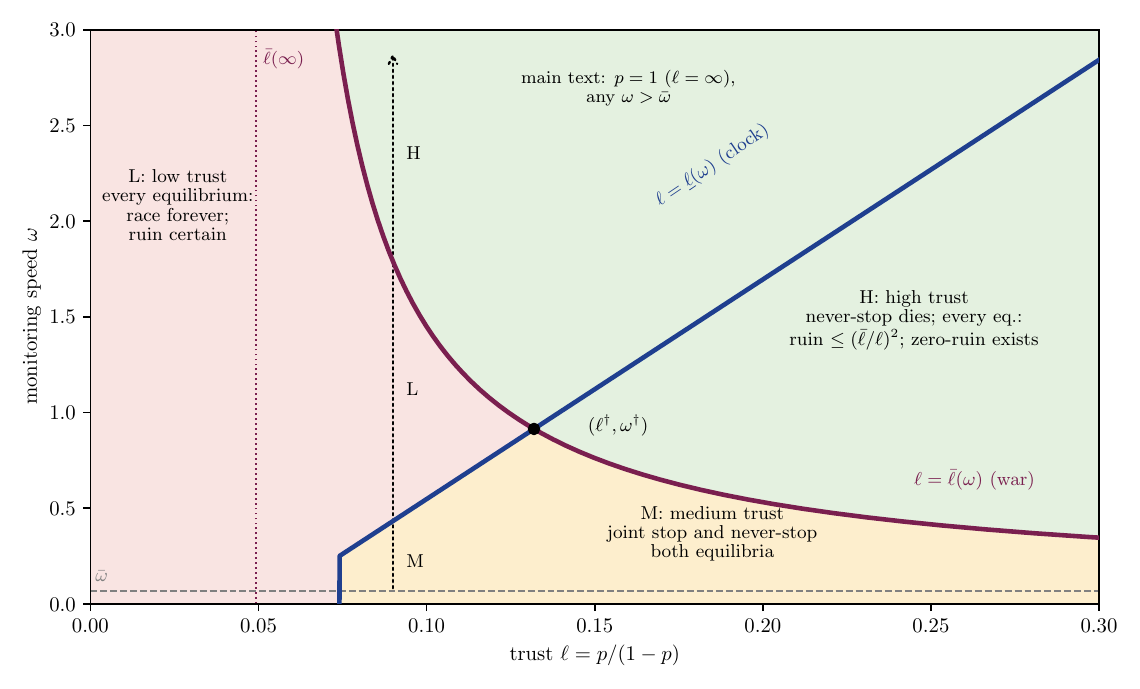}
\caption{{Trust and transparency.} Parameters: $(r,\lambda,\alpha,\beta)=(0.1,1,0.5,0.25)$.}
\label{fig:LMH}
\end{figure}

\cref{fig:LMH} illustrates the thresholds $\underline \ell$ and $\overline \ell$. The threshold $\underline\ell$ rises with $\omega$ (cheaper verification invites free-riding on the signal; below $\omega=\beta$ it is constant at its $\omega=\beta$ value---the vertical segment), while the threshold $\bar\ell$ falls (faster reciprocation), with horizontal asymptote $\bar\omega=1/15$ and vertical asymptote $\bar\ell(\infty)\approx0.049$: the high-trust zone lies entirely above the main text's cooperation threshold, and no transparency pushes its boundary below $\bar\ell(\infty)$. The curves cross once, at $(\ell^\dagger,\omega^\dagger)\approx(0.132,\,0.914)$. The dotted vertical, at trust $\ell=0.09$, shows the double-edged effect of transparency: raising $\omega$ first destroys simultaneous coordination (M$\,\to\,$L at $\omega\approx0.43$) and only later makes sequential coordination viable (L$\,\to\,$H at $\omega\approx1.79$). 

Why do trust and transparency interact so differently across the two modes of coordination? Sequential coordination asks a firm to stop first, gambling that a rational rival will reciprocate once the news lands. Hence, faster news raises the prize of reciprocation $V_S(\omega)-V_R$, so transparency helps. Conversely, simultaneous coordination requires that a firm not be tempted to keep racing, and stop only after seeing that the rival really did stop. But faster news shrinks the lag cost $1/r-V_W(\omega)$ of that deviation, so transparency hurts. This double-edged effect of transparency is visible in \cref{fig:LMH}: at intermediate levels of trust (e.g., $\ell = 0.10$), raising $\omega$ can destroy the less demanding simultaneous mode $(M \to L)$ long before it makes the sequential mode viable $(L \to H)$.

Extending the notation of \cref{sec:imperfect}, write $\PBE(\omega,\ell)$ for the set of perfect Bayesian equilibrium \emph{outcomes}---profiles of stopping times $\bm\tau=(\tau_1,\tau_2)$---when the news rate is $\omega$ and the prior odds of rationality are $\ell$.

\begin{proposition} \label{prop:LMH}
The following characterizes equilibrium outcomes across trust zones: 
\begin{itemize}[leftmargin=2.4em]
\item[(i)] \emph{Low trust.} If $\ell<\underline\ell$, then every $\bm\tau\in\PBE(\omega,\ell)$ has $\tau_1=\tau_2=+\infty$ almost surely: no firm ever stops, and the disaster arrives with probability one.
\item[(ii)] \emph{Medium trust.} If $\underline\ell\le\ell\le\bar\ell$, then $\PBE(\omega,\ell)$ contains both \emph{(a)} the outcome in which rational firms stop immediately, so that the risk survives only through crazy types, and \emph{(b)} the never-stopping outcome. 
\item[(iii)] \emph{High trust.} If $\ell>\bar\ell$, then every $\bm\tau\in\PBE(\omega,\ell)$ satisfies
\[
\Pr\big(\tau_1=\tau_2=+\infty\ \big|\ \text{both firms rational}\big)\ \le\ \big(\bar\ell/\ell\big)^2,
\]
and stopping at zero remains $\PBE(\omega,\ell)$.
\end{itemize}
\end{proposition}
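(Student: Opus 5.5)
The plan is to exploit stationarity. After normalizing every continuation payoff by the current diagonal flow $\pi(t,t)=e^{(\alpha-\beta)t}$, the only payoff-relevant state at a no-news history is firm $i$'s posterior over three events: the rival is rational and still active, rational and stopped but undetected, or crazy. Two facts do most of the work. First, a sole survivor who detects a stop stops immediately, by dominance ($\alpha<\lambda$, as in the footnote to the thresholds). Second, silence is weakly bad news about rationality: a crazy rival generates silence with probability one, a rational one with probability at most one. So the posterior odds that the rival is rational never exceed $\ell$ on no-news histories. Each part then reduces to comparing a few fixed-course values $V_R,V_S(0),V_S(\omega),V_W(\omega),1/r$, as in the derivation of $\underline\ell$ and $\bar\ell$.

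\emph{Part (i).} I will show that stopping is never optimal when $\ell<\underline\ell$. Fix a no-news history and let $q$ be the posterior that the rival is rational. I bound the two options event by event.
\begin{itemize}
\item Stopping now yields at most $q\cdot 1/r+(1-q)V_S(0)$, since against a rational rival the best case is a simultaneous stop.
\item Waiting to verify (race until detection, then stop) yields at least $qV_W(\omega)+(1-q)V_R$.
\end{itemize}
The only delicate term in the second bound is the rational-but-still-active rival. There I will argue that racing until detection of its later stop is worth at least $V_W(\omega)\ge V_R$: while the rival is frozen it erodes my flow less, and the hazard is as in $V_W$. Hence stopping requires posterior odds $q/(1-q)\ge\underline\ell$. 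Since posterior odds are at most $\ell<\underline\ell$, no rational firm stops with positive probability, and $\tau_1=\tau_2=+\infty$ almost surely.

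\emph{Part (ii).} I will construct both equilibria directly, checking one-shot deviations and using stationarity together with \cref{lem:rec} to reduce arbitrary deviation dates to an immediate one.
\begin{itemize}
\item For \emph{(a)}, rational firms stop at $0$ and respond to news by stopping. The binding deviation is waiting to verify, because $\omega\ge\beta$; by the definition of $\underline\ell$ this deviation is unprofitable iff $\ell\ge\underline\ell$. Beliefs after detection are conclusive, so no off-path beliefs are needed.
\item For \emph{(b)}, rational firms never stop on the no-news path but stop on detection. Then no news is uninformative, so the posterior stays at $\ell$. Stopping first at any $t$ yields $\frac{\ell}{1+\ell}V_S(\omega)+\frac1{1+\ell}V_S(0)$ per unit flow, which is at most $V_R$ iff $\ell\le\bar\ell$.
\end{itemize}

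\emph{Part (iii).} That stopping at $0$ remains an equilibrium follows from part (ii)(a), because $\ell>\bar\ell>\underline\ell$. For the bound, let $\rho_j$ be the probability that a rational firm $j$ never stops on its no-news path. Suppose $\rho_i>0$, so firm $i$ races forever on a positive-probability set of its randomization. As $t\to\infty$ on that set, firm $i$'s posterior odds that the rival is rational-and-never-stopping, versus crazy, converge to at least $\ell\rho_j$. Deviating to stop at such a $t$ yields at least $\frac{\ell\rho_j}{1+\ell\rho_j}V_S(\omega)+\frac{1}{1+\ell\rho_j}V_S(0)$, up to vanishing terms. The event that the rival is rational and stops later only adds value to stopping, and the equilibrium continuation tends to $V_R$. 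Optimality therefore forces $\ell\rho_j\le\bar\ell$. Symmetrically, $\rho_j>0$ forces $\rho_i\le\bar\ell/\ell$. Because private randomizations are independent conditional on types, the probability that both rational firms race forever is $\rho_i\rho_j\le(\bar\ell/\ell)^2$; if either $\rho$ is zero the bound is trivial.

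The main obstacle is this limiting step in (iii). It requires making rigorous, uniformly in equilibria, that firm $i$'s continuation from racing forever converges to $V_R$, and that the residual mass of rational rivals who stop late does not undermine the stopping comparison. I would first try to handle it by conditioning on $\{\tau_i=+\infty\}$, applying the recursive identity of \cref{lem:rec} with stationary normalization, and taking $t\to\infty$ via monotone convergence of the conditional probabilities. The analogous step in (i), that a rational-but-still-active rival leaves the waiting deviation worth at least $V_W(\omega)$, is the second place I expect to need care.
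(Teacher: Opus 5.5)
Your Parts (ii)(b) and (iii) track the paper closely; your (iii) is the contrapositive of the paper's step ``$s_1>\bar\ell/\ell$ forces $s_2=0$''. Part (i), however, has a genuine gap. Both of your cell-by-cell bounds are false, and the correct argument needs two ingredients you never use.

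\emph{The active-rival cell.} Take a rational rival that is still active, with residual quiet-path stopping time $\tau$. Waiting to verify is worth $V_R+[V_W(\omega)-V_R]e^{-\delta\tau}$, where $\delta=r+\lambda-\alpha+\beta$. This lies strictly \emph{below} $V_W(\omega)$, not above it: until the rival stops it keeps racing, so your flow is eroded and the effective discount is $\delta>r+\lambda-\alpha$. At $\tau=+\infty$, waiting yields $V_R$ and stopping yields $V_S(\omega)$. The stop-minus-wait gap is then $V_S(\omega)-V_R$, which exceeds your bound $1/r-V_W(\omega)$ exactly when $\bar\ell<\underline\ell$.

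A telltale sign is that your argument never uses the maintained ordering $\underline\ell<\bar\ell$. If it were valid, it would rule out all stopping for every $\ell<\underline\ell$, including the case $\bar\ell<\ell<\underline\ell$, where never-stopping is not an equilibrium because stopping immediately strictly beats racing.

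What is true is a bound on the \emph{difference}, not on each term separately. The paper writes the gap as a function $\varphi(\tau)$ and shows $\varphi$ is quasi-convex. Hence $\Ex[\varphi(\tau)]\le\max\{\varphi(0),\varphi(\infty)\}=\max\{1/r-V_W(\omega),\,V_S(\omega)-V_R\}$. Only then does $\underline\ell<\bar\ell$ reduce this maximum to $1/r-V_W(\omega)$.

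\emph{The undetected-stop cell.} Take a rational rival that stopped undetected at $c<t$. Stopping now is worth $e^{\beta(t-c)}/r>1/r$ per unit of current diagonal flow, because the rival is frozen below you. So your bound ``stopping yields at most $q/r+(1-q)V_S(0)$'' also fails on this cell. The gap there is $e^{\beta(t-c)}[1/r-V_W(\omega)]$, amplified by the age of the stop.

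The fact that posterior odds of rationality never exceed $\ell$ does not control this. You need the \emph{payoff-weighted} odds to stay below $\ell$. Let $F_j$ be the distribution of the rational rival's quiet-path stopping date and $S_j=1-F_j$. An undetected stop at $c$ carries posterior weight proportional to $e^{-\omega(t-c)}\de F_j(c)$, and $\omega\ge\beta$ gives $e^{(\beta-\omega)(t-c)}\le1$. This is where $\omega\ge\beta$ enters Part (i); without it the threshold genuinely changes (cf.\ \cref{prop:slow-news}).

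Sum the crazy, active and undetected cells against the weights $1-p$, $pS_j(t)$ and $pe^{-\omega(t-c)}\de F_j(c)$, comparing stopping with waiting to verify. Stop minus wait is then proportional to $(1-p)[\ell/\underline\ell-1][V_R-V_S(0)]<0$.

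\emph{A smaller point in (ii)(a).} You assert, rather than show, that intermediate delays are dominated by waiting to verify. \cref{lem:rec} does not apply there, because the rational rival stops at $0$ instead of racing until news. The paper instead computes the value $W(\sigma)$ of ``continue until $\sigma$ or the first signal.'' It shows that when $\omega\ge\beta$, $W'$ changes sign at most once, from $-$ to $+$, so $W$ is maximized at an endpoint.
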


\begin{figure}[H]
\centering
\includegraphics[width=\textwidth]{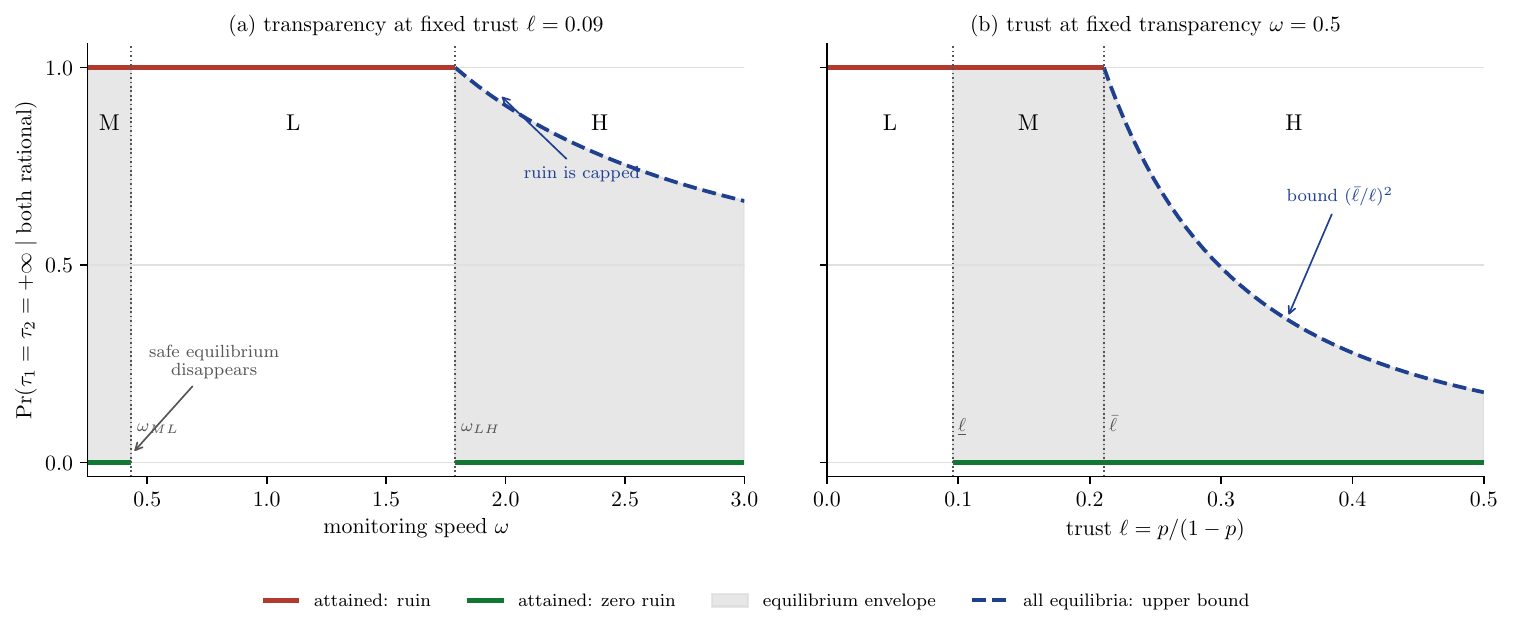}
\caption{Equilibrium probability of never stopping at different trust levels; Parameters:
\((r,\lambda,\alpha,\beta)=(0.1,1,0.5,0.25)\).}
\label{fig:trust-outcomes}
\end{figure}

 Low trust makes any unilateral stop too likely to go unreciprocated, so every equilibrium races forever. At intermediate trust, simultaneous stopping becomes sustainable, but neither firm is willing to stop first, leaving both stopping and racing equilibria. At high trust, stopping first becomes profitable whenever the rival’s holdout probability is too large, yielding the quadratic bound.

Although \cref{prop:LMH} was stated for the case in which $\underline \ell < \bar \ell$, similar results hold in the opposite case. There, Part (i) survives verbatim with $\bar\ell$ (now the smaller threshold) as the boundary of the low zone. There is no middle region. Part (iii) survives for $\ell>\bar\ell$ and the $(\bar\ell/\ell)^2$ bound holds unchanged. \cref{tab:2x2} summarizes equilibrium behavior across various combinations of trust and transparency.  

\begin{table}[H]
\centering
\footnotesize
\renewcommand{\arraystretch}{1.5}
\begin{tabular}{@{}p{1.9cm}p{5.6cm}p{5.9cm}@{}}
\toprule
& \textbf{perfect observability} ($\omega=\infty$) & \textbf{imperfect observability} ($\omega<\infty$) \\
\midrule
\textbf{no crazy types} ($p=1$) & Coordination is easy: in every equilibrium, both firms stop at once. & Single threshold $\bar\omega$: below it, never-stopping is an equilibrium alongside the joint stop; above it, every equilibrium coordinates.  \\
\addlinespace
\textbf{crazy types} ($p<1$) & Single threshold $\bar\ell(\infty)$: below it, every equilibrium races to ruin; above it, ruin $\le\big(\bar\ell(\infty)/\ell\big)^2$, with zero attained only by leader--follower outcomes & Two thresholds $\underline\ell$ rising and $\bar\ell$ falling in $\omega$, and three zones (\cref{prop:LMH}).  \\
\bottomrule
\end{tabular}
\caption{Equilibria across transparency and trust regimes}
\label{tab:2x2}
\end{table}

\section{Conclusion}
We have shown that competition and coordination play distinct roles in determining how far a dangerous technology advances. Under perfect transparency, competition places the frontier between monopoly and representative-firm thresholds. Delayed observation can sustain racing forever or generate a short war of attrition beyond the perfect-transparency ceiling. Imperfect trust is more damaging: when firms doubt their rivals' rationality, racing to ruin may occur in every equilibrium. Transparency and trust are therefore complements in some regions but substitutes in others.

\paragraph{Profile effects} We might alternatively suppose that risk also depends on the entire profile of technology states, and that when the  second firm's technology is  near the frontier it matters almost as much as the leader. \footnote{We might further suppose that risks are persistent, so that firms at frontier $T$ generate risk even after they stop scaling; the next discussion takes up persistence.} We conjecture that these profile effects can reverse a force that facilitated coordination in the baseline perfect-transparency model: A firm that stops now exerts a positive externality on the racers it leaves behind: its exit thins the frontier and lowers the hazard they face, so stopping first might embolden one's competitors to race for longer.

\paragraph{Accumulative risks} Suppose that once the technology crosses a threshold, the industry  lives with a permanent background chance of collapse: the hazard depends on the peak scale reached so far and persists after every firm has stopped.\footnote{In the language of \citet{trammell2024existential} this is \emph{state} risk rather than \emph{transition} risk.} Stopping then ends the gains from scaling but not the exposure to risk, so the value of settling at technology $T$ is the flow payoff discounted at $r+\lambda(T)$ rather than at $r$. This changes the representative firm's trade-off: the hazard level is borne whether or not the firm stops, so the cost of continuing is only the permanent increment to background risk and the stopping rule becomes\footnote{Differentiating the analog of $R_t(\tau)$ with terminal value $\pi(\tau,t)/(r+\lambda(\tau))$ gives a marginal value of delay proportional to $D_i\log\pi(\tau,t,\ldots,t)-\lambda'(\tau)/(r+\lambda(\tau))$. A formal statement and the requisite assumptions are collected in \cref{app:extensions}.}
\[
D_i\log\pi(t,t)\ \le\ \frac{\lambda'(t)}{r+\lambda(t)}.
\]
Hence, what disciplines the race is the growth of risk, not its level: a hazard that is severe but flat cannot stop the race. Moreover, comparative statics in the level of risk invert: a uniform upward shift of the hazard lowers the right-hand side and so prolongs the race. Firms that already live with severe background risk have little continuation value left to protect, so marginal risk is cheap to them---a ``nothing left to lose'' effect that is absent from the baseline model, where stopping extinguishes risk and uniformly higher hazards always shorten the race (\cref{prop:cs-tauR}(ii)).

\paragraph{Reversibility} Finally, if stopping is reversible, so that a firm that halts scaling may later resume, the ceiling $\tau^R$ no longer applies since the backward-induction argument leans on the permanence of stopping: Now races beyond $\tau^R$ can be sustained by rivals' threats to resume scaling. Moreover, those  threats can support efficient early stopping strictly below $\tau^R$, sustained by the threat of racing. We are exploring these issues in ongoing work.

\setlength{\bibsep}{0pt}
\bibliography{ref}

\appendix
\crefalias{section}{appendix}
\crefalias{subsection}{appendix}
\normalsize 

\appendix 

\begin{center}
    \large{\textbf{APPENDIX}}
\end{center}

\section{Omitted Proofs} \label{appendix:proofs}

\paragraph{Technical lemmas} We first record a few useful lemmas we will reference throughout the proofs. For stopping dates $(\tau_i,\tau_j)\in\overline\R_+^{\,2}$, define
\begin{align*}
    U_i(\tau_i,\tau_j)
    &:=
    \int_0^\infty e^{-rs-\int_0^s\lambda(u)\mathbf 1\{u\le\tau_i\vee\tau_j\}\de u}\,
    \pi\big(s\wedge\tau_i,\ s\wedge\tau_j\big)\,\de s .
\end{align*}
For a date $s\ge0$, interpret a stopping date below $s$ as a technology frozen in the past and a stopping date in $[s,+\infty]$ as the plan of a firm that remains active. The \emph{date-$s$ continuation payoff} is
\[
U_i^s(\tau_i,\tau_j)
:=
\int_s^\infty
\exp\Big\{-r(u-s)-\int_s^u\lambda(v)\,\mathbf 1\{v\le\tau_i\vee\tau_j\}\,\de v\Big\}\,
\pi\big(u\wedge\tau_i,\ u\wedge\tau_j\big)\,\de u ,
\]
so that $U_i^0=U_i$.

\begin{lemma}[Payoff continuity]\label{lem:cont}
Write $\overline\R_{+}=[0,\infty]$ and measure the distance between two dates by $d(x,y)=|e^{-x}-e^{-y}|$. Under \cref{ass:payoffs}, for every firm $i$ and date $s\ge0$, the payoff $U_i^s(\tau_i,\tau_j)$ is finite, continuous in the two stopping dates, and bounded uniformly over those dates. In particular, $U_i$ is finite and continuous, and for every rival stopping date $\tau_j$ and every finite $t\ge0$,
\[
\lim_{\e\downarrow0}\,U_i(t+\e,\tau_j)
=
U_i(t,\tau_j).
\]
Thus the payoff from stopping immediately after $t$ converges to the payoff from stopping at $t$, whether or not the rival stops at $t$.
\end{lemma}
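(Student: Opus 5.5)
The proof will dominate the integrand of $U_i^s$ by an integrable envelope built from \cref{ass:payoffs}(v), and then get continuity from dominated convergence. Fix $s\ge0$ and a pair $(\tau_i,\tau_j)$. Write $c=\tau_i\vee\tau_j$.

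\emph{Step 1: reduce to $\pi(\cdot,0)$.} The rival term is decreasing, so $D_j\pi<0$ gives
\[
\pi(u\wedge\tau_i,\,u\wedge\tau_j)\le\pi(u\wedge\tau_i,\,0).
\]

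\emph{Step 2: the envelope.} Set $t=u\wedge\tau_i$. The hazard runs on $[s,u\wedge c]$, and $u\wedge c\ge t$, so the survival factor is at most $e^{-\int_s^{t}\lambda}$ whenever $t\ge s$. Multiplying and dividing by $e^{-\int_0^s\lambda}$ then bounds the integrand by
\[
e^{-r(u-s)}\,e^{\int_0^s\lambda}\,\sup_{t'\le u}\Big\{e^{-\int_0^{t'}\lambda}\,\pi(t',0)\Big\}.
\]
If instead $t<s$, the firm is already frozen at $t\le s\le u$. In that case $\pi(t,0)\le\pi(s,0)$, which is covered by the same supremum up to the constant $e^{\int_0^s\lambda}$. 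Assumption (v) makes $\int_0^\infty e^{-ru}\sup_{t'\le u}\{\cdot\}\,\de u$ finite. Hence the envelope $e^{rs+\int_0^s\lambda}\,e^{-ru}\sup_{t'\le u}\{\cdot\}$ is integrable and does not depend on $(\tau_i,\tau_j)$. This gives finiteness and the uniform bound at once.

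\emph{Step 3: continuity.} Take a sequence $(\tau_i^n,\tau_j^n)\to(\tau_i,\tau_j)$ in the metric $d$, allowing limits equal to $+\infty$. For each fixed $u$, the maps $u\wedge\tau_i^n$ and $u\wedge\tau_j^n$ converge, because $x\mapsto u\wedge x$ is continuous on $\overline\R_+$ under $d$; convergence to $\infty$ just means that eventually $u\wedge\tau^n=u$. So $\pi$ evaluated at these points converges, since $\pi$ is continuous. The indicator $\mathbf 1\{v\le\tau_i^n\vee\tau_j^n\}$ converges for every $v$ except possibly $v=c$, a null set. Dominated convergence applied to the inner $\de v$ integral, with $\lambda$ locally bounded, shows that the survival exponent converges for every $u$. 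Dominated convergence for the outer integral, using the envelope from Step 2, then gives $U_i^s(\tau_i^n,\tau_j^n)\to U_i^s(\tau_i,\tau_j)$.

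The displayed limit $U_i(t+\e,\tau_j)\to U_i(t,\tau_j)$ is the special case $s=0$ with the sequence $(t+\e,\tau_j)$. Nothing in the argument uses the relative order of $t$ and $\tau_j$, so it holds whether or not $\tau_j=t$.

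The main obstacle is Step 2. Assumption (v) is stated in terms of the own-stopping date $t'$, so the envelope is only available once the firm's actual exposure window has been shown to dominate its own active period. One also has to handle the case $\tau_j>\tau_i$ and the frozen case with care, since the hazard may continue after the firm stops. Monotonicity of $\lambda$ is not needed there; only $\lambda\ge0$ is.
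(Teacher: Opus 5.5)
Your proposal is correct and follows essentially the paper's proof: the integrand converges pointwise because the hazard indicator converges at every date except $\tau_i\vee\tau_j$, and dominated convergence applies with the envelope $e^{rs+\int_0^s\lambda}\,e^{-ru}\sup_{v\le u}\{e^{-\int_0^v\lambda}\pi(v,0)\}$ supplied by \cref{ass:payoffs}(v). The only difference is cosmetic. The paper bounds the flow by $\pi\big(u\wedge(\tau_i\vee\tau_j),0\big)$, which matches the hazard window exactly and so needs no case split; you instead bound it by $\pi(u\wedge\tau_i,0)$, shorten the hazard window to the firm's own active period using $\lambda\ge0$, and treat $\tau_i<s$ separately.
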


\begin{proof}
    Let $(\tau_i^k,\tau_j^k)\to(\tau_i,\tau_j)$ in $\overline\R_+^{\,2}$ and write $T_k:=\tau_i^k\vee\tau_j^k$ and $T:=\tau_i\vee\tau_j$. Fix $u\ge s$. Both capped stopping dates converge, and
\[
\int_s^u\lambda(v)\,\mathbf 1\{v\le T_k\}\,\de v
\ \longrightarrow\
\int_s^u\lambda(v)\,\mathbf 1\{v\le T\}\,\de v ,
\]
because the integrands converge at every $v\neq T$ and $\lambda$ is integrable on $[s,u]$. Since $\pi$ is continuous, the integrand of $U_i^s$ converges pointwise in $u$.

Monotonicity of $\pi$ gives $\pi(u\wedge\tau_i^k,u\wedge\tau_j^k)\le\pi(u\wedge T_k,0)$, while, since $\lambda\ge0$ and $u\wedge T_k\le u$,
\[
\exp\Big\{-\int_s^u\lambda(v)\,\mathbf 1\{v\le T_k\}\,\de v\Big\}
\le
\exp\Big\{\int_0^s\lambda(v)\,\de v-\int_0^{u\wedge T_k}\lambda(v)\,\de v\Big\}.
\]
Hence the integrand is bounded by
\[
e^{rs+\int_0^s\lambda(z)\de z}\,
e^{-ru}\sup_{v\le u}\Big\{e^{-\int_0^v\lambda(z)\de z}\,\pi(v,0)\Big\},
\]
which is integrable on $[s,\infty)$ by \cref{ass:payoffs}(v), uniformly over stopping pairs. Dominated convergence delivers finiteness, the uniform bound, and continuity.
\end{proof}

\begin{lemma}[Earliest optimal plans]\label{lem:optimal-plan}
Let $f(t,s)$ be continuous on
\[
\big\{(t,s)\in\R_+\times\overline\R_+:t\le s\big\}.
\]
Then $a(t):=\min\argmax_{s\in[t,+\infty]}f(t,s)$ is well defined, lower semicontinuous, and hence Borel measurable. If the
maximizer is unique for every $t$, then $a$ is continuous.
\end{lemma}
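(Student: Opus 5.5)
The argument has three parts: existence of the earliest maximizer, lower semicontinuity of $a$, and continuity of $a$ when the maximizer is unique. For each fixed $t$, the section $s\mapsto f(t,s)$ is continuous on the compact set $[t,+\infty]$. Compactness here refers to the metric $d(x,y)=|e^{-x}-e^{-y}|$ of \cref{lem:cont}. Hence the maximum is attained, and the argmax $A(t)$ is a nonempty closed, hence compact, subset of $[t,+\infty]$. Its minimum exists, so $a(t)$ is well defined.

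For lower semicontinuity, I plan to use a sequential argument. Take $t_k\to t$ and pass to a subsequence along which $a(t_k)\to\liminf a(t_k)=:s^*$ in $\overline\R_+$. Since $a(t_k)\ge t_k$, we get $s^*\ge t$, so $(t,s^*)$ lies in the domain. Fix any $s\in A(t)$. I need comparison points $s_k\ge t_k$ with $s_k\to s$; I would take $s_k:=s\vee t_k$, which works because $s\ge t$. Optimality at $t_k$ gives $f(t_k,a(t_k))\ge f(t_k,s_k)$. Passing to the limit using joint continuity of $f$ on the closed domain yields $f(t,s^*)\ge f(t,s)$, so $s^*\in A(t)$. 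Therefore $\liminf_k a(t_k)=s^*\ge\min A(t)=a(t)$, which is lower semicontinuity. A lower semicontinuous function into $\overline\R_+$ has closed sublevel sets $\{a\le c\}$, so $a$ is Borel measurable.

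For continuity under uniqueness, I would run the same limit argument along any subsequence. Every limit point of $a(t_k)$ in the compact space $\overline\R_+$ lies in $A(t)=\{a(t)\}$. Because every subsequence has a further subsequence converging to $a(t)$, the whole sequence converges to $a(t)$. Continuity here is with respect to the extended-real topology, which is the sense needed for $\chi$ in \cref{lem:R-single-peaked}.

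The main obstacle is the boundary of the domain. The constraint set $[t,+\infty]$ moves with $t$, so I must check both that limits of feasible points stay feasible and that every point of $A(t)$ can be approximated by feasible points at $t_k$. The choice $s_k=s\vee t_k$ handles the second requirement, and joint continuity on the closed set $\{t\le s\}$, including $s=+\infty$, handles the first.
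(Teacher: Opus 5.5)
Your proof is correct and follows the paper's argument. The paper cites Berge's maximum theorem for upper hemicontinuity of the argmax correspondence, then takes a subsequence of $a(t_m)$ converging to its $\liminf$, exactly as you do; you prove that closed-graph step directly, with the feasible approximants $s\vee t_k$ supplying the lower hemicontinuity of $t\mapsto[t,+\infty]$ that Berge's theorem requires.
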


\begin{proof}
The feasible tails $[t,+\infty]$ are compact and vary continuously with
$t$. Berge's maximum theorem therefore gives a nonempty, compact-valued,
upper-hemicontinuous maximizer correspondence. Given $t_m\to t$, choose a
subsequence along which $a(t_m)$ converges to its limit inferior, say
$q$. Upper hemicontinuity implies that $q$ is a maximizer at $t$, so
$q\ge a(t)$. Thus $a$ is lower semicontinuous and therefore Borel
measurable. If every maximizer is unique, every convergent subsequence of
$a(t_m)$ has limit $a(t)$, which gives continuity.
\end{proof}

\begin{proof}[Proof of \cref{lem:R-single-peaked}]
For $\tau\ge t$, split the date-zero payoff when the rival stops at
$t$ into the flow earned before $t$ and the continuation payoff from
$t$ onward:
\[
U_i(\tau,t)
=
\int_0^t e^{-ru-\int_0^u\lambda(z)\de z}\pi(u,u)\,\de u
+
e^{-rt-\int_0^t\lambda(z)\de z}R_t(\tau).
\]
Equivalently,
\[
R_t(\tau)
=
e^{rt+\int_0^t\lambda(z)\de z}
\left[
U_i(\tau,t)
-
\int_0^t e^{-ru-\int_0^u\lambda(z)\de z}\pi(u,u)\,\de u
\right].
\]
The same identity holds at $\tau=+\infty$ by taking limits. Hence
\cref{lem:cont} makes $(t,\tau)\mapsto R_t(\tau)$ continuous on the
feasible set with compactified stopping-date coordinate. Applying
\cref{lem:optimal-plan} to $f(t,\tau)=R_t(\tau)$ shows that a maximizer
exists for every $t$.

For $\tau\geq t$, differentiating the frozen-rival payoff gives
\[
\frac{\partial R_t(\tau)}{\partial \tau}
=
\frac{e^{-\int_t^\tau(r+\lambda(z))\de z}\pi(\tau,t)}{r}
\Big[
D_i\log\pi(\tau,t)-\lambda(\tau)
\Big].
\]
The prefactor is strictly positive. By \cref{ass:payoffs}(iv), the bracket
is strictly decreasing in $\tau$ when the rival is fixed at $t$. Hence
$R_t$ is single-peaked on $[t,+\infty]$, with a unique maximizer. The
unique-maximizer part of \cref{lem:optimal-plan} therefore makes the
maximizer $\chi(t)$ continuous in $t$. Since $\chi(t)\ge t$,
setting $\chi(+\infty):=+\infty$ extends the continuity of $\chi$ to $[0,+\infty]$. 

Stopping immediately is optimal exactly when the right derivative at $t$
is nonpositive, which gives the displayed equivalence. That equivalence
and continuity of the two rates make the set defining $\tau^R$ closed, so
finite $\tau^R$ belongs to it. If $t<\tau^R$, then $t$ does not belong to
that set, so the displayed equivalence gives the strict inequality.
\end{proof}

\begin{proof}[Proof of \cref{prop:frontier}]
If $\underline\tau=0$, the claim is immediate. For a date $t\ge0$ at which a firm stopped, define the sole survivor's crossing date
$
\chi(t):=\inf\big\{s\ge t:\ D_i\log\pi(s,t)\le\lambda(s)\big\}.$
By \cref{lem:R-single-peaked},  at every date in $[t,\chi(t)]$ a firm whose rival stopped at $t$ has unique optimal stopping time $\chi(t)$.  Moreover, $\chi(t)\ge\underline\tau$ for every $t\ge0$: for every $s<\underline\tau$, log-supermodularity and the definition of $\underline\tau$ give
$
D_i\log\pi(s,t)\ \geq\ D_i\log\pi(s,0)\ >\ \lambda(s),
$
so no crossing occurs before $\underline\tau$. Finally, $\chi$ is continuous because the remaining firm's payoff is jointly continuous in its stopping date and in the rival's stopping date $t$ on the compactified stopping-time space (\cref{lem:cont}), and its maximizer is unique; continuity follows by Berge's maximum theorem: every subsequential limit of $\chi(t_m)$ is optimal against the limit $t$ and hence equals $\chi(t)$.

Before either firm has stopped, let $X_i$ denote firm $i$'s candidate stopping date; $X_1$ and $X_2$ are independent because the firms' randomization devices are. By the preceding paragraph, the realized profile is $(X_1,\chi(X_1))$ if $X_1<X_2$, $(\chi(X_2),X_2)$ if $X_2<X_1$, and $(X_1,X_1)$ if $X_1=X_2$. On $\{X_1\neq X_2\}$ the frontier equals $\chi(\min_iX_i)\ge\underline\tau$. Hence, up to null events, $\{\max_j\tau_j<\underline\tau\}\subseteq\{X_1=X_2<\underline\tau\}$, and by independence a positive-probability tie requires a common atom: some $t<\underline\tau$ with $\Pr(X_1=t)>0$ and $\Pr(X_2=t)>0$.

Suppose such an atom exists. On the device event $\{X_1=t\}$, consider the deviation in which firm $1$ replaces its candidate $t$ by $t+\epsilon$, where $0<\epsilon<\underline\tau-t$, and responds at $\chi(t')$ if firm $2$ stops first at some date $t'$; leave the strategy unchanged on $\{X_1\neq t\}$. Because $\{X_1=t\}$ is measurable with respect to firm $1$'s private randomization device and the response depends only on subsequent public records, this deviation is admissible. Conditional on $X_1=t$: if $X_2<t$, the outcome is unchanged. If $X_2=t$, the profile changes from $(t,t)$ to $(\chi(t),t)$, a strict gain
$
U_1\big(\chi(t),t\big)-U_1(t,t)\ >\ 0,$
since $\chi(t)\ge\underline\tau>t$ and the frozen-rival payoff strictly increases up to its unique maximizer $\chi(t)$. The event $\{t<X_2\le t+\epsilon\}$ has probability converging to zero as $\epsilon\downarrow0$, and payoffs are uniformly bounded (\cref{lem:cont}). If $X_2>t+\epsilon$, the profile changes from $(t,\chi(t))$ to $(t+\epsilon,\chi(t+\epsilon))$, whose payoff effect vanishes as $\epsilon\downarrow0$ by continuity of $\chi$ and of $U_1$. The conditional expected gain from the deviation therefore converges to
$\Pr(X_2=t)\big[U_1(\chi(t),t)-U_1(t,t)\big]>0$,
so the deviation is strictly profitable for all sufficiently small $\epsilon$, contradicting subgame perfection. Hence $\Pr(\max_j\tau_j<\underline\tau)=0$.
\end{proof}

\begin{proof}[Proof of \cref{prop:pure-priority}]
We construct an equilibrium in which firm $1$ is the designated first
stopper. By \cref{lem:R-single-peaked}, once one firm stops at $s$, its
rival's unique sequentially rational response is $\chi(s)$. Fix a no-stop history at date $t$. If firm 1 stops first at $s\ge t$, write
\[
L_t(s):=U_1^t\big(s,\chi(s)\big),\qquad s\in[t,+\infty],
\]
for firm 1's continuation payoff from that choice. For $t\le t'\le s$, splitting the continuation payoff from racing to $s$ as evaluated at $t$ at the intermediate time $t'$ gives
\[
L_t(s)
=
\int_t^{t'}
e^{-r(u-t)-\int_t^u\lambda(z)\de z}\pi(u,u)\,\de u
+
e^{-r(t'-t)-\int_t^{t'}\lambda(z)\de z}L_{t'}(s).
\]
The identity also holds at $s=+\infty$ by taking limits. Take the first
date to be $0$ and solve for $L_{t'}(s)$. The function
$L_0(s)=U_1\big(s,\chi(s)\big)$ is continuous by \cref{lem:cont} and
continuity of $\chi$, while the prefix and the positive coefficient vary
continuously with $t'$. Hence $(t',s)\mapsto L_{t'}(s)$ is jointly
continuous on the feasible set with compactified stopping-date
coordinate. Applying
\cref{lem:optimal-plan} to
$f(t,s)=L_t(s)$ shows that the earliest maximizer
$
a(t):=\min\argmax_{s\in[t,+\infty]}L_t(s)$ 
is well defined and Borel measurable. The same identity verifies that the selected plan does not change as time passes. Its first term and the positive coefficient on
$L_{t'}(s)$ do not depend on $s$. 

Prescribe the following profile: at every no-stop history at date $t$, firm $1$ plans to stop at $a(t)$ and firm $2$ waits; after a first stop at $s$, the remaining firm, active at date $t'$, stops at $\chi(s)\vee t'$. If, off path, both firms remain active beyond firm $1$’s previously prescribed stopping date, restart the same construction from the current date: the designated firm stops at \(a(t)\), and firm $2$ continues. These prescriptions define a valid pure strategy profile. Before firm $1$  stops, \(a(t')=a(t)\), so its planned stopping date remains unchanged. After a stop at \(s\), the survivor continues to target \(\chi(s)\), stopping immediately if that date has already passed.

We verify sequential rationality at an arbitrary two-active history at date $t$. Write $a:=a(t)$. Since firm $2$ waits until a stop occurs, any deviation by firm $1$ reduces to a first stopping date $s\ge t$, with payoff $L_t(s)$; since $a(t)$ maximizes $L_t$,$
L_t(s)\le L_t(a).$
Thus the designated firm has no profitable deviation.

Along the prescribed path, firm $2$ receives $F_t(a):=U_2^t\big(a,\chi(a)\big)$. Before $a$ the two firms have identical technologies; after $a$, firm $2$'s own technology is weakly higher than firm $1$'s, while its rival's technology is weakly lower, and both firms face the same survival kernel. Because flow profit rises in own technology and falls in rival technology,
$
F_t(a)\ge L_t(a).
$
Consider any deviation by the waiting firm. If it stops first at $s<a$, firm $1$ responds at $\chi(s)$; symmetry implies that the deviator's payoff is  $L_t(s)$, and therefore
$
L_t(s)\le L_t(a)\le F_t(a)
$by the two preceding displays. If it stops at $s=a$, the firms stop together and its payoff is $U_2^t(a,a)$; once firm $1$ stops at $a$, immediate stopping is feasible for firm $2$ whereas $\chi(a)$ is its unique optimal response, so
$
U_2^t(a,a)\le F_t(a).
$
If its candidate is later than $a$, firm $1$ stops first; for every successor choice $\tau\ge a$, optimality of $\chi(a)$ gives
$
U_2^t(a,\tau)\le U_2^t\big(a,\chi(a)\big)=F_t(a).
$
These cases cover every  deviation by firm $2$. (If $a=+\infty$, every finite $s$ belongs to the first case and $s=+\infty$ gives the equilibrium tie.)

The same arguments apply at every history at which both firms are active, and the single-peakedness established at the start of the proof gives optimality at every history at which one firm has stopped. 
\end{proof}

\begin{proof}[Proof of \cref{prop:taxonomy}]
Fix $\bm\tau\in\SPE(\pi,\lambda)$. By \cref{prop:bounds}, $\max_i\tau_i\leq\tau^R$ almost surely. It thus suffices to rule out two cases: both firms stop
at the same date strictly below $\tau^R$, or the firms stop at different
dates and the later date is $\tau^R$.

First suppose that one firm stops at time $t$ while its rival remains active. By \cref{lem:R-single-peaked}, the best response $\chi(t)$ is  is unique and continuous, and 
$\chi(\tau^R)=\tau^R$.
\footnote{Payoffs earned before $t$ do not depend on $\tau$, so
$\chi(t)$ also uniquely maximizes $U_i(\tau,t)$ over $\tau\geq t$. To verify
continuity directly, take $t_m\to t$ and a convergent subsequence, relabelled
$t_m$, for which $\chi(t_m)\to q$. Compactness guarantees such a subsequence.
Since $\chi(t_m)\geq t_m$, we have $q\geq t$. For any $\tau\geq t$,
optimality gives
$U_i(\chi(t_m),t_m)\geq U_i(\max\{\tau,t_m\},t_m)$. Passing to the limit
using \cref{lem:cont} gives $U_i(q,t)\geq U_i(\tau,t)$. Thus $q$ is optimal
at $t$, and uniqueness gives $q=\chi(t)$. Every subsequential limit therefore
equals $\chi(t)$, proving continuity. Finally, $\chi(t)\leq\tau^R$ for every $t\leq\tau^R$:
for $\tau>\tau^R$, log-supermodularity gives
$D_i\log\pi(\tau,t)\leq D_i\log\pi(\tau,\tau)$, and single-crossing gives
$D_i\log\pi(\tau,\tau)<\lambda(\tau)$ strictly, so
$\partial R_t(\tau)/\partial\tau<0$ throughout $(\tau^R,+\infty)$ and the
maximizer cannot exceed $\tau^R$}
Subgame perfection therefore requires a sole survivor to stop at $\chi(t)$
almost surely after its rival stops at $t$.

Before either firm has stopped, let $X_i$ denote firm $i$'s candidate stopping date. The candidates $X_1$ and $X_2$ are independent because the firms' randomization devices are independent. The realized profile is
\[
(\tau_1,\tau_2)
=
\begin{cases}
\bigl(X_1,\chi(X_1)\bigr),&X_1<X_2,\\[2mm]
\bigl(\chi(X_2),X_2\bigr),&X_2<X_1,\\[2mm]
(X_1,X_1),&X_1=X_2.
\end{cases}
\]

We now rule out the two cases. 

\noindent\emph{Case 1: bunching below $\tau^R$.}
Since $\chi(t)>t$ for every $t<\tau^R$, bunching below $\tau^R$ can occur only when $X_1=X_2<\tau^R$. Independence then implies that a positive-probability bunching event requires a common atom: there must be some $t<\tau^R$ such that $
\Pr(X_1=t)>0
\qquad\text{and}\qquad
\Pr(X_2=t)>0.$ On the event $\{X_1=t\}$, consider the deviation in which firm $1$ replaces its candidate stopping date $t$
with $t+\epsilon$, where $\epsilon>0$ is small enough that
$t+\epsilon<\tau^R$. If firm $2$ stops first at some date $s$, firm $1$
responds at $\chi(s)$. Leave the strategy unchanged on $\{X_1\neq t\}$.
Because $\{X_1=t\}$ is measurable with respect to firm $1$'s private
randomization device and the response depends only on subsequent public
records, this deviation is admissible.

Conditional on $X_1=t$, the deviation leaves the outcome unchanged when $X_2<t$. When $X_2=t$, it changes the profile from $(t,t)$ to $(\chi(t),t)$ and yields the strict gain
$
U_1\bigl(\chi(t),t\bigr)-U_1(t,t)>0,
$
because $t$ is not optimal against a rival frozen at $t$. The event
$
t<X_2\leq t+\epsilon
$
has probability converging to zero as $\epsilon\downarrow0$, and payoffs are bounded by \cref{lem:cont}. Finally, when $X_2>t+\epsilon$, the profile changes from
$
\bigl(t,\chi(t)\bigr)
\quad\text{to}\quad
\bigl(t+\epsilon,\chi(t+\epsilon)\bigr),
$
whose payoff effect converges to zero by continuity of $\chi$ and $U_1$. The conditional expected gain from the deviation therefore converges to
$
\Pr(X_2=t)
\left[
U_1\bigl(\chi(t),t\bigr)-U_1(t,t)
\right]
>0.
$
For all sufficiently small $\epsilon$, the deviation is strictly profitable, contradicting subgame perfection. Hence
$
\Pr(\tau_1=\tau_2<\tau^R)=0.
$

\smallskip
\noindent\emph{Case 2: spread profile reaching $\tau^R$.}
Suppose, toward a contradiction, that
$
\Pr(\tau_1<\tau^R=\tau_2)>0;
$
where we assume that firm $1$ is the laggard without loss since payoffs are symmetric. This implies
that this event is contained in
$
\{X_1<X_2,\ X_1<\tau^R,\ \chi(X_1)=\tau^R\}.
$

On the event
$
\{X_1<\tau^R,\ \chi(X_1)=\tau^R\},
$
consider the deviation in which firm $1$ replaces its candidate stopping date $X_1$ with $\tau^R$. If firm
$2$ stops first at $y$, firm $1$ responds at $\chi(y)$. Leave the strategy
unchanged on the complement of this event. Continuity of $\chi$ makes the
event measurable with respect to firm $1$'s private randomization device, so
this is an admissible deviation.

Fix a realization of firm $1$'s device to which this deviation applies, and
write $x=X_1$. If $X_2<x$, firm $2$ stops first under both strategies and the
outcome is unchanged. If $X_2=x$, the outcome
changes from $(x,x)$ to
$
\bigl(\chi(x),x\bigr)=(\tau^R,x),
$
which is strictly better because by construction $\chi(x)$ is the unique optimum against a
rival frozen at $x$.

Now suppose $x<X_2=y<\tau^R$. Originally, firm $1$ stops at $x$ and firm
$2$ responds at $\chi(x)=\tau^R$, giving the profile $(x,\tau^R)$. Under
the deviation, firm $2$ stops first at $y$ and firm $1$ responds at
$\chi(y)$, giving $\bigl(\chi(y),y\bigr)$. Since
$
x<y<\chi(y)\leq\tau^R,
$
firm $1$'s technology is everywhere weakly higher, firm $2$'s technology
is everywhere weakly lower, so the disaster
risk is weakly lower. At least one comparison is strict on a positive
interval, so firm $1$'s payoff rises strictly.

Finally, suppose $X_2\geq\tau^R$. If $\tau^R<+\infty$, the deviating
profile is $(\tau^R,\tau^R)$: either the firms stop together at $\tau^R$,
or firm $1$ stops there and firm $2$ responds immediately because
$\chi(\tau^R)=\tau^R$. If $\tau^R=+\infty$, this case is simply
$X_2=+\infty$, and both firms race forever under the deviation. In either
case, relative to $(x,\tau^R)$, firm $1$'s technology rises while its
rival's technology and the frontier remain unchanged.
The deviation is therefore weakly profitable for every realization and hence
strictly profitable on the positive probability event above. This contradicts
subgame perfection. Hence
$
\Pr(\tau_1<\tau^R=\tau_2)
=
\Pr(\tau_2<\tau^R=\tau_1)
=0.$ Combining the two exclusions with \cref{prop:bounds} yields the result. 
\end{proof}

\begin{proof}[Proof of \cref{lem:mono}]

Fix $t\geq 0$. For each detection date $a\geq t$, let
$\sigma^\star(a) := \inf\{u\geq a: D_i\log\pi(u,t)\leq\lambda(u)\}$,
and define
\[
\Phi_t(a)
:=
\int_t^{\sigma^\star(a)}
\frac{\alpha(s)}{\alpha(t)}\pi(t,s)\,\de s
+
\frac{\alpha(\sigma^\star(a))}{\alpha(t)}
\frac{\pi(t,\sigma^\star(a))}{r}.
\]
Thus, if $E_\omega\sim\operatorname{Exp}(\omega)$, then
$V_S(t,\omega)=\Ex[\Phi_t(t+E_\omega)]$.

Let $q_t := \inf\{u\geq t: D_i\log\pi(u,t)\leq\lambda(u)\}$.
By \cref{ass:payoffs}(iv), $D_i\log\pi(u,t)-\lambda(u)$ is strictly
decreasing in $u$, so $\sigma^\star(a)=\max\{a,q_t\}$. Hence $\Phi_t$
is constant for $a\leq q_t$. For $a>q_t$, we have
$\sigma^\star(a)=a$, and differentiation gives
$\Phi_t'(a) = \frac{\alpha(a)}{\alpha(t)}\frac{\pi(t,a)}{r}
\left[D_j\log\pi(t,a)-\lambda(a)\right] < 0$,
because $D_j\pi<0$ and $\lambda\geq 0$. Therefore $\Phi_t$ is weakly
decreasing in the detection date.

Let $Z\sim\operatorname{Exp}(1)$ and couple the detection delays by
$E_\omega=Z/\omega$. If $\omega_2>\omega_1$, then
$t+Z/\omega_2<t+Z/\omega_1$ almost surely. Since $\Phi_t$ is weakly
decreasing, $\Phi_t\left(t+\frac{Z}{\omega_2}\right)
\geq
\Phi_t\left(t+\frac{Z}{\omega_1}\right)$.
Taking expectations shows that $V_S(t,\omega)$ is nondecreasing in
$\omega$.

If $\tau^R<\infty$, then $q_t<\infty$. Indeed, for every
$u\geq\max\{t,\tau^R\}$, log-supermodularity and \cref{ass:dsc} give
$D_i\log\pi(u,t) \leq D_i\log\pi(u,u) \leq \lambda(u)$.
Moreover, $\Phi_t$ is strictly decreasing above $q_t$. The coupled
inequality is therefore strict on an event of positive probability,
so $V_S(t,\omega)$ is strictly increasing in $\omega$. This proves
part~(a).

As $\omega\downarrow 0$, the coupling gives $E_\omega\to\infty$
almost surely. Since $\sigma^\star(a)\geq a$, it follows that
$\sigma^\star(a)\to\infty$ as $a\to\infty$. Hence
$\int_t^{\sigma^\star(a)}
\frac{\alpha(s)}{\alpha(t)}\pi(t,s)\,\de s
\longrightarrow
\int_t^\infty
\frac{\alpha(s)}{\alpha(t)}\pi(t,s)\,\de s$.
Because $\pi(t,\cdot)$ is decreasing,
$0
\leq
\frac{\alpha(\sigma^\star(a))}{\alpha(t)}
\frac{\pi(t,\sigma^\star(a))}{r}
\leq
\frac{\alpha(\sigma^\star(a))}{\alpha(t)}
\frac{\pi(t,t)}{r}
\longrightarrow
0$.
Therefore $\Phi_t(a)
\longrightarrow
\int_t^\infty
\frac{\alpha(s)}{\alpha(t)}\pi(t,s)\,\de s$.
Since $\Phi_t$ is nonnegative and weakly decreasing, it is bounded by
$\Phi_t(t)<\infty$. Bounded convergence therefore gives
$V_S(t,0)
:=
\lim_{\omega\downarrow 0}V_S(t,\omega)
=
\int_t^\infty
\frac{\alpha(s)}{\alpha(t)}\pi(t,s)\,\de s$. For every $s>t$, strict monotonicity in own technology gives
$\pi(t,s)<\pi(s,s)$. Therefore
\[
V_S(t,0)
=
\int_t^\infty\frac{\alpha(s)}{\alpha(t)}\pi(t,s)\,\de s
\ <\
\int_t^\infty\frac{\alpha(s)}{\alpha(t)}\pi(s,s)\,\de s
=
V_R(t),
\]
which proves part~(b).

Finally, suppose $t\geq\tau^R$. For every $a\geq t$,
log-supermodularity and \cref{ass:dsc} imply
$D_i\log\pi(a,t) \leq D_i\log\pi(a,a) \leq \lambda(a)$.
Thus $\sigma^\star(a)=a$, and
\[
\Phi_t(a)
=
\int_t^a
\frac{\alpha(s)}{\alpha(t)}\pi(t,s)\,\de s
+
\frac{\alpha(a)}{\alpha(t)}
\frac{\pi(t,a)}{r}.
\]
As $\omega\to\infty$, the coupling gives $E_\omega\to 0$ almost
surely. Since
$0\leq\Phi_t(t+E_\omega)\leq\Phi_t(t)=\pi(t,t)/r$, bounded
convergence yields
$\lim_{\omega\to\infty}V_S(t,\omega)
=
\frac{\pi(t,t)}{r}$.

It remains to compare this limit with $V_R(t)$. Since $D_j\pi<0$,
we have $\pi(s,s)<\pi(s,t)$ for every $s>t$, and hence
$V_R(t)
\leq
\int_t^\infty
\frac{\alpha(s)}{\alpha(t)}\pi(s,t)\,\de s$.
For $s\geq t\geq\tau^R$, log-supermodularity and \cref{ass:dsc} give
$D_i\log\pi(s,t)\leq\lambda(s)$. Consequently,
\[
-\frac{\de}{\de s}
\left[
\frac{\alpha(s)}{\alpha(t)}
\frac{\pi(s,t)}{r}
\right]
=
\frac{\alpha(s)}{\alpha(t)}
\frac{\pi(s,t)}{r}
\left[
r+\lambda(s)-D_i\log\pi(s,t)
\right]
\geq
\frac{\alpha(s)}{\alpha(t)}\pi(s,t).
\]
Moreover, $\frac{\alpha(s)}{\alpha(t)}\pi(s,t)
\leq
\pi(t,t)e^{-r(s-t)}
\longrightarrow
0$,
where the inequality follows from
$D_i\log\pi(s,t)\leq\lambda(s)$. Integrating the preceding derivative
inequality from $t$ to infinity therefore gives
$\int_t^\infty
\frac{\alpha(s)}{\alpha(t)}\pi(s,t)\,\de s
\leq
\frac{\pi(t,t)}{r}$.
It follows that
$V_R(t)
\leq
\frac{\pi(t,t)}{r}
=
\lim_{\omega\to\infty}V_S(t,\omega)$.
The first inequality is strict for $t>\tau^R$, proving part~(c).
\end{proof}
\begin{proof}[Proof of \cref{prop:never-stop}]
We construct a symmetric assessment. At every no-signal history, each firm races forever. After a first signal at date $a$, the active firm assigns probability one to the rival having stopped at $a$. At every subsequent date $s\ge a$ at which it remains active, it stops at
\[
\inf\big\{u\ge s:\ D_i\log\pi(u,a)\le\lambda(u)\big\},
\]
with the convention $\inf\varnothing=+\infty$. By \cref{lem:R}, this is the unique optimal continuation under the assigned belief. In particular, if crossing never occurs, racing forever is optimal.

We verify optimality at a no-signal history at date $t$. The rival never stops under the proposed strategy, so no signal arrives and racing forever is worth $V_R(t)$. Consider first a pure deviation that plans to stop at some finite $\sigma\ge t$ absent a signal. The rival then receives its signal at a date $a\ge\sigma$. Under the off-path belief above, its prescribed response is
$
\inf\big\{u\ge a:\ D_i\log\pi(u,a)\le\lambda(u)\big\}.$
If instead the rival knew the deviator's true stopping date $\sigma$, its response would be
$
\inf\big\{u\ge a:\ D_i\log\pi(u,\sigma)\le\lambda(u)\big\}.
$
Because $a\ge\sigma$, log-supermodularity implies that the prescribed response is weakly later. A stopped firm is hurt by any delay in its rival's stop (\cref{lem:ex-down}), so the deviator's continuation payoff after stopping at $\sigma$ is at most $V_S(\sigma,\omega)$. By \cref{lem:rec}, its total deviation payoff is thus at most
\begin{align*}
\int_t^\sigma\frac{\alpha(s)}{\alpha(t)}\pi(s,s)\,\de s
+\frac{\alpha(\sigma)}{\alpha(t)}V_S(\sigma,\omega)
&=V_R(t)+\frac{\alpha(\sigma)}{\alpha(t)}\Delta(\sigma,\omega)\\
&\le V_R(t).
\end{align*}
The last inequality follows because $\omega<\bar\omega$ and \cref{lem:mono}(a) imply $\Delta(\sigma,\omega)\le0$ at every date. A deviation that races forever yields   $V_R(t)$, and randomized deviations cannot improve on this (by linearity of expectations). Hence racing forever is optimal at every no-signal history.

The specified beliefs obey Bayes' rule on path. Signal histories are off path, and the assigned point belief is admissible. Play is sequentially rational. Thus the assessment is a perfect Bayesian equilibrium, and both firms race forever almost surely.
\end{proof}
\begin{proof}[Proof of \cref{prop:LMH}]
Throughout the proof, abbreviate the two effective discount rates $\delta:=r+\lambda-\alpha+\beta=1/V_R$ (along the race path) and $\rho:=r+\lambda+\beta=1/V_S(0)$ (stopped beside a racing rival).

\noindent \underline{\emph{Part (i)}} Crazy firms never stop, and a signal can arrive only if some firm has already stopped, so the first stop in any equilibrium---if there is one---must occur at a \emph{quiet} history (active, no signal received). It therefore suffices to show that no rational firm ever stops at a quiet history: then no stop ever occurs, no signal is ever sent, and no responsive stop ever follows. Fix an equilibrium and let $F_j$ be the (sub-)distribution of rational firm $j$'s stopping time along its quiet histories, $S_j:=1-F_j$. Since an active firm $i$ has sent no signal, its rival's behavior is governed by $F_j$, and a stop at $c$ remains undetected with probability $e^{-\omega(t-c)}$; firm $i$'s posterior at a quiet history at date $t$ therefore puts weights proportional to
\[
\underbrace{1-p}_{\text{crazy}},\qquad
\underbrace{p\,S_j(t)}_{\text{rational, active}},\qquad
\underbrace{p\,e^{-\omega(t-c)}\de F_j(c)}_{\text{rational, stopped at }c\text{, undetected}} .
\]
Compare stopping now (denoted \textsc{stop}) with the feasible plan \textsc{wait}: continue, and stop immediately when a signal arrives. Cell by cell, per unit of current flow:

\noindent \emph{Crazy.} Stopping yields $V_S(0)$; \textsc{wait} races forever, worth $V_R$: the deviation gains $V_R-V_S(0)$.

\noindent  \emph{Stopped at $c$.} \textsc{stop} ends the risk at once, worth $e^{\beta(t-c)}/r$ (the factor is the flow advantage over a rival frozen at $c$); \textsc{wait} yields  $e^{\beta(t-c)}V_W(\omega)$, so  the deviation loses $e^{\beta(t-c)}\big[1/r-V_W(\omega)\big]$.

\noindent  \emph{Rational, active.} Let $\tau$ be the rival's residual quiet-path stopping time  and $X\sim\mathrm{Exp}(\omega)$ the detection lag. \textsc{stop} ends the risk at $X\wedge\tau$, worth $V_S(0)+\big[1/r-V_S(0)\big]\Ex[e^{-\rho(X\wedge\tau)}]$; \textsc{wait} races until $\tau$ and then exits upon detection, yielding $V_R+\big[V_W(\omega)-V_R\big]\Ex[e^{-\delta\tau}]$ (recall $V_W(\omega)>V_R$. Using $\Ex_X[e^{-\rho(X\wedge\tau)}]=\frac{\omega}{\omega+\rho}+\frac{\rho}{\omega+\rho}e^{-(\omega+\rho)\tau}$, the gap (\textsc{stop} minus \textsc{wait}) on this cell is $\Ex[\varphi(\tau)]$ where
\[
\varphi(\tau)=V_S(0)-V_R+\big[1/r-V_S(0)\big]\Big[\tfrac{\omega}{\omega+\rho}+\tfrac{\rho}{\omega+\rho}\,e^{-(\omega+\rho)\tau}\Big]-\big[V_W(\omega)-V_R\big]e^{-\delta\tau}.\]
Evaluating $\varphi'$  shows that is quasi-convex, so for every (random) $\tau$,
\[
\Ex[\varphi(\tau)]\ \le\ \max\{\varphi(0),\varphi(\infty)\}\ =\ \max\big\{1/r-V_W(\omega),\ V_S(\omega)-V_R\big\} .
\]
Finally, on the stopped cell the weight carries the damping $e^{-\omega(t-c)}\le e^{-\beta(t-c)}$ (here $\omega\ge\beta$), so its weighted gap is at most $\big[1/r-V_W(\omega)\big]\,p\!\int\de F_j$. Summing the cells and using $S_j(t)+\int_0^t\de F_j\le1$ and $p=\ell\,(1-p)$,
\begin{align*}
\big(\text{stop}\big)-\big(\textsc{wait}\big)\ &\propto\ -(1-p)\big[V_R-V_S(0)\big]+p\cdot\max\big\{1/r-V_W(\omega),\ V_S(\omega)-V_R\big\}\\
&=\ (1-p)\Big[\tfrac{\ell}{\,\underline\ell\wedge\bar\ell\,}-1\Big]\big[V_R-V_S(0)\big]\ <\ 0 ,
\end{align*}
since $\min\{\underline\ell,\bar\ell\}=\big[V_R-V_S(0)\big]/\max\{1/r-V_W(\omega),\,V_S(\omega)-V_R\}$ and $\ell<\underline\ell<\bar\ell$. So at every quiet history  stopping is strictly dominated by \textsc{wait}: rational firms never stop at quiet histories, under any pure or mixed rule, and by the reduction above no stop ever occurs.

\noindent  \underline{\emph{Part (ii)}} For (b), consider the profile in which no firm ever stops at a quiet history, and any active firm stops immediately upon a signal. On path no signal arrives, so quiet histories carry the prior odds $\ell$ and no undetected-stop weight. A deviation is a plan ``stop at $\sigma$ unless a signal arrives first''; no signal ever comes, so its value is
\[
(1-e^{-\delta\sigma})\,V_R+e^{-\delta\sigma}\,\frac{\ell\, V_S(\omega)+V_S(0)}{1+\ell}\,,
\]
since a stop at $\sigma$ is reciprocated upon detection by a rational rival (worth $V_S(\omega)$; the response is forced by $\alpha<\lambda$) and never by a crazy one (worth $V_S(0)$). For $\ell<\bar\ell$ the gamble is strictly below $V_R$, so every $\sigma<\infty$ is strictly worse than racing forever, and mixtures are averages of such plans; at $\ell=\bar\ell$ every finite plan is weakly worse---payoff-equivalent at the optimum---so the profile remains an equilibrium.

For (a), consider the bunched profile: rational firms stop at $t=0$, crazy firms never, and any active firm stops immediately upon a signal. A rational firm's deviations are again plans $\sigma\in[0,\infty]$ (continue until $\sigma$ or the first signal, whichever is first). With $X\sim\mathrm{Exp}(\omega)$ the date at which the deviator would detect a rational rival's stop at $0$, the plan's value is
\[
W(\sigma)=p\,\underbrace{\Ex\Big[\tfrac{1-e^{-(r+\lambda-\alpha)(X\wedge\sigma)}}{r+\lambda-\alpha}+\tfrac{e^{-(r+\lambda-\alpha)(X\wedge\sigma)}}{r}\Big]}_{\text{rational rival: race alone, stop at }X\wedge\sigma}
+(1-p)\,\underbrace{\Big[V_R-\big(V_R-V_S(0)\big)e^{-\delta\sigma}\Big]}_{\text{crazy rival: race, stop at }\sigma} .
\]
Then $W(0)=p/r+(1-p)\,V_S(0)$ is the bunching payoff, $W(\infty)=p\,V_W(\omega)+(1-p)\,V_R$, and
\[
W'(\sigma)=-\,p\,\tfrac{\lambda-\alpha}{r}\,e^{-(\omega+r+\lambda-\alpha)\sigma}+(1-p)\,\tfrac{V_R-V_S(0)}{V_R}\,e^{-\delta\sigma} .
\]
For $\omega\ge\beta$ the negative term decays faster, so $W'$ changes sign at most once, from $-$ to $+$: $W$ is maximized at an endpoint. And $W(0)\ge W(\infty)$ iff $p\,[1/r-V_W(\omega)]\ge(1-p)\,[V_R-V_S(0)]$, i.e.\ $\ell\ge\underline\ell$. So no plan improves on stopping at once.\footnote{Off the bunched path, a deviator continues optimally against its updated posterior, which affects no on-path incentive. The same computation at any common date $T$ shows delayed bunching is also an equilibrium: the environment is stationary and nothing is learned before $T$.}

\noindent  \underline{\emph{Part (iii)}} Since $(\bar\ell/\ell)^2<1$, the bound already excludes the never-stopping outcome. To see this, conjecture that neither firm ever stops. After every quiet history, each firm continues to believe that their rival is rational with odds  $\ell$ since the conjectured profile assigns zero probability to its rival stopping. Stopping immediately yields
\[
\frac{\ell\,V_S(\omega)+V_S(0)}{1+\ell}>V_R,
\]
and is therefore a profitable deviation. Note here that our assumption that $\alpha<\lambda$ ensures that a rational rival finds it dominant to stop after receiving news. For the bound, fix an equilibrium and, with $F_j,S_j$ as in part (i), let $s_j:=\lim_{t\to\infty}S_j(t)$. Conditional on both firms rational, the risk never dies only if neither firm ever stops spontaneously (responsive stops require a prior stop), an event of probability $s_1s_2$. If $s_j\le\bar\ell/\ell$ for both firms, then $s_1s_2\le(\bar\ell/\ell)^2$ and we are done. Otherwise, say $s_1>\bar\ell/\ell$; we claim $s_2=0$, so ruin has probability zero. On firm $2$'s quiet path, the odds on ``rational and active'' are $\ell S_1(t)\ge\ell s_1>\bar\ell$, so the immediate-stop gamble
\[
u(t):=\frac{\ell S_1(t)\,V_S(\omega)+V_S(0)}{1+\ell S_1(t)}\ \ge\ V_R+\eta\quad\text{for some }\eta>0
\]
(the gamble increases in the odds because $V_S(\omega)>V_R$, which holds whenever $\bar\ell<\infty$; the undetected-stop cell only raises the stop value further). Consider any continuation plan of firm $2$ at a late quiet history $t$. Relative to the deterministic benchmark ``race to $\sigma$, then stop,'' worth
\[
(1-e^{-\delta\sigma})\,V_R+e^{-\delta\sigma}u(t+\sigma)\ \le\ u(t)-(1-e^{-\delta\sigma})\,\eta
\]
(using that $u$ decreases along the quiet path), the plan can gain only through two  channels that vanish as $t \to \infty$: the rival's residual spontaneous concession of conditional probability $(S_1(t)-s_1)/S_1(t)\to 0$, since $S_1\downarrow s_1>0$, and the undetected-stop weight, which converges  to zero; each contributes at most $1/r$ per unit of weight. Hence for $t$ large, any plan that delays stopping beyond $t+w$ with positive probability is strictly dominated by stopping at once, and sequential rationality forces firm $2$ to stop by $t+w$ almost surely on its quiet path---contradicting $s_2>0$.

Finally, zero is attained: since $\ell>\bar\ell>\underline\ell$, the bunched profile of part (ii)(a) remains an equilibrium here, and conditional on both firms rational it ends the risk immediately.
\end{proof}

\section{Technical Details on Strategies} \label{appendix:technical}
This appendix defines strategies and equilibria. When we discuss firm $i$, let $j$ denote its rival. \cref{app:extform} defines the extensive form as well as mixing under perfect observability. It then shows that a firm's stopping date uses only information the firm has observed. \cref{app:noisy} gives the corresponding definition when firms observe delayed signals and defines perfect Bayesian equilibrium for \cref{sec:imperfect,appendix:existence}. We omit ``almost surely'' when it is clear from context.

\subsection{The extensive form under perfect observability}\label{app:extform}

We now describe play after each observed stop. At any public history, each active firm uses its current private draw to choose a candidate stopping date. The firm cannot wait to see its rival's action at the same instant before choosing its own. When a rival stops, the firm observes a new public history, makes a fresh draw, and uses its response plan. This construction follows \citet{riedel2017subgame} and \citet{laraki2005continuous}. Here public information changes only when a firm stops, so at most two such events occur.

\paragraph{Records, plans, and strategies.}
A \emph{record} is either empty, $p=\varnothing$, or a pair $p=(j,t_j)$ stating that firm $j$ stopped at $t_j$. Set $t(\varnothing):=0$ and $t(j,t_j):=t_j$. Under perfect observability the record is public. A \emph{plan} for an active firm $i$ is a Borel-measurable map $\sigma_i(p,z)\in[t(p),+\infty]$, where $z\in[0,1]$. When record $p$ begins, the firm's candidate date is $\sigma_i(p,Z_i^{(p)})$, unless the record changes first. The draw $Z_i^{(p)}$ is the relevant entry of
$
\bm Z_i^\ast(t):=\bigl(Z_{i,1},\,\mathbf 1\{\tau_j\le t\}Z_{i,2}\bigr):$
It equals $Z_{i,1}$ at the empty record and $Z_{i,2}$ after the rival stops. 

At a history $(t,p)$, date $t$ has arrived, record $p$ has not changed, and the firm remains active. A strategy also specifies a distribution $G_i(\,\cdot\mid t,p)$ for its candidate date at every such history. This distribution assigns probability one to $[t,+\infty]$, and its probability of every Borel set varies measurably with $(t,p)$. When record $p$ begins, the draw $Z_i^{(p)}$ generates  distribution
$
G_i(B\mid t(p),p)
=
\int_0^1\mathbf 1\{\sigma_i(p,z)\in B\}\,\de z .$
We impose the following dynamic-consistency condition:
for every $t\le s$ and every Borel set $B$, whenever $G_i([s,+\infty]\mid t,p)>0$,
\begin{equation*}
G_i(B\mid s,p)
=
\frac{G_i(B\cap[s,+\infty]\mid t,p)}
     {G_i([s,+\infty]\mid t,p)} .
\tag{DC}\label{eq:dc-kernel}
\end{equation*}
Equation \eqref{eq:dc-kernel} says that the firm does not redraw its candidate merely because time has passed. If it reaches $s$, its candidate is the old candidate conditional on being at least $s$. If the denominator is zero, the earlier plan would never reach $s$, but the strategy must still say what the firm does there. A fresh private uniform draw then selects a new candidate from $G_i(\,\cdot\mid s,p)$; \cref{lem:rep} shows that one draw can generate this distribution. The firm observes the draw, so its later choices may depend on it. After selecting the new candidate, the firm again keeps it as time passes according to \eqref{eq:dc-kernel}. These extra draws are used only at histories that the firm's earlier plan would never reach. Play from the initial history uses only $Z_{i,1}$ and, after the rival stops, $Z_{i,2}$.

Within a fixed record, the firm's draw fixes its candidate date. The firm therefore cannot wait to see whether its rival also stops at date $t$ before choosing its own date-$t$ action. If the rival stops at $t$, a new record begins and activates a new plan, which may also prescribe stopping at $t$. By \cref{lem:cont}, this response has the same payoff as stopping at $t+\e$ in the limit as $\e\downarrow0$, holding terminal stopping dates fixed. A strategy specifies a plan even for records that equilibrium play never reaches, so subgame perfection can test behavior after deviations.

\paragraph{Mixing.}
A mixed plan at record $p$ is a distribution over candidate stopping dates in $[t(p),+\infty]$. When the record begins, a fresh uniform draw selects one date from this distribution. Draws used by different firms and at different records are independent. The firm observes its selected date, but its rival observes only its eventual stopping action. The next lemma shows that one uniform draw can implement any such distribution.

\begin{lemma}[Implementing a stopping-date distribution]\label{lem:rep}
Let $\mathcal R$ be the set of records, and let the earliest feasible date $b(\rho)$ vary measurably with the record $\rho$. Suppose that $G(\rho,\cdot)$ is a distribution on $[b(\rho),+\infty]$ and that, for every Borel set $B$, its probability $G(\rho,B)$ varies measurably with $\rho$. Then there is a Borel-measurable map
$
Q:\mathcal R\times[0,1]\longrightarrow\overline\R_+$
such that $Q(\rho,Z)$ has distribution $G(\rho,\cdot)$ whenever $Z$ is uniform and independent of the information available when record $\rho$ begins. Conversely, every such map defines a family of distributions with these measurability properties.
\end{lemma}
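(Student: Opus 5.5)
The plan is the standard quantile-transform construction, made measurable in the record.

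First, define the generalized inverse of the conditional distribution function:
\[
Q(\rho,z):=\inf\bigl\{x\in[b(\rho),+\infty):\ G(\rho,[b(\rho),x])\ge z\bigr\},
\]
with the convention $\inf\varnothing:=+\infty$. This places any mass that $G(\rho,\cdot)$ assigns to $+\infty$ at the top of the unit interval.

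The distributional claim is then routine. For fixed $\rho$, right-continuity of $x\mapsto F_\rho(x):=G(\rho,[b(\rho),x])$ gives the equivalence $Q(\rho,z)\le x\iff z\le F_\rho(x)$ for every finite $x\ge b(\rho)$. Hence, for uniform $Z$,
\[
\Pr\bigl(Q(\rho,Z)\le x\bigr)=F_\rho(x).
\]
Distribution functions on the half-lines determine a distribution on $[b(\rho),+\infty]$, so $Q(\rho,Z)\sim G(\rho,\cdot)$. Because $Z$ is independent of the information available when record $\rho$ begins, the same identity holds conditionally on that information. This also covers the case where the record is itself random, since $\rho$ is measurable with respect to that information.

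The step I expect to be the main obstacle is joint Borel measurability of $Q$ in $(\rho,z)$. Record-by-record the construction is obvious, but $Q$ is built as an infimum over an uncountable set. I would handle this by writing the sublevel sets via the equivalence above:
\[
\{(\rho,z):Q(\rho,z)\le x\}=\{(\rho,z):\ b(\rho)\le x,\ z\le G(\rho,[b(\rho),x])\}.
\]
It therefore suffices that $\rho\mapsto G(\rho,[b(\rho),x])=G(\rho,[0,x])$ is measurable for each fixed rational $x$. This follows from the hypothesis that $G(\rho,B)$ is measurable in $\rho$ for each Borel $B$, together with the fact that $G(\rho,\cdot)$ puts no mass below $b(\rho)$, so that the $\rho$-dependent lower endpoint disappears. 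Taking the set for fixed $x$ and then intersecting over rationals $x'>x$, right-continuity yields the set for real $x$. Since the sets $\{Q\le x\}$ generate the Borel $\sigma$-field on $\overline{\R}_+$, this gives joint measurability.

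Finally, the converse. Given a Borel map $Q$, define $G(\rho,B):=\int_0^1\mathbf 1\{Q(\rho,z)\in B\}\,\de z$. By Fubini–Tonelli this is measurable in $\rho$ for every Borel $B$, and it is a probability measure in $B$. If $Q(\rho,\cdot)\ge b(\rho)$, it is supported on $[b(\rho),+\infty]$. So the two descriptions of mixed plans, distributions and measurable maps of a single uniform draw, coincide.
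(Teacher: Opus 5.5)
Your proof is correct and follows the paper's argument: a quantile transform in the stopping date, joint Borel measurability from the sublevel-set identity $\{Q\le x\}=\{(\rho,z): b(\rho)\le x,\ z\le G(\rho,[0,x])\}$, and the converse by Fubini--Tonelli. The only cosmetic difference is that the paper first maps $\overline\R_+$ into $[0,1]$ through $\psi(x)=x/(1+x)$ before inverting, whereas your convention $\inf\varnothing=+\infty$ handles the mass at $+\infty$ directly.
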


\begin{proof}
This is the usual inverse-distribution construction. We first map all possible dates, including $+\infty$, into the unit interval. We then apply the inverse distribution function there and map the result back. Map $\overline\R_+$ to $[0,1]$ by $\psi(x)=x/(1+x)$ and $\psi(+\infty)=1$. Let $F_\rho$ be the distribution function of the image of $G(\rho,\cdot)$ under $\psi$, and define
\[
Q(\rho,z):=
\psi^{-1}\!\left(\inf\{y\in[\psi(b(\rho)),1]:F_\rho(y)\ge z\}\right),
\qquad z\in[0,1].
\]
For every finite $x$,
$\{(\rho,z):Q(\rho,z)\le x\}
=\{(\rho,z):b(\rho)\le x,\ z\le G(\rho,[0,x])\},$
so $Q$ is Borel-measurable. It lies in $[b(\rho),+\infty]$ and has the required distribution. Conversely, $
G(\rho,B)=\int_0^1\mathbf 1\{Q(\rho,z)\in B\}\,\de z$ 
has the required measurability for every Borel-measurable $Q$.
\end{proof}

\begin{lemma}[Outcomes]\label{lem:outcome}
Every strategy profile $\bm\sigma$ determines a unique pair of stopping dates $\bm\tau(\bm\sigma)=(\tau_1,\tau_2)$. Each firm's stopping date uses only information that the firm has observed: formally, $\tau_i(\bm\sigma)$ is an $(\mathcal F^i_t)_t$-stopping time.
\end{lemma}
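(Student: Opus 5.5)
We prove \cref{lem:outcome} by construction along the (at most two) record changes, building the outcome path forward from the empty record. At the empty record $p=\varnothing$, each firm's candidate date is $C_i:=\sigma_i(\varnothing,Z_{i,1})$. These are Borel functions of the independent uniforms, so they are well-defined random variables. Let $m:=C_1\wedge C_2$.

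\emph{Step 1: the first stop.} If $m=+\infty$, set $\tau_1=\tau_2=+\infty$. If $C_i=C_j=m<\infty$, both firms stop at $m$. If instead $C_i=m<C_j$, firm $i$ stops at $m$ and the record becomes $(i,m)$. Firm $j$ then draws $Z_{j,2}$ and stops at $\tau_j:=\sigma_j\bigl((i,m),Z_{j,2}\bigr)\in[m,+\infty]$. Because a stopped firm takes no further action and at most one record change can precede the survivor's stop, this case analysis defines $\bm\tau$ uniquely on every realization of the devices.

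\emph{Step 2: dynamic consistency.} The strategy also specifies conditional kernels $G_i(\cdot\mid t,p)$ at intermediate dates, and we must check that following them produces the same outcome. By \eqref{eq:dc-kernel}, whenever $t$ is reachable under the initial draw, the kernel at $t$ is the initial candidate distribution conditioned on $[t,+\infty]$. Keeping the already-drawn candidate therefore implements the strategy. Fresh draws are invoked only at histories with zero reaching probability, so they cannot alter the realized path. Uniqueness then follows: two outcome maps agreeing with every kernel at every reached history must coincide up to null events. Measurability of the composite map $(Z_{1,1},Z_{2,1},Z_{1,2},Z_{2,2})\mapsto\bm\tau$ follows from Borel measurability of the plans together with the indicator decomposition over the events $\{C_i<C_j\}$, $\{C_i=C_j\}$ and $\{C_i>C_j\}$. \Cref{lem:rep} supplies the representations needed for this.

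\emph{Step 3: adaptedness.} This is the main step. We must show $\{\tau_i\le t\}\in\mathcal F^i_t$. We write
\[
\{\tau_i\le t\}=\bigl(\{C_i\le t\}\cap\{\tau_j\ge C_i\}\bigr)\cup\bigl(\{\tau_j<C_i\}\cap\{\tau_j\le t\}\cap\{\sigma_i((j,\tau_j),Z_{i,2})\le t\}\bigr).
\]
The first event is determined by $Z_{i,1}$ and by whether the rival has stopped strictly before $C_i$. That information is in firm $i$'s record up to $C_i\le t$, but it needs care at ties.

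Consider $\{C_i\le t,\ \tau_j\ge C_i\}$. On $\{\tau_j< C_i\}$ firm $i$ would already have switched records, so we can replace this event with $\{C_i\le t\}\setminus\{\tau_j<C_i\}$. The event $\{\tau_j<C_i\le t\}$ is observed by time $t$, because $\tau_j$ is a public stop time before $t$.

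The second event uses only $Z_{i,2}$ together with the rival's stop date $\tau_j\le t$, which the record reveals at that date. Because $Z_{i,2}$ enters $\bm Z_i^*(t)$ only after $\tau_j\le t$, it lies in $\mathcal F^i_t$ on that event.

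The subtle point, which we will handle explicitly, is simultaneity. The convention that the time-$t$ decision cannot depend on the rival's time-$t$ decision is exactly what makes the first event use only the strict inequality $\tau_j<C_i$. This keeps the decomposition inside $\mathcal F^i_t$ rather than requiring $\mathcal F^i_{t+}$.
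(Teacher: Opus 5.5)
Your proposal is correct and follows essentially the same route as the paper: the outcome comes from comparing the empty-record candidates and then the survivor's second-record candidate, and adaptedness holds because firm $i$ uses only $Z_{i,1}$ until an observed rival stop at $\tau_j\le t$ reveals $Z_{i,2}$. Your explicit decomposition of $\{\tau_i\le t\}$ makes the paper's informal measurability argument precise, but two side remarks are off. Since $\mathcal F^i_t$ already contains $\mathbf 1\{\tau_j\le t\}$, the no-simultaneous-response convention is what pins down the outcome at tied candidates, not what keeps you out of $\mathcal F^i_{t+}$; and the kernel discussion in Step 2 is not needed, because the outcome is defined directly from the plans.
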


\begin{proof}
\emph{Construction.} Call $\sigma_i(p,Z_i^{(p)})$ firm $i$'s \emph{candidate date} at record $p$. At the empty record, compare the two candidates. If both are $+\infty$, both firms race forever. If the candidates tie at a finite date, both firms stop there. Otherwise, the firm with the earlier candidate stops first. Its rival observes that stop, activates the record $(j,t_j)$, and uses its second draw to select a new candidate. The rival stops at that date or races forever if the candidate is $+\infty$. These cases give a unique stopping pair.

\emph{The stopping date uses only observed information.} Fix $i$ and $t$. Firm $i$ first computes its candidate at the empty record. If its rival does not stop strictly before that date, the candidate is its stopping date, with a tie producing a simultaneous stop. Otherwise, when the rival stops first at $\tau_j$, firm $i$ activates record $(j,\tau_j)$, uses $Z_{i,2}$ to compute its second candidate, and stops at that date. This reproduces the construction above. On $\{\tau_j>t\}$ the computation uses only $Z_{i,1}$, so $Z_{i,2}$ enters only after the rival's stop, as the filtration requires. Hence
$
\{\tau_i\le t\}\in\sigma\big(\bm Z_i^\ast(t),\ (\mathbf 1\{\tau_j\le s\})_{s\le t}\big)=\mathcal F^i_t.$\end{proof}

\paragraph{Subgames and equilibrium.}
For a record $p$ and a date $t\ge t(p)$, the \emph{subgame after $(t,p)$} is the continuation game in which no new stop has occurred since record $p$ began. At the empty record both firms remain active. At a record $(j,t_j)$, firm $j$ is frozen at $t_j$ and only firm $i$ remains active. An active firm's strategy includes its current plan and, at the empty record, its plan after a later rival stop. Given terminal stopping dates $(\tau_i,\tau_j)$, where a firm that has already stopped keeps its recorded date, firm $i$'s continuation payoff is $U_i^t(\tau_i,\tau_j)$.

An active rival's candidate has distribution $G_j(\,\cdot\mid t,p)$. At a history reached with positive probability, this is its original candidate conditional on being at least $t$. A draw reserved for a later rival stop has not yet been used. At a history the earlier plan would never reach, or after a firm has continued past its candidate, the strategy uses the fresh draw specified above. Condition \eqref{eq:dc-kernel} then makes the firm keep that new candidate as time passes.

\begin{definition}[Subgame-perfect equilibrium]\label{def:spe}
A strategy profile $\bm\sigma$ is a \emph{subgame-perfect equilibrium} if,
after every history $(t,p)$ and for every active firm $i$, no replacement of
firm $i$'s plans from that history onward raises its expected continuation
payoff
\[
\Ex\!\left[U_i^t(\tau_i,\tau_j)
\mathrel{\big|}\text{firm $i$'s information at $(t,p)$}\right].
\]
The replacement must leave firm $i$'s earlier play unchanged, but may depend
on every private draw the firm has already observed and may specify its
response to every later observed stop. The expectation is over the rival's
unobserved candidate and all future draws. We write $\SPE(\pi,\lambda)$ for
the set of outcomes $\bm\tau(\bm\sigma)$ generated by subgame-perfect
strategy profiles.
\end{definition}

\subsection{Strategies under imperfect observability}\label{app:noisy}

Under imperfect observability, a firm sees delayed signals rather than its rival's stopping date. The first private draw governs play before any signal, and a fresh second draw is made after the first detection. This subsection defines these plans and the perfect Bayesian equilibrium used in \cref{sec:imperfect,appendix:existence}.

\paragraph{Primitives.}
The game uses independent private $\mathrm{Uniform}[0,1]$ draws to mix and independent detection clocks $E_1,E_2\sim\mathrm{Exp}(\omega)$. The draw $Z_{i,1}$ governs firm $i$'s initial plan, and $Z_{i,2}$ is drawn after its first detection. At any history that its earlier plan would never reach, the firm makes another fresh uniform draw. All private draws and detection clocks are mutually independent. The clock $E_j$ is attached to firm $j$'s stop: if firm $j$ stops at $\tau_j<\infty$, the signal process
$
Y^j_t:=\mathbf 1\{\tau_j+E_j\le t\}$
jumps at the \emph{detection date} $\tau_j+E_j$, at which point its rival learns, conclusively, that firm $j$ has stopped; if $\tau_j=+\infty$, then $Y^j\equiv0$. A jump of $Y^j$ reveals the event that firm $j$ has stopped but not the date $\tau_j$ at which it did so.

\paragraph{Signal records, plans, and strategies.}
A \emph{signal record} for firm $i$ is either empty, $q=\varnothing$, or a pair $q=(j,a)$ stating that firm $i$ detected firm $j$ at date $a$. Set $a(\varnothing):=0$ and $a(j,a):=a$. Unlike the public records under perfect observability, a signal record is private and lists the detection date, not the rival's stopping date. A \emph{plan} is a Borel-measurable map $(q,z)\mapsto\sigma_i(q,z)\in[a(q),+\infty]$. When record $q$ begins, $\sigma_i(q,Z_i^{(q)})$ is firm $i$'s candidate date; the firm stops there unless a signal changes its record first. The private draws available by date $t$ are
$
\bm Z_i^\ast(t):=\bigl(Z_{i,1},\,Y^j_tZ_{i,2}\bigr).$ The draw $Z_i^{(q)}$ used at record $q$ equals $Z_{i,1}$ before detection and $Z_{i,2}$ after detection. A strategy specifies the initial plan and a response plan for every possible signal date. A plan cannot condition on an event at the same instant: a detection at $a$ activates the response plan, which may itself prescribe stopping at $a$. By \cref{lem:cont}, the payoff from stopping immediately after $a$ converges to the payoff from this response.

At a history $(t,q)$, a strategy also specifies a distribution $G_i(\,\cdot\mid t,q)$ for the firm's current candidate date. This distribution assigns probability one to $[t,+\infty]$, and its probability of every Borel set varies measurably with $(t,q)$. When record $q$ begins, the draw $Z_i^{(q)}$ generates distribution $G_i(B\mid a(q),q)
=\int_0^1\mathbf 1\{\sigma_i(q,z)\in B\}\,\de z .$  If the history is reached with positive probability, this is simply the original candidate date conditioned on that candidate date being at least $t$, so it satisfies \eqref{eq:dc-kernel} with 
$q$ in place of $p$. The firm holds onto that candidate for as long as it lies in the future.  If the firm deviates by continuing past that candidate, or if its earlier plan would never reach the history, a fresh private uniform draw selects a new candidate from the distribution specified there, as in \cref{lem:rep}. The firm observes this draw, and equilibrium requires that the new plan be optimal. After selecting the new candidate, the firm again keeps it as time passes according to \eqref{eq:dc-kernel}.

\begin{lemma}[Outcomes under imperfect observability]\label{lem:outcome-noisy}
Together with the detection clocks, every strategy profile determines a unique stopping pair up to a probability-zero event. Each firm's stopping date uses only information that the firm has observed. Formally, $\tau_i(\bm\sigma)$ is an $(\mathcal F^i_t)_t$-stopping time for
$
\mathcal F^i_t=\sigma\big(\bm Z_i^\ast(t),\ (Y^j_s)_{s\le t}\big).
$\end{lemma}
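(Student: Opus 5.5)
The proof mirrors the construction of \cref{lem:outcome}, now run forward in time with the detection clocks included. Fix a strategy profile $\bm\sigma$ and draw all primitives: the first-stage uniforms $Z_{1,1},Z_{2,1}$, the detection clocks $E_1,E_2$, and the reserved second draws $Z_{1,2},Z_{2,2}$.

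\emph{Step 1: candidates at the empty record.} At the empty signal record, each firm computes its candidate $C_i:=\sigma_i(\varnothing,Z_{i,1})$. Let $m:=C_1\wedge C_2$ be the first stop. If $m=+\infty$, neither firm ever stops, no signal is ever sent, and the outcome is $(+\infty,+\infty)$. If $C_1=C_2=m<\infty$, both stop at $m$. Each stop is then irreversible, so later signals change nothing and the outcome is $(m,m)$.

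\emph{Step 2: the first stop is strict.} Otherwise, say $C_j=m<C_i$. Firm $j$ stops at $\tau_j=m$, and its signal reaches firm $i$ at $a:=m+E_j$.
\begin{itemize}
\item If $C_i\le a$, firm $i$ stops at $C_i$ before it receives any detection.
\item If $C_i>a$, firm $i$ activates record $(j,a)$ and stops at $\sigma_i((j,a),Z_{i,2})\ge a$. Its own later stop would send a signal to $j$, but $j$ has already stopped, so that signal is irrelevant.
\end{itemize}
The event $\{C_i=a\}$ requires $E_j$ to equal a value determined by the independent draws. Since $E_j$ is exponential and independent of those draws, this event has probability zero. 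This is the only null event, and it accounts for the qualifier ``up to a probability-zero event''. Each case yields a single stopping pair, which proves uniqueness. Measurability of $(\tau_1,\tau_2)$ follows because the construction composes Borel maps, namely the plans $\sigma_i$, minima, and indicators.

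\emph{Step 3: adaptedness.} We show $\{\tau_i\le t\}\in\mathcal F^i_t$ by writing the event through quantities firm $i$ observes by time $t$:
\[
\{\tau_i\le t\}=\{C_i\le t,\ Y^j_{s}=0 \text{ for } s<C_i\}\ \cup\ \{Y^j_t=1,\ \sigma_i((j,a),Z_{i,2})\le t\},
\]
where $a=\inf\{s:Y^j_s=1\}$.
\begin{itemize}
\item The first piece depends only on $Z_{i,1}$ and on the path of $Y^j$ up to $t$.
\item The second piece uses the detection date $a$, which is $(Y^j_s)_{s\le t}$-measurable on $\{Y^j_t=1\}$, and $Z_{i,2}$, which enters only through $Y^j_tZ_{i,2}$. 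That product is a component of $\bm Z_i^\ast(t)$.
\end{itemize}
The key observation is that firm $i$'s stop never depends on $\tau_j$ itself, only on $Y^j$. Simultaneous ties and stops made before any detection are therefore handled by the first piece.

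\emph{Main obstacle.} The delicate point is the tie between firm $i$'s own candidate and its detection date, which must be shown to be null. A second issue concerns off-path fresh draws, which arise at unreached histories or after a firm continues past its candidate. These draws never affect the outcome generated by $\bm\sigma$ from the initial history. They matter only for continuation outcomes, where the same construction applies from $(t,q)$ using the fresh draw, in line with \eqref{eq:dc-kernel} and \cref{lem:rep}.
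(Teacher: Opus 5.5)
Your proposal is correct and follows essentially the same route as the paper. Both proofs use a forward, event-by-event construction in which the only ambiguity is a detection date tying an already scheduled stop, which is null because $E_j$ is exponential and independent of the candidates. Both then obtain adaptedness by computing $\tau_i$ from $Z_{i,1}$, $Z_{i,2}$, and the path of $Y^j$, keeping the empty-record candidate unless $Y^j$ jumps strictly before it.

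Your explicit decomposition of $\{\tau_i\le t\}$ is a more detailed version of the paper's verbal argument. Its second piece also contains paths on which firm $i$ already stopped at $C_i\le a\le t$ before detection, but those paths lie in $\{\tau_i\le t\}$ anyway, so the identity holds.
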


\begin{proof}
Starting from date zero, list each scheduled stop and each pending detection. Execute the earliest event. A stop starts that firm's detection clock; a detection activates the rival's second plan if the rival is still active. Continue until both firms have stopped or no finite event remains, in which case every active firm races forever. At most two stops and two detections can occur. Tied candidate dates produce a simultaneous stop. A pending detection almost surely does not tie a stop that was already scheduled because exponential delays have continuous distributions. If the response plan chooses the detection date itself, execute that stop immediately after the detection. Thus the stopping outcome is unique except on a probability-zero event.

Each firm $i$ can compute its stopping date from its two private draws and the path $Y^j$ of its rival's signal. It first evaluates the plan at the empty record. If $Y^j$ does not jump strictly before the resulting candidate, that candidate is $\tau_i$. If the signal arrives first, the firm evaluates its second plan at the signal record and follows that candidate. This reproduces the construction above, so $\{\tau_i\le t\}\in\mathcal F^i_t$.
\end{proof}

\paragraph{Histories, beliefs, and equilibrium.}
An \emph{observable history} for an active firm $i$ at date $t$ is the date together with its signal record: $h=(t,q)$. The firm also knows its own draws and current candidate date, and optimality is tested using all this information. Beliefs can be written as a function of $h$ alone because the firm's draws are independent of whether its rival has stopped.

The rival may have stopped at some date $y\le t$, leaving its technology frozen while its signal remains pending, or it may still be active. A belief $\mu_i$ assigns probabilities to these possibilities at every observable history. To evaluate a continuation plan, firm $i$ averages over them and over any pending detection delay. If the rival is active, its signal record is empty because firm $i$ has not stopped, so its current candidate has distribution $G_j(\,\cdot\mid t,\varnothing)$.

\begin{definition}[Perfect Bayesian equilibrium]\label{def:pbe}
A \emph{perfect Bayesian equilibrium} is a strategy profile $\bm\sigma$ together with beliefs $(\mu_1,\mu_2)$ such that:
\begin{itemize}
\item[(i)] \emph{Sequential rationality:} after every observable history and every realization of the firm's private information, the firm cannot gain by changing its plans from that point onward while leaving its earlier play unchanged;
\item[(ii)] \emph{Consistent beliefs:} at every no-signal history reached with positive probability, beliefs follow Bayes' rule. After a detection at date $a$, if the Bayes denominator implied by the rival's strategy is positive, the posterior is pinned down by Bayes' rule. If the denominator is zero such that detection at $a$ is impossible under the rival's strategy, then we only impose the restriction that the belief must assign probability one to stopping dates $y\le a$. 
\end{itemize}\end{definition} 

Before any signal, the rival's strategy and the absence of news determine the firm's belief. After a signal, the firm uses the Bayes posterior at every detection date for which that posterior has a positive denominator. The existence proof computes this posterior in \cref{lem:ex-belief}. At feasible dates not covered pointwise by the fixed-point condition, it selects a measurable best response to this posterior. Only when the rival’s strategy makes detection at date $a$ impossible—that is, when the Bayes denominator is zero—may the firm choose another belief; the construction in \cref{appendix:existence} then assigns $\delta_a$, meaning that the rival stopped at the detection date $a$.

\paragraph{Concession and response plans.}
A strategy has two parts. The \emph{concession plan} is the distribution of the stopping date before any signal. The \emph{response plan} assigns a distribution of stopping dates in $[a,+\infty]$ to each first-signal date $a$. \Cref{lem:rep} shows how the private draws generate both plans. Together with the instructions at histories that the earlier plan would never reach, these plans define a complete strategy. At any history at which firm $i$ is active it has sent no signal, so an active rival's signal record is empty. The belief $\mu_i$ therefore reduces to a distribution over whether, and when, the rival has stopped, as computed in \cref{lem:ex-belief}.

Thus mixing has a simple meaning: after each signal record, the strategy assigns a distribution over candidate stopping dates, and a fresh draw selects one candidate if that record is reached. By the checking-deviations argument in the main text, we can verify optimality by checking continuation plans that leave earlier play unchanged and make no further random choices.

\clearpage

\appendix 
\renewcommand{\thesection}{\Roman{section}}

\renewcommand{\theHsection}{\Roman{section}}
\setcounter{section}{0}

\setcounter{page}{1}
{\renewcommand{\thefootnote}{\fnsymbol{footnote}}
\begin{center}
    \large{\textbf{\MakeUppercase{Online Appendix to 
    `Racing to Ruin'}}} \\
    \normalsize{{\textbf{DREW FUDENBERG\footnote[1]{Department of Economics, MIT, \href{mailto:drewf@mit.edu}{drewf@mit.edu}} \quad  ANDREW KOH\footnote[2]{Department of Economics, Columbia University, \href{mailto:ajkoh@mit.edu}{andrew.koh@columbia.edu}}}\\
    \small{FOR ONLINE PUBLICATION ONLY}}}
\end{center}}
\crefalias{section}{onlineappendix}

\section{Equilibrium Existence under Imperfect Transparency}\label{appendix:existence}

This appendix proves \cref{prop:existence} for the two-firm game. Set $\alpha(s)=e^{-rs-\int_0^s\lambda(u)\de u}$. For a realized stopping pair $(\tau_1,\tau_2)\in\overline\R_+^{\,2}$, write firm 1's realized payoff as
\[
U(\tau_1,\tau_2):=U_1(\tau_1,\tau_2)
=\int_0^\infty e^{-rs-\int_0^s\lambda(u)\mathbf 1\{u\le\tau_1\vee\tau_2\}\de u}\,
\pi\big(s\wedge\tau_1,\ s\wedge\tau_2\big)\,\de s ,
\]
firm 2's realized payoff being $U(\tau_2,\tau_1)$ by symmetry. Write $\bar U:=\sup|U|$, and let $\eta_U$ denote a modulus of continuity for $U$.

\paragraph{Overview of proof.} 
\cref{fig:appB-map} gives a roadmap. If $\omega<\bar\omega$, the never-stopping equilibrium from \cref{prop:never-stop} proves existence. The argument below covers all remaining news rates, including $\omega=\bar\omega$. It proceeds in two steps. First, we solve a smaller game in which each firm chooses a distribution of initial concession dates and a response rule after a signal (\crefrange{app:ex-response}{app:ex-fp}). Second, we complete that strategy at histories that occur with probability zero and verify optimality after every possible history (\crefrange{app:ex-swap}{app:ex-proof}).

\emph{Step 1: solve the reduced game.} A two-firm strategy has two economically relevant parts: an initial \emph{concession date} if no signal arrives, and a \emph{response date} after detecting the rival's stop. \cref{app:ex-response} shows that, after a signal at $a$, every optimal response lies between $a$ and $\chi(a)$. The rule $K$ assigns a distribution over that interval for each $a$, while $G$ is the distribution of the initial concession date. These pairs $(G,K)$ form a compact convex set. Payoffs vary continuously because stops are not observed immediately and detection dates have continuous distributions. A fixed-point theorem therefore gives a symmetric reduced equilibrium $(G^\ast,K^\ast)$: $G^\ast$ assigns probability only to optimal concession dates, and $K^\ast$ chooses an optimal response at almost every detection date that can occur.

\begin{figure}[H]\centering
\begin{tikzpicture}[x=1cm,y=1cm,
  lem/.style={draw, rounded corners=2pt, align=center, font=\footnotesize, inner sep=4pt, text=black},
  dep/.style={-{Stealth[length=2.4mm]}, semithick}]
\node[lem, text width=3.4cm] (env)   at (7.4,0) {payoffs have a common\\ integrable bound};
\node[lem, text width=5.4cm] (win) at (7.4,-2.5) {\cref{lem:ex-window}\\ every optimal response lies in $\mathcal T(a)=[a,\chi(a)]$; response payoffs vary continuously};
\node[lem, text width=3.3cm] (cont) at (1.8,-5.2)  {\cref{lem:ex-cont}\\ reduced payoffs vary continuously, including at ties};
\node[lem, text width=3.1cm] (meas) at (7.4,-5.2) {\cref{lem:ex-meas}\\ optimal responses can be chosen measurably};
\node[lem, text width=2.9cm] (bel)  at (11.7,-4.6)  {\cref{lem:ex-belief}\\ Bayes' rule after delayed signals};
\node[lem, text width=3.2cm] (down) at (1.8,-7.9) {\cref{lem:ex-down}\\ a rival's later stop hurts a stopped firm};
\node[lem, text width=3.6cm] (fp)   at (7.4,-7.9) {\cref{lem:ex-fp}\\ a fixed-point theorem gives a reduced equilibrium $(G^\ast,K^\ast)$};
\node[lem, text width=3.3cm] (swap) at (1.8,-10.5) {\cref{lem:ex-swap}\\ fill in responses at probability-zero dates};
\node[lem, text width=3.3cm] (one)  at (7.4,-10.5) {\cref{lem:ex-onestep}\\ initial optimality implies optimality after no signal};
\node[lem, text width=3.4cm] (memo) at (11.9,-10.5){\cref{lem:ex-memoryless}\\ after the latest possible concession, beliefs stop changing};
\node[lem, text width=5.2cm] (prop) at (7.4,-13) {\cref{prop:existence}\\ combine the plans and beliefs, then verify optimality after every history};
\draw[dep] (env)   -- (win);
\draw[dep] (win)   -- (cont);
\draw[dep] (win)   -- (meas);
\draw[dep] (bel)   -- (meas);
\draw[dep] ([xshift=1.2cm]win.south) .. controls (9.6,-4.6) and (9.5,-6.3) .. (fp.north east);
\draw[dep] (cont)  -- (fp);
\draw[dep] (meas)  -- (fp);
\draw[dep] (fp)    -- (swap);
\draw[dep] (down)  -- (swap);
\draw[dep] (bel)   -- (memo);
\draw[dep] (swap)  -- (prop);
\draw[dep] (one)   -- (prop);
\draw[dep] (memo)  -- (prop);
\end{tikzpicture}
\caption{Logical structure of the existence proof. Each arrow points from an input to the result that uses it.}
\label{fig:appB-map}
\end{figure}

\emph{Step 2: complete the dynamic strategy.} The reduced equilibrium specifies behavior only at histories that matter with positive probability. Perfect Bayesian equilibrium also requires a plan and a belief at probability-zero histories. At such detection dates, we prescribe the latest potentially optimal response $\chi(a)$ and the belief that the rival stopped at $a$. This change does not disturb the optimality of $G^\ast$ (\cref{lem:ex-swap}). We then show that a concession date that is optimal initially remains optimal after any history with no signal (\cref{lem:ex-onestep}). After the latest date at which the rival might concede, the firm's belief no longer changes and its stopping problem is time-consistent (\cref{lem:ex-memoryless}). \cref{app:ex-proof} combines these pieces and checks every class of history.
\subsection{The response problem}\label{app:ex-response}

Note that \begin{equation}\label{eq:ex-envelope}
\alpha(y)\,\pi(y,0)\;\le\;e^{-ry}\,\sup_{v\le y}\Big\{e^{-\int_0^v\lambda(z)\de z}\,\pi(v,0)\Big\}\;\xrightarrow[y\to\infty]{}\;0 ,
\end{equation}
where the limit holds because the running supremum is nondecreasing in $y$. If the right side were at least $\e>0$ along $y_m\uparrow\infty$, we could pass to a subsequence with $y_{m+1}>y_m+1$. Monotonicity of the supremum would then give
\[
\int_{y_m}^{y_m+1}e^{-rs}\sup_{v\le s}
\Big\{e^{-\int_0^v\lambda(z)\de z}\,\pi(v,0)\Big\}\,\de s
\ge \e e^{-r}
\]
on each of these disjoint intervals, contradicting the integrability in \cref{ass:payoffs}(v). Because flow profits decrease in the rival's technology, $\pi(a,b)\leq \pi(a,0)$ for every $b\geq0$. Thus \eqref{eq:ex-envelope}, which bounds discounted profits against a rival fixed at $0$, also bounds discounted profits along every stopping history considered below.
\begin{lemma}[Rival's delay hurts]\label{lem:ex-down}
Fix $x\in\R_+$. The map $y\mapsto U(x,y)$ is continuous on $[x,+\infty]$ and strictly decreasing: for every $x\le y<y'\le+\infty$,
$
U(x,y')\;<\;U(x,y).
$\end{lemma}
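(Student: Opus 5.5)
Fix $x$ and $x\le y<y'\le+\infty$. I will compare $U(x,y)$ and $U(x,y')$ by splitting the time axis at $y$. On $[0,y]$ the two profiles coincide: firm~1's technology is $s\wedge x$, firm~2's is $s$, and the hazard indicator $\mathbf 1\{u\le x\vee y\}=\mathbf 1\{u\le y\}$ is the same as under $y'$ for $u\le y$. Hence the integrands agree on $[0,y]$ and
\[
U(x,y')-U(x,y)=\alpha(y)\Big[\int_y^\infty \tfrac{A'(s)}{\alpha(y)}\,\pi(x,s\wedge y')\,\de s-\frac{\pi(x,y)}{r}\Big],
\]
where $A'(s)$ is the discount--survival factor under $y'$: it equals $\alpha(s)$ for $s\le y'$ and $\alpha(y')e^{-r(s-y')}$ afterwards. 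Under $y$, the continuation from $y$ is the safe perpetuity $\pi(x,y)/r$ because the frontier has stopped.

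For the strict inequality I will bound the bracket pointwise. For $s>y$, $D_j\pi<0$ gives $\pi(x,s\wedge y')<\pi(x,y)$. Since $\lambda\ge0$, $A'(s)/\alpha(y)\le e^{-r(s-y)}$. So the integral is strictly below $\int_y^\infty e^{-r(s-y)}\pi(x,y)\,\de s=\pi(x,y)/r$. The inequality is strict because on $(y,y')$, a set of positive length, the flow is strictly lower. This already works for $y'=+\infty$, with integrability supplied by \eqref{eq:ex-envelope} and Assumption~\ref{ass:payoffs}(v). Finally, $\alpha(y)>0$ gives $U(x,y')<U(x,y)$. The same computation can be read as a derivative: for finite $y$,
\[
\partial_y U(x,y)=\alpha(y)\frac{\pi(x,y)}{r}\big[D_j\log\pi(x,y)-\lambda(y)\big]<0,
\]
which is the $\Phi_t'$ computation in the proof of \cref{lem:mono}. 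I will present the pointwise comparison because it avoids differentiability at $y=x$ and covers $y'=\infty$ directly.

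Continuity on $[x,+\infty]$ follows from \cref{lem:cont}, which gives joint continuity of $U_i$ in the compactified metric $d$, and in particular continuity in the second coordinate. The only delicate point is continuity at $y=+\infty$, i.e.\ $U(x,y)\to U(x,\infty)$. This is exactly what the dominated bound in \cref{lem:cont} delivers, and it can also be seen directly: the tail term $\alpha(y)\pi(x,y)/r\le\alpha(y)\pi(y,0)/r\to0$ by \eqref{eq:ex-envelope}, using $x\le y$ and monotonicity of $\pi$. I expect this tail step to be the only place needing care, and \eqref{eq:ex-envelope} handles it.
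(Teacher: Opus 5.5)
Your proof is correct, but it reaches the strict inequality by a different route than the paper. The paper writes $U(x,y)=\int_0^y\alpha(s)\pi(s\wedge x,s)\,\de s+\alpha(y)\pi(x,y)/r$ for $y\in(x,\infty)$. It differentiates in $y$ to get $\frac{\alpha(y)}{r}\big[D_j\pi(x,y)-\lambda(y)\pi(x,y)\big]<0$, and then invokes continuity (\cref{lem:cont}) to reach the endpoints $y=x$ and $y'=+\infty$.

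You instead note that the two profiles coincide up to $y$ and compare the continuations pointwise. After $y$, the flow $\pi(x,s\wedge y')$ lies strictly below $\pi(x,y)$, and the survival factor $A'(s)/\alpha(y)$ is at most $e^{-r(s-y)}$. This is the integrated form of the paper's derivative.

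The derivative route is shorter and gives the local rate of loss; the same computation reappears as $\Phi_t'$ in the proof of \cref{lem:mono}. On its own, however, it yields strict decrease only on the open interval $(x,\infty)$. Limits preserve only weak inequalities, so strictness at the endpoints needs an extra step through an interior point, e.g.\ $U(x,\infty)\le U(x,y'')<U(x,y)$ for $y<y''<\infty$. The paper's phrase ``continuity extends the strict comparison'' leaves that step implicit.

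Your comparison covers every pair $x\le y<y'\le+\infty$ at once and delivers strictness directly. It uses only strict monotonicity of $\pi$ in the rival's technology and $\lambda\ge 0$, with no differentiation in $y$. Continuity is then needed only for the separate continuity claim, where you and the paper both rely on \cref{lem:cont}.
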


\begin{proof}
Continuity is \cref{lem:cont}. For $y\in(x,\infty)$ we have $\tau_1\vee\tau_2=y$, so $\Lambda_s=\int_0^{s\wedge y}\lambda$ and
\[
U(x,y)=\int_0^y \alpha(s)\,\pi(s\wedge x,\,s)\,\de s+\frac{\alpha(y)\,\pi(x,y)}{r}.
\]
Differentiating and using $\alpha'(y)=-(r+\lambda(y))\alpha(y)$,
\begin{align*}
\frac{\partial}{\partial y}U(x,y)
&= \alpha(y)\pi(x,y)+\frac{\alpha'(y)\pi(x,y)+\alpha(y)\,D_j\pi(x,y)}{r}\\
&=\frac{\alpha(y)}{r}\Big[D_j\pi(x,y)-\lambda(y)\,\pi(x,y)\Big]<0,
\end{align*}
since $D_j\pi<0$, $\lambda\ge0$, and $\pi>0$. Strict monotonicity on $(x,\infty)$ follows; continuity extends the strict comparison to the endpoints $y=x$ and $y'=+\infty$.
\end{proof}

After a detection the game is a single-agent stopping problem against a frozen rival whose technology may be unknown. For $c\ge0$ define the \emph{crossing date}
\[
\chi(c):=\inf\big\{y\ge c:\ D_i\log\pi(y,c)\le\lambda(y)\big\}.
\]
If a signal arrives at $a$ and reveals a rival frozen at $t$, the optimal response is $a\vee\chi(t)$. For a detection date $a>0$ and a belief $\nu\in\Delta([0,a])$ over the rival's frozen technology, the value of stopping at $y\in[a,+\infty]$ is
\[
w(y;a,\nu):=\int_{[0,a]}\left[\int_a^y \frac{\alpha(s)}{\alpha(a)}\,\pi(s,c)\,\de s+\frac{\alpha(y)}{\alpha(a)}\,\frac{\pi(y,c)}{r}\right]\nu(\de c),
\]
with the terminal term absent at $y=+\infty$, and the \emph{response window} is
$
\mathcal{T}(a):=\big[\,a,\ \chi(a)\,\big].
$
From the definition of $w$, for every $a\le a'\le y$ (note $\Delta([0,a])\subseteq\Delta([0,a'])$),
\begin{equation}\label{eq:ex-split}
w(y;a,\nu)\ =\ \int_a^{a'}\frac{\alpha(s)}{\alpha(a)}\Big(\int\pi(s,c)\,\nu(\de c)\Big)\de s\ +\ \frac{\alpha(a')}{\alpha(a)}\,w(y;a',\nu);
\end{equation}
the first term does not depend on $y$, so the two continuation values differ only by an additive constant and a positive scale factor. They therefore rank every stopping date in $[a',+\infty]$ in the same way. The next lemma collects the properties of $\chi$, $w$, and $\mathcal{T}$ that we use.

\begin{lemma}[The response problem]\label{lem:ex-window}
Fix $a>0$ and $\nu\in\Delta([0,a])$. Then:
\begin{itemize}
\item[(i)] $\chi$ is nondecreasing and continuous on $\R_+$, with $c\le\chi(c)\le\tau^R$ and $\chi(c)>c$ for $c<\tau^R$, and $\chi(c)=c$ for $c\ge\tau^R$; consequently $\mathcal{T}(a)=\{a\}$ for $a\ge\tau^R$ and $\mathcal{T}(a)\subseteq[a,\tau^R]$ for $a\le\tau^R$.
\item[(ii)] For $a\le\tau^R$ and every $y\in[a,+\infty]$,
\begin{equation}\label{eq:ex-wbound}
0\ \le\ w(y;a,\nu)\ \le\ \bar w:=\frac{1}{\alpha(\tau^R)}\left[\int_0^\infty \alpha(s)\,\pi(s,0)\,\de s\ +\ \frac1r\,\sup_{v\ge0}\,\alpha(v)\,\pi(v,0)\right]\ <\ \infty .
\end{equation}
\item[(iii)] $w(\cdot\,;a,\nu)$ is continuous on $[a,+\infty]$ and strictly decreasing on $[\chi(a),+\infty]$; hence \[\argmax_{y\in[a,+\infty]}w(y;a,\nu)\] 
is a nonempty compact subset of $\mathcal{T}(a)$, and any response outside $\mathcal{T}(a)$ is strictly suboptimal under every belief in $\Delta([0,a])$.
\item[(iv)] For the point belief $\delta_c$ with $c\le a$, the unique maximizer is $a\vee\chi(c)$. In particular $\chi(a)$ is the unique optimal response under $\delta_a$, and it is the \emph{latest} response that is optimal under any belief in $\Delta([0,a])$.
\item[(v)] $w$ is jointly continuous in $(y,a,\nu)$, where beliefs with supports in varying intervals are compared as elements of $\Delta(\overline\R_+)$ with the weak topology, on the domain $\{(y,a,\nu):a>0,\ y\in[a,+\infty],\ \nu([0,a])=1\}$; consequently, by Berge's maximum theorem, $(a,\nu)\mapsto\max_{y}w(y;a,\nu)$ is continuous and $(a,\nu)\mapsto\argmax_y w(y;a,\nu)$ is upper hemicontinuous with compact values.
\end{itemize}
\end{lemma}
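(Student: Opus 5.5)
The plan is to prove the five parts in order, each resting on single-crossing (\cref{ass:payoffs}(iv)), log-supermodularity, \cref{ass:dsc}, and the derivative formula of \cref{lem:R}.

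\emph{Part (i).} Monotonicity of $\chi$ comes from log-supermodularity. If $c\le c'$ and $y<\chi(c)$ with $y\ge c'$, then
$D_i\log\pi(y,c')\ge D_i\log\pi(y,c)>\lambda(y)$,
so there is no crossing before $\chi(c)$ on $[c',\infty)$. Hence $\chi(c')\ge\chi(c)$ whenever $\chi(c)\ge c'$; otherwise $\chi(c')\ge c'>\chi(c)$ trivially. Continuity is inherited from \cref{lem:R-single-peaked}, which identifies $\chi(c)$ as the unique maximizer of $R_c$ (its derivative has the sign of $D_i\log\pi(\tau,c)-\lambda(\tau)$). For $c<\tau^R$, \cref{lem:R-single-peaked} gives $D_i\log\pi(c,c)>\lambda(c)$, so $\chi(c)>c$. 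For $c\ge\tau^R$, \cref{ass:dsc} gives $D_i\log\pi(c,c)\le\lambda(c)$, so $\chi(c)=c$. Finally, $\chi(c)\le\tau^R$ for $c\le\tau^R$, because $D_i\log\pi(\tau^R,c)\le D_i\log\pi(\tau^R,\tau^R)\le\lambda(\tau^R)$. The window statements follow immediately.

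\emph{Part (ii).} Nonnegativity is clear since $\pi>0$. For the upper bound, I use $\pi(s,c)\le\pi(s,0)$ and $\alpha(a)\ge\alpha(\tau^R)$ (because $a\le\tau^R$ and $\alpha$ is decreasing). This bounds the integral term by $\alpha(\tau^R)^{-1}\int_0^\infty\alpha\pi(\cdot,0)$ and the terminal term by the supremum. Both quantities are finite by \cref{ass:payoffs}(v) and \eqref{eq:ex-envelope}.

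\emph{Parts (iii) and (iv).} Differentiating under the integral gives
\[
\partial_y w=\int\frac{\alpha(y)}{\alpha(a)}\frac{\pi(y,c)}{r}\big[D_i\log\pi(y,c)-\lambda(y)\big]\nu(\de c).
\]
For $y>\chi(a)$ and $c\le a$, log-supermodularity gives $D_i\log\pi(y,c)\le D_i\log\pi(y,a)$. This lies strictly below $\lambda(y)$ by strict single-crossing on $[a,\infty)$ past the crossing $\chi(a)$. Each integrand is therefore strictly negative, and $w$ is strictly decreasing on $[\chi(a),\infty]$. Continuity, including at $y=+\infty$, follows by dominated convergence with the bound from (ii) and \eqref{eq:ex-envelope}, since the terminal term vanishes. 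Compactness of $[a,\infty]$ then gives a nonempty compact argmax inside $\mathcal T(a)$.

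For the point belief $\delta_c$, \cref{lem:R} applied on $[a,\infty)$ gives the unique maximizer $\inf\{y\ge a:D_i\log\pi(y,c)\le\lambda(y)\}=a\vee\chi(c)$; here $c\le a$ places us on the lemma's domain and single-crossing is strict. Taking $c=a$ yields $\chi(a)$. The claim that $\chi(a)$ is the latest optimal response under any belief is exactly the strict decrease beyond $\chi(a)$ just established.

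\emph{Part (v), the main obstacle.} The difficulty is joint continuity on a domain with varying lower endpoint $a$ and varying belief support. I would rewrite
\[
w(y;a,\nu)=\frac{1}{\alpha(a)}\int\Big[\int_a^y\alpha(s)\pi(s,c)\,\de s+\alpha(y)\pi(y,c)/r\Big]\nu(\de c).
\]
The bracket is a jointly continuous function of $(a,y,c)$ on the compactified domain: this holds by \cref{lem:cont}-style dominated convergence with dominating function $\alpha(s)\pi(s,0)$, and at $y=\infty$ the terminal term tends to $0$ uniformly in $c$ by \eqref{eq:ex-envelope}. The bracket is also bounded. Since $(a,y)\to(a_0,y_0)$ gives uniform convergence in $c$ on compacts, and $\nu_m\to\nu$ weakly, the standard joint continuity lemma (uniform convergence plus weak convergence) gives convergence of the integrals. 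The factor $1/\alpha(a)$ is continuous because $a>0$ is finite.

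The feasible correspondence $(a,\nu)\mapsto[a,+\infty]$ is continuous and compact-valued. Berge's maximum theorem therefore delivers continuity of the value and upper hemicontinuity of the argmax with compact values.
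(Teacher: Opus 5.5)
Your proposal is correct and follows the paper's proof essentially step for step. It uses the same log-supermodularity and single-crossing comparisons for (i), (iii), and (iv), the same envelope bounds for (ii), and the same ``uniform convergence in $c$ on compacts plus weak convergence, then Berge'' argument for (v). The one deviation is continuity of $\chi$: you identify $\chi(c)$ with the unique maximizer of $R_c$ and invoke \cref{lem:R-single-peaked}, whereas the paper gives a direct $\varepsilon$-argument from strict single-crossing plus a squeeze at $\tau^R$; both are valid given $\tau^R<\infty$.
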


\begin{proof} 
\noindent \underline{\emph{Part (i).}} For $c\le\tau^R$, log-supermodularity (\cref{ass:payoffs}(iii)) and the definition of $\tau^R$ give $D_i\log\pi(\tau^R,c)\le D_i\log\pi(\tau^R,\tau^R)\le\lambda(\tau^R)$, so the defining set is nonempty and $\chi(c)\le\tau^R$. For $c\ge\tau^R$, \cref{ass:dsc} gives $D_i\log\pi(c,c)\le\lambda(c)$, so $\chi(c)=c$; for $c<\tau^R$, $D_i\log\pi(c,c)>\lambda(c)$ and continuity give $\chi(c)>c$. Monotonicity: for $c\le c'$, continuity at the crossing and log-supermodularity give $D_i\log\pi(\chi(c'),c)\le D_i\log\pi(\chi(c'),c')\le\lambda(\chi(c'))$, and $\chi(c')\ge c'\ge c$, so $\chi(c)\le\chi(c')$. Continuity: write $g(y,c):=D_i\log\pi(y,c)-\lambda(y)$, which is continuous, and strictly decreasing in $y$ on $[c,\infty)$ by \cref{ass:payoffs}(iv). If $\chi(c)>c$, fix $0<\e<(\chi(c)-c)/2$. Then $g(\chi(c)-\e,c)>0>g(\chi(c)+\e,c)$. For $c'$ sufficiently close to $c$, we have $c'<\chi(c)-\e$ and both strict inequalities persist, so $\chi(c')\in(\chi(c)-\e,\chi(c)+\e)$. If $\chi(c)=c$ then $c\ge\tau^R$; on $[\tau^R,\infty)$ the map $\chi$ coincides with the identity, and continuity at the junction $c=\tau^R$ follows from the squeeze $c'\le\chi(c')\le\tau^R$ for $c'<\tau^R$, which forces $\chi(c')\to\tau^R=\chi(\tau^R)$ as $c'\uparrow\tau^R$. The consequences for $\mathcal{T}$ are immediate.

\noindent \underline{\emph{Part (ii).}} Use $\pi(\cdot,c)\le\pi(\cdot,0)$ and $1/\alpha(a)\le1/\alpha(\tau^R)$ for $a\le\tau^R$: the integral is finite by \cref{ass:payoffs}(v), whose integrand dominates $\alpha(s)\,\pi(s,0)$, and the supremum by \eqref{eq:ex-envelope}.

\noindent \underline{\emph{Part (iii).}} For a fixed rival technology $c\le a$, the contribution to the derivative of $w$ from that technology is
\[
\frac{\alpha(y)}{r\,\alpha(a)}\,\pi(y,c)\Big[D_i\log\pi(y,c)-\lambda(y)\Big].
\]
For $y\ge\chi(a)$, log-supermodularity and the definition of $\chi(a)$ give
\[
D_i\log\pi(y,c)\le D_i\log\pi(y,a)\le\lambda(y),
\]
with the second inequality strict for $y>\chi(a)$. Integrating against $\nu$, the derivative of $w$ is strictly negative on $(\chi(a),\infty)$, which yields the strict decrease and the inclusion $\argmax\subseteq\mathcal{T}(a)$. Continuity of $w$ up to $y=+\infty$ follows from dominated convergence: the flow is dominated using \eqref{eq:ex-envelope} and the terminal term satisfies $\alpha(y)\pi(y,c)\le \alpha(y)\pi(y,0)\to0$. Nonemptiness and compactness of the argmax follow from continuity on the compact $[a,+\infty]$ and closedness of the argmax.

\noindent \underline{\emph{Part (iv).}} Under the point belief $\delta_c$, the derivative displayed in Part (iii) has the sign of $D_i\log\pi(y,c)-\lambda(y)$, which is strictly decreasing in $y$ by \cref{ass:payoffs}(iv). The point-belief payoff is therefore single-peaked with unique maximizer $a\vee\chi(c)$. Part (iii) shows that no response later than $\chi(a)$ is optimal under any belief, while $\chi(a)$ is uniquely optimal under $\delta_a$. Hence $\chi(a)$ is the latest response that can be optimal.

\noindent \underline{\emph{Part (v).}} Let $b(y,a,c)$ denote the payoff term inside the integral defining $w(y;a,\nu)$:
\[
b(y,a,c)
:=
\int_a^y \frac{\alpha(s)}{\alpha(a)}\,\pi(s,c)\,\de s
+\frac{\alpha(y)}{\alpha(a)}\,\frac{\pi(y,c)}{r},
\]
with the terminal term omitted when $y=+\infty$. The function $b$ is jointly continuous in $(y,a,c)$ and dominated by an integrable envelope via \eqref{eq:ex-envelope}, so $w$ is jointly continuous in $(y,a)$ uniformly over $c$ in compact sets and affine-continuous in $\nu$. Joint continuity follows: let $(y_m,a_m,\nu_m)\to(y,a,\nu)$ with $\nu_m([0,a_m])=1$ and set $A:=\sup_m a_m<\infty$. Extend $b$ beyond the relevant supports by replacing it with $b(y,a,c\wedge A)$, a bounded function that is continuous in $c$ on $\overline\R_+$. Then
\begin{align*}
\big|w(y_m;a_m,\nu_m)-w(y;a,\nu)\big|
&\le \sup_{c\le A}\big|b(y_m,a_m,c)-b(y,a,c)\big|\\
&\quad+\Big|\int b(y,a,c\wedge A)\,\de(\nu_m-\nu)(c)\Big|,
\end{align*}
where the first term vanishes by the uniform convergence just noted and the second by weak convergence. The domain correspondence $a\mapsto[a,+\infty]$ is continuous: upper hemicontinuity is closedness of $\{(a,y):y\ge a\}$ in the compact $\overline\R_+^{\,2}$, and lower hemicontinuity holds because any $y\in[a,+\infty]$ is the limit of $y\vee a_m\in[a_m,+\infty]$ along $a_m\to a$. Berge's maximum theorem \citep[Theorem~17.31]{aliprantis2006infinite} therefore applies.
\end{proof}

\paragraph{Response rules.} We describe every date in the response interval $[a,\chi(a)]$ by a number $\theta\in[0,1]$:
$
\chi_\theta(a):=(1-\theta)\,a+\theta\,\chi(a).$ Thus $\theta=0$ means stop when the signal arrives, while $\theta=1$ means wait until $\chi(a)$, the latest potentially optimal response. For $a\ge\tau^R$, both dates equal $a$. A \emph{response kernel} $K$ is simply a measurable rule that assigns a probability distribution over $\theta$ to each detection date $a$. In the terminology of \citet[Theorem~19.7 and Definition~19.8]{aliprantis2006infinite}, it is a measurable transition from detection dates to probability distributions over responses. Upon detection, the firm draws $\theta$ from $K(a,\cdot)$ and stops at $\chi_\theta(a)$. By \cref{lem:ex-window}(iii), no optimal response lies outside this interval.

To define convergence of response rules, we associate each kernel \(K\) with a joint distribution over detection dates and responses. If $\tau^R=0$, the response parameter is immaterial at every date, so we take the response-rule space to be a singleton. Suppose $\tau^R>0$, let $A:=[0,\tau^R]$, and associate with each kernel $K$ the probability measure
$
Q_K(\de a,\de\theta):=\frac{\de a \, K(a,\de\theta)}{\tau^R}$ on$A\times[0,1].$  The first marginal of $Q_K$ is the uniform distribution on $A$. Conversely, disintegration recovers from every joint law with this marginal a measurable kernel $K(a,\de\theta)$, unique for almost every $a$. We therefore treat two kernels as the same when they agree almost everywhere and let $\mathcal K$ denote the resulting set. We say that $K_m\to K$ when $Q_{K_m}\to Q_K$ weakly. The set $\mathcal K$ is nonempty and convex. It is also compact and metrizable. Fix a bounded Borel function $f$ on $A \times [0,1]$ that is continuous except on a null set $D \subseteq A$. Note that since $Q_K$ has the uniform distribution on $A$, it  assigns zero probability to $D\times[0,1]$. Hence if $Q_{K_m}$ converges weakly to $Q_K$ then the Portmaneau lemma gives: 
\[
\int_A\!\!\int_0^1 f(a,\theta)K_m(a,\de\theta)\,\de a
=\tau^R\!\int f\,\de Q_{K_m}
\ \longrightarrow\
\tau^R\!\int f\,\de Q_K
=\int_A\!\!\int_0^1 f(a,\theta)K(a,\de\theta)\,\de a
\tag{J}\label{eq:joint-conv}
\]


\subsection{The reduced game}\label{app:ex-reduced}

Let $\mathcal G:=\Delta(\overline\R_+)$ be the set of probability distributions over concession dates, including the possibility $+\infty$. We use weak convergence on this set, which makes $\mathcal G$ compact, convex, and metrizable. A \emph{reduced strategy} is a pair $(G,K)$: draw the no-signal concession date from $G$, and use response rule $K$ after the first detection. By \cref{lem:rep,app:noisy}, fresh independent uniform draws implement every such pair in the dynamic game.

We first record the beliefs that Bayes' rule assigns against a rival playing a reduced strategy. For $G\in\mathcal G$ set
\[
N_G(a):=\int_{[0,a]}e^{\omega y}\,G(\de y),
\qquad
\nu^G_a(\de y):=\frac{e^{\omega y}\,\mathbf 1\{y\le a\}\,G(\de y)}{N_G(a)}
\quad\text{whenever }N_G(a)>0 .
\]

\begin{lemma}[Beliefs]\label{lem:ex-belief}
Fix $\{i,j\}=\{1,2\}$ and a profile in which firm $j$ plays the reduced strategy $(G,K)$, and let $X_j\sim G$ denote its no-signal concession date. Consider firm $i$ at a history at which it is active and has received no signal by $t$. Then:
\begin{itemize}
\item[(i)] any stop by firm $j$ so far occurred at $X_j$; the conditional law of $X_j$ given the history assigns mass proportional to $e^{-\omega(t-y)}G(\de y)$ on $[0,t]$ and to $G(\de y)$ on $(t,+\infty]$;
\item[(ii)] the finite part of the date $a$ of firm $i$'s first signal has the possibly defective density $a\mapsto\omega e^{-\omega a}N_G(a)$ on $(0,\infty)$, and the signal date has an atom of size $G(\{+\infty\})$ at $+\infty$ (this law does not depend on firm $i$'s own concession plan, so long as firm $i$ is active); given a finite first signal at $a$ with $N_G(a)>0$, the conditional law of $X_j$ is $\nu^G_a$;
\item[(iii)] both conditional laws are unaffected by conditioning on survival of the disaster, and they do not depend on firm $i$'s own (past or planned) concession behavior.
\end{itemize}
\end{lemma}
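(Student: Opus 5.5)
The argument is a direct Bayes computation on the product space of firm $j$'s first private draw $Z_{j,1}$, its detection clock $E_j$, and firm $i$'s own draws. First fix the reduction. At any history at which firm $i$ is active, firm $i$ has not stopped. So $E_i$ has never started running and firm $j$ has received no signal. By the construction of strategies in \cref{app:noisy}, firm $j$'s only possible stop so far is therefore its no-signal concession date $X_j=\sigma_j(\varnothing,Z_{j,1})\sim G$, and its response kernel $K$ has never been activated. This gives the first clause of (i). Firm $i$'s own draws $(Z_{i,1},Z_{i,2},\dots)$ are independent of $(Z_{j,1},E_j)$. Hence conditioning on firm $i$'s information other than the signal path $Y^j$ does not move the law of $(X_j,E_j)$, which handles the dependence on firm $i$'s concession behavior in (ii)--(iii).

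For (i), I will compute the likelihood of the event $\{Y^j_t=0\}=\{X_j+E_j>t\}$ given $X_j=y$. It equals $\Pr(E_j>t-y)=e^{-\omega(t-y)}$ for $y\le t$, and equals $1$ for $y\in(t,+\infty]$, including $y=+\infty$ where $Y^j\equiv0$. Bayes' rule with prior $G$ then gives posterior mass proportional to $e^{-\omega(t-y)}G(\de y)$ on $[0,t]$ and to $G(\de y)$ on $(t,+\infty]$. The normalizer $\int_{[0,t]}e^{-\omega(t-y)}G(\de y)+G((t,\infty])$ is positive because $G$ is a probability measure. For (ii), the first-signal date is $X_j+E_j$ on $\{X_j<\infty\}$ and $+\infty$ otherwise. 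Convolving $G$ restricted to $\R_+$ with the $\mathrm{Exp}(\omega)$ density gives the defective density
\[
\int_{[0,a]}\omega e^{-\omega(a-y)}G(\de y)=\omega e^{-\omega a}N_G(a),
\]
with the remaining mass $G(\{+\infty\})$ at $+\infty$. Conditioning on a signal at $a$ is done through the joint law of $(X_j,X_j+E_j)$, which has density $\omega e^{-\omega(a-y)}$ in $a$ against $G(\de y)$. Forming the regular conditional distribution, the factor $\omega e^{-\omega a}$ cancels. This leaves $e^{\omega y}\mathbf 1\{y\le a\}G(\de y)/N_G(a)=\nu^G_a$ wherever $N_G(a)>0$; this is a version defined for every such $a$, not just almost every $a$.

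For (iii), I will note that survival to date $t$ has probability $\exp\{-\int_0^t\lambda(u)\mathbf 1\{u\le\tau_i\vee\tau_j\}\de u\}$. On the histories considered, firm $i$ is active through $t$, so $\tau_i\vee\tau_j\ge t$ regardless of $X_j$, $E_j$ or firm $i$'s plan. The survival factor is therefore the constant $\exp\{-\int_0^t\lambda\}$ and cancels from every Bayes ratio. The same holds at a signal date $a$, since firm $i$ is active there. Independence of firm $i$'s draws then removes any dependence on its own past or planned concession.

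The main obstacle is keeping (ii) exact on the conditional-probability side. Conditioning on a single signal date is a probability-zero event, so I will define the posterior as the disintegration of the joint law rather than by a naive ratio, and check that the displayed formula is a valid version at every $a$ with $N_G(a)>0$.
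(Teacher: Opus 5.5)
Your proposal is correct and follows essentially the same route as the paper. You use the no-signal likelihood $e^{-\omega(t-y)}$ for (i), the convolution of $G$ with the exponential clock for the signal-date density and the resulting Bayes weights $e^{\omega y}G(\de y)$ for (ii), and the common survival factor $e^{-\int_0^t\lambda}$ together with independence of firm $i$'s draws for (iii). Your added checks — that the normalizer is positive, and that the signal-date posterior is a disintegration version defined at every $a$ with $N_G(a)>0$ — only make explicit what the paper leaves implicit.
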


\begin{proof}
\noindent \underline{\emph{Part (i).}} Because firm $i$ has not stopped, firm $j$ has received no signal and therefore stops, if at all, at its no-signal concession date $X_j$. Given $X_j=y$, the probability that firm $i$ receives no signal by $t$ is $e^{-\omega(t-y)}$ when $y\le t$ and is one when $y>t$. While firm $i$ is active, its technology at each date $s$ equals $s$, so a rival stop at $y\le s$ leaves the frontier at $\max(s,y)=s$. The survival probability through $t$ is therefore the common factor $e^{-\int_0^t\lambda(s)\de s}$ for every realization of $X_j$. Bayes' rule gives the conditional law in (i).

\noindent \underline{\emph{Part (ii).}} Given $X_j=y<+\infty$, the finite first-signal date has density $
a\longmapsto\omega e^{-\omega(a-y)}\mathbf 1\{a>y\}.$ 
Integrating this density over $y\in[0,a]$ gives $\omega e^{-\omega a}N_G(a)$. If $X_j=+\infty$, no signal arrives, which gives the atom of size $G(\{+\infty\})$ at $+\infty$. At a finite signal date $a$, Bayes' rule assigns weight proportional to $e^{\omega y}G(\de y)$ on $[0,a]$, yielding $\nu^G_a$.

\noindent \underline{\emph{Part (iii).}} The common survival factor calculated in Part (i) proves that conditioning on survival does not change either posterior. Moreover, $X_j$ and its detection clock are independent of firm $i$'s private draws and concession plan. Hence firm $i$'s own past or planned behavior does not change either conditional law.
\end{proof}

The fixed-point proof must choose an optimal response as a measurable function of the detection date. The next lemma proves that this is possible. For $a$ with $N_G(a)>0$ define
\[
\Theta^\ast(a,G):=\big\{\theta\in[0,1]:\ \chi_\theta(a)\in\argmax_{y\in[a,+\infty]}w\big(y;a,\nu^G_a\big)\big\}.
\]

\begin{lemma}[Measurability in the detection date]\label{lem:ex-meas}
Fix $G\in\mathcal G$. Then:
\begin{itemize}
\item[(i)] $N_G$ is nondecreasing and right-continuous, and on $\{N_G>0\}$ the map $a\mapsto\nu^G_a$ is Borel for the weak topology on $\Delta(\overline\R_+)$;
\item[(ii)] on $\big(\{N_G>0\}\cap(0,\tau^R)\big)\times[0,1]$, the map $(a,\theta)\mapsto w\big(\chi_\theta(a);a,\nu^G_a\big)$ is Borel in $a$ and continuous in $\theta$ (a Carath\'eodory integrand), and
$
a\ \mapsto\ \max_{y\in[a,+\infty]}w\big(y;a,\nu^G_a\big)\ =\ \max_{\theta\in[0,1]}w\big(\chi_\theta(a);a,\nu^G_a\big)$ is Borel;
\item[(iii)] $\Theta^\ast(\cdot,G)$ is nonempty- and compact-valued, and one can choose an optimal $\theta^\ast(a)$ as a Borel function of $a$.
\end{itemize}
\end{lemma}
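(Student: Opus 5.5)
The plan is to prove the three parts in order, each resting on standard measure-theoretic tools: monotone convergence for $N_G$, the Lebesgue-integral characterization of weak measurability, and the Kuratowski--Ryll-Nardzewski (or Carath\'eodory/argmax) measurable selection theorem.

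\emph{Part (i).} $N_G(a)=\int \mathbf 1\{y\le a\}e^{\omega y}G(\de y)$ has a nonnegative integrand that is nondecreasing in $a$. Right-continuity follows from monotone (equivalently dominated) convergence, since $\mathbf 1\{y\le a_m\}\downarrow\mathbf 1\{y\le a\}$ as $a_m\downarrow a$, with domination by $e^{\omega a_1}$ on the relevant range.

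For Borel measurability of $a\mapsto\nu^G_a$ on $\{N_G>0\}$, I will use the fact that the Borel $\sigma$-algebra of the weak topology on $\Delta(\overline\R_+)$ is generated by the evaluation maps $\mu\mapsto\int f\,\de\mu$, for $f$ ranging over bounded continuous functions (or a countable dense family of them, by metrizability). So it suffices to show that, for each such $f$,
\[
a\mapsto \frac{\int \mathbf 1\{y\le a\}e^{\omega y}f(y)\,G(\de y)}{N_G(a)}
\]
is Borel. Splitting $f=f^+-f^-$, the numerator is the same kind of monotone right-continuous function, hence Borel. The denominator is Borel and positive on the set $\{N_G>0\}$, which is itself an interval and so Borel. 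The ratio is therefore Borel.

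\emph{Part (ii).} I will compose maps. The map $(a,\theta)\mapsto\chi_\theta(a)$ is jointly continuous, because $\chi$ is continuous by \cref{lem:ex-window}(i). The map $a\mapsto(\chi_\theta(a),a,\nu^G_a)$ is then Borel in $a$ for each fixed $\theta$, and continuous in $\theta$ for each fixed $a$. By \cref{lem:ex-window}(v), $w$ is jointly continuous in $(y,a,\nu)$, so composing a continuous function with this map gives a Carath\'eodory integrand. A Carath\'eodory function on a separable metric space times $[0,1]$ is jointly Borel.

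For the value function, two steps:
\begin{itemize}
\item The equality of $\max_y$ and $\max_\theta$ follows from \cref{lem:ex-window}(iii): every maximizer lies in $\mathcal T(a)=\{\chi_\theta(a):\theta\in[0,1]\}$.
\item The maximum over $\theta\in[0,1]$ of a Carath\'eodory function equals the supremum over rational $\theta$, by continuity in $\theta$. A countable supremum of Borel functions is Borel.
\end{itemize}

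\emph{Part (iii).} Write $m(a)$ for the maximal value. Then $\Theta^\ast(a,G)$ is the zero set of the continuous function $\theta\mapsto m(a)-w(\chi_\theta(a);a,\nu^G_a)$. It is nonempty by part (ii) together with \cref{lem:ex-window}(iii), and compact because it is closed in $[0,1]$.

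For the selection, the main obstacle is to verify measurability of the correspondence; I will handle it directly with the explicit choice $\theta^\ast(a):=\min\Theta^\ast(a,G)$. Its sublevel sets satisfy
\[
\{a:\theta^\ast(a)\le c\}=\Big\{a:\ \max_{\theta\in[0,c]}w\big(\chi_\theta(a);a,\nu^G_a\big)\ge m(a)\Big\}.
\]
The left-hand side of the inequality is again a countable rational supremum of Borel functions, so it is Borel by the argument of part (ii), and the set is Borel. As a fallback, the measurable maximum theorem of \citet[Theorem~18.19]{aliprantis2006infinite} applied to this Carath\'eodory integrand gives a measurable argmax selector directly.

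Outside $(0,\tau^R)$ nothing needs to be selected, since $\mathcal T(a)=\{a\}$ for $a\ge\tau^R$ and any $\theta$ is optimal there.
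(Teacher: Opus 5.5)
Your proof is correct and follows the paper's argument step for step. Part (i) uses the same dominated-convergence and evaluation-map argument, the Carath\'eodory property comes from the same composition of $a\mapsto(a,\nu^G_a)$ with the jointly continuous $(a,\theta)\mapsto\chi_\theta(a)$ and $w$, and the identity of the two maxima rests on the same appeal to \cref{lem:ex-window}(iii). The only difference is that you verify measurability of the value function and of a selector by hand, via suprema over rational $\theta$ and the explicit earliest selector $\min\Theta^\ast(a,G)$, whose sublevel sets you correctly express through maxima restricted to $[0,c]$. The paper instead invokes the Measurable Maximum Theorem, which you keep as a fallback.
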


\begin{proof} 
\noindent \underline{\emph{Part (i).}} Monotonicity is clear, and right-continuity is dominated convergence: as $a_m\downarrow a$, $e^{\omega y}\mathbf 1\{y\le a_m\}\downarrow e^{\omega y}\mathbf 1\{y\le a\}$ pointwise, dominated by $e^{\omega\sup_m a_m}$. For bounded continuous $f$, $a\mapsto\int f\,\de\nu^G_a=N_G(a)^{-1}\int_{[0,a]}f(y)\,e^{\omega y}\,G(\de y)$ is then a ratio of Borel functions of $a$, hence Borel; and Borel measurability against every bounded continuous $f$ is Borel measurability for the weak topology.

\noindent \underline{\emph{Part (ii).}} The map $a\mapsto(a,\nu^G_a)$ is Borel by (i), and $(a,\theta)\mapsto\chi_\theta(a)$ is jointly continuous (\cref{lem:ex-window}(i)); composing with the jointly continuous $w$ (\cref{lem:ex-window}(v)) gives the Carath\'eodory property. The displayed identity holds because $\argmax_y w\subseteq\mathcal T(a)=\{\chi_\theta(a):\theta\in[0,1]\}$ (\cref{lem:ex-window}(iii)), and the maximum is then Borel as the value function of a maximization of a Carath\'eodory function over the fixed compact $[0,1]$ \citep[Measurable Maximum Theorem~18.19]{aliprantis2006infinite}.

\noindent \underline{\emph{Part (iii).}} By the displayed identity, $\Theta^\ast(a,G)=\argmax_{\theta\in[0,1]}w(\chi_\theta(a);a,\nu^G_a)$. The same theorem shows that this set is nonempty, compact, and measurable in $a$, and that an optimal $\theta^\ast(a)$ can be chosen as a Borel function.
\end{proof}

\paragraph{Outcomes and payoffs.} Fix concession dates $x$ for firm 1 and $y$ for firm 2. Use the detection clocks $E_1,E_2$ from \cref{app:noisy}, where $E_i$ is attached to firm $i$'s stop, and let $\theta_1,\theta_2$ be the response draws. For a detection at $a\ge\tau^R$, the response parameter is immaterial because $\chi_\theta(a)=a$ for every $\theta$. The following three cases compare which firm plans to concede first and whether the resulting signal arrives before the other firm's planned concession. The realized stopping pair is:
\begin{itemize}
\item if $x<y$: $\tau_1=x$; with $a_2:=x+E_1$, $\tau_2=y$ if $y<a_2$, and $\tau_2=\chi_{\theta_2}(a_2)$ if $a_2\le y$;
\item if $y<x$: $\tau_2=y$; with $a_1:=y+E_2$, $\tau_1=x$ if $x<a_1$, and $\tau_1=\chi_{\theta_1}(a_1)$ if $a_1\le x$;
\item if $x=y$: $\tau_1=\tau_2=x$.
\end{itemize}
The first stopper concedes without receiving a signal---its rival stops strictly later, so no signal precedes its concession---and the later firm either concedes on schedule or, if the signal arrives first, responds optimally. Note that the law $Q_K$ determines $K(a,d\theta)$ for almost every date $a$, and we can  choose its values arbitrarily at remaining dates. Firm 1's reduced payoff against a rival playing $(G,K)$, when firm 1 concedes at $x$ and responds per the same kernel $K$, is
\[
U_K(x,y):=\Ex\big[U(\tau_1,\tau_2)\big],
\qquad
U(x;G,K):=\int_{\overline\R_+}U_K(x,y)\,G(\de y),
\]
the expectation running over $(E_1,E_2,\theta_1,\theta_2)$ with $\theta_i\sim K(a_i,\cdot)$. Note that $\theta_1$ enters only on $\{y<x\}$ and $\theta_2$ only on $\{x<y\}$, so $U_K(x,y)$ is \emph{affine} in $K$: averaging two response rules averages their payoffs. Since all concession draws, response draws, and detection clocks are independent, $U(x;G,K)$ equals firm 1's expected payoff in the dynamic game under the corresponding plans (\cref{lem:rep,lem:cont}).

The reduced game treats concession and response choices separately. A response rule that is optimal at almost every detection date remains optimal after any concession deviation: detection dates have atomless laws, and their posteriors do not depend on the firm's concession plan (\cref{lem:ex-belief}(ii)--(iii)). Thus $U(x;G,K)$ can hold the response rule $K$ fixed while varying the concession date $x$. Correlating the response draw with the concession draw also cannot help: after a detection, the abandoned concession draw no longer affects payoffs, and a fresh response draw attains the pointwise optimum.

\begin{lemma}[Continuity of reduced payoffs]\label{lem:ex-cont}
\begin{itemize}
\item[(i)] The family $\{U_K:K\in\mathcal K\}$ is uniformly bounded by $\bar U$ and uniformly equicontinuous on $\overline\R_+^{\,2}$.
\item[(ii)] For each fixed $(x,y)$, the map $K\mapsto U_K(x,y)$ is continuous on $\mathcal K$.
\item[(iii)] $U$ is jointly continuous on $\overline\R_+\times\mathcal G\times\mathcal K$.
\end{itemize}
\end{lemma}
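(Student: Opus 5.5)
The plan is to handle the three parts separately, starting from a pathwise description of the outcome. For fixed $(x,y)$ and fixed response draws, the realized pair $(\tau_1,\tau_2)$ is given by the three-case description above, and it depends on $(E_1,E_2,\theta_1,\theta_2)$ only through the response dates $\chi_{\theta}(a)$ with $a=x+E_1$ or $a=y+E_2$.

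Part (i) has two halves. Uniform boundedness is immediate from \cref{lem:cont}, since $|U|\le\bar U$ for every stopping pair. For equicontinuity I would use a coupling: fix the clocks and the response parameters, and compare outcomes at $(x,y)$ and $(x',y')$ with $d(x,x')$ and $d(y,y')$ small.

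The outcome map has two kinds of discontinuity. One is at ties $x=y$. The other is at the events $\{y<x+E_1\}$ versus $\{y\ge x+E_1\}$, where the later firm switches from conceding on schedule to responding.

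At a tie, the nearby outcomes $(x,\chi_{\theta}(x+E))$ and $(x,x)$ are close whenever $E$ is small, because both $\chi_\theta(a)$ and $a$ tend to $x$ as $a\downarrow x$. On the event that $E$ is not small, the relevant switching event $\{y\ge x+E\}$ has small probability. The same observation handles the switching event: at the boundary $y=a$ the two branches give $\tau_2=y$ and $\tau_2=\chi_\theta(a)$. These are not equal when $a<\tau^R$, so I would instead bound the probability that $E_1$ falls in an interval of length $|y-y'|+|x-x'|$ by $\omega$ times that length. That bound is uniform in $K$, because the jump is bounded by $2\bar U$.

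Near $+\infty$ I would use the metric $d$. When both dates are large, all outcomes lie beyond some large $M$, and the integrand tail $\int_M^\infty$ is uniformly small by \cref{ass:payoffs}(v). Combining these pieces with the modulus $\eta_U$ gives a modulus for $U_K$ that does not depend on $K$. I expect this bookkeeping to be the main obstacle: the discontinuities have to be controlled by the atomless exponential clocks uniformly over kernels, including near $\infty$.

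For (ii), write $U_K(x,y)$ as a sum of three pieces. The first is a term independent of $K$. The second is $\int_{x}^{\infty}\omega e^{-\omega(a-x)}\mathbf 1\{a\le y\}\int_0^1 \tilde U(a,\theta)\,K(a,\de\theta)\,\de a$. The third is the symmetric term for the event $y<x$. Here $\tilde U(a,\theta)$ is $U(x,\chi_\theta(a))$ or its mirror image, which is continuous in $(a,\theta)$ by \cref{lem:cont} and \cref{lem:ex-window}(i). For $a\ge\tau^R$ the response does not depend on $\theta$, so only $a\in A$ matters.

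Dividing by the density, the integrand $f(a,\theta)=\omega e^{-\omega(a-x)}\mathbf 1\{x<a\le y\}\tilde U(a,\theta)$ is bounded and continuous off the null set $\{x,y\}\times[0,1]$. The convergence \eqref{eq:joint-conv} then gives $U_{K_m}(x,y)\to U_K(x,y)$.

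For (iii), take $(x_m,G_m,K_m)\to(x,G,K)$ and split
\[
|U(x_m;G_m,K_m)-U(x;G,K)|\le \sup_y|U_{K_m}(x_m,y)-U_{K_m}(x,y)|+\Big|\int U_{K_m}(x,y)\,(G_m-G)(\de y)\Big|+\Big|\int (U_{K_m}-U_K)(x,y)\,G(\de y)\Big|.
\]
The first term vanishes by equicontinuity. The third vanishes by (ii) and bounded convergence.

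For the middle term, Arzelà–Ascoli on the compact space $\overline\R_+$ says the family $\{U_{K}(x,\cdot)\}$ is totally bounded in the sup norm. Weak convergence of $G_m$ is therefore uniform over this family, so the middle term vanishes too, which gives joint continuity.
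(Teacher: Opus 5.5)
The gap is in the equicontinuity half of part (i). Your parts (ii) and (iii) follow the paper's argument essentially verbatim: the same kernel-dependent integral is fed into \eqref{eq:joint-conv}, and the same three-term split is closed by Arzel\`a--Ascoli.

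The coupling you propose, which fixes the clocks and the response parameters, does not exist. On the response branch the parameter is drawn from $K(a,\cdot)$ at the realized detection date $a=x+E_1$ (or $y+E_2$). With $E_1$ held fixed, moving $x$ to $x'$ moves every detection date, so $K(x+E_1,\cdot)$ is replaced by $K(x'+E_1,\cdot)$. Kernels in $\mathcal K$ are only measurable in $a$, so these two laws need not be close. For example, let $K(a,\cdot)$ alternate between $\delta_0$ and $\delta_1$ on intervals of length $h=x'-x$. Then every response flips between $a$ and $\chi(a)$. By \cref{lem:ex-down}, firm~1's payoff then changes by an amount bounded away from zero whenever $a$ is bounded away from $\tau^R$. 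This is a third, pervasive source of discontinuity, beyond the ties and switching events you list. Since $h$ is arbitrary, no clock-based coupling can produce a modulus that is independent of $K$.

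The missing idea is to couple on the detection date rather than on the clock.
\begin{itemize}
\item On $\{x<y\}$, write the kernel-dependent part as $\int f(x,a)\,\omega e^{-\omega(a-x)}\mathbf 1\{x<a\le y\}\,\de a$, where $f(x,a):=\int_0^1U(x,\chi_\theta(a))\,K(a,\de\theta)$; use the mirror image on $\{y<x\}$.
\item $f$ is merely measurable in $a$, but $|f(x,a)-f(x',a)|\le\eta_U(d(x,x'))$ holds uniformly in $(a,K)$.
\item Perturbing a concession date now does only two things. It shifts the exponential density, a total-variation change of $1-e^{-\omega|x'-x|}$. Or it moves a cutoff; on $\{y<x\}$, moving $x$ adds mass at most $\min\{1-e^{-\omega|x'-x|},\,e^{-\omega(x-y)}\}$.
\item Each of these is multiplied only by $\bar U$, and that is where uniformity in $K$ comes from. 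The paper then splits on a large $L$ to make these terms small in the metric $d$.
\end{itemize}

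Two further corrections to your part (i) sketch.
\begin{itemize}
\item At a tie with $x<\tau^R$ and $\theta>0$, $\chi_\theta(a)\to(1-\theta)x+\theta\chi(x)\neq x$ as $a\downarrow x$. So continuity across the diagonal does not come from the outcomes being close. It comes from the response branch having probability at most $1-e^{-\omega|x-y|}$.
\item When $x$ is large but $y$ is not, $|x'-x|$ can be large (even infinite) while $d(x,x')$ is small. Your bound of ``$\omega$ times the length'' must then be replaced by the $e^{-\omega(x-y)}$ bound on the mass in the moved cutoff window.
\end{itemize}
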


\begin{proof} \noindent \underline{\emph{Part (i).}} Boundedness is immediate. For equicontinuity, fix $K$ and consider the branch $y<x$ (the branch $x<y$ is symmetric; the diagonal is treated at the end). Substituting $a=y+E_2$,
\[
U_K(x,y)\big|_{y<x}
=\underbrace{e^{-\omega(x-y)}\,U(x,y)}_{\text{clock rings after }x}
+\int_y^{x}\Big[\int_0^1U\big(\chi_{\theta_1}(a),\,y\big)K(a,\de\theta_1)\Big]\,\omega e^{-\omega(a-y)}\,\de a ,
\]
using $\chi_\theta(a)=a$ on $[\tau^R,\infty)$. Perturbing $x$ to $x'>x$ changes the first term by at most $\eta_U(d(x,x'))+\bar U\,\big(e^{-\omega(x-y)}-e^{-\omega(x'-y)}\big)$ and adds to the integral a piece over $(x,x')$ of mass $e^{-\omega(x-y)}-e^{-\omega(x'-y)}$; since
\[
e^{-\omega(x-y)}-e^{-\omega(x'-y)}\le 1-e^{-\omega(x'-x)}
\quad\text{and}\quad
e^{-\omega(x-y)}-e^{-\omega(x'-y)}\le e^{-\omega(x-y)} ,
\]
the change is bounded by $\eta_U(d(x,x'))+2\bar U\min\{1-e^{-\omega(x'-x)},\,e^{-\omega(x-y)}\}$. The minimum is small uniformly: if $x\le L$ then $x'-x\le e^{L+1}d(x,x')$ for $d(x,x')$ small, making the first entry small, while if $x>L$ and $y\le x-\sqrt L$ the second entry is at most $e^{-\omega\sqrt L}$; and if $x>L$, $y>x-\sqrt L$, then all realized coordinates under either concession date exceed $L-\sqrt L$, so the two payoffs lie within $2\eta_U\big(2e^{-(L-\sqrt L)}\big)$ of one another. Choosing $L$ large and then $d(x,x')$ small yields a modulus independent of $(y,K)$. 

Now we compare the two integrals at the same detection date. For $z<x$, write $p_z(a):=\omega e^{-\omega(a-z)}$. The inner expectation is bounded by $\bar U$ and, uniformly in $(a,K)$,
\[
\left|
\int_0^1 U\bigl(\chi_\theta(a),y'\bigr)K(a,\de\theta)
-
\int_0^1 U\bigl(\chi_\theta(a),y\bigr)K(a,\de\theta)
\right|
\le\eta_U\bigl(d(y,y')\bigr).
\]
Let $y'=y+\delta<x$. On the common interval $[y',x]$ we have
$p_{y'}(a)=e^{\omega\delta}p_y(a)$. The probability removed from
$[y,y']$ is $1-e^{-\omega\delta}$, and
$
\int_{y'}^x\bigl(p_{y'}(a)-p_y(a)\bigr)\de a
\le 1-e^{-\omega\delta}.
$
It follows that the detection terms differ by at most
$
\eta_U\bigl(d(y,y')\bigr)
+2\bar U\bigl(1-e^{-\omega\delta}\bigr).$ The no-detection terms differ by at most
$\eta_U\bigl(d(y,y')\bigr)
+\bar U\bigl(1-e^{-\omega\delta}\bigr).$ 
Hence, whenever $y<y'<x$,
\[
\big|U_K(x,y')-U_K(x,y)\big|
\le
2\eta_U\bigl(d(y,y')\bigr)
+3\bar U\bigl(1-e^{-\omega\delta}\bigr),
\]
and the case $y'<y<x$ follows by interchanging $y$ and $y'$.

Uniformity in $d(y,y')$ follows by considering two cases: if $y\le L$, then $\delta\le e^{L+1}d(y,y')$ for $d(y,y')$ small, making every $\delta$-dependent term above small; if $y>L$, then every realized coordinate under either concession date exceeds $L$, so the two payoffs lie within $2\eta_U\big(2e^{-L}\big)$ of one another. Choosing $L$ large and then $d(y,y')$ small yields a modulus independent of $(x,K)$. Perturbations across the diagonal $x=y$ compose the two branch moduli with the observation that, as $x\to y$ from either side, both branch formulas converge to $U(y,y)$ at rate $\eta_U(d(x,y))+2\bar U\big(1-e^{-\omega|x-y|}\big)$, again uniformized by the two-case split: from below, the integral term has vanishing mass and the first term tends to $U(y,y)$; from above, symmetrically.

\noindent \underline{\emph{Part (ii).}} In the displayed formula the kernel appears only through
\[
\int_0^{\tau^R}\!\!\int_0^1 f(a,\theta)\,K(a,\de\theta)\,\de a \quad \text{and} \quad 
f(a,\theta)=U(\chi_\theta(a),y)\,\omega e^{-\omega(a-y)}\,\mathbf 1\{y<a<x\},
\]
a bounded measurable function continuous in $\theta$; recall $(a,\theta)\mapsto\chi_\theta(a)$ and $U$ are continuous. If $x<y$, the kernel-dependent term instead has integrand
\[
U\bigl(x,\chi_\theta(a)\bigr)\,
\omega e^{-\omega(a-x)}\,
\mathbf 1\{x<a<y\}
\]
which is also bounded and continuous except possibly at the boundary dates $a=x$ and $a=y$. Since these dates form a Lebesgue-null set, \eqref{eq:joint-conv} implies that the kernel-dependent integral varies continuously with $K$. The remaining no-detection term does not depend on $K$. If $x=y$, the firms stop simultaneously and the payoff is independent of $K$. Thus $K\mapsto U_K(x,y)$ is continuous for every fixed $(x,y)$.

\noindent \underline{\emph{Part (iii).}} Let $(x_m,G_m,K_m)\to(x,G,K)$ and estimate
\begin{align*}
\big|U(x_m;G_m,K_m)-U(x;G,K)\big|
&\le 
\underbrace{\sup_{y}\big|U_{K_m}(x_m,y)-U_{K_m}(x,y)\big|}_{\to\,0\ \text{by (i)}} \\ 
& +\Big|\!\int U_{K_m}(x,y)\,\de(G_m-G)\Big|+\underbrace{\Big|\!\int\big[U_{K_m}-U_{K}\big](x,y)\,G(\de y)\Big|}_{\to\,0\ \text{by (ii) and domination}} .
\end{align*} 
For the middle term, suppose along a subsequence it stays above $\e>0$; by (i) and Arzel\`a--Ascoli we may pass to a further subsequence with $U_{K_m}(x,\cdot)\to h$ uniformly on $\overline\R_+$. Then
\[
\Big|\int U_{K_m}(x,y)\,\de(G_m-G)(y)\Big|
\le
2\big\|U_{K_m}(x,\cdot)-h\big\|_\infty
+\Big|\int h\,\de(G_m-G)\Big|
\ \longrightarrow\ 0,
\]
a contradiction.
\end{proof}

\subsection{A symmetric equilibrium of the reduced game}\label{app:ex-fp}

\begin{lemma}[Reduced equilibrium]\label{lem:ex-fp}
There exists $(G^\ast,K^\ast)\in\mathcal G\times\mathcal K$ such that
\begin{itemize}
\item[(i)] $G^\ast\big(\argmax_{x\in\overline\R_+}U(x;G^\ast,K^\ast)\big)=1$; equivalently, the support of $G^\ast$ is contained in $\argmax_x U(x;G^\ast,K^\ast)$;
\item[(ii)] for Lebesgue-almost every $a\in(0,\tau^R)$ with $N_{G^\ast}(a)>0$,
\[
K^\ast\Big(a,\ \big\{\theta\in[0,1]:\ \chi_\theta(a)\in\argmax_{y\in[a,+\infty]}w\big(y;a,\nu^{G^\ast}_a\big)\big\}\Big)=1 .
\]
\end{itemize}
\end{lemma}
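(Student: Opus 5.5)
We will obtain $(G^\ast,K^\ast)$ as a fixed point of a best-reply correspondence on the compact convex metrizable set $\mathcal G\times\mathcal K$, using the Kakutani--Fan--Glicksberg theorem. Define $\Phi(G,K):=\Phi_1(G,K)\times\Phi_2(G,K)$. The concession component is
\[
\Phi_1(G,K):=\Big\{G'\in\mathcal G:\ G'\big(\argmax_{x\in\overline\R_+}U(x;G,K)\big)=1\Big\}.
\]
The response component $\Phi_2(G,K)$ is the set of $K'\in\mathcal K$ that, for almost every $a\in(0,\tau^R)$ with $N_G(a)>0$, put full mass on $\Theta^\ast(a,G)$. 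If $\tau^R=0$, $\mathcal K$ is a singleton and only $\Phi_1$ matters. A fixed point $(G^\ast,K^\ast)\in\Phi(G^\ast,K^\ast)$ delivers (i) and (ii) directly. Symmetry is built in because both firms face the same rival strategy $(G,K)$ and $U(x;G,K)$ is firm 1's payoff by the symmetry of $U$.

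\emph{Properties of $\Phi_1$.} By \cref{lem:ex-cont}(iii), $x\mapsto U(x;G,K)$ is continuous on the compact $\overline\R_+$, so its argmax is nonempty and compact. Hence $\Phi_1$ is nonempty (it contains a point mass on a maximizer), convex (the measures supported on a fixed closed set form a convex set), and closed. Closed graph follows by joint continuity. Suppose $(G_m,K_m)\to(G,K)$, $G'_m\to G'$, and $G'_m\in\Phi_1(G_m,K_m)$. Then
\[
\int U(x;G_m,K_m)\,G'_m(\de x)=\max_x U(x;G_m,K_m).
\]
The left side converges to $\int U(x;G,K)\,G'(\de x)$ by uniform convergence on compact $\overline\R_+$, using the continuity from \cref{lem:ex-cont}(iii). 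The right side converges by Berge's theorem. This forces $G'$ to be supported on the argmax.

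\emph{Properties of $\Phi_2$.} Nonemptiness follows from \cref{lem:ex-meas}(iii): a Borel selection $\theta^\ast(a)$ defines the kernel $\delta_{\theta^\ast(a)}$, and this kernel can be set arbitrarily where $N_G=0$. Convexity holds because the condition is pointwise in $a$ and linear in $K(a,\cdot)$.

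The main obstacle is the closed graph of $\Phi_2$, because the convergence on $\mathcal K$ is only weak convergence of the joint laws $Q_K$, while optimality is a pointwise-in-$a$ condition. To handle it, we rewrite membership as an integral inequality. Set $M_G(a):=\max_\theta w(\chi_\theta(a);a,\nu^G_a)$. Then $K'\in\Phi_2(G,K)$ if and only if
\[
\int_A\!\!\int_0^1\mathbf 1\{N_G(a)>0\}\big[M_G(a)-w(\chi_\theta(a);a,\nu^G_a)\big]\,K'(a,\de\theta)\,\de a\le 0,
\]
since the integrand is nonnegative. Along $G_m\to G$, we need $\nu^{G_m}_a\to\nu^G_a$ weakly for almost every $a$. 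This should hold at every $a$ that is not an atom of $G$ and satisfies $N_G(a)>0$, because the atoms form a countable, hence null, set. On the set $\{N_G>0\}$, we have $N_{G_m}>0$ eventually except at countably many points. By \cref{lem:ex-window}(v) and Berge's theorem, the integrands then converge pointwise in $a$, uniformly in $\theta$, away from a null set, and they are bounded by $2\bar w$ from \eqref{eq:ex-wbound}. We then combine (J) with this uniform-in-$\theta$ approximation. The route is a truncation/Lusin-type argument: we replace the measurable-in-$a$ integrand by one that is continuous off a set of small Lebesgue measure, and this costs little because the $a$-marginal of every $Q_{K'}$ is uniform. This passes the inequality to the limit and gives the closed graph.

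The remaining delicate point is the boundary of $\{N_G>0\}$, which is the interval starting at $\inf\supp G$. We plan to deal with it by noting that it is a single point and therefore null. With all properties established, Kakutani--Fan--Glicksberg in the locally convex space of signed measures yields the fixed point.
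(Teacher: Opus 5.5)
Your proposal is correct and follows essentially the paper's route. Both arguments apply Kakutani--Fan--Glicksberg on $\mathcal G\times\mathcal K$ with the same two best-reply correspondences and the same closed-graph argument for concession replies. For response replies, both pass an integrated optimality gap to the limit using almost-everywhere weak convergence of posteriors off the atoms of $G$, dominated convergence of the uniform-in-$\theta$ error, and (J).

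The differences are cosmetic. The paper weights the gap by $(N_G(a)\wedge 1)e^{-a}$ rather than by your indicator. With the indicator, you should say explicitly that the region $\{N_G=0<N_{G_m}\}$ can be discarded by nonnegativity. The paper also notes that the limit integrand is already jointly continuous off the countable atom set of $G$, so your Lusin-type approximation is not needed.
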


\begin{proof}
Define, for $(G,K)\in\mathcal G\times\mathcal K$,
\begin{align*}
\mathrm{BR}_G(G,K)&:=\Big\{G'\in\mathcal G:\ G'\big(\argmax_x U(x;G,K)\big)=1\Big\},\\
\mathrm{BR}_K(G)&:=\Big\{K'\in\mathcal K:\ K'\big(a,\Theta^\ast(a,G)\big)=1\ \text{for a.e.\ }a\in(0,\tau^R)\text{ with }N_G(a)>0\Big\},
\end{align*}
with $\Theta^\ast(a,G)$ as defined before \cref{lem:ex-meas}. We verify the hypotheses of the Kakutani--Fan--Glicksberg theorem \citep[Corollary~17.55]{aliprantis2006infinite} on the nonempty, compact, convex set $\mathcal G\times\mathcal K$ (a subset of the product of two locally convex spaces).

The fixed-point theorem requires four properties: best replies must exist; each set of best replies must be convex and compact; and best replies must vary continuously in the closed-graph sense. We check these properties in turn.

\emph{Best replies exist.} $\argmax_x U(\cdot;G,K)$ is nonempty and closed by \cref{lem:ex-cont}(iii) and compactness of $\overline\R_+$, so $\mathrm{BR}_G\neq\emptyset$. For $\mathrm{BR}_K$, \cref{lem:ex-meas}(iii) lets us choose an optimal response $\theta^\ast(a)$ as a Borel function of the detection date. The rule $K'(a,\cdot)=\delta_{\theta^\ast(a)}$ (extended arbitrarily where $N_G(a)=0$) therefore lies in $\mathrm{BR}_K(G)$.

\emph{Best-reply sets are convex and compact.} Both sets require a distribution to put full probability on a fixed set of optimal choices. This condition is linear in the distribution, so each best-reply set is convex. The surrounding strategy spaces are compact. The closed-graph arguments below imply that each best-reply set is closed and hence compact.

\emph{Concession best replies vary continuously.} $G'\in\mathrm{BR}_G(G,K)$ iff $\int U(x;G,K)G'(\de x)\ge\max_x U(x;G,K)$. Both sides are continuous in $(G',G,K)$: the left side because $(G',G,K)\mapsto\int U\,\de G'$ inherits continuity from \cref{lem:ex-cont}(iii) (uniform equicontinuity in $x$ makes the integrand vary continuously in the sup norm, so the same Arzel\`a--Ascoli argument as in the proof of \cref{lem:ex-cont}(iii) applies), the right side by Berge. The inequality therefore defines a closed set.

\emph{Response best replies vary continuously.} If $\tau^R = 0$ then $\mathcal{K}$ is a singleton and there is nothing to prove. Suppose $\tau^R > 0$. Define the \emph{optimality gap}---the average payoff loss from using $K'$ instead of an optimal response---by
\[
\Gamma(G,K'):=\int_0^{\tau^R}\big(N_G(a)\wedge1\big)\Big[\max_{y\in[a,+\infty]}w\big(y;a,\nu^G_a\big)-\int_0^1 w\big(\chi_\theta(a);a,\nu^G_a\big)K'(a,\de\theta)\Big]e^{-a}\,\de a .
\]
Where $N_G(a)=0$, set the payoff-loss term inside the integral---the best-response payoff minus the payoff under $K'$---equal to $0$. This payoff loss is nonnegative, so $\Gamma\ge0$, and $K'\in\mathrm{BR}_K(G)$ iff $\Gamma(G,K')=0$. Let $(G_m,K_m)\to(G,K)$ with $K_m\in\mathrm{BR}_K(G_m)$; we show $\Gamma(G,K)=0$. First, for all but countably many $a$ (the atoms of $G$), $N_{G_m}(a)\to N_G(a)$, and where $N_G(a)>0$ also $\nu^{G_m}_a\to\nu^G_a$ weakly (the integrand $y\mapsto e^{\omega y}\mathbf 1\{y\le a\}$ is bounded and $G$-a.s.\ continuous at non-atoms $a$). Hence, writing $I(a,\theta;G'):=(N_{G'}(a)\wedge1)\,w(\chi_\theta(a);a,\nu^{G'}_a)$ and $W(a;G'):=(N_{G'}(a)\wedge1)\max_y w(y;a,\nu^{G'}_a)$ (both $0$ where $N_{G'}(a)=0$), \cref{lem:ex-window}(v) gives, for a.e.\ $a$,
$
W(a;G_m)\to W(a;G)$ and $\sup_{\theta\in[0,1]}\big|I(a,\theta;G_m)-I(a,\theta;G)\big|\to0$
(where $N_G(a)=0$, the factor $N_{G_m}(a)\wedge1\to0$ forces both convergences because the underlying response payoffs are bounded by $\bar w$ from  \eqref{eq:ex-wbound}). Since the $\theta$-suprema bound the kernel integrals uniformly over $K_m$, all terms lying in $[0,\bar w]$, dominated convergence yields
\begin{align*}
\Gamma(G_m,K_m)&=\widetilde\Gamma_m+o(1),\\
\widetilde\Gamma_m
&:=\int_0^{\tau^R}
\Big[W(a;G)-\int_0^1 I(a,\theta;G)\,K_m(a,\de\theta)\Big]e^{-a}\de a .
\end{align*}
To apply \eqref{eq:joint-conv}, we verify that the integrand is continuous outside atoms of $G$. At any non-atom $a$ with $N_G(a)>0$, both $N_G(a)$ and the posterior $\nu_a^G$ vary continuously with $a$, so $I(a,\theta;G)e^{-a}$ is continuous in $(a,\theta)$. If $N_G(a)=0$, then $N_G(a')\wedge1\to0$ as $a'\to a$; because response payoffs are uniformly bounded, $I(a',\theta;G)e^{-a'}\to0$ uniformly in $\theta$. Thus the only possible discontinuities occur at atoms of $G$. These atoms are countable so the discontinuity set has $Q_K$-measure zero. Since $Q_{K_m}\to Q_K$ weakly, \eqref{eq:joint-conv} gives
$
\widetilde\Gamma_m\longrightarrow\Gamma(G,K)$ so $\Gamma(G,K)=\lim_m\Gamma(G_m,K_m)=0$. Kakutani--Fan--Glicksberg now delivers a fixed point $(G^\ast,K^\ast) \in BR_G(G^\ast, K^\ast) \times BR_K(G^\ast)$, and since $\argmax_x U$ is closed, (i) is equivalent to the support inclusion.
\end{proof}

\subsection{\texorpdfstring{Completing behavior at probability-zero histories}{Completing behavior at probability-zero histories}}\label{app:ex-swap}

Fix $(G^\ast,K^\ast)$ from \cref{lem:ex-fp}. When evaluating firm 1's payoff, let $X_2\sim G^\ast$ denote firm 2's concession date. Let $t_0:=\inf\supp G^\ast\in[0,+\infty]$, so that $N_{G^\ast}(a)>0$ for every $a>t_0$ and $N_{G^\ast}(a)=0$ for every $a<t_0$. Let $Z\subseteq(0,\tau^R)$ denote the (Lebesgue-null, measurable) set of dates $a$ with $N_{G^\ast}(a)>0$ at which the optimality property of \cref{lem:ex-fp}(ii) fails. By \cref{lem:ex-meas}(iii), we can choose a Borel function $\theta^\ast(a)\in\Theta^\ast(a,G^\ast)$ at every date with $N_{G^\ast}(a)>0$. Define the modified kernel
\[
\hat K(a,\cdot):=
\begin{cases}
K^\ast(a,\cdot) & \text{if }N_{G^\ast}(a)>0\text{ and }a\notin Z,\\[2pt]
\delta_{\theta^\ast(a)} &
\text{if }N_{G^\ast}(a)>0\text{ and }a\in Z,\\[2pt]
{\delta_{1}} & {\text{if }N_{G^\ast}(a)=0.}
\end{cases}
\]
At every feasible date in $Z$, the new response is optimal under the Bayes posterior $\nu_a^{G^\ast}$. When $N_{G^\ast}(a)=0$, the detection history is impossible; there the response is $\chi_1(a)=\chi(a)$, which is uniquely optimal under the belief $\delta_a$ that the rival stopped at $a$ (\cref{lem:ex-window}(iv)).

\begin{lemma}[Responses at probability-zero dates]\label{lem:ex-swap}
\begin{itemize}
\item[(i)] $U(x;G^\ast,\hat K)=U(x;G^\ast,K^\ast)$ for every $x\in\overline\R_+$ with $x\ge t_0$;
\item[(ii)] $U(x;G^\ast,\hat K)\le U(x;G^\ast,K^\ast)$ for every $x<t_0$.
\end{itemize}
Consequently
\[
\max_x U(x;G^\ast,\hat K)=\max_x U(x;G^\ast,K^\ast)
\qquad\text{and}\qquad
\supp G^\ast\subseteq\argmax_x U(x;G^\ast,\hat K).
\]
\end{lemma}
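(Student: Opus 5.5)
The plan is to compare $U_{\hat K}(x,y)$ with $U_{K^\ast}(x,y)$ pointwise in the rival's concession date $y\in\supp G^\ast$, and then integrate against $G^\ast$. Both kernels enter $U_K(x,y)$ only through a response integral over detection dates. On $\{y<x\}$, firm 1 responds at detection date $a\in(y,x)$, which has density $\omega e^{-\omega(a-y)}$. On $\{x<y\}$, firm 2 responds at $a\in(x,y)$, which has density $\omega e^{-\omega(a-x)}$. These are the only channels, and on $\{x=y\}$ the payoff does not involve $K$ at all. The kernels $\hat K$ and $K^\ast$ differ only on $Z$, which is Lebesgue-null, and on $\{a:N_{G^\ast}(a)=0\}=[0,t_0)$ (up to the endpoint). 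Because the detection laws are absolutely continuous, the modification on $Z$ never affects any $U_K(x,y)$. So everything reduces to detection dates $a<t_0$.

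For part (i), take $x\ge t_0$ and $y\in\supp G^\ast$, so $y\ge t_0$. On $\{y<x\}$, the firm-1 detection dates satisfy $a>y\ge t_0$, which is outside the region where $\hat K$ was changed. On $\{x<y\}$, the relevant detection dates satisfy $a>x\ge t_0$, again outside it. The endpoint $a=t_0$ is a single date and carries no mass. Hence $U_{\hat K}(x,y)=U_{K^\ast}(x,y)$ for $G^\ast$-almost every $y$, and integrating against $G^\ast$ gives (i).

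For part (ii), take $x<t_0$. Almost surely $y\ge t_0>x$, so firm 1 concedes first and firm 2 responds at some detection date $a\in(x,y)$. The dates $a\in[t_0,y)$ are unaffected by the modification. On $a\in(x,t_0)$, $\hat K$ prescribes $\chi(a)$, while $K^\ast$ prescribes $\chi_\theta(a)\le\chi(a)$. Firm 1 is frozen at $x$, and by \cref{lem:ex-down} its payoff $U(x,\cdot)$ is strictly decreasing in the rival's stopping date. Hence $U(x,\chi(a))\le U(x,\chi_\theta(a))$ for every $\theta$. One subtlety needs care: the reduced payoff $U(x;G,K)$ lets firm 1 use the same kernel $K$ as its own response. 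But firm 1's own response is only invoked on $\{y<x\}$, which is $G^\ast$-null here. Integrating over $a$ and then over $G^\ast$ gives (ii).

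For the consequences, the right side of (ii) is at most $\max_x U(x;G^\ast,K^\ast)$. That maximum is attained on $\supp G^\ast\subseteq[t_0,\infty]$ by \cref{lem:ex-fp}(i), and on this set (i) shows the two payoff functions coincide. Therefore the two maxima are equal, and every point of $\supp G^\ast$ still attains the maximum under $\hat K$.

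The main obstacle is the bookkeeping of which detection dates can occur under each ordering of $x$ and $y$. In particular, one must handle the boundary dates $a=t_0$ and $t_0\in\supp G^\ast$, possibly as an atom, and confirm that they are null events for the detection law.
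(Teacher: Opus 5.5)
Your proposal is correct and follows essentially the same route as the paper. The modification on the Lebesgue-null set $Z$, and at the single date $t_0$, is invisible because detection dates have absolutely continuous laws. For $x\ge t_0$ every triggered detection date exceeds $t_0$, so the kernels agree. For $x<t_0$ only firm 2's response is triggered, and there $\chi(a)\ge\chi_\theta(a)$ together with \cref{lem:ex-down} gives the inequality. The consequences then follow exactly as in the paper from \cref{lem:ex-fp}(i) and $\supp G^\ast\subseteq[t_0,+\infty]$.
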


\begin{proof}
The kernels $\hat K$ and $K^\ast$ differ only on the exceptional set $Z\cup\{a:N_{G^\ast}(a)=0\}$, and $\{a:N_{G^\ast}(a)=0\}\subseteq(0,t_0]$.

\noindent \underline{\emph{Part (i).}} Fix $x\ge t_0$. If $x=X_2$, both firms stop at $x$ and no response kernel is used; this is the only branch when $t_0=+\infty$. In the branch $x<X_2$ the rival's response draw is taken at $a_2=x+E_1$, whose conditional law is absolutely continuous and supported on $(x,\infty)\subseteq(t_0,\infty)$, on which $N_{G^\ast}>0$; in the branch $X_2<x$ firm 1's own response draw is taken at $a_1=X_2+E_2$, with $X_2\ge t_0$ almost surely, so $a_1>t_0$ almost surely. In both strict branches the detection date avoids the Lebesgue-null set $Z$ and the point $t_0$ almost surely, so the realized responses under $\hat K$ and $K^\ast$ coincide almost surely, and the payoffs are equal.

\noindent \underline{\emph{Part (ii).}} Fix $x<t_0$. Since $X_2\ge t_0>x$ almost surely, only the branch $x<X_2$ occurs: firm 1 concedes at $\tau_1=x$, and firm 1's own responses are never triggered. The rival stops at $\tau_2=X_2\ge t_0>x$ or at $\tau_2=\chi_{\theta_2}(a_2)\ge a_2>x$. The detection date $a_2=x+E_1$ has an absolutely continuous distribution, so it belongs to the null set $Z$ with probability zero. A modification on $Z$ therefore has no effect on the expected payoff. At any remaining date where the kernels differ, $N_{G^\ast}(a_2)=0$ and the response under $\hat K$ is $\chi_1(a_2)=\chi(a_2)\ge\chi_\theta(a_2)$ for every $\theta$. Holding all other randomness fixed, the rival therefore never stops earlier under $\hat K$ than under $K^\ast$ on this event. By \cref{lem:ex-down}, a later rival stop lowers the payoff of a firm that has already stopped, so taking expectations gives $U(x;G^\ast,\hat K)\le U(x;G^\ast,K^\ast)$. By (i), $U(\cdot;G^\ast,\hat K)$ and $U(\cdot;G^\ast,K^\ast)$ agree on $[t_0,+\infty]$, a set that contains $\supp G^\ast$ and on which the latter function attains its global maximum (\cref{lem:ex-fp}(i), with $\supp G^\ast\neq\emptyset$); by (ii) the modification does not raise the value anywhere below $t_0$. Hence the maximum is unchanged and is still attained on $\supp G^\ast$.
\end{proof}

\subsection{Dynamic consistency}\label{app:ex-dyn}

The next lemma asks whether an initially optimal concession date remains optimal after the firm reaches date $t$ without a signal. It does. The reason is that the initial payoff equals a term already determined before $t$ plus a positive multiple of the continuation payoff at $t$. Thus the two problems rank every future concession date in the same way. Fix a rival reduced strategy $(G,K)$, let firm 1 respond per the same kernel $K$, and for $\sigma\in[t,+\infty]$ let $U(\sigma;G,K)$ be as above. Let $N_t$ denote the event that firm 1 has received no signal by $t$ (under any concession plan $\sigma\ge t$, this event does not depend on the plan), and let $\widetilde U_t(\sigma)$ denote firm 1's expected continuation payoff, discounted to $t$, at the no-signal history at $t$, under the plan ``concede at $\sigma$ absent a signal, respond per $K$.''

\begin{lemma}[Conditional optimality at no-signal histories]\label{lem:ex-onestep}
For every $t\ge0$: $m(t):=\Pr(N_t)>0$, and there is a constant $A_t$, independent of the plan, such that
\[
U(\sigma;G,K)=A_t+\alpha(t)\,m(t)\,\widetilde U_t(\sigma)
\qquad\text{for every }\sigma\in[t,+\infty].
\]
Consequently $\argmax_{\sigma\in[t,+\infty]}\widetilde U_t(\sigma)=\argmax_{\sigma\in[t,+\infty]}U(\sigma;G,K)$, and the same identity extends to plans that mix over $\sigma\ge t$, by linearity.
\end{lemma}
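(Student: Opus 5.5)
We will prove the identity by splitting the initial payoff $U(\sigma;G,K)$ at date $t$ according to whether the event $N_t$ occurs, and then showing that only the $N_t$ branch depends on the plan $\sigma\ge t$.

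\emph{First, $m(t)>0$.} By \cref{lem:ex-belief}(ii), the finite first-signal date has the defective density $\omega e^{-\omega a}N_G(a)$. Hence
\[
m(t)=1-\int_0^t\omega e^{-\omega a}N_G(a)\,\de a .
\]
Since $N_G(a)\le e^{\omega a}$, the integrand is at most $\omega$. To get a strict bound we compute directly. Conditional on $X_2=y\le t$, the no-signal probability is $e^{-\omega(t-y)}>0$; conditional on $X_2>t$, it equals one. Therefore
\[
m(t)=\int e^{-\omega(t-y)^+}G(\de y)\ \ge\ e^{-\omega t}>0 .
\]
Neither expression involves firm 1's plan, because firm 1 remains active on $[0,t]$ under every $\sigma\ge t$.

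\emph{Next, decompose $U(\sigma;G,K)$ into the flow before $t$ and the continuation after $t$.} There are three cases.
\begin{itemize}
\item[(a)] \emph{Flow earned on $[0,t)$.} Firm 1 is active throughout this interval. Its flow depends only on $X_2$, on firm 2's stop, and on the disaster hazard (which runs because firm 1 is active, so the frontier equals $s$). None of these depends on $\sigma$.
\item[(b)] \emph{Continuation from $t$ on $N_t^c$.} A signal arrived at some $a\le t$, before firm 1's concession date. Firm 1 then switches to its response plan drawn from $K(a,\cdot)$, which does not depend on $\sigma$. The exception is a concession exactly at $\sigma=t$ tying with a signal at $t$, which is a null event because detection dates are atomless. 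Hence this continuation is also plan-independent.
\item[(c)] \emph{Continuation from $t$ on $N_t$.} The time-$t$ discounted survival factor is $e^{-rt-\int_0^t\lambda}=\alpha(t)$; by \cref{lem:ex-belief}(iii) the hazard is unaffected by the rival's stop while firm 1 is active. The expected continuation conditional on $N_t$ and survival is, by definition, $\widetilde U_t(\sigma)$, because the conditional law of the rival's state given $N_t$ is the Bayes posterior of \cref{lem:ex-belief}(i), which is also plan-independent.
\end{itemize}
Setting $A_t$ equal to the sum of the terms in (a) and (b) gives
\[
U(\sigma;G,K)=A_t+\alpha(t)\,m(t)\,\widetilde U_t(\sigma).
\]
Because $\alpha(t)m(t)>0$, the two objectives rank every $\sigma\in[t,+\infty]$ identically, so their argmax sets coincide. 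For a mixed plan over $\sigma\ge t$, integrate the identity against the mixing distribution.

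\emph{Main obstacle.} The delicate step is (c): justifying that the continuation evaluated at the no-signal history, with firm 1's posterior over whether and when the rival stopped and over the pending detection clock, is exactly the conditional expectation of the date-zero continuation given $N_t$. We will argue this by the law of iterated expectations on the product space of $(X_2,E_2,\theta_1,\theta_2,E_1)$. The key property is that $N_t$ is measurable with respect to $(X_2,E_2)$, which is independent of firm 1's draws. The memorylessness of $E_1$ is not needed, because firm 1 has not yet stopped.
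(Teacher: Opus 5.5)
Your proposal is correct and follows essentially the paper's route: split the payoff on $N_t$ versus $N_t^c$, note that the pre-$t$ flow and the whole $N_t^c$ branch do not depend on $\sigma$, and identify the post-$t$ term on $N_t$, after conditioning, with $\alpha(t)\,m(t)\,\widetilde U_t(\sigma)$. Your bound $m(t)\ge e^{-\omega t}$ is a slightly sharper version of the paper's ``two terms not both null.'' One small correction: in (a), firm 1 need not be active on all of $[0,t)$ on $N_t^c$, because it may respond to an early signal before $t$; the conclusion still holds since the entire path on $N_t^c$ is plan-independent, which is why the paper puts the whole $N_t^c$ payoff into $A_t$ at once.
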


\begin{proof}
Since firm 1 has not stopped before $t$ under any plan $\sigma\ge t$, firm 2 is unsignalled and stops, if at all, at its concession date $X_2\sim G$; the first signal to firm 1 arrives at $X_2+E_2$. Thus $N_t=\{X_2>t\}\cup\{X_2\le t,\ X_2+E_2>t\}$, with
\[
m(t)=G\big((t,+\infty]\big)+\int_{[0,t]}e^{-\omega(t-y)}\,G(\de y)\ >\ 0 ,
\]
the two terms not both null. Decompose $U(\sigma;G,K)=\Ex[U\,\mathbf 1_{N_t^c}]+\Ex[U\,\mathbf 1_{N_t}]$. On $N_t^c$ the first signal arrives at some $a\le t$ and firm 1 stops at $\chi_{\theta_1}(a)$ while firm 2 stopped at $X_2$: the realized pair, hence the payoff, does not depend on $\sigma$. On $N_t$, firm 1 is active at $t$, so $\tau_1\ge t$ and the frontier equals $s$ for all $s\le t$; splitting the payoff integral at $t$,
\begin{align*}
    U(\tau_1,\tau_2)\,\mathbf 1_{N_t}
&=\underbrace{\mathbf 1_{N_t}\int_0^{t}\alpha(s)\,\pi\big(s,\ s\wedge X_2\big)\de s}_{\text{independent of }\sigma}
 \\ 
&\quad + \mathbf 1_{N_t}\,\alpha(t)\times\big(\text{continuation flows discounted to }t\big).
\end{align*}
Collecting the $\sigma$-independent pieces into $A_t$ and conditioning the last term on $N_t$ gives the display: the conditional law of $(X_2,E_1,E_2,\theta_1,\theta_2)$ given $N_t$ is the distribution of the primitives at the no-signal history at $t$, so the conditional expectation of the continuation flows is $\widetilde U_t(\sigma)$ by definition.
\end{proof}

The final lemma treats no-signal histories after the latest date at which the rival might concede. Such histories can occur only after the firm has itself deviated from its plan.

\begin{lemma}[After the latest possible concession date]\label{lem:ex-memoryless}
Let $G\in\mathcal G$ with $T_G:=\sup\supp G<\infty$, and fix a history at which firm 1 is active at some $t>T_G$ against a rival playing $(G,K)$, and at which firm 1 has either received no signal, or received its first signal at a date $a>T_G$. The conclusions below also hold at $t=T_G$ whenever $G(\{T_G\})=0$. Under this no-atom condition, they also hold when the first signal arrives  at $T_G$, because the rival stopped strictly before $T_G$ with probability one. Then:
\begin{itemize}
\item[(i)] the rival has stopped, and the Bayes posterior over its (frozen) stopping date is
\[
\hat\mu(\de y)\ \propto\ e^{\omega y}\,G(\de y)\quad\text{on }[0,T_G],
\]
at every such history, independently of $t$ and of the signal date, if any;
\item[(ii)] in particular, signal arrivals after $T_G$ carry no information: $\nu^G_a=\hat\mu$ for every $a>T_G$; the same equality holds at $a=T_G$ under the endpoint condition $G(\{T_G\})=0$;
\item[(iii)] the continuation problem is the deterministic stopping problem with objective
$
w(\cdot\,;t,\hat\mu).
$
This objective is well defined because $\hat\mu$ is supported on $[0,T_G]\subseteq[0,t]$. Its maximum is attained on $\mathcal T(t)\subseteq[t,\,t\vee\tau^R]$. Let $y^\ast(t)$ be the earliest maximizing date. This rule is time-consistent: if the firm plans at $t$ to stop at $y^\ast(t)$ and reaches an intermediate date $t'\le y^\ast(t)$, it still chooses the same stopping date. Ignoring later signals and following this rule is optimal at every such history. When $t\ge\tau^R$, $y^\ast(t)=t$, so the firm stops immediately.
\end{itemize}
\end{lemma}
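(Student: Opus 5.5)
The plan is to compute the posterior directly from \cref{lem:ex-belief} and then reduce the continuation problem to the single-agent problem of \cref{lem:ex-window}.

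\emph{Part (i).} At a no-signal history at $t>T_G$, firm 1 is active, so the rival has received no signal and stops, if at all, at $X_2\sim G$. Since $G$ puts no mass above $T_G<t$, the rival has stopped almost surely. By \cref{lem:ex-belief}(i), the posterior is proportional to $e^{-\omega(t-y)}G(\de y)$ on $[0,t]$. The factor $e^{-\omega t}$ is common to all $y$ and cancels on normalization, leaving $\hat\mu\propto e^{\omega y}G(\de y)$ on $[0,T_G]$, whatever $t$ is.

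\emph{Part (ii).} At a first signal at $a>T_G$, \cref{lem:ex-belief}(ii) gives the posterior $\nu^G_a\propto e^{\omega y}\mathbf 1\{y\le a\}G(\de y)$. The indicator equals one on $\supp G$, so $\nu^G_a=\hat\mu$; later signals therefore change nothing. At the endpoint $t=T_G$ or $a=T_G$ with $G(\{T_G\})=0$, the same formulas apply. The only difference is the boundary point $y=T_G$, which is $G$-null, so the rival stopped strictly before $T_G$ almost surely.

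\emph{Part (iii).} Because the rival is frozen at a date distributed according to $\hat\mu$, and this belief never changes, the payoff from stopping at $y\ge t$ is exactly $w(y;t,\hat\mu)$. The frontier is firm 1's own technology, so the hazard runs only while firm 1 is active. Plans that react to later signals cannot do better than the best deterministic date, since those signals carry no information by (ii) and payoffs are linear in mixtures. \cref{lem:ex-window}(iii) then gives a compact nonempty argmax contained in $\mathcal T(t)=[t,\chi(t)]$, which lies in $[t,t\vee\tau^R]$ by part (i) of that lemma.

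Time consistency is the step I expect to be the main obstacle, because earliest maximizers must be shown to persist rather than merely some maximizer. The tool is the splitting identity \eqref{eq:ex-split}: for $t\le t'\le y$, $w(y;t,\hat\mu)$ equals a $y$-independent constant plus the positive multiple $\alpha(t')/\alpha(t)$ of $w(y;t',\hat\mu)$. Hence on $[t',+\infty]$ the two objectives rank dates identically. If $t'\le y^\ast(t)$, then $y^\ast(t)$ maximizes over $[t',\infty]\subseteq[t,\infty]$. Any earlier maximizer on $[t',\infty]$ would also be a maximizer on $[t,\infty]$, since its value equals the global maximum, contradicting minimality. So $y^\ast(t')=y^\ast(t)$. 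Finally, when $t\ge\tau^R$, $\mathcal T(t)=\{t\}$, so $y^\ast(t)=t$ and the firm stops immediately.
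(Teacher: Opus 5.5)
Your proposal is correct and follows essentially the same route as the paper. The posterior comes from \cref{lem:ex-belief}, with the common factor $e^{-\omega t}$ and the indicator $\mathbf 1\{y\le a\}$ dropping out; the continuation problem reduces to maximizing $w(\cdot\,;t,\hat\mu)$, with \cref{lem:ex-window}(i),(iii) placing the argmax in $\mathcal T(t)$; and time-consistency comes from the splitting identity \eqref{eq:ex-split}. Your explicit minimality argument for why the \emph{earliest} maximizer persists is slightly more careful than the paper's one-line appeal to the two objectives sharing maximizers on $[t',+\infty]$.
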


\begin{proof}
\noindent \underline{\emph{Parts (i)--(ii).}} Let $X_2\sim G$ denote firm 2's concession date. Since $T_G<\infty$, we have $X_2\le T_G<t$ almost surely, so the rival has stopped. By \cref{lem:ex-belief}(i), the no-signal posterior on $[0,t]$ is proportional to $e^{-\omega(t-y)}G(\de y)\propto e^{\omega y}G(\de y)$, and by \cref{lem:ex-belief}(ii) the posterior upon a first signal at $a>T_G$ is $\nu^G_a\propto e^{\omega y}\mathbf 1\{y\le a\}G(\de y)=e^{\omega y}G(\de y)$: both equal $\hat\mu$ after normalization, for every $t,a>T_G$. (As recorded in \cref{lem:ex-belief}(iii), firm 1's own past behavior does not enter this inference.)

\noindent \underline{\emph{Part (iii).}} Since the rival is frozen at $c\sim\hat\mu$ with $c\le T_G<t$ and no future observation changes $\hat\mu$, the continuation payoff of any (possibly signal-contingent) plan is that of some distribution over stopping dates $y\ge t$ evaluated through $w(\cdot\,;t,\hat\mu)$; it therefore suffices to maximize the latter. By \cref{lem:ex-window}(i),(iii), the maximum is attained on $\mathcal T(t)=[t,\chi(t)]\subseteq[t,\,t\vee\tau^R]$, and for $t\ge\tau^R$ we have $\mathcal T(t)=\{t\}$, so $y^\ast(t)=t$. Time-consistency is the splitting identity \eqref{eq:ex-split}: $w(\cdot\,;t,\hat\mu)$ and $w(\cdot\,;t',\hat\mu)$ share their maximizers on $[t',+\infty]$, so if $y^\ast(t)\ge t'$ then $y^\ast(t')=y^\ast(t)$. Optimality at every such history follows since the prescription attains the maximum there.

\noindent \underline{\emph{Endpoint.}} If $t=T_G$ and $G(\{T_G\})=0$, the rival stopped strictly before $t$ almost surely. The no-signal belief on $[0,T_G]$ is again $\hat\mu(\de y)\propto e^{\omega y}G(\de y)$, and the same formula holds after any signal at $a\ge T_G$. When $a=T_G$, every possible rival stopping date already lies below $a$, so the signal formula removes no possible date. The continuation objective is therefore $w(\cdot\,;t,\hat\mu)$, and parts (i)--(iii) follow by the arguments just given.
\end{proof}

\subsection{Proof of \cref{prop:existence}}\label{app:ex-proof}

\begin{proof}[Proof of \cref{prop:existence}]
Fix $(G^\ast,K^\ast)$ from \cref{lem:ex-fp}, let $\hat K$, $t_0$, $Z$ be as in \cref{app:ex-swap}, and write $T^\ast:=\sup\supp G^\ast\in[0,+\infty]$. Define the following symmetric profile for both firms:
\begin{itemize}
\item[(a)] \emph{No-signal plan.} Draw a concession date $X_i$ from $G^\ast$ (\cref{lem:rep}). If the firm reaches date $t$ without a signal and $X_i$ is still in the future, it keeps that candidate. Conditional on reaching $t$ without a signal while following this plan, its candidate has distribution $G^\ast$ conditional on $[t,+\infty]$. If the firm deviated by continuing past $X_i$, it uses a fresh draw from this same conditional distribution whenever $G^\ast([t,+\infty])>0$. If that probability is zero, all possible rival concession dates have passed, and the firm stops at $y^\ast(t)$ from \cref{lem:ex-memoryless}(iii). Keeping a surviving candidate, or drawing once and then keeping the new candidate, gives condition \textup{(DC)}.
\item[(b)] \emph{Response plan.} Upon a first signal at $a$, stop immediately if $a\ge\tau^R$. If $0<a<\tau^R$, draw $\theta$ from $\hat K(a,\cdot)$ and stop at $\chi_\theta(a)$. At the impossible history $a=0$, stop at $\chi(0)$. Let $H_a$ denote the resulting distribution of response dates and let $\mu_a\in\Delta([0,a])$ be the belief held after the signal. The firm keeps its realized response date while that date remains in the future. If it deviates by continuing past that date to some $s$, it uses a fresh draw from $H_a$ conditional on $[s,+\infty]$ whenever this event has positive probability. If no candidate in $H_a$ lies at or after $s$, it stops at
$
\min\argmax_{y\in[s,+\infty]}w(y;s,\mu_a).$ \cref{lem:ex-window}(iii) guarantees that this set of optimal dates is nonempty and compact. Equation \eqref{eq:ex-split} shows that a surviving candidate remains optimal and that the earliest optimal fallback remains unchanged as time passes. Thus the response plan also satisfies \textup{(DC)}.
\end{itemize}
All these rules vary measurably with the observed history. The conditional distributions are Borel, and the measurable maximum theorem used in \cref{lem:ex-meas} implies that the earliest optimal dates above can also be chosen as Borel functions. \cref{lem:rep} then implements each required distribution with a uniform draw.

\emph{Beliefs.}
At every no-signal history at date $t$, use the posterior in \cref{lem:ex-belief}(i), treating $X_j\le t$ as a past stop and $X_j>t$ as an active rival. By \cref{lem:ex-belief}(iii), the same posterior applies after histories created by the firm's own deviations.

At every detection date $a$ with $N_{G^\ast}(a)>0$, use the Bayes posterior $\nu^{G^\ast}_a$, including dates in the null set $Z$. The fixed-point condition already makes $K^\ast$ optimal outside $Z$, while the measurable selection $\theta^\ast(a)$ makes $\hat K$ optimal on $Z$. If $N_{G^\ast}(a)=0$, a signal at $a$ is impossible under $G^\ast$, so Bayes' rule is undefined there. We then assign $\delta_a$, under which the prescribed response $\chi(a)$ is optimal. At the impossible history $a=0$, we likewise assign $\delta_0$ and prescribe $\chi(0)$.

We now verify that no firm can gain after any history. As usual, it is enough to consider deviations that use no new random draw. By \cref{lem:ex-window}(iii), we need consider only response dates in $\mathcal T(\cdot)$, because every date outside that interval is strictly worse at the history where it is chosen.

\emph{Signal histories.} At $a\ge\tau^R$: by \cref{lem:ex-window}(iii), $\mathcal T(a)=\{a\}$ and $w(\cdot;a,\nu)$ is strictly decreasing beyond $a$ for \emph{every} belief $\nu\in\Delta([0,a])$, so immediate stopping is optimal regardless of the belief held. At $a<\tau^R$ with $N_{G^\ast}(a)>0$ and $a\notin Z$: the prescribed response mixes over $\argmax_y w(y;a,\nu^{G^\ast}_a)$ by \cref{lem:ex-fp}(ii) and the definition of $Z$, hence is optimal under the (Bayes) belief. At $a<\tau^R$ with $N_{G^\ast}(a)>0$ and $a\in Z$, the prescribed response $\chi_{\theta^\ast(a)}(a)$ is optimal under the same Bayes belief by the definition of $\theta^\ast(a)$. If $N_{G^\ast}(a)=0$, the assigned belief is $\delta_a$ and the prescribed response $\chi(a)$ is uniquely optimal (\cref{lem:ex-window}(iv)). Finally, at histories strictly between a signal at $a$ and the prescribed stop $y$: no further signals arrive (the rival has already stopped) and survival is uninformative (\cref{lem:ex-belief}(iii)), so the belief is unchanged. By \eqref{eq:ex-split}, moving from $a$ to an intermediate date $s$ only adds a constant and multiplies later payoffs by a positive number. It does not change which remaining stopping date is best, so continuing to $y$ stays optimal. At histories at some $s$ strictly past the prescribed stop---reached only by the firm's own deviations---the belief is still unchanged: no signal has intervened (the rival is stopped), survival is uninformative, and the firm's own play is uninformative about its rival (\cref{lem:ex-belief}(iii)). If $H_a$ assigns positive probability to dates at or after $s$, draw a new response date from that part of $H_a$. Every date selected this way was optimal at $a$ and remains optimal at $s$ by \eqref{eq:ex-split}. If $H_a$ assigns no probability to such dates, choose the earliest date that maximizes $w(\cdot\,;s,\mu_a)$. This choice is optimal, and the same equation shows that both rules remain optimal as time passes.

\textbf{No-signal histories with $G^\ast([t,+\infty])>0$.}
A continuation deviation consists of a (possibly mixed) concession
date $\sigma\ge t$ on the no-signal history, together with responses
at any future signal histories that may arise.

We treat responses first. Whenever a finite first signal arrives, its
date satisfies $a_1=X_2+E_2>t_0$ and its law is absolutely
continuous; if $X_2=+\infty$, no signal arrives
(\cref{lem:ex-belief}(ii)). Every such finite signal date satisfies
$N_{G^\ast}(a_1)>0$, so the posterior at $a_1$ is
$\nu^{G^\ast}_{a_1}$ by \cref{lem:ex-belief}(iii), irrespective of
the deviator's own behavior. The prescribed response is
optimal at every such date: $K^\ast$ is optimal outside $Z$, and
$\theta^\ast(a_1)$ is optimal on $Z$. Response deviations therefore
cannot raise the continuation value.

It remains to compare the payoffs of the concession dates. By \cref{lem:ex-onestep},
applied with the rival's strategy $(G^\ast,\hat{K})$ and own
responses $\hat{K}$, we have
\[
  \widetilde{U}_t(\sigma)
  =\frac{U(\sigma;G^\ast,\hat{K})-A_t}{\alpha(t)\,m(t)}
  \qquad\text{for all }\sigma\in[t,+\infty],
\]
so maximizing the continuation value over $\sigma\ge t$ is equivalent
to maximizing $U(\,\cdot\,;G^\ast,\hat{K})$ over $[t,+\infty]$. The
prescribed rule continues according to
$G^\ast(\,\cdot\mid[t,+\infty])$, whose support is contained in
\[
  \supp G^\ast\cap[t,+\infty]
  \;\subseteq\;
  \argmax_{x\in\overline{\mathbb{R}}_+}U(x;G^\ast,\hat{K})
  \cap[t,+\infty],
\]
by \cref{lem:ex-swap}. Every date selected by the prescription
therefore attains the global maximum of
$U(\,\cdot\,;G^\ast,\hat{K})$, and hence also the maximum restricted
to dates no earlier than~$t$, so the specified dates are optinmal

\emph{No-signal histories with $G^\ast([t,+\infty])=0$.} Here $T^\ast<\infty$ and either $t>T^\ast$, or $t=T^\ast$ with $G^\ast(\{T^\ast\})=0$. By \cref{lem:ex-memoryless}, the rival has already stopped, the belief is $\hat\mu$ both before and after any later signal, and stopping at $y^\ast(t)$ is optimal. At every later feasible detection date for which Bayes' rule is defined, the posterior equals $\hat\mu$. The response plan is optimal under this belief: the fixed-point kernel supplies an optimal response outside $Z$, and the measurable selection supplies one on $Z$. The time-consistency result in \cref{lem:ex-memoryless}(iii) then completes the strategy. No deviation improves on it.

Histories at which the firm has already stopped involve no further decisions. Thus this profile, together with the specified beliefs, is a perfect Bayesian equilibrium of the game with news rate $\omega$, and $\PBE(\omega)\neq\emptyset$.
\end{proof}

\section{Extensions}\label{app:extensions}

This appendix formalizes the two risk extensions discussed in the conclusion.
We retain perfect transparency and common knowledge of rationality. The results
bound every subgame-perfect equilibrium, including mixed equilibria. The statements and proofs are written for an arbitrary finite set of firms and specialize to the two-firm model of the main text.  

\subsection{Additive Risk }\label{app:additive risk}

Keep the baseline technology law, but let risk depend on the number of firms
that remain active. For a stopping profile $\bm\tau$, let
\[
k_s(\bm\tau):=\#\{j:\tau_j>s\}.
\]
When $k_s(\bm\tau)\geq1$, every active firm is at the frontier $s$, and the
date-$s$ hazard is $\lambda(s,k_s(\bm\tau))$; when $k_s(\bm\tau)=0$, the
hazard is zero. Whether a firm that stops at $s$ enters the count is
immaterial because a single date has measure zero. Payoffs are otherwise as in
the baseline model, using this hazard process.

\begin{assumption}[additive risk regularity]\label{ass:crowded-sc}
For each $k\in\{1,\ldots,n\}$, the hazard $\lambda(\cdot,k)$ is continuous
and nondecreasing in technology. It is also nondecreasing in the number of
active firms, so in particular
\[
\lambda(t,k)\geq\lambda(t,1)
\qquad\text{for every }t\geq0\text{ and }k\geq1.
\]
Flow payoffs satisfy \cref{ass:payoffs}(ii)--(iii). We will also impose the analog of \cref{ass:payoffs} (iv): for every firm $i$ and
fixed rival technology vector $\bm c$, the map
\[
x\longmapsto D_i\log\pi(x,\bm c)-\lambda(x,1)
\]
is continuous and strictly decreasing on
$[\max_j c_j,+\infty)$. Finally, the solo-hazard analogue of
\cref{ass:payoffs} (v) holds:
\[
\int_0^\infty e^{-rs}\sup_{0\leq x\leq s}
\left\{
e^{-\int_0^x\lambda(u,1)\de u}\pi(x,0,\ldots,0)
\right\}\de s<+\infty.
\]
\end{assumption}

Say we are in the \emph{additive risk environment} when the above assumptions hold and there is perfect transparency and common knowledge of rationality.

\begin{proposition}[The solo hazard bounds a additive risk]
\label{prop:additive risk}
Suppose that we are in the additive risk environment. Then every subgame-perfect equilibrium satisfies:
\[
\max_{i\in\mathcal N}\tau_i
\leq
\inf\left\{
t\geq0:
D_i\log\pi(t,\ldots,t)\leq\lambda(t,1)
\right\}
\qquad\text{almost surely}. 
\]
\end{proposition}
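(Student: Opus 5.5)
Let $\tau^S:=\inf\{t\ge0:D_i\log\pi(t,\ldots,t)\le\lambda(t,1)\}$ denote the solo-hazard ceiling. The plan is to rerun the backward-induction argument of \cref{prop:bounds}. We induct on the number of active firms at histories dated $\tau^S$. The base case is a single active firm; it faces the solo hazard $\lambda(\cdot,1)$, so the old single-agent calculation carries over exactly. The inductive step uses monotonicity of $\lambda(t,k)$ in $k$ to show that additional active rivals only make continuation worse.

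\textbf{Step 1: the solo problem.} If $\tau^S=+\infty$ there is nothing to prove, so assume $\tau^S<\infty$. Continuity gives $D_i\log\pi(\tau^S,\ldots,\tau^S)\le\lambda(\tau^S,1)$ at the crossing, exactly as in \cref{lem:R-single-peaked}. Now fix $x_i\ge\tau^S$ and rival technologies $\bm c\le\tau^S$ componentwise. Log-supermodularity, applied coordinatewise, gives $D_i\log\pi(x_i,\bm c)\le D_i\log\pi(x_i,\tau^S,\ldots,\tau^S)$. The single-crossing part of \cref{ass:crowded-sc}, with rivals held at $\tau^S$, then gives $D_i\log\pi(x_i,\tau^S,\ldots,\tau^S)-\lambda(x_i,1)\le0$, with strict inequality for $x_i>\tau^S$.

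Consider a sole active firm facing rivals frozen at $\bm c$. Its payoff from stopping at $\sigma\ge\tau^S$ has derivative
\[
\frac{\alpha_1(\sigma)}{\alpha_1(\tau^S)}\frac{\pi(\sigma,\bm c)}{r}\big[D_i\log\pi(\sigma,\bm c)-\lambda(\sigma,1)\big],
\]
where $\alpha_1$ is the discount–survival kernel built from $\lambda(\cdot,1)$. This is the computation of \cref{lem:R}. The derivative is negative, so the firm's unique optimum is to stop at $\tau^S$. The integrability assumption in \cref{ass:crowded-sc} guarantees that the payoffs are finite, and that continuity holds as in \cref{lem:cont}.

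\textbf{Step 2: induction on the number of active firms.} Take a history at date $\tau^S$ with $m\ge2$ active firms and all stopped firms at or below $\tau^S$. Suppose, toward a contradiction, that the continuation carries the frontier beyond $\tau^S$ with positive probability. Then some active firm $i$ remains active past $\tau^S$ with positive probability.

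Let firm $i$ deviate by stopping at $\tau^S$. By the inductive hypothesis for $m-1$ active firms, every remaining firm stops at $\tau^S$. The deviation therefore yields the safe value $\pi(\tau^S,\bm c')/r$, where $\bm c'\le\tau^S$ collects the rivals' technologies.

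Now bound firm $i$'s equilibrium payoff path by path, conditional on $\tau_i=\sigma>\tau^S$. Three comparisons apply:
\begin{itemize}
\item Rivals' technologies are weakly higher than $\bm c'$, and $D_j\pi<0$, so flow profit is weakly lower.
\item While $i$ is active, $k_s\ge1$, so the hazard is at least $\lambda(s,1)$.
\item After $i$ stops, the hazard is nonnegative, whereas the frozen-rival benchmark has zero hazard then.
\end{itemize}
Together these show the equilibrium payoff is at most the solo frozen-rival value of stopping at $\sigma$. By Step 1, that value is strictly below $\pi(\tau^S,\bm c')/r$. On paths where $i$ stops at $\tau^S$ anyway, the payoff is again at most the frozen value, which cannot exceed the deviation payoff. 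Integrating over paths, with strict loss on a positive-probability set, makes the deviation strictly profitable. This contradicts sequential rationality.

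Finally, as in the last paragraph of \cref{prop:bounds}, an equilibrium with $\max_i\tau_i>\tau^S$ on a positive-probability event would reach such a date-$\tau^S$ history with positive probability, which is a contradiction.

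\textbf{Main obstacle.} I expect the path-by-path domination in Step 2 to be the delicate point. The hazard process now depends on the whole profile of active firms, not only on the frontier. So I must check carefully that $\lambda(s,k_s)\ge\lambda(s,1)\mathbf 1\{s\le\tau_i\}$ for all $s\ge\tau^S$, where the right side is the hazard in the solo benchmark. Monotonicity in $k$ delivers this while $i$ is active, and nonnegativity delivers it once $i$ has stopped.

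I also need the comparison to hold for mixed equilibria and for ties at $\tau^S$. For these I would use the continuity of payoffs, as in \cref{lem:cont}, so that a simultaneous stop at $\tau^S$ is valued the same as stopping immediately after.
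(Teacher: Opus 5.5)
Your proposal is correct and follows the paper's proof essentially step for step. You solve the last-active firm's frozen-rival problem under the solo hazard $\lambda(\cdot,1)$, then induct on the number of active firms at date-$\tau^S$ histories using the deviation ``stop at $\tau^S$.'' That deviation is compared against the same path-by-path domination the paper uses: weakly higher rival technology, hazard at least $\lambda(s,1)$ while firm $i$ is active, and nonnegative hazard after it stops.
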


\begin{proof}
We first solve the last-active firm's problem, then induct on the number of
active firms. Let $\tau^C$ denote the infimum in the statement. If
$\tau^C=+\infty$, the claim is immediate. Suppose $\tau^C<+\infty$.
Continuity implies
\[
D_i\log\pi(\tau^C,\ldots,\tau^C)\leq\lambda(\tau^C,1).
\]

Fix a public history at date $t$, and suppose that firm $i$ scales alone while
its rivals remain frozen at $\bm c\leq(t,\ldots,t)$. Conditional on survival
to $t$, its payoff from stopping at $\sigma\geq t$ is
\[
R^C_{t,\bm c}(\sigma)
:=
\int_t^\sigma
e^{-\int_t^s[r+\lambda(u,1)]\de u}\pi(s,\bm c)\de s
+
e^{-\int_t^\sigma[r+\lambda(u,1)]\de u}
\frac{\pi(\sigma,\bm c)}r,
\]
with the terminal term absent when $\sigma=+\infty$. At every finite
$\sigma>t$,
\[
\frac{\partial}{\partial\sigma}R^C_{t,\bm c}(\sigma)
=
e^{-\int_t^\sigma[r+\lambda(u,1)]\de u}
\frac{\pi(\sigma,\bm c)}r
\left[D_i\log\pi(\sigma,\bm c)-\lambda(\sigma,1)\right].
\]
At $t=\tau^C$, log-supermodularity and
\cref{ass:crowded-sc} give, for every $x\geq\tau^C$,
\begin{align*}
D_i\log\pi(x,\bm c)-\lambda(x,1)
&\leq
D_i\log\pi(x,\tau^C,\ldots,\tau^C)-\lambda(x,1)\\
&\leq
D_i\log\pi(\tau^C,\ldots,\tau^C)-\lambda(\tau^C,1)
\leq0,
\end{align*}
where the second inequality is strict when $x>\tau^C$. Hence
\[
R^C_{\tau^C,\bm c}(\sigma)
<R^C_{\tau^C,\bm c}(\tau^C)
=\frac{\pi(\tau^C,\bm c)}r
\qquad\text{for every }\sigma>\tau^C,
\]
including $\sigma=+\infty$, because the integrability condition makes the
terminal term vanish and the objective continuous there. Thus a last active
firm uniquely prefers to stop at $\tau^C$.

The additive risk payoff is continuous in the stopping profile. Indeed,
along any convergent sequence of stopping profiles, the active-firm count
converges at every date other than the finitely many limiting stopping dates,
so cumulative hazards converge on every bounded interval. The solo-hazard
integrability condition supplies the same dominating envelope as in
\cref{lem:cont}. Dominated convergence therefore applies. In particular,
stopping immediately after an observed stop is payoff-equivalent to stopping
simultaneously.

We now claim that no equilibrium continuation from a public history at
$\tau^C$ carries the frontier above $\tau^C$. We prove the claim by induction
on the number of active firms. The preceding calculation proves the case of
one active firm. Suppose the claim holds with fewer than $m$ active firms, and
consider a history with $m$ active firms. If an equilibrium continuation
carries the frontier above $\tau^C$ with positive probability, then, because
$m$ is finite, some active firm $i$ satisfies
\[
\Pr(\tau_i>\tau^C)>0.
\]
Let $\bm c$ be its rivals' technologies at this history. If firm $i$ stops at
$\tau^C$, the successor subgame has $m-1$ active firms. Subgame perfection and
the induction hypothesis make every remaining active firm stop immediately,
so the deviation gives firm $i$ the payoff
$R^C_{\tau^C,\bm c}(\tau^C)$.

For comparison, fix any realized equilibrium continuation and write $\sigma$
for firm $i$'s stopping date. Rival technologies are weakly above $\bm c$,
which weakly lowers firm $i$'s flow payoff. While firm $i$ remains active, the
actual hazard is at least $\lambda(s,1)$; after it stops, any remaining hazard
can only lower its payoff further. Hence, path by path, its continuation payoff
is no greater than $R^C_{\tau^C,\bm c}(\sigma)$, which is strictly below
$R^C_{\tau^C,\bm c}(\tau^C)$ whenever $\sigma>\tau^C$. The comparison holds
before averaging over equilibrium randomization. Since firm $i$ delays with
positive probability, stopping at $\tau^C$ strictly raises its expected payoff,
contradicting sequential rationality. This proves the induction claim.

If an equilibrium frontier exceeded $\tau^C$ with positive probability, its
public continuation at the finite date $\tau^C$, conditional on survival,
would contradict the claim. Therefore the frontier is at most $\tau^C$ almost
surely.
\end{proof}

\subsection{Accumulative risk}\label{app:persistent-risk}

Keep the baseline hazard $\lambda$, but suppose that risk persists at the peak
technology. For a stopping profile $\bm\tau$, let
\[
T(\bm\tau):=\max_{j\in\mathcal N}\tau_j.
\]
The date-$s$ hazard is now $\lambda(s\wedge T(\bm\tau))$. Thus firm $i$'s
payoff is
\[
\int_0^\infty
\exp\left\{-rs-\int_0^s\lambda(u\wedge T(\bm\tau))\de u\right\}
\pi\Bigl(s\wedge\tau_i,(s\wedge\tau_j)_{j\neq i}\Bigr)\de s.
\]
If $T(\bm\tau)<+\infty$, the hazard remains equal to
$\lambda(T(\bm\tau))$ after the last stop. Its survival factor is no larger
than the baseline survival factor, so \cref{ass:payoffs}(v) keeps payoffs
finite.

The marginal condition below need not be single-crossing under
\cref{ass:payoffs}. We therefore impose the analogue of \ref{ass:payoffs} (iv):

\begin{assumption}[Persistent-risk single-crossing]
\label{ass:persistent-sc}
In addition to \cref{ass:payoffs}, for every firm $i$ and fixed rival
technology vector $\bm c$, the map
\[
x\longmapsto
D_i\log\pi(x,\bm c)
-\frac{\lambda'(x)}{r+\lambda(x)}
\]
is strictly decreasing on $[\max_j c_j,+\infty)$.
\end{assumption}

Say we are in the \emph{accumulative risk environment} when the above assumptions hold and there is perfect transparency and common knowledge of rationality.

\begin{proposition}[Marginal persistent risk bounds the frontier]
\label{prop:persistent-upper}
Suppose we are in the accumulative risk environment. Then every subgame-perfect equilibrium satisfies
\[
\max_{i\in\mathcal N}\tau_i
\leq
\inf\left\{
t\geq0:
D_i\log\pi(t,\ldots,t)
\leq
\frac{\lambda'(t)}{r+\lambda(t)}
\right\}
\qquad\text{almost surely}.
\]
\end{proposition}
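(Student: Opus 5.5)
Let $\tau^P$ denote the infimum in the statement. The plan is to mirror the proofs of \cref{prop:bounds} and \cref{prop:additive risk}: first solve the problem of a last active firm facing frozen rivals, then run backward induction on the number of active firms at a public history at date $\tau^P$. If $\tau^P=+\infty$ there is nothing to prove, so assume $\tau^P<\infty$. By continuity, $D_i\log\pi(\tau^P,\ldots,\tau^P)\le\lambda'(\tau^P)/(r+\lambda(\tau^P))$.

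The first step is the sole-survivor calculation. Fix a history at date $t$ with rivals frozen at $\bm c\le(t,\ldots,t)$. Conditional on survival to $t$, stopping at $\sigma\ge t$ yields
\[
R^P_{t,\bm c}(\sigma):=\int_t^\sigma e^{-\int_t^s[r+\lambda(u)]\de u}\pi(s,\bm c)\,\de s+e^{-\int_t^\sigma[r+\lambda(u)]\de u}\frac{\pi(\sigma,\bm c)}{r+\lambda(\sigma)},
\]
because after the last stop the hazard is frozen at $\lambda(\sigma)$. Differentiating, the flow term cancels against part of the terminal term, leaving
\[
\partial_\sigma R^P_{t,\bm c}(\sigma)=e^{-\int_t^\sigma[r+\lambda]}\frac{\pi(\sigma,\bm c)}{r+\lambda(\sigma)}\Big[D_i\log\pi(\sigma,\bm c)-\frac{\lambda'(\sigma)}{r+\lambda(\sigma)}\Big].
\]
At $t=\tau^P$, log-supermodularity gives $D_i\log\pi(x,\bm c)\le D_i\log\pi(x,\tau^P,\ldots,\tau^P)$ for $x\ge\tau^P$. \cref{ass:persistent-sc}, applied with rivals fixed at $\tau^P$, then makes the bracket strictly negative for $x>\tau^P$. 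Hence stopping at $\tau^P$ is the unique optimum, including against $\sigma=+\infty$. That endpoint case uses \cref{ass:payoffs}(v) to make the terminal term vanish and the payoff continuous at infinity. I also need a continuity lemma analogous to \cref{lem:cont} for the persistent-hazard payoff; its survival factor is dominated by the baseline one, so the same dominating envelope applies.

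The second step is the induction on the number $m$ of active firms, exactly as in \cref{prop:additive risk}. If some active firm $i$ has $\Pr(\tau_i>\tau^P)>0$, the deviation "stop at $\tau^P$" leaves $m-1$ active firms. By the induction hypothesis and sequential rationality they stop immediately, so the deviator gets $R^P_{\tau^P,\bm c}(\tau^P)=\pi(\tau^P,\bm c)/(r+\lambda(\tau^P))$. This is the main obstacle: path by path, I must bound the on-path payoff by $R^P_{\tau^P,\bm c}(\sigma)$, where $\sigma$ is firm $i$'s realized stop. Rival technologies are weakly above $\bm c$, which lowers flows. The hazard up to $\sigma$ equals $\lambda(s)$ because the frontier is $s$ while $i$ is active. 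After $\sigma$, the hazard is $\lambda(s\wedge T)$ with $T\ge\sigma$, hence weakly at least $\lambda(\sigma)$ since $\lambda$ is nondecreasing. Firm $i$'s own flow is frozen at $\sigma$ and decreasing in rivals. So its post-$\sigma$ continuation is at most the perpetuity discounted at $r+\lambda(\sigma)$, which matches the terminal term of $R^P$. Thus the on-path payoff is at most $R^P_{\tau^P,\bm c}(\sigma)$, which is strictly below $R^P_{\tau^P,\bm c}(\tau^P)$ when $\sigma>\tau^P$. Averaging over equilibrium randomization makes the deviation strictly profitable, which is a contradiction. The case $m=1$ is the first step.

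Finally, if the frontier exceeded $\tau^P$ with positive probability, then conditional on survival to $\tau^P$ some public history at $\tau^P$ would have an equilibrium continuation pushing the frontier past $\tau^P$. The induction claim rules this out, which yields the bound almost surely.
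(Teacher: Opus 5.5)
Your proposal is correct and follows essentially the same route as the paper. Both arguments use the same steps: the frozen-rival sole-survivor problem with terminal value $\pi(\sigma,\bm c)/(r+\lambda(\sigma))$, strict negativity of the marginal bracket past the threshold via log-supermodularity and \cref{ass:persistent-sc}, a dominated-convergence continuity step, and backward induction on the number of active firms with a pathwise comparison to the frozen-rival benchmark. Your explicit observation that $\lambda(s\wedge T)\ge\lambda(\sigma)$ for $s\ge\sigma$ simply unpacks the paper's remark that the realized peak path lies weakly above $s\mapsto s\wedge\sigma$.
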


\begin{proof}
Again, we solve the last-active firm's problem, then induct. Let $\tau^A$ denote the infimum in the statement. If $\tau^A=+\infty$, the claim is immediate. Suppose $\tau^A<+\infty$. Continuity implies that the defining
inequality holds at $\tau^A$.

Fix a history at date $t$, and suppose that firm $i$ scales alone while its
rivals remain frozen at $\bm c\leq(t,\ldots,t)$. Conditional on survival to
$t$, its payoff from stopping at $\sigma\geq t$ is
\[
R^A_{t,\bm c}(\sigma)
:=
\int_t^\sigma
e^{-\int_t^s[r+\lambda(z)]\de z}\pi(s,\bm c)\de s
+
e^{-\int_t^\sigma[r+\lambda(z)]\de z}
\frac{\pi(\sigma,\bm c)}{r+\lambda(\sigma)},
\]
with the terminal term absent when $\sigma=+\infty$. At every finite
$\sigma>t$,
\[
\frac{\partial}{\partial\sigma}R^A_{t,\bm c}(\sigma)
=
e^{-\int_t^\sigma[r+\lambda(z)]\de z}
\frac{\pi(\sigma,\bm c)}{r+\lambda(\sigma)}
\left[
D_i\log\pi(\sigma,\bm c)
-\frac{\lambda'(\sigma)}{r+\lambda(\sigma)}
\right].
\]
At $t=\tau^A$, log-supermodularity and
\cref{ass:persistent-sc} give, for every $x\geq\tau^A$,
\begin{align*}
D_i\log\pi(x,\bm c)-\frac{\lambda'(x)}{r+\lambda(x)}
&\leq
D_i\log\pi(x,\tau^A,\ldots,\tau^A)
-\frac{\lambda'(x)}{r+\lambda(x)}\\
&\leq
D_i\log\pi(\tau^A,\ldots,\tau^A)
-\frac{\lambda'(\tau^A)}{r+\lambda(\tau^A)}
\leq0,
\end{align*}
where the second inequality is strict when $x>\tau^A$. Therefore
\[
R^A_{\tau^A,\bm c}(\sigma)
<R^A_{\tau^A,\bm c}(\tau^A)
=\frac{\pi(\tau^A,\bm c)}{r+\lambda(\tau^A)}
\qquad\text{for every }\sigma>\tau^A,
\]
including $\sigma=+\infty$ by \cref{ass:payoffs}(v). Thus a last active firm
uniquely prefers to stop at $\tau^A$.

The accumulative-risk payoff is continuous in the stopping profile. Along any
convergent sequence, the peak path $s\mapsto s\wedge T(\bm\tau)$ and its
cumulative hazard converge on every bounded interval. The persistent survival
factor is bounded above by the baseline one, so \cref{ass:payoffs}(v) supplies
an integrable envelope. Dominated convergence applies, and immediate successor
stopping is payoff-equivalent to simultaneous stopping.

We claim that no equilibrium continuation from a public history at $\tau^A$
carries the frontier above $\tau^A$. Induct on the number of active firms. The
last-active calculation proves the base case. Suppose the claim holds with
fewer than $m$ active firms, and consider a history with $m$ active firms. If
the equilibrium continuation carries the frontier beyond $\tau^A$ with
positive probability, some active firm $i$ satisfies
$\Pr(\tau_i>\tau^A)>0$. Let $\bm c$ be its rivals' current technologies. If
firm $i$ stops at $\tau^A$, the successor subgame has $m-1$ active firms; by
subgame perfection and the induction hypothesis, every remaining active firm
stops immediately. The deviation therefore gives firm $i$ the payoff
$R^A_{\tau^A,\bm c}(\tau^A)$.

Fix any realized equilibrium continuation and write $\sigma$ for firm $i$'s
stopping date. Rival technologies are weakly above $\bm c$, which weakly lowers
firm $i$'s flow payoff. Moreover, the realized peak is always weakly above the
path $s\mapsto s\wedge\sigma$ generated when firm $i$ scales alone, so the
actual persistent hazard is weakly higher than the hazard in
$R^A_{\tau^A,\bm c}(\sigma)$. Firm $i$'s realized continuation payoff is thus
no greater than this frozen-rival benchmark, which is strictly below
$R^A_{\tau^A,\bm c}(\tau^A)$ whenever $\sigma>\tau^A$, including
$\sigma=+\infty$. This comparison is pathwise and therefore covers mixed
strategies. Because firm $i$ delays with positive probability, immediate
stopping is a profitable deviation, a contradiction. This proves the
induction claim.

If an equilibrium frontier exceeded $\tau^A$ with positive probability, its
public continuation at the finite date $\tau^A$, conditional on survival,
would contradict the claim. Hence the frontier is at most $\tau^A$ almost
surely.
\end{proof}

\section{Further Results}\label{app:further}

\subsection{Tight monitoring thresholds under stationarity} We now formalize the claim in the main text that the threshold $\bar \omega$ is exact in the stationary environment we saw in 
\cref{ex:exponential,sec:trust}. To recap, flow payoffs and the hazard are
\[
\pi(t_i,t_j)=e^{\alpha t_i-\beta t_j},
\qquad
\lambda(t)\equiv\lambda,
\]
where $\alpha,\beta>0$, $\lambda>\alpha$, and $r>\alpha-\beta$. We maintain
common knowledge of rationality. The inequality $\lambda>\alpha$ places the
game in the coordination phase: once a firm knows that its rival has stopped,
it strictly prefers to stop immediately.

We call this environment \emph{stationary} because a common translation of
both technology paths multiplies every continuation payoff by the current
diagonal flow $\pi(t,t)=e^{(\alpha-\beta)t}$. Values per unit of current flow
therefore do not depend on the date or technology level. At a symmetric quiet
history, the normalized value of racing forever is
\[
V_R=\frac{1}{r+\lambda-\alpha+\beta}.
\]
The normalized value of stopping first is
\[
V_S(\omega)
=V_S(0)
+\frac{\omega\big[1/r-V_S(0)\big]}
{\omega+r+\lambda+\beta},
\qquad
V_S(0)=\frac{1}{r+\lambda+\beta}.
\]
Thus
\[
\Delta(t,\omega)
=\pi(t,t)\big[V_S(\omega)-V_R\big],
\]
and the bracketed difference is positive exactly when
\[
\omega>\bar\omega
:=\frac{r\alpha}{\lambda-\alpha+\beta}.
\]

\begin{proposition}[Exact racing-to-ruin threshold in the stationary environment]
\label{prop:exact-threshold}
In the stationary environment:
\begin{itemize}
\item[(i)] If $\omega\le\bar\omega$, there is an equilibrium in which both
firms race forever, so ruin occurs with probability one.
\item[(ii)] If $\omega>\bar\omega$, every
$\bm\tau\in\PBE(\omega)$ satisfies
$\max_i\tau_i<+\infty$ almost surely.
\end{itemize}
\end{proposition}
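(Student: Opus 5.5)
For part (i), I would reuse the construction in the proof of \cref{prop:never-stop} verbatim. Both firms race forever at every no-signal history. After a first signal at $a$, the active firm believes the rival stopped at $a$ and stops immediately. Immediate stopping is the unique optimum for any belief, because $\lambda>\alpha$ makes $D_i\log\pi-\lambda=\alpha-\lambda<0$ everywhere. The only place that proof used $\omega<\bar\omega$ was to get $\Delta(\sigma,\omega)\le 0$ at every date. In the stationary environment $\Delta(\sigma,\omega)=\pi(\sigma,\sigma)\,[V_S(\omega)-V_R]$, and $V_S(\omega)$ is increasing in $\omega$ (\cref{lem:mono}(a)) with $V_S(\bar\omega)=V_R$ by the explicit formula. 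So $\Delta\le0$ also holds at $\omega=\bar\omega$. \cref{lem:rec} then bounds every finite-stop deviation by $V_R(t)+\frac{\alpha(\sigma)}{\alpha(t)}\Delta(\sigma,\omega)\le V_R(t)$, exactly as before, which closes the case $\omega=\bar\omega$.

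For part (ii), fix $\omega>\bar\omega$ and an equilibrium, and set $\eta:=V_S(\omega)-V_R>0$. First I reduce the event $\{\max_i\tau_i=+\infty\}$ to spontaneous behaviour. If either firm stops at a finite date, the exponential clock makes detection finite almost surely, and the rival then stops at once by dominance. Let $S_j(t)$ be the probability that firm $j$'s no-signal concession date exceeds $t$, and let $s_j:=\lim_{t\to\infty}S_j(t)$. Because the devices are independent, $\Pr(\max_i\tau_i=+\infty)=s_1s_2$. It therefore suffices to show that $s_1>0$ forces $s_2=0$.

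So suppose $s_1>0$ and consider firm $2$ at a quiet history at a late date $t$. Its posterior, as in \cref{lem:ex-belief}(i), splits into two cells: the rival is still active (weight $\propto S_1(t)$), or the rival stopped at some $c\le t$ and has not yet been detected. On the stopped-undetected cell, stopping immediately is optimal by dominance, so that cell can only favour the deviation ``stop now''. On the active cell, the rival's residual concession probability is $(S_1(t)-s_1)/S_1(t)\to0$. Conditionally, the rival therefore behaves like a racer that never stops, up to a vanishing event. Against such a racer, \cref{lem:rec} in normalized form gives the value of any plan ``race to $\sigma$, then stop'' as $(1-e^{-\delta\sigma})V_R+e^{-\delta\sigma}V_S(\omega)$, where $\delta=r+\lambda-\alpha+\beta$. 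This equals $V_S(\omega)-(1-e^{-\delta\sigma})\eta$. Normalized payoffs are bounded by $1/r$ on the active cell, so the rival's residual concessions can add at most $o(1)$. Hence, for $t$ large, any continuation plan that delays past $t+w$ with positive probability is strictly worse than stopping at $t$. Sequential rationality then forces firm $2$ to stop by $t+w$ almost surely on its quiet path, which gives $s_2=0$.

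The main obstacle is making this comparison rigorous uniformly over mixed continuation plans. The plan is to compare cell by cell, as in the proof of \cref{prop:LMH}(iii), with $\ell=\infty$. On the stopped cell I use the pathwise dominance of stopping immediately, which avoids having to control the factor $e^{\beta(t-c)}$ when $\omega<\beta$. On the active cell I bound the gain from the rival's late concessions by $\frac{1}{r}\cdot(S_1(t)-s_1)/S_1(t)$, which vanishes as $t\to\infty$.
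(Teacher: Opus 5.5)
Your proposal is correct and follows the paper's argument. Part (i) reuses the never-stop assessment with $\Delta\le0$ extended to $\omega=\bar\omega$, exactly as the paper does. Part (ii) has the same ingredients as the paper: the reduction of an infinite frontier to joint racing, the cell decomposition with pathwise dominance of stopping on the stopped-undetected cell, a vanishing residual-concession weight bounded by $1/r$, and the gain $V_S(\omega)-V_R$ against a never-stopping rival.

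The only difference is in which deviation you rule out. The paper assumes $\Pr(\tau_1=\tau_2=+\infty)>0$ and switches only the never-stopping device realizations to ``stop at $t$'', so the full gain is available and no delay margin $w$ is needed. You instead follow the \cref{prop:LMH}(iii) template and show that $s_1>0$ forces $s_2=0$. That proves slightly more, but it costs the $(1-e^{-\delta w})$ margin and requires handling mixed plans realization by realization.
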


\begin{proof}
We construct a ruin equilibrium at and below the threshold, then rule out ruin
above it.

\emph{Part (i).} Consider the assessment used in
\cref{prop:never-stop}: each firm races forever at every quiet history and
stops immediately after detecting its rival's stop. Against a rival that
races until detection, any finite stop at $\sigma$ yields, by
\cref{lem:rec}, the payoff from racing forever plus a strictly positive multiple of
$\Delta(\sigma,\omega)$. Stationarity and $\omega\le\bar\omega$ give
$\Delta(\sigma,\omega)\le0$ at every date, so no finite stop is profitable.
Immediate stopping after a signal is uniquely optimal because
$\lambda>\alpha$. The assessment
is therefore a perfect Bayesian equilibrium,
including at $\omega=\bar\omega$, where the stopping deviation is
payoff-equivalent to racing forever.

\emph{Part (ii).} Fix an equilibrium with outcome
$(\tau_1,\tau_2)$ and let
$
E:=\{\tau_1=\tau_2=+\infty\}.
$

\emph{Step 1: reduce an infinite frontier to joint racing.}
Because $\lambda>\alpha$, a sole survivor's payoff against any frozen rival is
strictly decreasing in its stopping date. Immediate stopping is therefore
uniquely optimal at every signal history under every admissible belief. If one
firm stops in finite time, its rival detects the stop in finite time almost
surely and then stops immediately. Hence
$
\{\max_i\tau_i=+\infty\}=E
$
up to a null event.

\emph{Step 2: show that joint racing has zero probability.}
Suppose instead that $\Pr(E)>0$. Fix firm $i$ and its quiet history $H_t$ at
date $t$, meaning that it is active and has received no signal. Since
$E\subseteq H_t$, we have $\Pr(H_t)\ge\Pr(E)>0$. Conditional on $H_t$, its
rival is in one of three states: it has stopped but remains undetected; it is
active and will stop at a finite date; or it is active and never stops. The
conditional probability of the second state is at most
\[
\frac{\Pr(t<\tau_j<+\infty)}{\Pr(E)}
\longrightarrow0.
\]
The joint conditional probability that both firms' residual plans never stop
is at least $\Pr(E\mid H_t)\ge\Pr(E)$.

On the last two states, the rival's technology remains weakly above $t$.
Thus, at every future date $s\ge t$, the discounted, survival-weighted flow
from firm $i$'s never-stopping plan is bounded by
\[
\pi(t,t)e^{-(r+\lambda-\alpha)(s-t)}.
\]
Its continuation payoff is consequently at most
$\pi(t,t)/(r+\lambda-\alpha)$.

Now modify firm $i$'s plan at $H_t$ only on the device realizations for which
it would never stop: stop at $t$ on those realizations and leave every other
plan unchanged. If the
rival has already stopped, the change is a weak gain. If both residual plans
never stop, it replaces $V_R\pi(t,t)$ with
$V_S(\omega)\pi(t,t)$, a gain of
$[V_S(\omega)-V_R]\pi(t,t)>0$. This gain receives conditional weight at least
$\Pr(E)$. The only possible loss comes when the rival is active but will stop
at a finite date; its conditional probability tends to zero, and its magnitude
is bounded by $\pi(t,t)/(r+\lambda-\alpha)$. The conditional expected gain is
therefore positive for all sufficiently large $t$, contradicting sequential
rationality at $H_t$. Hence $\Pr(E)=0$. Step 1 now gives
$\max_i\tau_i<+\infty$ almost surely.
\end{proof}

\paragraph{Trust for the case $\omega < \beta$} We now consider the other case \cref{sec:trust} excluded there: news that is slow relative to payoff erosion: $\omega<\beta$. The setup is otherwise identical to that of \cref{sec:trust}. Define
\[
\underline\ell_\beta:=\frac{V_R-V_S(0)}{1/r-V_W(\beta)},
\qquad\text{noting}\qquad
\frac1r-V_W(\beta)=\frac{\lambda-\alpha}{r(r+\lambda-\alpha+\beta)}=\frac{(\lambda-\alpha)V_R}{r}:
\]
this is the verification threshold $\underline\ell$ of \cref{sec:trust} evaluated at news rate $\beta$, and it does not depend on $\omega$.

\begin{proposition}[Trust with slow news]\label{prop:slow-news}
In the stationary coordination environment with types as above, suppose $0<\omega<\beta$ and maintain $\underline\ell_\beta<\bar\ell$. Then \cref{prop:LMH} holds with $\underline\ell$ replaced by $\underline\ell_\beta$:
\begin{itemize}[leftmargin=2.4em]
\item[(i)] \emph{Low trust.} If $\ell<\underline\ell_\beta$, then every $\bm\tau\in\PBE(\omega,\ell)$ has $\tau_1=\tau_2=+\infty$ almost surely.
\item[(ii)] \emph{Medium trust.} If $\underline\ell_\beta\le\ell\le\bar\ell$, then $\PBE(\omega,\ell)$ contains both the outcome in which rational firms stop immediately and the never-stopping outcome.
\item[(iii)] \emph{High trust.} If $\ell>\bar\ell$, then every $\bm\tau\in\PBE(\omega,\ell)$ satisfies
$\Pr\big(\tau_1=\tau_2=+\infty\mid\text{both firms rational}\big)\le\big(\bar\ell/\ell\big)^2$, and stopping at zero remains an equilibrium outcome.
\end{itemize}
Moreover, rational firms stopping at a common date is an equilibrium if and only if $\ell\ge\underline\ell_\beta$, and the binding deviation is a marginal delay rather than indefinite verification: for $\ell<\underline\ell_\beta$, the most profitable plan against the bunched profile is to delay by $\sigma^\ast=\log\big(\underline\ell_\beta/\ell\big)/(\beta-\omega)$.
\end{proposition}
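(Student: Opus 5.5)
The plan is to rerun the proof of \cref{prop:LMH} and locate the two places where the hypothesis $\omega\ge\beta$ was used: the damping $e^{-\omega(t-c)}\le e^{-\beta(t-c)}$ on the stopped-but-undetected cell in part (i), and the sign pattern of $W'$ in part (ii)(a). Every other ingredient carries over verbatim, because none of it compares $\omega$ with $\beta$. This covers (ii)(b), the never-stopping profile sustained for $\ell\le\bar\ell$, which involves only $V_R$, $V_S(0)$ and $V_S(\omega)$. It also covers the whole of part (iii), the $(\bar\ell/\ell)^2$ bound, which uses only $V_S(\omega)>V_R$ and the vanishing of residual concession mass along the quiet path. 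So I would state that (ii)(b) and (iii) follow from the same arguments and concentrate on the bunched profile and the low-trust zone.

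I would start with the bunched profile, since it also gives the ``moreover'' clause. Take the deviation value $W(\sigma)$ from the proof of \cref{prop:LMH}(ii)(a) together with its derivative
\[
W'(\sigma)=-\,p\,\tfrac{\lambda-\alpha}{r}\,e^{-(\omega+r+\lambda-\alpha)\sigma}+(1-p)\,\tfrac{V_R-V_S(0)}{V_R}\,e^{-\delta\sigma},\qquad \delta=r+\lambda-\alpha+\beta .
\]
When $\omega<\beta$ we have $\delta>\omega+r+\lambda-\alpha$, so now the positive term decays faster. Hence $W'$ crosses zero at most once, from $+$ to $-$, and $W$ is quasi-concave rather than quasi-convex. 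Stopping at once is therefore optimal if and only if $W'(0)\le0$. Using $1/r-V_W(\beta)=(\lambda-\alpha)V_R/r$, this condition reads $\ell\ge\underline\ell_\beta$. When $\ell<\underline\ell_\beta$, solving $W'(\sigma)=0$ gives $e^{(\beta-\omega)\sigma}=\underline\ell_\beta/\ell$, which is exactly the stated $\sigma^\ast$. The footnote's stationarity argument extends the conclusion to bunching at any common date, and this settles (ii)(a).

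The main obstacle is part (i). With $\omega<\beta$, the stopped cell's loss factor $e^{(\beta-\omega)(t-c)}$ is unbounded, so I cannot compare ``stop now'' with ``wait until a signal'' uniformly at every quiet history, as the original proof did. Instead I would argue by the earliest stop. Suppose some rational firm stops at a quiet history with positive probability, and let $t^\ast$ be the infimum of the union of the two quiet-path concession supports. This is the earliest date at which either firm can stop. At $t^\ast$ there is no undetected-stop weight, and just after $t^\ast$ that weight is small and has small lag $u$, so the unbounded factor $e^{(\beta-\omega)u}$ stays near $1$ on the relevant window.

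I would then compare ``stop at $t^\ast$'' (or at a point of the support near $t^\ast$) with ``delay by $\sigma$, stopping earlier upon a signal''. Taking the rival's concession mass in $[t^\ast,t^\ast+\sigma]$ to be as favourable as possible to stopping, namely all of it at $t^\ast$, reproduces the bunched calculation, whose right-derivative at $\sigma=0$ is strictly positive when $\ell<\underline\ell_\beta$. The remaining rival-active mass enters through $\varphi$ as in the original proof. Its contribution near $\tau=0$ is controlled by the same local derivative, and for large residual $\tau$ it is controlled by $V_S(\omega)-V_R<$ the lag-cost term, which uses $\underline\ell_\beta<\bar\ell$. The conclusion is that a small delay strictly beats stopping throughout a right-neighbourhood of $t^\ast$, contradicting $t^\ast$ being in the support. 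Making this local domination uniform over arbitrary rival concession laws near $t^\ast$, including atoms, is the step I expect to need the most care. I would handle it by splitting on $\Pr(\tau_j=t^\ast)$: the atom case is the bunched computation, and the atomless case gives vanishing mass on $[t^\ast,t^\ast+\sigma]$. Once no first stop can occur, the argument ends as in \cref{prop:LMH}(i): no signal is ever sent and no responsive stop ever follows.
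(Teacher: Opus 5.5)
Your bunched-profile computation, the ``moreover'' clause, and part (ii)(b) coincide with the paper's. Your architecture for part (i) is also the paper's: take the first quiet-path stop date $t_0$, choose a window of length $\varepsilon$ with $e^{(\beta-\omega)\varepsilon}<\underline\ell_\beta/\ell$, and show that a small delay beats stopping at every quiet date in that window. The gap is in how you bound the rival-active cell.

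A deviation is a single plan evaluated on every cell, so the remaining active mass cannot ``enter through $\varphi$.'' $\varphi$ is the gap to \emph{indefinite} verification, not to a delay of length $\sigma$. Its short-residual value is $\varphi(0)=1/r-V_W(\omega)$, which exceeds $1/r-V_W(\beta)$ when $\omega<\beta$. Bounding with it therefore only gives the conclusion for $\ell<[V_R-V_S(0)]/[1/r-V_W(\omega)]$, which is strictly below $\underline\ell_\beta$.

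What you need is the marginal-delay analogue, valid for every residual time $\tau$. Compare ``stop now'' with ``continue to $t+h$, exiting on a signal''; the right derivative at $h=0$ of this difference is $\delta\psi(\tau)$. With $\rho=r+\lambda+\beta$ and $N=V_R-V_S(0)$,
\[
\psi(\tau)=\Big[\frac1r-V_S(0)\Big]\Big[\frac{\omega}{\omega+\rho}+\rho\Big(\frac1{\omega+\rho}-\frac1\delta\Big)e^{-(\omega+\rho)\tau}\Big]-N .
\]
Since $\omega+\rho-\delta=\omega+\alpha>0$, $\psi$ is strictly increasing on $(0,\infty)$, rising from a negative value to $V_S(\omega)-V_R$. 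A rival stop at the current date gives $\psi(0)=1/r-V_W(\beta)$, and a stop inside $(t,t+h]$ costs no more than that to first order.

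This monotonicity is what controls intermediate $\tau$, which your endpoint remarks leave open. Together with $\underline\ell_\beta<\bar\ell$, it gives $\sup\psi=\max\{1/r-V_W(\beta),V_S(\omega)-V_R\}=N/\underline\ell_\beta$. The undetected cell then costs at most $e^{(\beta-\omega)\varepsilon}\delta N/\underline\ell_\beta$ per unit of $\ell\,\de F_j(c)$, and the rational weights total at most $\ell$. Hence the gain rate of a small delay is at least $\delta N[1-(\ell/\underline\ell_\beta)e^{(\beta-\omega)\varepsilon}]>0$ against every rival law, atoms included. No atom/atomless split is needed. Your claim that the undetected weight is small just after $t^\ast$ actually fails when the rival has an atom at $t^\ast$; only the age bound $t-c\le\varepsilon$ matters.

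Second, part (iii) does not transfer verbatim. The original argument lets a continuation plan gain at most $1/r$ per unit of the vanishing undetected-stop weight. Per unit of current flow, however, payoffs on that cell scale like $e^{\beta(t-c)}$, while the weight decays only like $e^{-\omega(t-c)}$. So for $\omega<\beta$ that bound can fail. The repair is that, since $\alpha<\lambda$, stopping at once attains the cell maximum $e^{\beta(t-c)}/r$ there. Relative to stopping, no continuation plan gains on that cell, and the argument can be run on the crazy and active cells alone.
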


\begin{proof}
As in the proof of \cref{prop:LMH}, abbreviate $\delta:=r+\lambda-\alpha+\beta=1/V_R$ and $\rho:=r+\lambda+\beta=1/V_S(0)$, and write $N:=V_R-V_S(0)$; the displayed identity for $1/r-V_W(\beta)$ is a direct computation. That proof used $\omega\ge\beta$ in two places: in part (i), to damp the undetected-stop cell, and in part (ii)(a), to conclude that the deviation payoff $W$ is maximized at an endpoint of $[0,\infty]$. Both steps are replaced by an analysis of marginal delay. \\ 

\noindent \emph{The marginal-delay deviation.} Fix a quiet history of firm $i$ at date $t$ and, for $h>0$, let \textsc{delay}$(h)$ denote the plan: continue until $t+h$, stopping immediately upon a signal, and stop at $t+h$ if none has arrived. Per unit of current flow, we compute the derivative at $h=0$ of its payoff, net of stopping at once, against each rival state in the posterior decomposition of part (i) of the proof of \cref{prop:LMH}. \\ 

\noindent  \emph{Crazy.} \textsc{delay}$(h)$ yields $(1-e^{-\delta h})V_R+e^{-\delta h}V_S(0)$: postponing an unreciprocated stop gains at rate $\delta N>0$. \\ 

\noindent  \emph{Rational, stopped at $c\le t$, undetected.} The delaying firm races beside a frozen rival under the live hazard, and $\alpha<\lambda$ makes stopping at once optimal on this cell: the deviation loses at rate $e^{\beta(t-c)}(\lambda-\alpha)/r=e^{\beta(t-c)}\,\delta\big[1/r-V_W(\beta)\big]$. \\ 

\noindent \emph{Rational, active, with residual quiet-path stopping time $\tau$.} Stopping now ends the risk at $X\wedge\tau$, where $X\sim\mathrm{Exp}(\omega)$ is the detection lag, and is worth $V_S(0)+\big[1/r-V_S(0)\big]\Ex[e^{-\rho(X\wedge\tau)}]$; on $\{\tau>h\}$, delaying replaces this by $(1-e^{-\delta h})V_R+e^{-\delta h}\big\{V_S(0)+\big[1/r-V_S(0)\big]\Ex[e^{-\rho(X\wedge(\tau-h))}]\big\}$. Differentiating at $h=0$ and using $\Ex[e^{-\rho(X\wedge\tau)}]=\tfrac{\omega}{\omega+\rho}+\tfrac{\rho}{\omega+\rho}e^{-(\omega+\rho)\tau}$, the deviation gains at rate $-\delta\,\psi(\tau)$, where
\[
\psi(\tau):=\Big[\frac1r-V_S(0)\Big]\Big[\frac{\omega}{\omega+\rho}+\rho\Big(\frac{1}{\omega+\rho}-\frac1\delta\Big)e^{-(\omega+\rho)\tau}\Big]-N
\quad\text{for }\tau\in(0,\infty],
\]
and $\psi(0):=1/r-V_W(\beta)$ at an atom at zero, where the rival stops at $t$ itself and the previous cell applies with $c=t$. (A residual stop at $t+u$ with $u\in(0,h]$ contributes $\alpha u/r-(\lambda-\alpha)(h-u)/r+o(h)\ge-\delta\,\psi(0)\,h+o(h)$, so such mass is covered by the bound below.) Since $\omega+\rho-\delta=\omega+\alpha>0$, the coefficient of $e^{-(\omega+\rho)\tau}$ is negative: $\psi$ is strictly increasing on $(0,\infty)$, from $\psi(0^+)=-N-\alpha\big[1/r-V_S(0)\big]/\delta<0$ to $\psi(\infty)=V_S(\omega)-V_R$. Hence, using the maintained $\underline\ell_\beta<\bar\ell$,
\[
\sup_{[0,\infty]}\psi=\max\Big\{\frac1r-V_W(\beta),\;V_S(\omega)-V_R\Big\}=\frac{N}{\min\{\underline\ell_\beta,\bar\ell\}}=\frac{N}{\underline\ell_\beta}.
\]
A marginal delay is thus costly only against a rival that has just stopped---the slight-delay exposure that defines $\underline\ell_\beta$---or against one that will never stop spontaneously, forgoing the prize $V_S(\omega)-V_R$ of stopping first. \\ 

\noindent  \emph{Part (i).} Suppose, for contradiction, that some rational firm stops at a quiet history with positive probability. With $F_j$ and $S_j$ as in part (i) of the proof of \cref{prop:LMH}, let $t_0:=\inf\{t\ge0:F_1(t)+F_2(t)>0\}<\infty$, and fix $\varepsilon>0$ with $e^{(\beta-\omega)\varepsilon}<\underline\ell_\beta/\ell$. At a quiet history of firm $i$ at any date $t\le t_0+\varepsilon$, every undetected stop has age $t-c\le\varepsilon$. Weighting the three cells in proportion to $1$, $\ell S_j(t)$, and $\ell e^{-\omega(t-c)}\de F_j(c)$, the net gain rate of \textsc{delay} is proportional to
\begin{align*}
\delta\Big\{N-\ell\Big(S_j(t)\,&\Ex[\psi(\tau)]+\Big[\frac1r-V_W(\beta)\Big]\int_{[t_0,t]}e^{(\beta-\omega)(t-c)}\,\de F_j(c)\Big)\Big\}
\ \\ 
&\ge\ \delta N\Big[1-\frac{\ell}{\underline\ell_\beta}\,e^{(\beta-\omega)\varepsilon}\Big]\ \\
&>\ 0,    
\end{align*}
by $\Ex[\psi(\tau)]\le N/\underline\ell_\beta$, $e^{(\beta-\omega)(t-c)}\le e^{(\beta-\omega)\varepsilon}$, and $S_j(t)+\int_{[t_0,t]}\de F_j\le1$. A small delay therefore strictly improves on stopping at every quiet history at dates $t\le t_0+\varepsilon$, so no rational firm stops there, under any pure or mixed rule---contradicting the definition of $t_0$. Hence no rational firm ever stops at a quiet history; since the first stop in any equilibrium must occur at a quiet history, no stop ever occurs and $\tau_1=\tau_2=+\infty$ almost surely. \\ 

\noindent  \emph{Part (ii).} The never-stopping profile is an equilibrium for $\ell\le\bar\ell$ by part (ii)(b) of the proof of \cref{prop:LMH}, which nowhere compares $\omega$ with $\beta$. For the bunched profile, the deviation payoff $W(\sigma)$ is as computed in part (ii)(a) there, and
\[
e^{\delta\sigma}W'(\sigma)=-p\,\frac{\lambda-\alpha}{r}\,e^{(\beta-\omega)\sigma}+(1-p)\,\delta N .
\]
For $\omega<\beta$ the right side is strictly decreasing in $\sigma$, so $W'$ changes sign at most once, from $+$ to $-$, and $W$ is maximized at $\sigma=0$ if and only if $W'(0)\le0$, i.e.
\[
p\,\frac{\lambda-\alpha}{r}\ \ge\ (1-p)\,\delta N
\quad\Longleftrightarrow\quad
\ell\ \ge\ \frac{N}{(\lambda-\alpha)V_R/r}\ =\ \underline\ell_\beta .
\]
Hence for $\ell\ge\underline\ell_\beta$ no plan improves on stopping at once and the bunched profile is an equilibrium; stationarity extends this to any common date. Conversely, for $\ell<\underline\ell_\beta$ the unique maximizer of $W$ is the interior delay $\sigma^\ast=\log(\underline\ell_\beta/\ell)/(\beta-\omega)>0$, a strictly profitable deviation---so no equilibrium has rational firms stopping at a common date. This proves part (ii) and the final claim. \\ 

\noindent  \emph{Part (iii).} The exclusion of never-stopping for $\ell>\bar\ell$ uses only the immediate-stop gamble, and zero ruin is attained by the bunched profile, an equilibrium since $\ell>\bar\ell>\underline\ell_\beta$; both are as in \cref{prop:LMH}. For the $\big(\bar\ell/\ell\big)^2$ bound, the proof there lets a delaying plan gain at most $1/r$ per unit of the vanishing undetected-stop weight; when $\omega<\beta$ this bound fails, since payoffs on that cell grow like $e^{\beta(t-c)}$ while its weight decays only like $e^{-\omega(t-c)}$. But no bound is needed: because $\alpha<\lambda$, stopping at once attains the cell-maximum $e^{\beta(t-c)}/r$ on the undetected cell, so, relative to stopping at once, a continuation plan gains nothing there and can profit only on the crazy and active cells, where the argument of \cref{prop:LMH} applies verbatim. The conclusion follows unchanged.
\end{proof}

Together with \cref{prop:LMH}, the simultaneous-coordination threshold at every news rate is $\big[V_R-V_S(0)\big]/\big[1/r-V_W(\omega\vee\beta)\big]$: constant in $\omega$ below $\beta$---the vertical segment in \cref{fig:LMH}---and rising thereafter. The two regimes fit together continuously: as $\omega\uparrow\beta$ the binding delay $\sigma^\ast$ lengthens without bound, deforming into the indefinite-verification deviation that binds when $\omega\ge\beta$. Transparency below $\beta$ therefore neither helps nor hurts simultaneous coordination---the binding temptation is instantaneous, and its cost and benefit are both $\omega$-free---while faster news still lowers the sequential threshold $\bar\ell$. If instead $\underline\ell_\beta>\bar\ell$, the remarks following \cref{prop:LMH} apply with $\underline\ell_\beta$ in place of $\underline\ell$.

\end{document}